\providecommand{\U}[1]{\protect\rule{.1in}{.1in}}
\newtheorem{theorem}{Theorem}
\newtheorem{corollary}{Corollary}
\newtheorem{lemma}{Lemma}
\newtheorem{proposition}{Proposition}
\newtheorem{assumption}{Assumption}
\theoremstyle{definition}
\newtheorem{definition}{Definition}
\providecommand{\keywords}[1]
{
 \small	
 \textbf{\textit{Keywords---}} #1
}
\begin{document}

\title{Quantifying Distributional Input Uncertainty via Inflated Kolmogorov-Smirnov Confidence Band}
\date{}
\author{Motong Chen\thanks{Department of Industrial
Engineering and Operations Research, Columbia University.}
\and Henry Lam\footnotemark[1]
\and Zhenyuan Liu\footnotemark[1]}
\maketitle

\begin{abstract}
In stochastic simulation, input uncertainty refers to the propagation of the statistical noise in calibrating input models to impact output accuracy, in addition to the Monte Carlo simulation noise. The vast majority of the input uncertainty literature focuses on estimating target output quantities that are real-valued. However, outputs of simulation models are random and real-valued targets essentially serve only as summary statistics. To provide a more holistic assessment, we study the input uncertainty problem from a distributional view, namely we construct confidence bands for the entire output distribution function. Our approach utilizes a novel test statistic whose asymptotic consists of the supremum of the sum of a Brownian bridge and a suitable mean-zero Gaussian process, which generalizes the Kolmogorov-Smirnov statistic to account for input uncertainty. Regarding implementation, we also demonstrate how to use subsampling to efficiently estimate the covariance function of the Gaussian process, thereby leading to an implementable estimation of the quantile of the test statistic and a statistically valid confidence band. Numerical results demonstrate how our new confidence bands provide valid coverage for output distributions under input uncertainty that is not achievable by conventional approaches.

\end{abstract}

\keywords{input uncertainty, aleatory uncertainty, Kolmogorov-Smirnov statistic, confidence band, weak convergence, subsampling}

\section{Introduction}\label{sec:intro}

\noindent Stochastic simulation is widely used for operational decision-making and
scientific modeling (\cite{law2007simulation,banks2014discrete}). It consists of
the generation of random input variates, fed into the system logic, to produce
outputs that are subsequently analyzed for downstream tasks. In many
real-world settings, the distributions that govern the input variates are
unknown and need to be calibrated from external data. In this case, proper
output analysis needs to account for both the Monte Carlo noises in running
the simulation and also the input data noises that propagate to the outputs.
This issue, which is beyond the handling of only the Monte Carlo noises in
traditional output analysis, is known commonly as the input uncertainty
problem in the simulation literature. For an overview on this topic, see,
e.g.,
\cite{henderson2003input,barton2012tutorial,song2014advanced,lam2016advanced,corlu2020stochastic,bls22,lam_tutorial2023}%
, and \cite{nelson2013foundations} Chapter 7.

More concretely, the goal of handling input uncertainty can often be cast as
the construction of confidence intervals that are statistically valid under
both sources of noises from Monte Carlo and input data. A closely related task
is variance estimation, or the estimation of the different components of the
variance that attribute to each source of noise. There are several major
methods in the literature. The first is bootstrap resampling, which includes
the variance bootstrap (\cite{cheng1997sensitivity,song2015quickly}),
quantile-based bootstrap (\cite{barton1993uniform,barton2001resampling}), and
enhancements such as the metamodel-assisted bootstrap
(\cite{barton2014quantifying,xie2016multivariate}), subsampling (\cite{lam2022subsampling}), adaptive budget allocation (\cite{yi2017efficient}) and shrinkage (\cite{song2024shrinkage}). The second line of methods uses the delta method directly which involves
estimating gradient information of the model with respect to the inputs. This
includes the two-point method (\cite{cheng1998two,cheng2004calculation}),
regression approach (\cite{lin2015single,song2019input}), and infinitesimal jackknife (\cite{lam2019random}). Different from the above approaches that are based on consistent variance estimation, \cite{glynn2018constructing} proposes $t$-based interval construction by sectioning input data, and \cite{lam2022cheap,lam2023bootstrap} further propose to run a so-called cheap bootstrap on the input data. Other recent approaches
include the use of robust optimization (\cite{ghosh2019robust,lam2016robust,lam2018sensitivity}).
There is also a stream of work on Bayesian approaches that rely on
posterior sampling
(\cite{chick2001input,zouaoui2003accounting,zouaoui2004accounting,xie2014bayesian,zhu2020risk,xie2021nonparametric}). Finally, several works have studied input uncertainty in simulation optimization, including ranking and selection problems (\cite{fan2020distributionally,song2019input,wu2022data}) and Bayesian optimization (\cite{pearce2017bayesian,ungredda2022bayesian}), as well as the tradeoff between input data collection and simulation effort (\cite{xu2020joint}).

Despite the growing literature in input uncertainty, to our best knowledge all the previous works have been focusing on the estimation of target quantities that are
real-valued. However, in stochastic simulation the outputs are random (hence the Monte Carlo noise) and the real-valued target quantities often only serve as a summary statistic. For instance, suppose the simulation model represents a queueing system and we are interested in the waiting time or queue length at a certain time horizon. Then, in the input uncertainty literature, existing approaches would consider a summary target such as the expected waiting time. Such type of summary statistic is arguably convenient but, depending on our downstream decision-making tasks, it could hide valuable information, which in this example could be the collection of quantiles or tail distribution function values of the waiting time. Moreover, although some existing work quantifies uncertainty for performance measures that are possibly nonlinear in the input distributions (i.e., beyond expected values; e.g., \cite{lam2022subsampling}), their theoretical guarantees rely on model assumptions that can only be verified when the performance measure is linear. Also, even if verification beyond linear is doable, this would need to be conducted case by case. In contrast, having an approach to understand the interplay of Monte Carlo and input uncertainty at the \emph{distributional} level would help us bypass these issues. That is, doing so would not only capture the \emph{epistemic} uncertainty, i.e., the uncertainty coming from the data noise, but also more explicitly shape the \emph{aleatory} uncertainty, i.e., the intrinsic uncertainty from the randomness in the stochastic model. In particular, when the performance measures are monotone (e.g., quantiles) or more generally piece-wise monotone with respect to the distribution function, quantifying this distributional uncertainty will directly imply the uncertainty quantification of these performance measures, even simultaneously.  

Motivated by the above, our main contribution of this paper is to propose an approach to construct a confidence band for the entire output distribution that accounts for both Monte Carlo and input data noises. Our approach relies on a novel statistic whose asymptotic limit comprises the supremum of the sum of a Brownian bridge and a mean-zero Gaussian process with a covariance structure governed by the so-called influence function of the target output distribution. In this representation, the Brownian bridge signifies the Monte Carlo noises and resembles the limit of the celebrated Kolmogorov-Smirnov (KS) statistic. The mean-zero Gaussian process, on the other hand, signifies the input data noises, where the influence function that controls the covariance structure can be regarded as the derivative of the target output distribution with respect to the input parameters and carry sensitivity information like in the standard delta method. Our approach thus can be viewed as a generalization of the KS statistic, from the traditional use to construct confidence bands under a single source of randomness, to the simulation input uncertainty setting with both epistemic and aleatory uncertainties. In fact, we will illustrate that when there is input uncertainty, using the standard KS machinery to construct confidence bands for the output distribution would lead to a systematic under-coverage. The additional Gaussian process serves to inflate the variability to lift up the coverage by the right amount. 

We call our above constructed confidence band the \emph{input-uncertainty-inflated KS confidence band}. Technically, analyzing the corresponding generalized KS statistic to derive this confidence band relies on two significant novelties compared to the development of the classic KS statistic. First, the input data noise and its entanglement with the Monte Carlo noise necessitate new statistical tools to establish the weak convergence of our test statistic. In particular, the input data noise possesses the form of the so-called V-process that describes the structure of common discrete-event simulation models (see, e.g.,  \cite{silverman1976limit,silverman1983convergence,akritas1986empirical} for details on V-processes and the closely related U-processes). We derive the weak convergence of the input data noise by bounding its modulus of continuity via a generalization of the techniques in \cite{silverman1976limit} from single-sample U-processes to multi-sample V-processes. On the other hand, the Monte Carlo noise is now also dependent on the input data noise, which precludes a direct argument on its convergence to the classic KS asymptotic limit statistic. To this end, we utilize the Koml\'{o}s-Major-Tusn\'{a}dy approximation (\cite{komlos1975approximation}) to strengthen the classic KS convergence to have a uniform error control over all possible underlying distribution, which in turn implies the joint weak convergence of both Monte Carlo and input noises in the so-called Skorohod product space. Our second novelty is to circumvent the lack of continuity for the supremum operator in the latter space, which is otherwise used via the continuous mapping theorem to derive the classic KS statistic. In particular, we create and show that a coupling process, which composites the original output with its true distribution function, retains approximately the same supremum. The new process now lies in a Skorohod space guaranteed to have a compact domain where the supremum operator is indeed continuous, which in turn induces our desired test statistic.

With our framework and statistic, to enhance implementability, we further propose a subsampling approach to efficiently
estimate the covariance function of the Gaussian process that signifies the input uncertainty. Note that the
influence function described above, while conceptually understandable, is
difficult to use in practice because it requires computing essentially
infinite-dimensional derivatives. Thus we will use the bootstrap in lieu of direct estimation of the influence function. However, the standard variance
bootstrap is inefficient because of the interaction of the Monte Carlo and
input data noises, which necessitates washing away the Monte Carlo noises via
a large amount of simulation runs. This phenomenon has been pointed out in
\cite{lam2022subsampling} even when only real-valued target quantities are
considered. To address this computational challenge, we generalize the idea in
\cite{lam2022subsampling} to devise a subsampling method where we draw a
resample of smaller size than the original input data size, at the outer layer
of a nested simulation scheme used to bootstrap estimate the covariance function. However, while \cite{lam2022subsampling} uses a debiased ANOVA-based estimator, we propose an estimator for covariance estimation that does not include the debiasing term. This is because we need to retain the positive semi-definite requirement of the covariance matrix with a finite simulation budget, which is crucial to generate suitable multivariate Gaussian vectors to implement the confidence band. This latter advantage outweighs the relatively larger estimation error in our biased estimator. We provide full implementation details and both theoretical and empirical justifications of our resulting confidence bands.



The rest of this paper is as follows. Section \ref{sec:formulation} first introduces the conventional and the new distributional input uncertainty problems. Section \ref{sec:methodology} describes our methodology for quantifying distributional input uncertainty and its main statistical guarantees. Section \ref{sec:guarantee} discusses implementation, including proposing a subsampling approach to estimate the covariance function required in constructing our proposed interval. Section \ref{sec:theory} overviews the main technical developments to obtain our guarantees. Section \ref{sec:num} presents numerical results. Additional theoretical results and all proofs are left to Appendix.
	
\mbox{}\\
\noindent {\bf Notations. }$\lfloor x\rfloor$ denotes the largest integer that is less than or equal to $x$. $\mathrm{frac}(x)=x-\lfloor x\rfloor$ denotes the fractional part of $x$. $\overset{d}{\approx}$ means ``having approximately the same distribution as'' (not used in mathematically rigorous contexts). $\delta_{x}$ denotes the Dirac delta measure at $x$. Given two non-negative sequences $\{a_n\}$ and $\{b_n\}$, we write $a_n=o(b_n)$ if $\lim_{n\rightarrow\infty}a_n/b_n=0$, write $a_n=O(b_n)$ if $\limsup_{n\rightarrow\infty}a_n/b_n<\infty$, write $a_n=\omega(b_n)$ if $\lim_{n\rightarrow\infty}a_n/b_n=\infty$ and write $a_n=\Theta(b_n)$ if $0<\liminf_{n\rightarrow\infty}a_n/b_n\le\limsup_{n\rightarrow\infty}a_n/b_n<\infty$. The same notations apply to non-negative functions. Given a sequence of random variables $\{X_n\}$ and a non-negative sequence $\{a_n\}$, we write $X_n=o_p(a_n)$ if $X_n/a_n\overset{p}{\rightarrow}0$, write $X_n=O_p(a_n)$ if $X_n/a_n$ is stochastically bounded, i.e., for any $\varepsilon>0$, there exists a finite $M > 0$ and a finite $N > 0$ such that $\mathbb{P}(|X_n/a_n|>M)<\varepsilon,\forall n>N$, and write $X_n=\Theta_p(a_n)$ if $X_n=O_p(a_n)$ and there exists a finite $M>0$, $\varepsilon\in (0,1)$ and a finite $N>0$ such that $\mathbb{P}(|X_n/a_n|>M)>\varepsilon,\forall n>N$. In particular, if $X_n/a_n\overset{p}{\rightarrow}X$ for some random variable $X$, we have $X_n=O_p(a_n)$ and further $X_n=\Theta_p(a_n)$ if $X$ is not almost surely zero. For any event $A$ (in an underlying $\sigma$-field $\mathcal{F}$), the indicator function $I(A)$ is 1 if $A$ occurs and 0 otherwise. Given a function $f$, $f(t-)$ denotes the left limit of $f$ at $t$ (if it exists, e.g., $f$ is a cumulative distribution function). 
	
\section{From Conventional to Distributional Input Uncertainty}\label{sec:formulation}
We review the traditional input uncertainty problem in stochastic simulation (Section \ref{sec:conventional}), and from there we motivate and describe our target distributional input uncertainty problem (Section \ref{sec:distributional}).

\subsection{The Conventional Problem}\label{sec:conventional}
Consider a stochastic simulation model with a target performance measure to evaluate $\psi\in\mathbb R$. Suppose we can generate independent unbiased Monte Carlo samples, say $\hat\psi_r,r=1,\ldots,R$. By using these Monte Carlo samples, a standard output analysis would aim to construct a confidence interval (CI) for $\psi$ at some prescribed level, say $1-\alpha$, i.e., an interval that covers the ground-truth $\psi$ with probability close to $1-\alpha$. 


Typically, a simulation model consists of input distributions that govern the randomness of the underlying random variates. Consider $m$ input
distributions $\underline{P}:=(P_{1},\ldots,P_{m})$, each governing a different source of randomness. Correspondingly, we write $\psi=\psi(\underline{P})$ to highlight the dependence of the performance measure on the input distributions, and we write $\hat\psi_r(\underline P),r=1,\ldots,R$ as the Monte Carlo samples. For example,
we might be interested in some performance measure of a simulatable queueing system. In this case, there could be two input distributions $P_{1}$ and $P_{2}$ that represent the inter-arrival and service time
distributions respectively. The performance measure $\psi$ can be the expected average waiting time
of the first 10 customers in the queue, and we can generate $R$ unbiased trajectories to obtain $\hat\psi_r(P_1,P_2)$ given the input distributions $P_1$ and $P_2$.


Suppose we face input uncertainty for the stochastic simulation model. That is, the true input distributions are unknown and, instead, we have finite data for each distribution, say $\{X_{i,j} : i = 1, \ldots, m, j = 1,
\ldots, n_{i}\}$ where $X_{i,j},j = 1, \ldots, n_{i}$ are i.i.d. random
variables drawn from $P_{i}$ and they are independent for different $i$'s. To obtain an approximation of $\psi(\underline{P})$, we need to drive the
simulation model using an estimate of $\underline{P}$. A natural choice is the
empirical distribution $\underline{\hat P}:=(\hat P_{1},\ldots,\hat P_{m})$
constructed from the data
\[
\hat P_{i}(\cdot)=\frac{1}{n_{i}}\sum_{j=1}^{n_{i}} \delta_{X_{i,j}}(\cdot).
\]
With these, we drive $R$ independent simulation runs to obtain $\hat{\psi}_{r}(\underline{\hat P}),r=1,\ldots,R$. Here, $\hat{\psi}_{r}(\underline{\hat P})$ is unbiased for the
performance measure $\psi(\underline{\hat P})$ evaluated under the empirical input distributions. A natural point estimate for $\psi=\psi(\underline P)$ is the sample mean $\bar{\psi}(\underline{\hat P}):=\sum_{r=1}^{R}
\hat{\psi}_{r}(\underline{\hat P})/R$.


The output estimate $\bar{\psi}(\underline{\hat P})$ contains statistical
noises coming from both the external data used to calibrate the input
distributions, and the Monte Carlo noise in running the simulation model. To
conduct statistically valid inference for $\psi$, we need to quantify both
sources of noises. A way to quantify them is via the following
variance decomposition:
\begin{equation}
\mathrm{Var}(\bar{\psi}(\underline{\hat P})) = \sigma^{2}_{S} + \sigma^{2}%
_{I}, \label{conventional}%
\end{equation}
where
\[
\sigma^{2}_{S} = \frac{\mathbb{E}[\mathrm{Var}(\hat{\psi}_{r}(\underline{\hat
P})|\underline{\hat P})]}{R}%
\]
is the variance component from the Monte Carlo error in simulating the model and
\[
\sigma^{2}_{I} = \mathrm{Var}(\psi(\underline{\hat P}))
\]
is the component from the input data noise, which we also call the input variance for convenience. Furthermore, under proper conditions, we have a central limit theorem (CLT) for $\bar{\psi}(\underline{\hat P})$ (\cite{cheng1997sensitivity,glynn2018constructing,lam_tutorial2023}) so that a $(1-\alpha)$-level CI for $\psi$ can be
constructed as $\bar{\psi}(\underline{\hat P})\pm z_{1-\alpha/2}\sqrt
{\sigma^{2}_{S} + \sigma^{2}_{I}}$ where $z_{1-\alpha/2}$ is the
$(1-\alpha/2)$-th quantile of the standard normal. In this approach, the Monte Carlo variance component
$\sigma^{2}_{S}$ can be readily estimated by the sample variance of $\hat{\psi}%
_{r}(\underline{\hat P})$ (divided by $R$). The input variance $\sigma^{2}%
_{I}$ is far more difficult to estimate and thus is the focus of the input uncertainty literature, which entails the several major approaches described in the introduction (see, e.g., \cite{cheng1997sensitivity,cheng1998two,song2015quickly,barton2014quantifying,lam2022subsampling}).


In this paper, we consider a much more general type of target measure.
Note that in the above discussion, the random sample $\hat\psi_r$ corresponds to a realized output trajectory of the system of interest, which is a more fundamental quantity than $\psi$ formed as a summary statistic of the output. From this view, let us denote random variable
$Y$ as the random output of the model. Then a real-valued performance measure $\psi$ discussed above would be a summary statistic of $Y$ such as $\mathbb{E}[Y]$. While convenient, depending on the downstream tasks, such a summary statistic
may hide valuable information of the output distribution, for instance the quantile or tail information other than the expectation of the average waiting time in our described example. If we are interested in this other distributional information, then we need to revise and re-run our inference procedure -- if at all feasible, both in terms of computational expense and methodological readiness (as mentioned in the introduction, existing literature that quantifies input uncertainty for performance measures that are nonlinear in the input distributions is very limited and, if any, would require assumptions that can only be verified for very specific cases). These motivate our investigation to quantify input uncertainty on the \emph{entire output distribution} instead of only a summary statistic. This more holistic measurement of input uncertainty in turn allows us to, for instance, generate simultaneous confidence regions for any collections of performance measures that can depend even nonlinearly on the input distributions.

\subsection{Distributional Input Uncertainty}\label{sec:distributional}
Our goal is to construct a confidence band for the
output distribution function. For convenience, let $Q(t,\underline{P}%
)=\mathbb{P}_{\underline{P}}(Y\leq t)$ denote the true output distribution
function where the subscript $\underline{P}$ makes clear that $Y$ is the
output generated using the true unknown input distributions $\underline{P}$.
We would like to find two functions $L_{IU}(t)$ and $U_{IU}(t)$ such that
\begin{equation}
\mathbb{P}(L_{IU}(t)\leq Q(t,\underline{P})\leq U_{IU}(t)\ \forall t\in\mathbb{R}%
)\to1-\alpha, \label{target}%
\end{equation}
where $1-\alpha$ is a pre-specified confidence level, e.g., $95\%$, and the
limit is taken as $n_{i}$ and $R$ get large. We call a confidence band $\{[L_{IU}(t),U_{IU}(t)]:t\in\mathbb R\}$ \emph{asymptotically exact} if it satisfies \eqref{target}.

To get a sense of how one can construct these $L_{IU}(t)$ and $U_{IU}(t)$, note that
when we do not have noises in calibrating the input distributions (i.e.,
suppose we know the input distributions completely), the only uncertainty is
from the Monte Carlo runs, and in this case we can use, for instance, the KS
statistic. Suppose the true distribution function $Q(t,\underline{P})$ is
continuous in $t$. Let $\hat Q(t,\underline{P})$ be the output empirical
distribution constructed from $R$ simulation runs driven by the input
distributions $\underline{P}$, i.e.,
\[
\hat Q(t,\underline{P})=\frac{1}{R}\sum_{r=1}^{R} I(Y_{r}\leq t),
\]
where $Y_{r},r=1,\ldots,R$ denote i.i.d. Monte Carlo outputs from the
simulation model. Then the error between $Q(t,\underline{P})$ and $\hat
Q(t,\underline{P})$ can be quantified by the following weak convergence
result
\begin{equation}
\sup_{t\in\mathbb{R}}|\sqrt{R}(Q(t,\underline{P})-\hat{Q}(t,\underline{P}%
))|\overset{d}{\rightarrow}\sup_{t\in\lbrack0,1]}|BB(t)|, \label{KS-statistic}%
\end{equation}
where $BB(\cdot)$ is a standard Brownian bridge. The convergence \eqref{KS-statistic} implies that the $(1-\alpha)$-level confidence band \eqref{target} can be accomplished asymptotically by adding/subtracting the empirical output distribution by a standard error that captures its variability, which gives us
\begin{equation}
L_{KS}(t)=\hat Q(t,\underline{P})-\frac{q_{1-\alpha}}{\sqrt R},\quad U_{KS}(t)=\hat
Q(t,\underline{P})+\frac{q_{1-\alpha}}{\sqrt R},\label{basic LU}
\end{equation}
where $q_{1-\alpha}$ is the $(1-\alpha)$-th quantile of $\sup_{t\in[0,1]}|BB(t)|$. More precisely, with the band \eqref{basic LU}, we have
\begin{align*}
\mathbb{P}(L_{KS}(t)\leq Q(t,\underline{P})\leq U_{KS}(t)\ \forall t\in\mathbb{R}
)&=\mathbb P(\sqrt R|\hat Q(t,\underline{P})-Q(t,\underline{P})|\leq q_{1-\alpha}\ \forall t\in\mathbb{R})\\
&=\mathbb P\left(\sup_{t\in\mathbb R}|\sqrt R(\hat Q(t,\underline{P})-Q(t,\underline{P}))|\leq q_{1-\alpha}\right)\\
&\to\mathbb P\left(\sup_{t\in[0,1]}|BB(t)|\leq q_{1-\alpha}\right)=1-\alpha,
\end{align*}
where the convergence follows from \eqref{KS-statistic} and the fact that $\sup_{t\in[0,1]}|BB(t)|$ has a continuous distribution function, thus achieving asymptotic exactness in covering the entire distribution $Q(t,\underline P)$. 
Moreover, the above $L_{KS}$ and $U_{KS}$ are piece-wise constant and can be computed
readily. Figure \ref{ks_graph}(a) illustrates how a typical instance of this confidence
band covers $Q(t,\underline P)$. That is, when there is no input uncertainty, the problem straightforwardly
reduces to the classical goodness-of-fit problem.

\begin{figure}[pth]
\centering
\par
\subfloat[CB without input uncertainty]{\includegraphics[width=0.45\textwidth]{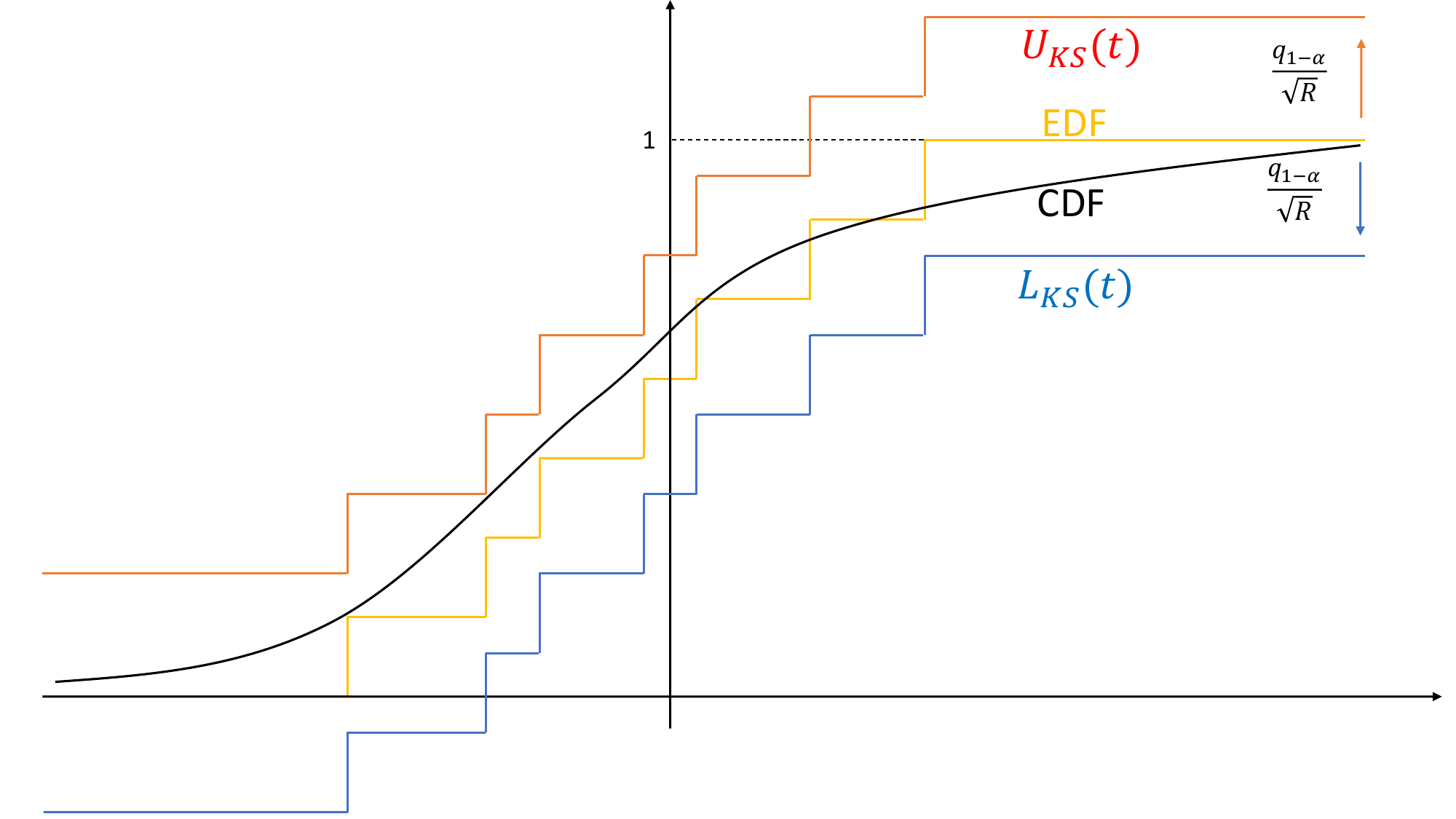}}
\subfloat[CB with input uncertainty]{\includegraphics[width=0.45\textwidth]{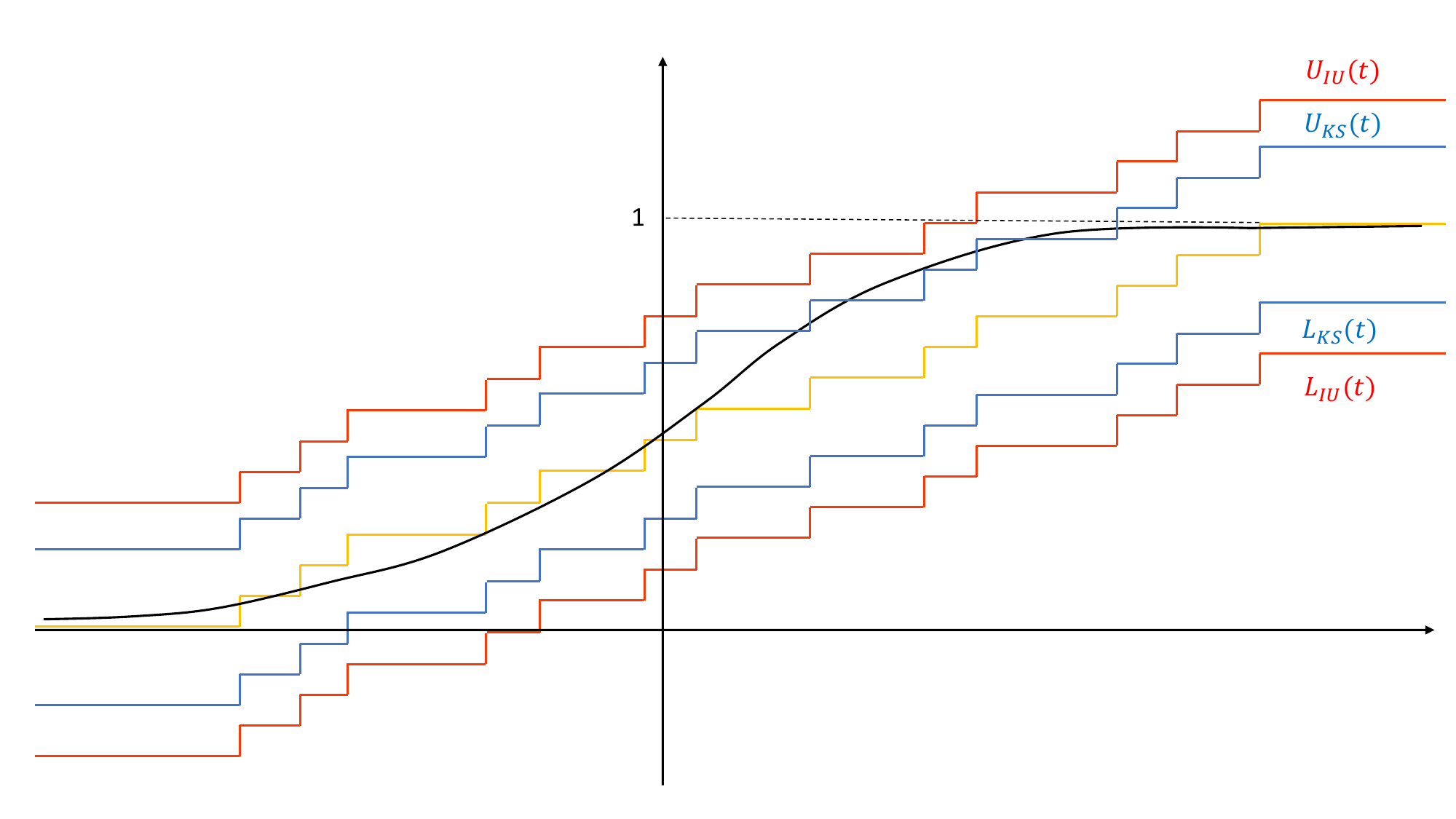}}\caption{A typical instance of various confidence
bands (CBs) with and without input uncertainty. The curve in both figures denotes the
true output distribution function. The step functions in (a) are $U_{KS}(t)$,
the empirical output distribution function, and $L_{KS}(t)$ respectively reading from top to bottom. The step functions in (b) are $U_{IU}(t)$, $U_{KS}(t)$, the empirical output distribution function, $L_{KS}(t)$ and $L_{IU}(t)$ respectively reading from top to bottom.}%
\label{ks_graph}%
\end{figure}

However, when the input distributions need to be calibrated from external
data, we can only get access to the empirical distribution $\hat
{Q}(t,\underline{\hat{P}})$ in the form of
\[
\hat{Q}(t,\underline{\hat{P}})=\frac{1}{R}\sum_{r=1}^{R} I(Y_{r}\leq t),
\]
where $Y_{r},r=1,\ldots,R$ denote i.i.d. Monte Carlo outputs (conditional on
the input data) from the simulation model driven by the empirical input
distributions $\underline{\hat P}$. If we still construct the confidence band $[L_{KS}(t),U_{KS}(t)]$ by the quantile of the standard KS statistic, we would
\emph{under-cover} the true output distribution $Q(t,\underline{P})$, because
it has ignored the additional source of input noise and under-estimate the variability of the empirical output distribution. In Figure
\ref{ks_graph}(b), we see that a typical instance of the standard KS confidence band is now too narrow to cover the entire true output
distribution function. To remedy this, we need to properly inflate the
variability of the limiting statistic to incorporate the input uncertainty and obtain a correspondingly wider confidence band $L_{IU}$ and $U_{IU}$. We call this band the \emph{input-uncertainty-inflated KS confidence band}. Figure
\ref{ks_graph}(b) illustrates such an inflated interval, where we can see that it now covers the true output distribution. The construction of this new band is the main focus of this paper and what we will discuss next.


\section{Methodology to Quantify Distributional Input Uncertainty}

\label{sec:methodology} This section presents our methodology to produce the input-uncertainty-inflated KS confidence band for the entire output distribution. We first present our main idea in Section \ref{sec:main idea}, followed by more detailed assumptions and theoretical results in Section \ref{sec:main theory}.


\subsection{Main Idea on Interval Construction}\label{sec:main idea}
As discussed in Section \ref{sec:formulation}, our key idea is to properly inflate the variability of the limiting
distribution in (\ref{KS-statistic}) to incorporate the
input uncertainty. More concretely, let $Q(t,\underline{\hat{P}})$ be the true
output distribution of the simulation model driven by the empirical input
distribution $\underline{\hat{P}}$. Motivated from the conventional two-term
variance decomposition in (\ref{conventional}), when the input data sizes $n_i$ and simulation run size $R$ are large, the error
between the empirical output distribution and true output distribution can be
decomposed as
\begin{equation}
\hat{Q}(t,\underline{\hat{P}})-Q(t,\underline{P})=[Q(t,\underline{\hat{P}%
})-Q(t,\underline{P})]+[\hat{Q}(t,\underline{\hat{P}})-Q(t,\underline{\hat{P}%
})] \label{process_decomp}%
\end{equation}
which separates the input data noise $Q(t,\underline{\hat{P}}%
)-Q(t,\underline{P})$ and Monte Carlo noise $\hat{Q}(t,\underline{\hat{P}%
})-Q(t,\underline{\hat{P}})$. Next, each noise can be approximated by a
stochastic process
\begin{equation}
Q(t,\underline{\hat{P}})-Q(t,\underline{P}) \overset{d}{\approx}%
\frac{\mathbb{G}(t)}{\sqrt{n}}, \quad\hat{Q}(t,\underline{\hat{P}%
})-Q(t,\underline{\hat{P}})\overset{d}{\approx}\frac{BB({Q}(t,\underline{P}%
))}{\sqrt{R}} \label{process_appro}%
\end{equation}
where $\mathbb{G}(t)$ is a mean-zero Gaussian process, $BB({Q}%
(t,\underline{P}))$ is a Brownian bridge with a changed time scale
${Q}(t,\underline{P})$, and $n$ is a scaling parameter on the data sizes, which we set as the average data size
$n:=\sum_{i=1}^{m} n_{i}/m$. The second process is the same as in the KS statistic and is independent of
$\mathbb{G}(t)$. The first process, on the other hand, captures the variability from input uncertainty by bearing a covariance structure that relates to the so-called influence function of $Q(t,\underline P)$, which can be viewed as a notion of derivative with respect to $\underline P$ and which we will detail further in the sequel. Consequently, (\ref{process_decomp}) and (\ref{process_appro}%
) yield
\begin{equation}
\sup_{t\in\mathbb{R}}|\hat{Q}(t,\underline{\hat{P}})-Q(t,\underline{P}%
)|\overset{d}{\approx}\sup_{t\in\mathbb{R}}\left\vert \frac{\mathbb{G}%
(t)}{\sqrt{n}}+\frac{BB({Q}(t,\underline{P}))}{\sqrt{R}}\right\vert ,
\label{approximation}%
\end{equation}

In view of the approximation \eqref{approximation}, in order to construct a
($1-\alpha$)-level confidence band for $Q(t,\underline{P})$, the key is to
estimate the ($1-\alpha$)-th quantile of the random variable
\begin{equation}
\sup_{t\in\mathbb{R}}\left\vert \frac{\mathbb{G}(t)}{\sqrt{n}}+\frac
{BB({Q}(t,\underline{P}))}{\sqrt{R}}\right\vert\label{key stat}
\end{equation}
say $\tilde q_{1-\alpha}$. Then a statistically valid pair of confidence bands
is given by
\begin{equation}
L_{IU}(t)=\hat Q(t,\underline{\hat P})-\tilde q_{1-\alpha}, \quad U_{IU}(t)=\hat
Q(t,\underline{\hat P})+\tilde q_{1-\alpha}, \label{confidence_band}%
\end{equation}
like in \eqref{basic LU}, which are readily computable since they are piece-wise constant. 

\subsection{Main Theoretical Guarantees}\label{sec:main theory}
We justify our construction in Section \ref{sec:main idea} more concretely. We first state our main assumptions. Recall that
$n=\sum_{i=1}^{m}n_{i}/m$ is the average data size and $R$ is the number of
simulation runs to construct the empirical output distribution function.

\begin{assumption}[Balanced data sizes]
\label{balanced_data} $\lim_{n\rightarrow\infty}n_{i}/n=\beta_{i}>0,\forall
i=1,\ldots,m.$
\end{assumption}

\begin{assumption}[Balanced data and Monte Carlo noises]
\label{balanced_randomness}$\lim_{n,R\rightarrow\infty}R/n=\gamma^{2}%
\in(0,\infty)$.
\end{assumption}

Assumptions \ref{balanced_data} and \ref{balanced_randomness} ensure that all the noises from input data and simulation are significant. More precisely, Assumption \ref{balanced_data} assumes the
data sizes from all input distributions follow the same order, and Assumption
\ref{balanced_randomness} assumes the same between the Monte Carlo replication size and the input data sizes. If any of these input data or simulation runs are much more than others, or in other words the input noises or the Monte Carlo noise become relatively negligible, then the problem reduces to either the classical goodness-of-fit problem where standard KS statistic suffices ($\gamma=0$, where Monte Carlo noises dominate) or the functional delta method ($\gamma=\infty$, where input noises dominate). 

Next, we present a generic structural form of the simulation model.

\begin{assumption}[Finite-horizon model]
\label{finite_horizon_model}Our stochastic simulation model is a
finite-horizon model, i.e., given $m$ arbitrary input distributions, the
output quantity $Y$ is a function of i.i.d. random variables drawn from the input
distributions, in the form of
\[
Y=h(\mathbf{X}_{1},\ldots,\mathbf{X}_{m}),
\]
where $\mathbf{X}_{1},\ldots,\mathbf{X}_{m}$ are independent, $\mathbf{X}%
_{i}=(X_{i}(1),\ldots,X_{i}(T_{i}))$ is composed of $T_{i}$ i.i.d. random
variables from the $i$-th input distribution for a constant $T_{i}%
\in\mathbb{N}$, and $h$ is a measurable function. Furthermore, the output
distribution function under the true input distributions $\underline{P}$,
i.e.,
\[
Q(t,\underline{P})=\mathbb{P}(h(\mathbf{X}_{1},\ldots,\mathbf{X}_{m})\leq t)
\]
is a continuous function in $t$.
\end{assumption}


The finite-horizon model described in Assumption \ref{finite_horizon_model} occurs commonly in discrete-event simulation. For instance, in the queueing context, $\mathbf X_1$ and $\mathbf X_2$ can refer to sequences of inter-arrival and service times of customers, and $h$ denotes layers of max-plus operators. With minimal restriction on the form of $h$, Assumption \ref{finite_horizon_model} is general enough to cover many interesting discrete-event examples, while allows us to establish weak convergence results for KS-like statistics. In particular, under the finite-horizon model, the exact output distribution function under $\underline{\hat{P}}$ is given by the so-called V-statistic, which can be viewed as a generalization of the sample mean from one observation in each summand to several observations in each summand:
\begin{equation}
Q(t,\underline{\hat{P}})=\frac{1}{\prod_{i=1}^{m}n_{i}^{T_{i}}}\sum_{1\leq
j_{i1},\ldots,j_{iT_{i}}\leq n_{i},1\leq i\leq m}I(h(X_{1,j_{11}}%
,\ldots,X_{1,j_{1T_{1}}},\ldots,X_{m,j_{m1}},\ldots,X_{m,j_{mT_{m}}})\leq t).\label{V_statistic}%
\end{equation}

In order to argue the statistical validity of our proposed band $\{[L_{IU}(t),U_{IU}(t)]:t\in\mathbb R\}$ in \eqref{confidence_band}, we need a weak convergence result similar to \eqref{KS-statistic}, but under both input and Monte Carlo uncertainties. In order to attain this result, we first introduce the so-called influence function. This object can be viewed as the functional gradient or the Gateaux derivative of $Q(t,\underline{P})$ with respect to the $i$-th input distribution $P_{i}$ (see \cite{wasserman2006all} Section 2.3 and \cite{serfling2009approximation} Chapter 6). For arbitrary input distributions $\underline{P}=(P_{1},\ldots,P_{m})$ and any $x\in\mathbb{R}$, the
influence function $IF_{i}(t,x;\underline{P})$ of $Q(t,\underline{P})$ for the
$i$-th input distribution is defined as
\[
IF_{i}(t,x;\underline{P})=\lim_{\varepsilon\downarrow0}\frac{Q(t,P_{1},\ldots,P_{i-1},(1-\varepsilon)P_{i}+\varepsilon\delta_{x},P_{i+1},\ldots,P_{m})-Q(t,\underline{P})}{\varepsilon}.
\]
Influence function has zero mean under $P_{i}$, i.e., $\mathbb{E}_{P_{i}}[IF_{i}(t,X_{i};\underline{P})]=0$. Besides, with its gradient interpretation, it gives rise to the first order Taylor expansion of the functional $\underline{P} \mapsto Q(t,\underline{P})$ (e.g., see Assumptions \ref{1st_expansion_truth} and \ref{1st_expansion_empirical} in Appendix \ref{sec:general_assumptions}). For the finite-horizon model we presented in Assumption \ref{finite_horizon_model}, its influence function has the following representation
\begin{equation}
IF_{i}(t,x;\underline{P})=\sum_{j=1}^{T_{i}}\mathbb{P}(h(\mathbf{X}_{1},\ldots,\mathbf{X}_{m})\leq t|X_{i}(j)=x)-T_i\mathbb{P}(h(\mathbf{X}_{1},\ldots,\mathbf{X}_{m})\leq t),\label{IF finite horizon}
\end{equation}
where the probability $\mathbb{P}$ is a product measure of the $\underline{P}$ (i.e., $\mathbf X_1,\ldots,\mathbf X_m$ are all independent).

With all the above, we have the following weak convergence result that generalizes the
KS statistic to incorporate input uncertainty:

\begin{theorem}
\label{weak_convergence_sup}Suppose Assumptions \ref{balanced_data},
\ref{balanced_randomness} and \ref{finite_horizon_model} hold. We have%
\[
\sup_{t\in\mathbb{R}}|\sqrt{R}(\hat{Q}(t,\underline{\hat{P}})-Q(t,\underline{P}))|\overset{d}{\rightarrow}\sup_{t\in\mathbb{R}}|\gamma\mathbb{G}(t)+BB({Q}(t,\underline{P}))|
\]
as $n,R\rightarrow\infty$, where $BB(\cdot)$ and $\mathbb{G}(\cdot)$ are two
independent stochastic processes, $BB(\cdot)$ is the standard Brownian bridge
on $[0,1]$, $\mathbb{G}(\cdot)$ is a mean-zero Gaussian process defined on
$\mathbb{R}$ having bounded continuous paths almost surely and the following
covariance function
\begin{equation}
\mathrm{Cov}(\mathbb{G}(t),\mathbb{G}(s))=\sum_{i=1}^{m}\frac{1}{\beta_{i}}\mathrm{Cov}_{P_{i}}(IF_{i}(t,X_{i};\underline{P}),IF_{i}(s,X_{i};\underline{P})),\label{covariance_function}
\end{equation}
where $P_i$ is the $i$-th true input distribution. Moreover, the limiting distribution is absolutely continuous with respect to Lebesgue measure on $[0,\infty)$.
\end{theorem}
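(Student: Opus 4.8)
The plan is to establish the weak convergence at the level of the full stochastic processes in a suitable Skorohod space and only then pass to the supremum, using the two-term split already indicated in \eqref{process_decomp}. Write
\[
\sqrt R\big(\hat Q(t,\underline{\hat P})-Q(t,\underline P)\big)
=\underbrace{\sqrt R\big(Q(t,\underline{\hat P})-Q(t,\underline P)\big)}_{=:A_{n,R}(t)}
+\underbrace{\sqrt R\big(\hat Q(t,\underline{\hat P})-Q(t,\underline{\hat P})\big)}_{=:B_{n,R}(t)},
\]
so that $A_{n,R}$ carries the input data noise and $B_{n,R}$ the Monte Carlo noise. Since $R/n\to\gamma^2$ by Assumption \ref{balanced_randomness}, $A_{n,R}=\sqrt{R/n}\cdot\sqrt n\big(Q(\cdot,\underline{\hat P})-Q(\cdot,\underline P)\big)$, and the first task is to show $\sqrt n\big(Q(\cdot,\underline{\hat P})-Q(\cdot,\underline P)\big)\Rightarrow\mathbb G(\cdot)$ in the Skorohod space $D[-\infty,\infty]$. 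Inserting the V-statistic representation \eqref{V_statistic} and the first-order (Hoeffding/Gateaux) expansion of $\underline P\mapsto Q(t,\underline P)$, one writes this quantity as $\sum_{i=1}^m\frac1{n_i}\sum_{j=1}^{n_i}IF_i(t,X_{i,j};\underline P)$ plus a remainder that must be shown to be $o_p(1/\sqrt n)$ \emph{uniformly} in $t$; the leading term is a sum of $m$ independent mean-zero empirical processes whose multivariate functional CLT limit is $\sum_i\beta_i^{-1/2}\mathbb G_i$ with $\mathbb G_i$ independent mean-zero Gaussian with kernel $\mathrm{Cov}_{P_i}(IF_i(t,X_i;\underline P),IF_i(s,X_i;\underline P))$, which reproduces exactly \eqref{covariance_function}. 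Continuity of $Q(\cdot,\underline P)$ and dominated convergence make this kernel continuous, and the functional limit then has a.s.\ bounded continuous paths, as claimed.

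Next I would handle $B_{n,R}$. Conditionally on the input data, $\hat Q(\cdot,\underline{\hat P})$ is the empirical distribution of $R$ i.i.d.\ draws from $Q(\cdot,\underline{\hat P})$, so classically $B_{n,R}$ behaves like $BB_R(Q(\cdot,\underline{\hat P}))$ for a Brownian bridge $BB_R$; the delicate point is that the underlying law $Q(\cdot,\underline{\hat P})$ is itself random and only converges to $Q(\cdot,\underline P)$, which precludes a direct appeal to \eqref{KS-statistic}. To cope with this entanglement I would use the Koml\'{o}s--Major--Tusn\'{a}dy strong approximation in the form giving a coupling of the uniform empirical process to a Brownian bridge with an error rate that is uniform over the underlying distribution; this yields $\sup_t\big|B_{n,R}(t)-BB_R(Q(t,\underline{\hat P}))\big|\overset{p}{\to}0$, and then uniform convergence $Q(\cdot,\underline{\hat P})\to Q(\cdot,\underline P)$ together with uniform continuity of the Brownian bridge gives $\sup_t\big|BB_R(Q(t,\underline{\hat P}))-BB_R(Q(t,\underline P))\big|\overset{p}{\to}0$. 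Hence $B_{n,R}(\cdot)\Rightarrow BB(Q(\cdot,\underline P))$, and because the Monte Carlo replications are conditionally independent of the input data, $(A_{n,R},B_{n,R})$ converges jointly in the product Skorohod space to $(\gamma\mathbb G,\,BB\circ Q(\cdot,\underline P))$ with the two coordinates independent; in particular the sum converges to $\gamma\mathbb G(\cdot)+BB(Q(\cdot,\underline P))$.

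It remains to pass to the supremum and to verify the absolute continuity. The map $f\mapsto\sup_{t\in\mathbb R}|f(t)|$ is \emph{not} continuous on the Skorohod space over the non-compact domain $\mathbb R$, so the continuous mapping theorem cannot be invoked directly. I would circumvent this by composing the output process with the true distribution function $Q(\cdot,\underline P)$, i.e.\ reparametrizing by $u=Q(t,\underline P)\in[0,1]$; since $Q(\cdot,\underline P)$ is continuous with range $[0,1]$ (Assumption \ref{finite_horizon_model}), this composition has the same supremum as the original process and now lives on the compact domain $[0,1]$, on which $\sup|\cdot|$ \emph{is} continuous, so the continuous mapping theorem delivers the stated convergence of the suprema. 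For the last claim, $\gamma\mathbb G+BB\circ Q(\cdot,\underline P)$ is a \emph{centered Gaussian} process (an independent sum of Gaussians) with a.s.\ bounded continuous paths on a compact index set, hence induces a Gaussian measure on $C[0,1]$; the functional $\sup|\cdot|$ is $1$-Lipschitz, hence Cameron--Martin--Lipschitz, and is not a.s.\ constant because the process is non-degenerate (e.g.\ $\mathrm{Var}(BB(Q(t,\underline P)))>0$ for interior $t$, which also forces $\mathbb P(\sup_t|\gamma\mathbb G(t)+BB(Q(t,\underline P))|=0)=0$), and a standard result on laws of Lipschitz functionals of non-degenerate Gaussian measures then gives that its distribution has a density on $[0,\infty)$.

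I expect the main obstacle to be twofold. On the input side, the uniform-in-$t$ control of the remainder in the Hoeffding expansion of $Q(t,\underline{\hat P})$ amounts to bounding the modulus of continuity of a \emph{multi-sample V-process}, a genuine extension of Silverman-type U-process arguments to several independent samples with repeated indices; this is where most of the technical work sits. On the Monte Carlo side, the entanglement of the empirical process with the \emph{random} underlying law $Q(\cdot,\underline{\hat P})$ forces a version of the KS limit uniform over all continuous underlying distributions, for which the KMT coupling is the essential tool. The reparametrization needed to make $\sup$ continuous is technically fussy but routine once the correct coupling process has been identified.
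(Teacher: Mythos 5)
Your overall architecture matches the paper's: split into input noise $A_{n,R}$ and Monte Carlo noise $B_{n,R}$, control the modulus of continuity of the V-process for the former, use the KMT coupling (with uniform-over-distributions error) for the latter, establish joint convergence in the Skorohod product space, use continuity of addition on the continuous-path subset, and then compactify the domain before applying the sup. These are all the right moves, and your identification of the two main technical obstacles is accurate.

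Where your proposal has a genuine gap is the final step. You argue that ``reparametrizing by $u=Q(t,\underline P)$'' gives a process on $[0,1]$ ``with the same supremum as the original process,'' and then apply the continuous mapping theorem. This does not hold as stated. When $Q(\cdot,\underline P)$ is not strictly increasing, the generalized inverse $Q^{-1}$ skips any interval where $Q$ is flat, so $\sup_{u\in[0,1]}|X_n(Q^{-1}(u,\underline P))|$ can be strictly smaller than $\sup_{t\in\mathbb R}|X_n(t)|$: the empirical CDF $\hat Q(\cdot,\underline{\hat P})$ may well vary on a flat interval of $Q(\cdot,\underline P)$ because its underlying law is $Q(\cdot,\underline{\hat P})$, not $Q(\cdot,\underline P)$. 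Moreover, after reparametrization the limit process $\gamma\,\mathbb G(Q^{-1}(u))+BB(u)$ need not have continuous paths (the $\mathbb G$ part jumps wherever $Q^{-1}$ does), so even if the sups agreed you could not apply the continuous mapping theorem directly on $D[0,1]$. The paper's fix is not a time-reparametrization but a coupling in the \emph{output variable}: one introduces the transformed output $\tilde Y_r=Q(Y_r,\underline P)$, whose true CDF is $U[0,1]$, forms its own empirical-minus-true process $\sqrt R(\hat{\tilde Q}(\cdot,\underline{\hat P})-\cdot)$ on $[0,1]$, shows that this process's limit does have a.s.\ continuous paths, and proves an $L_1$ bound
\[
\mathbb E\Bigl[\bigl|\sup_{t\in\mathbb R}|\sqrt R(\hat Q(t,\underline{\hat P})-Q(t,\underline P))|-\sup_{u\in[0,1]}|\sqrt R(\hat{\tilde Q}(u,\underline{\hat P})-u)|\bigr|\Bigr]\to 0
\]
using $I(Y_r\le t)\le I(Q(Y_r,\underline P)\le Q(t,\underline P))$ and a U/V-statistic argument. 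Only then does the continuous mapping theorem on the compact domain apply. Your phrase ``composing the output process with the true distribution function'' hints at this, but the ``i.e.\ reparametrizing'' gloss and the claim of exact equality of sups replace the actual argument with an incorrect shortcut. A secondary, smaller remark: for the absolute-continuity claim you appeal to a general result on Lipschitz functionals of nondegenerate Gaussian measures and then rule out the atom at $0$ by a one-point variance calculation; the paper instead cites a result of Lifshits tailored to suprema of continuous Gaussian processes. Both routes are reasonable, but yours should be stated carefully enough to exclude an atom at the essential infimum, which you do handle.
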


Finally, based on Theorem \ref{weak_convergence_sup}, we show that the confidence band $\{[L_{IU}(t),U_{IU}(t)]:t\in\mathbb R\}$ in \eqref{confidence_band} is an asymptotically exact $(1-\alpha)$-level confidence band. 

\begin{corollary}\label{CB_direct}
Suppose Assumptions \ref{balanced_data}, \ref{balanced_randomness} and \ref{finite_horizon_model} hold. Let $q^{IS}_{1-\alpha}$ be the $(1-\alpha)$-th quantile of $\sup_{t\in\mathbb{R}}|\gamma\mathbb{G}(t)+BB({Q}(t,\underline{P}))|$. We have the following asymptotic exact $(1-\alpha)$-level confidence band
\[
\mathbb{P}\left(  \hat{Q}(t,\underline{\hat{P}})-\frac{q^{IS}_{1-\alpha}%
}{\sqrt{R}}\leq Q(t,\underline{P})\leq\hat{Q}(t,\underline{\hat{P}}%
)+\frac{q^{IS}_{1-\alpha}}{\sqrt{R}}\ \forall t\in\mathbb{R}\right)
\rightarrow1-\alpha
\]
as $n,R\rightarrow\infty$.
\end{corollary}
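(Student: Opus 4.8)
The plan is to reduce the coverage statement to the weak convergence already established in Theorem \ref{weak_convergence_sup} and then invoke continuity of the limiting distribution. First I would rewrite the event in the displayed probability: for each fixed $t$, the inequalities $\hat Q(t,\underline{\hat P})-q^{IS}_{1-\alpha}/\sqrt R\le Q(t,\underline P)\le\hat Q(t,\underline{\hat P})+q^{IS}_{1-\alpha}/\sqrt R$ hold if and only if $\sqrt R\,|\hat Q(t,\underline{\hat P})-Q(t,\underline P)|\le q^{IS}_{1-\alpha}$, so intersecting over $t\in\mathbb R$ gives
\[
\mathbb{P}\left(\hat Q(t,\underline{\hat P})-\frac{q^{IS}_{1-\alpha}}{\sqrt R}\le Q(t,\underline P)\le\hat Q(t,\underline{\hat P})+\frac{q^{IS}_{1-\alpha}}{\sqrt R}\ \forall t\in\mathbb R\right)=\mathbb{P}\left(\sup_{t\in\mathbb R}\sqrt R\,|\hat Q(t,\underline{\hat P})-Q(t,\underline P)|\le q^{IS}_{1-\alpha}\right).
\]

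Next I would apply Theorem \ref{weak_convergence_sup}, which asserts that $S_{n,R}:=\sup_{t\in\mathbb R}|\sqrt R(\hat Q(t,\underline{\hat P})-Q(t,\underline P))|$ converges in distribution to $S:=\sup_{t\in\mathbb R}|\gamma\mathbb{G}(t)+BB(Q(t,\underline P))|$, and moreover that the law of $S$ is absolutely continuous with respect to Lebesgue measure on $[0,\infty)$. Let $F$ be the distribution function of $S$. Absolute continuity makes $F$ continuous on all of $\mathbb R$ (it already vanishes on $(-\infty,0)$), so in particular the quantile $q^{IS}_{1-\alpha}$ — finite because $S$ is a.s. finite, as $BB$ has continuous paths on the compact set $[0,1]$ and $\mathbb{G}$ has bounded paths a.s. — is a continuity point of $F$. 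Hence $\mathbb{P}(S_{n,R}\le q^{IS}_{1-\alpha})\to F(q^{IS}_{1-\alpha})$ as $n,R\to\infty$.

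Finally I would identify $F(q^{IS}_{1-\alpha})=1-\alpha$: by definition $q^{IS}_{1-\alpha}=\inf\{x:F(x)\ge 1-\alpha\}$, and since $F$ is continuous, nondecreasing, and tends to $1$ at $+\infty$, this infimum is attained with $F(q^{IS}_{1-\alpha})=1-\alpha$ for every $\alpha\in(0,1)$. Chaining the three displays yields the stated convergence. There is no substantive obstacle here — all the technical weight resides in Theorem \ref{weak_convergence_sup} — but the step that must not be skipped is the appeal to continuity of $F$ at $q^{IS}_{1-\alpha}$: without the absolute-continuity conclusion of the theorem, weak convergence alone would deliver only the Portmanteau inequalities $\limsup\mathbb P(S_{n,R}\le q)\le F(q)$ and $\liminf\mathbb P(S_{n,R}<q)\ge F(q-)$, not the exact limit $1-\alpha$ needed for asymptotic exactness.
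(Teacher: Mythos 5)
Your proposal is correct and follows the same route the paper intends: the paper's proof of Corollary~\ref{CB_direct} simply states that it follows directly from Theorem~\ref{weak_convergence_sup}, and your argument fills in exactly the steps implied — rewriting the coverage event as a statement about $\sup_t \sqrt{R}|\hat Q(t,\underline{\hat P})-Q(t,\underline P)|$, applying the weak convergence, and using the absolute-continuity conclusion (together with the fact that the limit has no atom at $0$) to pass to the exact limit $1-\alpha$ at the quantile.
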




The proof of Theorem \ref{weak_convergence_sup} and hence Corollary \ref{CB_direct} follows the intuitive argument discussed in Section \ref{sec:main idea}. Nonetheless, the weak convergence established in Theorem \ref{weak_convergence_sup} requires two substantial technical novelties beyond merely applying the classical theory for the KS statistic. First of all, our proof approach derives a joint weak convergence in the Skorohod product space of the two sources of randomness, namely the input data noise and the Monte Carlo noise in the decomposition (\ref{process_decomp}), followed by applying a supremum operator to obtain the asymptotic of our ultimate statistic. Our first novelty is to develop new statistical tools to handle the convergence of the input data noise and its entanglement with the Monte Carlo noise. In particular, we utilize the V-process structure in common discrete-event simulation models described in Assumption \ref{finite_horizon_model} to bound the modulus of continuity of the input data noise. We also leverage the Koml\'{o}s-Major-Tusn\'{a}dy approximation (\cite{komlos1975approximation}) to derive the KS-type convergence of the Monte Carlo noise, by providing a uniform error bound that allows for perturbation due to the entangled input noise. After adding both noises to obtain the weak convergence of the total noise $\sqrt{R}(\hat{Q}(\cdot,\underline{\hat{P}})-Q(\cdot,\underline{P}))$, the classical route to obtain a KS-type statistic and confidence band is to apply the continuous mapping theorem for the supremum operator $f(\cdot)\mapsto \sup_{t\in\mathbb{R}} |f(t)|$ to get the desired weak convergence in Theorem \ref{weak_convergence_sup}. However, the supremum operator is actually not continuous in our Skorohod space due to its unbounded domain inherited from the possibly unbounded support of $Q(\cdot,\underline{P})$. To this end, our second novelty is to bypass this issue by creating an artificial coupling output $Q(Y,\underline{P})$ from the original output $Y$ whose distribution becomes $U[0,1]$ and thus has bounded support. We show that the total noise processes $\sqrt{R}(\hat{Q}(\cdot,\underline{\hat{P}})-Q(\cdot,\underline{P}))$ associated with the original output $Y$ and the new output $Q(Y,\underline{P})$ have approximately the same supremum, while the new process lies in a Skorohod space with compact domain where the supremum operator is indeed continuous, which in turn induces the convergence in Theorem \ref{weak_convergence_sup}. We provide further details on the above discussion in Section \ref{sec:theory_convergence}.

\section{Efficient Implementation}\label{sec:guarantee}
This section is devoted to the implementation guidelines of the input-uncertainty-inflated KS confidence band provided in Corollary \ref{CB_direct}. Section \ref{sec:proc} provides procedural details and algorithmic specifications for estimating the quantile $q^{IS}_{1-\alpha}$ (or equivalently $\tilde q_{1-\alpha}$ in (\ref{confidence_band})), including in particular the use of subsampling to estimate the covariance function of $\mathbb{G}(\cdot)$. Section \ref{sec:cov subsampling} presents the statistical guarantees and optimal algorithmic configuration of our subsampling scheme.


\subsection{Procedural Specifications and Covariance Function Subsampling}\label{sec:proc}

Computing the confidence band \eqref{confidence_band} requires several approximation procedures to estimate the involved quantile $\tilde q_{1-\alpha}$. First, note that exact simulation of the limiting distribution is hard
since it contains the supremum over the whole real line. To this end, we first discretize $t$ on a grid over the real line, say $t_{1},\ldots,t_{k}$ and
approximate the supremum over $t\in\mathbb{R}$ by the maximum over $t\in
\{t_{1},\ldots,t_{k}\}$, i.e.,
\begin{equation}
\max_{i=1,\ldots,k}\left\vert \frac{\mathbb{G}(t_{i})}{\sqrt{n}}+\frac
{BB({Q}(t_{i},\underline{P}))}{\sqrt{R}}\right\vert \overset{d}{\approx}%
\sup_{t\in\mathbb{R}}\left\vert \frac{\mathbb{G}(t)}{\sqrt{n}}+\frac
{BB({Q}(t,\underline{P}))}{\sqrt{R}}\right\vert . \label{discrete_time}%
\end{equation}
Since the limiting process $\gamma\mathbb{G}(t)+BB({Q}(t,\underline{P}))$ has continuous paths, when the discrete grid gets finer, the finite maximum is guaranteed to converge to the true supremum, i.e., the finite maximum with a fine enough grid is a good approximation of the supremum. Therefore, we next aim at (approximately) simulating the finite maximum in \eqref{discrete_time} to obtain its quantile. Here, by plugging in $\hat Q$ and $\hat{\underline P}$ for $Q$ and $\underline P$, we can approximate $BB({Q}(t_{i},\underline{P}))$ by
$BB(\hat{Q}(t_{i},\underline{\hat{P}}))$, which can be readily simulated. On the other hand, $\mathbb{G}(\cdot)$ is a mean-zero Gaussian process, which can be simulated as long as we can compute its covariance function. Note that directly computing the influence function \eqref{IF finite horizon} is expensive due to the many involved conditional expectation estimations. In the following, we explain how to use the bootstrap as a viable alternative, more specifically subsampling that allows us to substantially reduce the computation burden.


Based on the approximation
\eqref{process_appro}, we have
\begin{equation}
n\mathrm{Cov}(Q(t,\underline{\hat{P}}),Q(t^{\prime},\underline{\hat{P}%
}))\approx\mathrm{Cov}(\mathbb{G}(t),\mathbb{G}(t^{\prime})),
\label{cov_appro1}%
\end{equation}
where the covariance in the LHS is taken with respect to the true input distributions. As we do not know the latter, we consider the classical variance bootstrap that utilizes the following plug-in approximation
\begin{equation}
n\mathrm{Cov}_{*}(Q(t,\underline{\hat{P}}^{\ast}),Q(t^{\prime},\underline{\hat{P}}^{\ast}))\approx n\mathrm{Cov}(Q(t,\underline{\hat{P}}),Q(t^{\prime},\underline{\hat{P}})),\label{naive_cov_appro}
\end{equation}
where $\mathrm{Cov}_{*}$ denotes the covariance conditional on the empirical input distributions $\underline{\hat{P}}$, and $\underline{\hat{P}}^{*}=(\hat{P}^{*}_1,\ldots,\hat{P}^{*}_m)$ with $\hat{P}^{*}_i$ denoting the empirical distribution of $n_i$ i.i.d. observations drawn from $\hat{P}_i$. Since the exact output distribution function $Q(\cdot,\underline{\hat{P}}^{\ast})$ is also unknown, we can approximate it by the empirical output distribution. The implementation of the above classical variance bootstrap is as follows: We first draw i.i.d. data $X_{i,1}^{*},\ldots,X_{i,n_{i}}^{*}$ from the empirical distribution $\hat{P}_i$ (i.e., resample uniformly with replacement). The resample empirical distributions $\underline{\hat{P}}^{*}=(\hat{P}^{*}_1,\ldots,\hat{P}^{*}_m)$ with $\hat{P}^{*}_i=(1/n_{i})\sum_{j=1}^{n_{i}} \delta_{X_{i,j}^{*}}$ are used to drive many simulation runs (in order to wash away the Monte Carlo error), which then form the empirical output distribution function $\hat{Q}(\cdot,\underline{\hat{P}}^{*})$ as an approximation of $Q(\cdot,\underline{\hat{P}}^{*})$. We repeat the resampling and the above derived procedure many times (say $B$ times) and get $B$ empirical output distribution functions $\hat{Q}(\cdot,\underline{\hat{P}}^{*1}),\ldots,\hat{Q}(\cdot,\underline{\hat{P}}^{*B})$. Then the LHS of (\ref{naive_cov_appro}) can be approximated via the empirical covariance formula
\[
\frac{n}{B-1}\sum_{b=1}^{B}(\hat{Q}(t,\underline{\hat{P}}^{\ast b})-\bar{Q}(t))(\hat{Q}(t^{\prime},\underline{\hat{P}}^{\ast b})-\bar{Q}(t^{\prime})),
\]
where $\bar{Q}(\cdot)=\sum^B_{b=1} \hat Q(\cdot,\underline{\hat{P}}^{*b})/B$. 

Note that the variance bootstrap described above requires nested
simulation where, in the outer layer, we resample the input distributions and,
in the inner layer, we use the resample distributions to drive simulation runs
to obtain the output distributions. The issue with this naive use of the
bootstrap is that the total required simulation effort can be
huge (see \cite{lam2022subsampling} Section 2.3), essentially an amount of larger order than the data size
itself so that the Monte Carlo noises can be washed away. To remedy this computational demand, we utilize a generalization of
\cite{lam2022subsampling} to devise a subsampling procedure. This procedure
aims to exploit the form of the covariance in \eqref{cov_appro1} that scales
with the data size $n$ and turns out to reduce the computational demand from a
larger order than $n$ to an order that is independent of $n$.


To explain further, our subsampling procedure is a modification of the above bootstrap scheme where we draw a smaller-size instead of a full-data-size resample in the outer layer. More precisely, we first specify a subsample ratio $\theta\in(0,1]$, and
compute $s_{i} = \lfloor\theta n_{i}\rfloor, \forall i = 1,\ldots, m$ as the
subsample sizes. Let $X_{i,1}^{*},\ldots,X_{i,s_{i}}^{*}$ be resamples
uniformly drawn with replacement from $X_{i,1}, \ldots, X_{i,n_{i}}$ and let
$\hat P_{i,s_{i}}^{*}=(1/s_{i})\sum_{j=1}^{s_{i}} \delta_{X_{i,j}^{*}}$ be the
corresponding subsample empirical input distribution. For convenience, denote $\underline{\hat{P}}^{*}_{\theta} = (\hat P_{1,s_{1}}^{*},\ldots, \hat P_{m,s_{m}}^{*})$. Using
the same rationale as the full-data-size bootstrap, we can approximate the LHS of
\eqref{cov_appro1} by
\begin{equation}
\theta n\mathrm{Cov}_{*}(Q(t,\underline{\hat{P}}_{\theta}^{\ast}),Q(t^{\prime
},\underline{\hat{P}}_{\theta}^{\ast}))\approx n\mathrm{Cov}%
(Q(t,\underline{\hat{P}}),Q(t^{\prime},\underline{\hat{P}}))\approx
\mathrm{Cov}(\mathbb{G}(t),\mathbb{G}(t^{\prime})). \label{cov_appro2}%
\end{equation}
Compared to the classical variance bootstrap (\ref{naive_cov_appro}), in the subsample variance bootstrap (\ref{cov_appro2}) we now rescale the bootstrap covariance in \eqref{cov_appro2} by $\theta
n$ instead of $n$. This is because we now use the smaller average data size
$s=\sum_{i=1}^{m}s_{i}/m$ in our resampling, and this rescaling factor
$\theta$ retains the estimation correctness for the original quantity of interest. Importantly, as we will state more clearly in the next subsection, this modification will allow us to use less simulation effort than the
classical bootstrap. The overall implementation of \eqref{cov_appro2} is quite similar to that of (\ref{naive_cov_appro}), the main difference being the replacement of the full size resample empirical distributions $\underline{\hat{P}}^{*}$ by the subsample empirical distributions $\underline{\hat{P}}^{*}_{\theta}$. Algorithm \ref{alg:subsample} summarizes our subsampling procedure to
estimate the covariance matrix of $\mathbb{G}(t)$ for $t=t_{1},\ldots,t_{k}$.

\begin{algorithm}[hbt!]
\caption{Subsampling to Estimate Covariance Matrix}
\label{alg:subsample}
\textbf{Inputs:} number of bootstrap resample $B$, number of inner simulation runs $R_s$, input data $\{X_{i, j}: i=1,\dots,m, j=1,\dots,n_i\}$, subsample ratio $\theta$, a grid $\{t_l: l=1,\dots,k\}$
\begin{algorithmic}
\STATE Compute $s_i =\lfloor\theta n_i\rfloor, i =1,\dots, m$.
\STATE Compute $n=\sum_{i=1}^m n_i/m$.
\FOR{$b= 1, \dots, B$ }
\STATE For each $i=1,\dots, m$, draw a subsample $X_i^{*b} = \{X_{i,1}^{*b},\dots,X_{i,s_i}^{*b}\}$ uniformly with replacement from the input data $\{X_{i,1},\ldots,X_{i,n_i}\}$. Form resample empirical input distribution $\hat P_{i,s_i}^{*b}$.
\STATE Let $\underline{\hat{P}}^{*b}_{\theta} = (\hat P_{1,s_1}^{*b},\ldots, \hat P_{m,s_m}^{*b})$.
\FOR{$r= 1, \dots, R_s$ }
\STATE Simulate $\hat{Y}_r^{*b}$ from the stochastic simulation model driven by $\underline{\hat{P}}^{*b}_{\theta}$.
\ENDFOR
\STATE Form the empirical output distribution $\hat{Q}(\cdot,\underline{\hat{P}}^{*b}_{\theta})=\sum_{r=1}^{R_s}I(Y_r^{*b}\leq \cdot)/R_s$.
\ENDFOR
\STATE Form the average empirical output distribution $\bar{Q}(\cdot) = \sum^B_{b=1} \hat Q(\cdot,\underline{\hat{P}}^{*b}_{\theta})/B$.
\STATE Compute the estimated covariance matrix $V\in\mathbb{R}^{k\times k}$ at points $\{t_l: l=1,\ldots,k\}$:
\begin{equation*}
V_{i,j}  =  \frac{\theta n}{B-1}\sum_{b=1}^{B}(\hat{Q}(t_{i},\underline{\hat{P}}_{\theta}^{\ast b})-\bar{Q}(t_{i}))(\hat{Q}(t_{j},\underline{\hat{P}}_{\theta}^{\ast b})-\bar{Q}(t_{j})).
\end{equation*}
\RETURN $V$
\end{algorithmic}
\end{algorithm}

Once we obtain the estimated covariance matrix $V$ returned by Algorithm
\ref{alg:subsample}, we can simulate a $k$-dimensional Gaussian vector
$(Z_{1},\ldots,Z_{k})\sim N(0,V)$ as the estimator for $(\mathbb{G}%
(t_{1}),\ldots,\mathbb{G}(t_{k}))$. As explained before, we can also simulate
$(W_{1},\ldots,W_{k})\overset{d}{=}(BB(\hat{Q}(t_{1},\underline{\hat{P}%
})),\ldots,BB(\hat{Q}(t_{k},\underline{\hat{P}})))$ as an approximation for
$(BB({Q}(t_{1},\underline{P})),\ldots,BB({Q}(t_{k},\underline{P})))$. Since
the processes $\mathbb{G}(t)$ and $BB({Q}(t,\underline{P}))$ are independent,
the simulation procedures of $(Z_{1},\ldots,Z_{k})$ and $(W_{1},\ldots,W_{k})$
are also set to be independent, conditional on the input data. With them in hand,
we can obtain one simulation replication of the finite maximum in (\ref{discrete_time}).
Repeating the above procedure many times, we can estimate the quantile of the
finite maximum in (\ref{discrete_time}) and construct the desired confidence
band as in (\ref{confidence_band}). We summarize our confidence band
construction procedure in Algorithm \ref{alg:cdf_cb}.

\begin{algorithm}[hbt!]
\caption{Construction of Input-Uncertainty-Inflated KS Confidence Band for the Output Distribution Function}
\label{alg:cdf_cb}
\textbf{Inputs:} number of simulation runs for empirical output distribution $R$, number of limiting distribution samples $R_q$, confidence level $1-\alpha$, input data $\{X_{i, j}: i=1,\dots,m, j=1,\dots,n_i\}$, estimated covariance matrix $V$, the same grid $\{t_l: l=1,\dots,k\}$ as in Algorithm \ref{alg:subsample}
\begin{algorithmic}
\STATE Compute $n=\sum_{i=1}^m n_i/m$.
\FOR{$r= 1, \dots, R$ }
\STATE Simulate $Y_r$ from the stochastic simulation model driven by $\underline{\hat{P}}$.
\ENDFOR
\STATE Form the empirical output distribution $\hat Q(\cdot,\underline{\hat{P}})=\sum_{r=1}^{R}I(Y_r\leq \cdot)/R$.
\FOR{$i = 1,\dots, R_q$}
\STATE Simulate $Z^i=(Z^i_1,\ldots,Z^i_k)$ from the $k$-dimensional Gaussian random vector $N(0,V)$.
\STATE Simulate $W^i=(W^i_1,\ldots,W^i_k)$ as a time-transformed discretized Brownian bridge $BB(\hat{Q}(t_l,\underline{\hat{P}})), l = 1, \ldots,k$.
\STATE Compute $e_i= \max\limits_{j = 1,\dots,k} |Z^i_j/\sqrt{n}+W^i_j/\sqrt{R}|$.
\ENDFOR
\STATE Find $\hat{\tilde{q}}_{1-\alpha} = (1-\alpha)$-th quantile of $e_1,\dots,e_{R_q}$.
\RETURN $L_{IU}(\cdot)=\hat{Q}(\cdot,\underline{\hat{P}})-\hat{\tilde{q}}_{1-\alpha}$ and $U_{IU}(\cdot)=\hat{Q}(\cdot,\underline{\hat{P}})+\hat{\tilde{q}}_{1-\alpha}$.
\end{algorithmic}
\end{algorithm}

To close this subsection, we make a final note that, to estimate the covariance function, one may also consider a debiased ANOVA-based estimator like the one used to estimate the variance of a conditional expectation (e.g., \cite{sun2011efficient,lam2022subsampling}), which turns out to be conditionally unbiased for $\theta n\mathrm{Cov}_{\ast}(Q(t,\underline{\hat{P}}_{\theta}^{\ast}),Q(t^{\prime},\underline{\hat{P}}_{\theta}^{\ast}))$. However, such a debiased estimator is not always positive semi-definite given a finite simulation budget, in contrary to Algorithm \ref{alg:subsample}. Considering that our confidence band construction requires generating multivariate Gaussian vectors with the estimated covariance matrix (i.e., Algorithm \ref{alg:cdf_cb}), the positive semi-definiteness of the latter estimation appears crucial and hence we propose Algorithm \ref{alg:subsample} instead of a debiased counterpart.

\subsection{Statistical Guarantees on Covariance Function Subsampling}\label{sec:cov subsampling}

This subsection presents the statistical guarantees and optimal algorithmic configuration of our subsampling scheme. Let $\hat{\sigma}^{2}(t,t^{\prime})$ be the output of Algorithm
\ref{alg:subsample} at general positions $t,t^{\prime}\in\mathbb{R}$, i.e.,
\begin{equation*}
\hat{\sigma}^{2}(t,t^{\prime}):=   \frac{\theta n}{B-1}\sum
_{b=1}^{B}(\hat{Q}(t,\underline{\hat{P}}_{\theta}^{\ast b})-\bar{Q}%
(t))(\hat{Q}(t^{\prime},\underline{\hat{P}}_{\theta}^{\ast b})-\bar
{Q}(t^{\prime})).
\end{equation*}
We focus on the estimation error $\hat{\sigma}^{2}(t,t^{\prime})-\mathrm{Cov}%
(\mathbb{G}(t),\mathbb{G}(t^{\prime}))$. First, we show that, with proper algorithmic configurations, this estimation error converges to 0 in probability, i.e., the covariance estimator is \emph{consistent}.

\begin{theorem}
\label{validity_alg_cov} Suppose Assumptions \ref{balanced_data} and \ref{finite_horizon_model} hold. Moreover, suppose the configuration in Algorithm \ref{alg:subsample} satisfies%
\begin{equation}
\theta=\omega(1/n),\quad B=\omega(1),\quad R_{s}=\omega
(s).\label{configuration_B_R_theta}%
\end{equation}
We have%
\begin{equation}
\hat{\sigma}^{2}(t,t^{\prime})-\mathrm{Cov}(\mathbb{G}(t),\mathbb{G}%
(t^{\prime}))=o_{p}(1),\label{decomposition_est_err}%
\end{equation}
and consequently the output $V$ in Algorithm \ref{alg:subsample} is consistent
for estimating the covariance matrix of $(\mathbb{G}(t_{1}),\ldots
,\mathbb{G}(t_{k}))$.
\end{theorem}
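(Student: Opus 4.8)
The plan is to split the estimation error into a Monte Carlo part (coming from the finite number $B$ of outer resamples and the finite number $R_s$ of inner simulation runs) and a \emph{bootstrap consistency} part (coming from replacing the unknown covariance of $\mathbb{G}$ by a conditional resampling covariance), and control each separately. Write $f^b(t):=Q(t,\underline{\hat P}^{*b}_{\theta})$ for the exact output distribution under the $b$-th subsample and $\epsilon^b(t):=\hat Q(t,\underline{\hat P}^{*b}_{\theta})-f^b(t)$ for the inner Monte Carlo error of the empirical CDF based on $R_s$ runs, so that $\hat Q(t,\underline{\hat P}^{*b}_{\theta})-\bar Q(t)=(f^b(t)-\bar f(t))+(\epsilon^b(t)-\bar\epsilon(t))$ with $\bar f,\bar\epsilon$ the resample averages. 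Expanding $\hat\sigma^2(t,t')$ produces the ``ideal'' term
\[
(a):=\frac{\theta n}{B-1}\sum_{b=1}^{B}(f^b(t)-\bar f(t))(f^b(t')-\bar f(t')),
\]
plus two cross terms linear in $\epsilon$ and one term quadratic in $\epsilon$.

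For the $\epsilon$-terms I would condition on the subsamples: there $\epsilon^b$ has mean zero, variance at most $1/(4R_s)$, and the $\epsilon^b$ are independent across $b$; a direct conditional mean/variance computation (using only that every moment of the bounded quantity $\epsilon^b$ is controlled, and that $(f^b(t)-\bar f(t))^2=O_p(1/(\theta n))$) shows the quadratic term is $O_p(\theta n/R_s)$ and the cross terms are $o_p(1)$, all vanishing because $R_s=\omega(s)$ and $s\asymp\theta n$ up to rounding. For $(a)$, conditional on the input data the $(f^b(\cdot))_{b=1}^B$ are i.i.d.\ over the outer resampling, so $(a)$ is an ordinary sample covariance with $\mathbb{E}[(a)\mid\mathrm{data}]=\theta n\,\mathrm{Cov}_*(f^1(t),f^1(t'))$ and $\mathrm{Var}[(a)\mid\mathrm{data}]=O(1/B)$, the latter using a fourth-central-moment bound $\mathbb{E}_*[(f^1(t)-\mathbb{E}_*f^1(t))^4]=O((\theta n)^{-2})$, which follows from standard degenerate V-statistic moment inequalities applied to the bounded indicator kernel in \eqref{V_statistic}. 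Since $B=\omega(1)$, this gives $(a)-\theta n\,\mathrm{Cov}_*(Q(t,\underline{\hat P}^{*}_{\theta}),Q(t',\underline{\hat P}^{*}_{\theta}))=o_p(1)$.

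The remaining and main step is the bootstrap consistency $\theta n\,\mathrm{Cov}_*(Q(t,\underline{\hat P}^{*}_{\theta}),Q(t',\underline{\hat P}^{*}_{\theta}))\overset{p}{\to}\mathrm{Cov}(\mathbb{G}(t),\mathbb{G}(t'))$. I would expand the V-statistic $Q(t,\underline{\hat P}^{*}_{\theta})$ about $\underline{\hat P}$ via its ANOVA/Hoeffding decomposition: the first-order (H\'ajek) part is $\sum_{i=1}^{m}s_i^{-1}\sum_{j=1}^{s_i}IF_i(t,X^{*}_{i,j};\underline{\hat P})$, while the remainder collects degenerate components of order $\ge 2$ together with the $O(1/s)$ ``V-minus-U'' diagonal corrections, whose contribution to $\mathrm{Var}_*$ is $O(s^{-2})$; hence multiplying by $\theta n\asymp s$ leaves only the first-order part in the limit. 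Because the first-order parts for distinct $i$ are $\mathrm{Cov}_*$-orthogonal and the $X^{*}_{i,j}$ are i.i.d.\ from $\hat P_i$ with $\mathbb{E}_{\hat P_i}[IF_i(t,X;\underline{\hat P})]=0$, this yields
\[
\theta n\,\mathrm{Cov}_*(Q(t,\underline{\hat P}^{*}_{\theta}),Q(t',\underline{\hat P}^{*}_{\theta}))=\sum_{i=1}^{m}\frac{\theta n}{s_i}\cdot\frac{1}{n_i}\sum_{j=1}^{n_i}IF_i(t,X_{i,j};\underline{\hat P})\,IF_i(t',X_{i,j};\underline{\hat P})+o_p(1).
\]
Since $\theta=\omega(1/n)$ forces $s_i=\lfloor\theta n_i\rfloor\to\infty$ with $\theta n/s_i\to 1/\beta_i$ by Assumption \ref{balanced_data}, the claim reduces to the plug-in law of large numbers $\frac1{n_i}\sum_j IF_i(t,X_{i,j};\underline{\hat P})IF_i(t',X_{i,j};\underline{\hat P})\overset{p}{\to}\mathrm{Cov}_{P_i}(IF_i(t,X_i;\underline P),IF_i(t',X_i;\underline P))$. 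For the finite-horizon model this follows by writing $IF_i$ via \eqref{IF finite horizon} as a difference of conditional output CDFs, whose own V-statistic form makes them continuous in the input distributions, so that $\underline{\hat P}\to\underline P$ combined with the usual LLN and a dominated-convergence/uniform-continuity argument (which I would isolate as an appendix lemma) closes the step; summing over $i=1,\dots,m$ reproduces exactly \eqref{covariance_function}.

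Finally, consistency of $V$ is immediate: $V_{i,j}=\hat\sigma^2(t_i,t_j)$, so applying \eqref{decomposition_est_err} to the finitely many pairs $(t_i,t_j)$, $1\le i,j\le k$, gives entrywise convergence in probability and hence $V\overset{p}{\to}\mathrm{Cov}((\mathbb{G}(t_1),\dots,\mathbb{G}(t_k)))$. I expect the genuine obstacle to be the bootstrap consistency step above --- specifically, bounding the higher-order V-statistic remainder uniformly enough and, more delicately, the plug-in LLN for the empirical influence-function covariance, since $IF_i$ is evaluated at the random $\underline{\hat P}$ rather than at $\underline P$.
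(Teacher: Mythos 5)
Your decomposition---splitting $\hat\sigma^2(t,t')-\mathrm{Cov}(\mathbb{G}(t),\mathbb{G}(t'))$ into a Monte Carlo error $\hat\sigma^2-\sigma^2$ (where $\sigma^2:=\theta n\,\mathrm{Cov}_*(Q(t,\underline{\hat P}^{*}_{\theta}),Q(t',\underline{\hat P}^{*}_{\theta}))$ is the target of your term $(a)$) plus a bootstrap error $\sigma^2-\mathrm{Cov}(\mathbb{G}(t),\mathbb{G}(t'))$---is exactly the paper's decomposition \eqref{cov_est_decomp}, and your treatment of the bootstrap error via the first-order H\'ajek/influence-function expansion and the plug-in law of large numbers is precisely what the paper does in Theorem \ref{consistency_sigma} and Lemma \ref{influence_product}, including the flagged subtlety that $IF_i$ is evaluated at $\underline{\hat P}$ rather than $\underline P$. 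Where you diverge is in the Monte Carlo component: the paper packages that step as Theorem \ref{MSE_sigma_hat}, which imports the exact mean/variance formulas of Lemma \ref{general_covariance_undebias} (an abstract nested-simulation calculation in the $\tau$/$\varepsilon$ variables of Appendix \ref{sec: var_of_cov}) and then controls every moment appearing there via Lemma \ref{computation_moments}; you instead split $\hat Q(\cdot,\underline{\hat P}^{*b}_\theta)-\bar Q(\cdot)$ into the exact part $f^b-\bar f$ and inner-noise part $\epsilon^b-\bar\epsilon$ and bound the resulting ideal, cross, and quadratic pieces by direct conditional mean/variance arguments (using $\sum_b a_b\bar\epsilon=0$ to kill the cross-term mean, the i.i.d.-over-$b$ structure for the $O_p(1/B)$ sample-covariance variance, and the degenerate V-statistic fourth-moment bound $\mathbb{E}_*[(f^1(t)-\mathbb{E}_*f^1(t))^4]=O_p((\theta n)^{-2})$). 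Both routes give the same conclusion under $\theta n=\omega(1)$, $B=\omega(1)$, $R_s=\omega(s)$. Your version is more elementary and self-contained for this particular theorem; the paper's more elaborate Lemma \ref{general_covariance_undebias}/Lemma \ref{computation_moments} machinery buys the exact leading-order coefficients needed to prove the sharper optimal-allocation results (Theorems \ref{optimal_B_R} and \ref{overall_opt_config}), which your bounds alone would not deliver.
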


In Theorem \ref{validity_alg_cov}, (\ref{configuration_B_R_theta}) requires the total simulation budget $N:=BR_{s}$ for Algorithm \ref{alg:subsample} to be at least
$N=\omega(\theta n)=\omega(1)$. On the other hand, we show that $N=\omega(1)$ is also
sufficient to admit a configuration satisfying (\ref{configuration_B_R_theta}%
). This means that the simulation budget in Algorithm \ref{alg:subsample} can
grow at an arbitrary rate as $n\rightarrow\infty$\ while maintaining estimation consistency.

\begin{corollary}
\label{min_budget} Suppose Assumptions \ref{balanced_data} and \ref{finite_horizon_model} hold. The minimum simulation budget $N$ such that (\ref{configuration_B_R_theta}) holds,
and thus the covariance estimator $\hat{\sigma}^{2}(t,t^{\prime})$ is consistent, is $N=\omega(1)$. 
\end{corollary}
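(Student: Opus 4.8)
The statement packages two claims: (i) \emph{necessity}---every configuration obeying \eqref{configuration_B_R_theta} has total budget $N=BR_{s}\to\infty$; and (ii) \emph{sufficiency}---for any prescribed rate $N=\omega(1)$ one can exhibit $(\theta,B,R_{s})$ obeying \eqref{configuration_B_R_theta} with $BR_{s}$ of that order, whence Theorem \ref{validity_alg_cov} delivers consistency. I would prove them in turn.

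For necessity, note that Assumption \ref{balanced_data} gives $n_{i}=\Theta(n)$, so $\theta=\omega(1/n)$ implies $\theta n_{i}\to\infty$ for every $i$, hence $s_{i}=\lfloor\theta n_{i}\rfloor\to\infty$ and $s=\sum_{i=1}^{m}s_{i}/m\to\infty$. Then $R_{s}=\omega(s)$ forces $R_{s}\to\infty$, and $B=\omega(1)$ forces $B\to\infty$; consequently $N=BR_{s}\to\infty$, i.e.\ $N=\omega(1)$. In particular, no bounded simulation budget can ever meet \eqref{configuration_B_R_theta}.

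For sufficiency, fix any sequence $N\to\infty$ and put $a:=\lfloor N^{1/3}\rfloor\to\infty$. Choose $\theta:=a/n$, $B:=a$, and $R_{s}:=\lfloor N/a\rfloor$. Then $\theta n=a\to\infty$, so $\theta=\omega(1/n)$; $B=a\to\infty$, so $B=\omega(1)$. Under Assumption \ref{balanced_data}, $\theta n_{i}=a(n_{i}/n)=a\beta_{i}(1+o(1))$, and since the floor error is $O(1)=o(a)$ we get $s_{i}=a\beta_{i}(1+o(1))$ and hence $s=\Theta(a)$; as $R_{s}=\lfloor N/a\rfloor\ge N/a-1=\Theta(N^{2/3})=\Theta(a^{2})$, this yields $R_{s}/s=\Theta(a)\to\infty$, i.e.\ $R_{s}=\omega(s)$. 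Thus \eqref{configuration_B_R_theta} holds, while the total budget $BR_{s}=a\lfloor N/a\rfloor$ satisfies $N-a\le BR_{s}\le N$, so it is $\Theta(N)$. Invoking Theorem \ref{validity_alg_cov} with this configuration gives $\hat{\sigma}^{2}(t,t')-\mathrm{Cov}(\mathbb{G}(t),\mathbb{G}(t'))=o_{p}(1)$, i.e.\ consistency.

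I do not anticipate a serious obstacle here: the argument is essentially a rate-bookkeeping exercise. The only care needed is (a) handling the floor operations to confirm $s=\Theta(\theta n)$ under balanced data sizes, and (b) noting that the particular $1/3$ exponent split is immaterial---many alternatives (e.g.\ $B$ and $R_{s}$ both of order $\sqrt{N}$ with $\theta n$ of order $N^{1/4}$) work equally well---so the conclusion ``$N=\omega(1)$'' is genuinely the threshold rather than an artifact of the construction.
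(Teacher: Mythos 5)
Your proof is correct and takes essentially the same approach as the paper: brief argument for necessity, explicit construction for sufficiency. The paper chooses $\theta=\sqrt{N}/n$, $B=N^{1/3}$, $R_s=N^{2/3}$ while you pick $\theta n \asymp N^{1/3}$ with $B\asymp N^{1/3}$, $R_s\asymp N^{2/3}$; as you yourself note, the exponent split is immaterial, and your more careful handling of the floors and of $s=\Theta(\theta n)$ under Assumption \ref{balanced_data} is a welcome elaboration of what the paper leaves implicit.
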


Finally, we characterize the optimal configuration, for all three parameters $\theta$, $B$ and $R_{s}$ in Algorithm \ref{alg:subsample}, to minimize the estimation error $\hat{\sigma}^{2}(t,t^{\prime})-\mathrm{Cov}(\mathbb{G}(t),\mathbb{G}(t^{\prime}))$ given a simulation budget $N=\omega(1)$. Let $\sigma^{2}(t,t^{\prime})$ be the scaled bootstrap covariance in the LHS of (\ref{cov_appro2}), i.e., $\sigma^{2}(t,t^{\prime}):=\theta n\mathrm{Cov}_{\ast}(Q(t,\underline{\hat{P}}_{\theta}^{\ast}),Q(t^{\prime},\underline{\hat{P}}_{\theta}^{\ast}))$, which $\hat{\sigma}^{2}(t,t^{\prime})$ is meant to estimate directly.

\begin{theorem}
\label{overall_opt_config}Suppose Assumptions \ref{balanced_data} and \ref{finite_horizon_model} hold. Assume the simulation budget is $N=\omega(1)$ and the subsample size is chosen as $s=\omega(1)$ such that $s=o(N)$. Suppose
\[
\hat{\sigma}^{2}(t,t^{\prime})-\mathrm{Cov}(\mathbb{G}(t),\mathbb{G}(t^{\prime}))=\Theta_p(\mathcal{O}_1+\mathcal{O}_2)
\]
whenever $\hat{\sigma}^{2}(t,t^{\prime})-\sigma^{2}(t,t^{\prime})$ has a conditional mean squared error (MSE) of order $\mathcal{O}^2_1$, i.e., $\mathbb{E}_{\ast}[(\hat{\sigma}^{2}(t,t^{\prime})-\sigma^{2}(t,t^{\prime}))^{2}]=\Theta_p(\mathcal{O}^2_1)$, and the bootstrap error $\sigma^{2}(t,t^{\prime})-\mathrm{Cov}(\mathbb{G}(t),\mathbb{G}(t^{\prime})$ is of order $\Theta(\mathcal{O}_2)$, where both $\mathcal{O}_1$ and $\mathcal{O}_2$ could depend on the configuration parameters.
If the following non-degeneracy condition holds 
\begin{equation*}
0<Q(t,\underline{P}),Q(t^{\prime},\underline{P})<1,\quad\sum_{i=1}^{m}\mathbb{E}_{P_{i}}[IF_{i}^{2}(t,X_{i};\underline{P})]>0,\quad\sum_{i=1}^{m}\mathbb{E}_{P_{i}}[IF_{i}^{2}(t^{\prime},X_{i};\underline{P})]>0,
\end{equation*}
and 
\[
\mathbb{E}[\mathcal{E}_{1}]=\Theta\left(  \frac{1}{s}+\sum_{i=1}^{m}\left\vert\frac{n}{n_{i}}-\frac{1}{\beta_{i}}\right\vert \right)  ,\quad\mathrm{Var}(\mathcal{E}_{1})=\Theta\left(  \frac{1}{n}\right)  ,
\]
where $\mathcal{E}_1$ is defined in Theorem \ref{error_true_bootstrap}, then the optimal allocation that minimizes the order of $\hat{\sigma}^{2}(t,t^{\prime})-\mathrm{Cov}(\mathbb{G}(t),\mathbb{G}(t^{\prime}))$ is $R_{s}^{\ast}=\Theta(N^{1/3}(s^*)^{2/3})$, $B^{\ast}=N/R_{s}^{\ast}$, $s^{\ast}=\Theta(N^{1/4})$ and consequently $\theta^{\ast}=\Theta(s^{\ast}/n)=\Theta(N^{1/4}/n)$, in which case the minimal order of $\hat{\sigma}^{2}(t,t^{\prime})-\mathrm{Cov}(\mathbb{G}(t),\mathbb{G}(t^{\prime}))$ is%
\begin{equation}
\hat{\sigma}^{2}(t,t^{\prime})-\mathrm{Cov}(\mathbb{G}(t),\mathbb{G}(t^{\prime}))=\Theta_{p}\left(  \frac{1}{N^{1/4}}+\sum_{i=1}^{m}\left\vert\frac{n}{n_{i}}-\frac{1}{\beta_{i}}\right\vert +\frac{1}{\sqrt{n}}\right) .\label{minimal_estimation_error}
\end{equation}
\end{theorem}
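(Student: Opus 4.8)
The plan is to split the estimation error into an \emph{inner} nested-simulation error and an \emph{outer} bootstrap error,
\[
\hat{\sigma}^{2}(t,t^{\prime})-\mathrm{Cov}(\mathbb{G}(t),\mathbb{G}(t^{\prime}))=\bigl(\hat{\sigma}^{2}(t,t^{\prime})-\sigma^{2}(t,t^{\prime})\bigr)+\bigl(\sigma^{2}(t,t^{\prime})-\mathrm{Cov}(\mathbb{G}(t),\mathbb{G}(t^{\prime}))\bigr),
\]
pin down the order $\mathcal{O}_{1}$ of the conditional root-MSE of the first term and the order $\mathcal{O}_{2}$ of the second term as explicit functions of $(\theta,B,R_{s},n)$, invoke the stated additivity hypothesis to conclude that the total error is $\Theta_{p}(\mathcal{O}_{1}+\mathcal{O}_{2})$, and then solve the deterministic problem of minimizing $\mathcal{O}_{1}+\mathcal{O}_{2}$ over $(\theta,B,R_{s})$ subject to $BR_{s}=N$.

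For $\mathcal{O}_{1}$: conditional on the input data, $\hat{\sigma}^{2}(t,t^{\prime})$ is $\theta n$ times the sample covariance of the $B$ i.i.d.\ pairs $\bigl(\hat{Q}(t,\underline{\hat{P}}^{*b}_{\theta}),\hat{Q}(t^{\prime},\underline{\hat{P}}^{*b}_{\theta})\bigr)$. Writing $\hat{Q}(t,\underline{\hat{P}}^{*}_{\theta})=Q(t,\underline{\hat{P}}^{*}_{\theta})+\varepsilon(t)$ with $\varepsilon(t)$ the mean-zero inner Monte Carlo error, the conditional bias equals $\theta n\,\mathbb{E}_{*}[\mathrm{Cov}(\varepsilon(t),\varepsilon(t^{\prime})\mid\underline{\hat{P}}^{*}_{\theta})]=\Theta_{p}(\theta n/R_{s})=\Theta_{p}(s/R_{s})$, the inner CDF estimators having conditional covariance of exact order $1/R_{s}$ precisely because $0<Q(t,\underline{P}),Q(t^{\prime},\underline{P})<1$. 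Moreover, since each $\hat{Q}(t,\underline{\hat{P}}^{*}_{\theta})$ is bounded and deviates from its conditional mean at the exact $L^{2}$ rate $\Theta_{p}(1/\sqrt{\theta n}+1/\sqrt{R_{s}})$ — the $1/\sqrt{\theta n}$ piece being genuine by the influence-function expansion of the subsampled V-statistic under $\sum_{i}\mathbb{E}_{P_{i}}[IF_{i}^{2}(t,X_{i};\underline{P})]>0$, the $1/\sqrt{R_{s}}$ piece from the inner runs — an $L^{4}$ bound on that deviation gives a conditional variance of the scaled sample covariance of order $\Theta_{p}\bigl(\tfrac{(\theta n)^{2}}{B}(\tfrac{1}{\theta n}+\tfrac1{R_{s}})^{2}\bigr)=\Theta_{p}\bigl(\tfrac1B(1+s/R_{s})^{2}\bigr)$, using $\theta n=\Theta(s)$. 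Combining bias and variance yields $\mathcal{O}_{1}^{2}=\Theta\bigl((s/R_{s})^{2}+1/B\bigr)$, i.e.\ $\mathcal{O}_{1}=\Theta(s/R_{s}+1/\sqrt{B})$. For $\mathcal{O}_{2}$: Theorem~\ref{error_true_bootstrap} identifies the bootstrap error with $\mathcal{E}_{1}$ up to lower-order terms, so the hypothesized moments $\mathbb{E}[\mathcal{E}_{1}]=\Theta\bigl(\tfrac1s+\sum_{i}|\tfrac{n}{n_{i}}-\tfrac1{\beta_{i}}|\bigr)$ and $\mathrm{Var}(\mathcal{E}_{1})=\Theta(1/n)$ give $\mathcal{O}_{2}=\Theta\bigl(\tfrac1s+\sum_{i}|\tfrac{n}{n_{i}}-\tfrac1{\beta_{i}}|+\tfrac1{\sqrt n}\bigr)$.

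It then remains to minimize $\mathcal{O}_{1}+\mathcal{O}_{2}$. The terms $\sum_{i}|\tfrac{n}{n_{i}}-\tfrac1{\beta_{i}}|$ and $1/\sqrt{n}$ are configuration-free, so we minimize $\tfrac{s}{R_{s}}+\tfrac1{\sqrt{B}}+\tfrac1s$ subject to $BR_{s}=N$. Substituting $B=N/R_{s}$ turns $1/\sqrt{B}$ into $\sqrt{R_{s}/N}$; for fixed $s$, balancing $\tfrac{s}{R_{s}}$ against $\sqrt{R_{s}/N}$ gives $R_{s}^{*}=\Theta(s^{2/3}N^{1/3})$, at which both terms equal $\Theta((s/N)^{1/3})$; balancing the remaining $(s/N)^{1/3}$ against $\tfrac1s$ over $s$ gives $s^{*}=\Theta(N^{1/4})$ with optimal value $\Theta(N^{-1/4})$. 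Back-substituting, $R_{s}^{*}=\Theta(N^{1/3}(s^{*})^{2/3})=\Theta(N^{1/2})$, $B^{*}=N/R_{s}^{*}=\Theta(N^{1/2})$, and $\theta^{*}=\Theta(s^{*}/n)=\Theta(N^{1/4}/n)$; feasibility holds because $s^{*}=\omega(1)$ (as $N=\omega(1)$), $s^{*}=o(N)$, and $R_{s}^{*}=\omega(s^{*})$, so the consistency condition \eqref{configuration_B_R_theta} is in force as well. This produces the claimed minimal order \eqref{minimal_estimation_error}.

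The step I expect to be the main obstacle is establishing $\mathcal{O}_{1}$, specifically pinning the conditional variance (and the bias) of the empirical covariance estimator down to a \emph{tight} $\Theta_{p}$ rather than a mere upper bound. This requires $L^{4}$-type control of the subsampled V-statistics $\hat{Q}(t,\underline{\hat{P}}^{*}_{\theta})$, a clean separation of the outer subsampling fluctuation (scale $1/\sqrt{\theta n}$, governed by the empirical influence functions) from the inner Monte Carlo fluctuation (scale $1/\sqrt{R_{s}}$), and use of the non-degeneracy conditions to guarantee that both pieces contribute at exact order. Once the two component orders are in hand, the two-stage balancing and the feasibility verification are routine.
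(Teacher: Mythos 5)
Your proposal is correct and follows essentially the same route as the paper's proof: decompose the estimation error into the Monte Carlo (inner) piece $\hat\sigma^2-\sigma^2$ and the bootstrap (outer) piece $\sigma^2-\mathrm{Cov}(\mathbb{G}(t),\mathbb{G}(t'))$, identify $\mathcal{O}_1=\Theta(s/R_s+1/\sqrt{B})$ and $\mathcal{O}_2=\Theta(1/s+\sum_i|n/n_i-1/\beta_i|+1/\sqrt n)$, invoke the theorem's additivity hypothesis to obtain $\Theta_p(\mathcal{O}_1+\mathcal{O}_2)$, and then perform the two-stage balancing ($R_s^*=\Theta(N^{1/3}s^{2/3})$ for fixed $s$, then $s^*=\Theta(N^{1/4})$). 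The only presentational difference is that you re-derive $\mathcal{O}_1$ from a bias-variance heuristic for the sample covariance of the subsampled V-statistics, whereas the paper simply cites Theorem~\ref{optimal_B_R}, which in turn rests on the exact mean and variance formulas in Lemma~\ref{general_covariance_undebias} and the explicit conditional moment computations in Lemma~\ref{computation_moments}; to make your $\Theta_p$ claim airtight (tight lower bound on the conditional variance, not just an upper bound from boundedness and $L^4$ control) you would need precisely those exact formulas, which is the obstacle you yourself flagged. The feasibility check ($R_s^*=\omega(s^*)$, $B^*=\omega(1)$, $\theta^*=\omega(1/n)$) and the resulting minimal error order match the paper.
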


The theoretical guarantees above generalize the results in \cite{lam2022subsampling} from variance to covariance estimation. However, since we use a biased estimator that guarantees positive semi-definiteness instead of a debiased estimator in \cite{lam2022subsampling}, even though our required total budget $N$ is still $\omega(1)$, our optimal configuration becomes $R^*_s=B^*=\Theta(N^{1/2})$ and $s^*=\Theta(N^{1/4})$ in Theorem \ref{overall_opt_config}, with the minimal estimation error (\ref{minimal_estimation_error}). This is different from the optimal configuration suggested in Theorem 5 in \cite{lam2022subsampling} given by $R^*_s=\Theta(N^{1/3}),B^*=\Theta(N^{2/3})$ and $s^*=\Theta(N^{1/3})$ (in their case of $1\ll N \le n^{3/2}$). The latter has a slightly smaller minimal estimation error of order $\Theta_p(N^{-1/3}+n^{-1/2})$ (their estimation target is a scaled variance by $n$, so we multiply their error by $n$ to make the comparison fair). By comparing the two orders (ignoring the middle term in (\ref{minimal_estimation_error}) which is the residual error of the limits in Assumption \ref{balanced_data}), we can see that the price of bearing the bias is an increase in one of the components in the error order from $N^{-1/3}$ to $N^{-1/4}$. This error increase, nonetheless, is outweighed by the gain in maintaining positive semi-definiteness of the covariance matrix which is crucial to implement our confidence band. We provide further technical details in Section \ref{sec:theory_cov}.

\section{Main Technical Developments for Theoretical Guarantees}

\label{sec:theory}

This section outlines the main development of the theoretical results presented in Sections \ref{sec:methodology} and \ref{sec:guarantee}. Section \ref{sec:theory_convergence} focuses on the weak convergence of the process $\sqrt{R}(\hat{Q}(\cdot,\underline{\hat{P}})-Q(\cdot,\underline{P}))$ which leads to our main result Theorem \ref{weak_convergence_sup}, while Section \ref{sec:theory_cov} focuses on the statistical guarantees for the covariance function estimation. In addition, we provide an overview of some useful weak convergence results in Appendix \ref{sec:Skorohod}, and generalizations of our statistical guarantees for the covariance function estimation under further relaxed assumptions in Appendix \ref{sec:general_assumptions}. 


\subsection{Weak Convergence of Generalized Kolmogorov-Smirnov Statistic}\label{sec:theory_convergence}


Recall that in (\ref{process_decomp}), we separate
the input data noise $Q(t,\underline{\hat{P}})-Q(t,\underline{P})$ and Monte
Carlo noise $\hat{Q}(t,\underline{\hat{P}})-Q(t,\underline{\hat{P}})$, and
approximate them by (scaled) $\mathbb{G}(t)$ and $BB(Q(t,\underline{P}))$ respectively
in (\ref{process_appro}). We first rigorize (\ref{process_appro}) by showing
that it corresponds to the weak convergence of the two-dimensional stochastic
process
\begin{equation}
(\sqrt{R}(\hat{Q}(\cdot,\underline{\hat{P}})-Q(\cdot,\underline{\hat{P}%
})),\sqrt{n}(Q(\cdot,\underline{\hat{P}})-Q(\cdot,\underline{P})))\Rightarrow
(BB({Q}(\cdot,\underline{P})),\mathbb{G}(\cdot)) \label{weak_conv_target}%
\end{equation}
on a proper metric space where the stochastic processes in
(\ref{weak_conv_target}) lie in. In fact, the metric space is chosen as the Skorohod product
space $D_{\pm\infty}\times D_{\pm\infty}$, where the two-dimensional process naturally lies in. A brief introduction to all the necessary results in
this space can be found in Appendix \ref{sec:Skorohod}. Roughly speaking, to
show the weak convergence, we need to show three results: 1) measurability on
the Skorohod product space; 2) finite dimensional convergence and 3)
tightness. Measurability ensures all the subsequent probabilistic analysis is
well-defined. Finite dimensional convergence means that the weak convergence
(\ref{weak_conv_target}) holds for any finite collection of $t$. It is
necessary for the weak convergence of stochastic processes but not sufficient
unless tightness holds. Tightness means that the LHS of
(\ref{weak_conv_target}) is well-behaved as a stochastic process (or ``random
function'') in $t$, which ensures that we can reasonably deduce
(\ref{weak_conv_target}) from the finite dimensional convergence.

Let $(\Omega,\mathcal{F},\mathbb{P})$ be our underlying probability space. We first prove measurability.

\begin{lemma}
\label{measurability}Suppose for each $t\in\mathbb{R}$, $\hat{Q}%
(t,\underline{\hat{P}})$ and $Q(t,\underline{\hat{P}})$ are random variables,
i.e., they are measurable maps from $(\Omega,\mathcal{F})$ to $(\mathbb{R}%
,\mathcal{B}(\mathbb{R}))$. Then $(\sqrt{R}(\hat{Q}(\cdot,\underline{\hat{P}%
})-Q(\cdot,\underline{\hat{P}})),\sqrt{n}(Q(\cdot,\underline{\hat{P}}%
)-Q(\cdot,\underline{P})))$ is measurable on the Skorohod product space.
\end{lemma}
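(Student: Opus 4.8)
The plan is to reduce the claimed Borel measurability on the Skorohod product space to the one-dimensional measurability supplied by the hypothesis, via the standard identification of the Borel $\sigma$-algebra of $D_{\pm\infty}$ with the $\sigma$-algebra generated by the coordinate projections $\pi_t\colon f\mapsto f(t)$, $t\in\mathbb{R}$ (reviewed in Appendix \ref{sec:Skorohod}).

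First I would verify that the two coordinate processes have sample paths in $D_{\pm\infty}$. The maps $t\mapsto\hat{Q}(t,\underline{\hat{P}})$ and $t\mapsto Q(t,\underline{\hat{P}})$ (the latter being the V-statistic \eqref{V_statistic}) are genuine cumulative distribution functions, hence right-continuous, nondecreasing, $[0,1]$-valued, with left-hand limits everywhere and with limits $0$ at $-\infty$ and $1$ at $+\infty$; by Assumption \ref{finite_horizon_model}, $t\mapsto Q(t,\underline{P})$ is a continuous distribution function. Therefore $\sqrt{R}(\hat{Q}(\cdot,\underline{\hat{P}})-Q(\cdot,\underline{\hat{P}}))$ and $\sqrt{n}(Q(\cdot,\underline{\hat{P}})-Q(\cdot,\underline{P}))$ are differences of such functions, hence bounded, right-continuous with left-hand limits, and tending to $0$ at $\pm\infty$; in particular they lie in $D_{\pm\infty}$ pathwise, so the map in the statement indeed takes values in $D_{\pm\infty}\times D_{\pm\infty}$.

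Next, since $D_{\pm\infty}$ endowed with the Skorohod metric is a separable metric space, the Borel $\sigma$-algebra of the product satisfies $\mathcal{B}(D_{\pm\infty}\times D_{\pm\infty})=\mathcal{B}(D_{\pm\infty})\otimes\mathcal{B}(D_{\pm\infty})$, so it suffices to show that each coordinate map is $(\mathcal{F},\mathcal{B}(D_{\pm\infty}))$-measurable. Invoking $\mathcal{B}(D_{\pm\infty})=\sigma(\pi_t\colon t\in\mathbb{R})$, a map $\Omega\to D_{\pm\infty}$ is Borel measurable precisely when $\pi_t$ composed with it is $\mathcal{F}$-measurable for every $t$. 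For the first coordinate this composition is $\omega\mapsto\sqrt{R}(\hat{Q}(t,\underline{\hat{P}})-Q(t,\underline{\hat{P}}))$, a random variable because $\hat{Q}(t,\underline{\hat{P}})$ and $Q(t,\underline{\hat{P}})$ are random variables by hypothesis and $R$ is a deterministic sample size; for the second it is $\omega\mapsto\sqrt{n}(Q(t,\underline{\hat{P}})-Q(t,\underline{P}))$, a random variable because $Q(t,\underline{\hat{P}})$ is a random variable and $Q(t,\underline{P})$ and $n$ are deterministic. Assembling these observations gives the lemma.

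The substantive ingredient is the pair of Skorohod-space facts — separability under the Skorohod metric and $\mathcal{B}(D_{\pm\infty})=\sigma(\pi_t\colon t\in\mathbb{R})$ — which are the $D_{\pm\infty}$ counterparts of the familiar $D[0,1]$ statements and which I would simply cite from Appendix \ref{sec:Skorohod}; granting those, the argument is pure bookkeeping. The only point requiring a moment's care is confirming the behaviour of the paths at $\pm\infty$ (so that they truly belong to $D_{\pm\infty}$ rather than merely to the càdlàg functions on the open line), which follows from convergence of the distribution functions to their common endpoint values $0$ and $1$.
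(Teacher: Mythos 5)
Your proof is correct and follows essentially the same route as the paper's: reduce measurability into the product space to measurability of each coordinate process, then reduce measurability into $D_{\pm\infty}$ to measurability of the one-dimensional projections $\pi_t$ via $\mathcal{D}_{\pm\infty}=\sigma(\pi_t\colon t\in\mathbb{R})$. The only cosmetic difference is that you cite the general ``a map into a space whose $\sigma$-algebra is generated by a family of functions is measurable iff each composition is'' as a standard fact, whereas the paper re-derives it (somewhat more heavily than necessary) via a $\pi$-$\lambda$ argument; your extra step verifying that the paths genuinely lie in $D_{\pm\infty}$ is a sensible addition, though the worry about behaviour at $\pm\infty$ is moot here since the paper defines $D_{\pm\infty}$ simply as all c\`adl\`ag functions on $\mathbb{R}$ with no condition at the endpoints.
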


Throughout this paper, we always assume $\hat{Q}(t,\underline{\hat{P}})$ and
$Q(t,\underline{\hat{P}})$ are random variables for each $t$ $\in\mathbb{R}$
in $(\Omega,\mathcal{F},\mathbb{P})$ so that the measurability holds.

Next we show finite dimensional convergence. For a sequence of stochastic
processes $\{X_{n}(t),t\in\mathbb{R\}}$ and another stochastic process
$\{X(t),t\in\mathbb{R\}}$, we say $X_{n}$ finite-dimensionally converges to
$X$ (written as $X_{n}\overset{f.d.}{\rightarrow}X$) if%
\[
(X_{n}(t_{1}),\ldots,X_{n}(t_{k}))\overset{d}{\rightarrow}(X(t_{1}%
),\ldots,X(t_{k})),\forall t_{1}<t_{2}<\cdots<t_{k},\forall k\in\mathbb{N}.
\]
We would like to show
\[
(\sqrt{R}(\hat{Q}(\cdot,\underline{\hat{P}})-Q(\cdot,\underline{\hat{P}%
})),\sqrt{n}(Q(\cdot,\underline{\hat{P}})-Q(\cdot,\underline{P}%
)))\overset{f.d.}{\rightarrow}(BB({Q}(\cdot,\underline{P})),\mathbb{G}%
(\cdot)),
\]
where the mean-zero Gaussian process $\mathbb{G}$ has the covariance function (\ref{covariance_function}).

\begin{proposition}
\label{fidis_convergence} Suppose Assumptions \ref{balanced_data} and
\ref{finite_horizon_model} hold. We have
\[
(\sqrt{R}(\hat{Q}(\cdot,\underline{\hat{P}})-Q(\cdot,\underline{\hat{P}%
})),\sqrt{n}(Q(\cdot,\underline{\hat{P}})-Q(\cdot,\underline{P}%
)))\overset{f.d.}{\rightarrow}(BB({Q}(\cdot,\underline{P})),\mathbb{G}%
(\cdot))
\]
as $n,R\rightarrow\infty$, where $BB(\cdot)$ is the standard Brownian bridge
on $[0,1]$, and $\mathbb{G}(\cdot)$ is a mean-zero Gaussian process defined on
$\mathbb{R}$ with covariance function
\[
\mathrm{Cov}(\mathbb{G}(t),\mathbb{G}(s))=\sum_{i=1}^{m}\frac{1}{\beta_{i}%
}\mathrm{Cov}_{P_{i}}(IF_{i}(t,X_{i};\underline{P}),IF_{i}(s,X_{i}%
;\underline{P})).
\]
Moreover, $BB(\cdot)$ and $\mathbb{G}(\cdot)$ are independent processes.
\end{proposition}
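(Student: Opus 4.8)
The plan is to establish the finite-dimensional convergence by decomposing the two-component process and treating the Monte Carlo noise and the input data noise in turn, then verifying joint convergence plus independence. Fix $t_1 < \cdots < t_k$. For the second component, $\sqrt{n}(Q(\cdot,\underline{\hat P}) - Q(\cdot,\underline P))$, I would start from the first-order Taylor expansion of the functional $\underline P \mapsto Q(t,\underline P)$ around the true $\underline P$ evaluated at the empirical $\underline{\hat P}$ (the existence of which is guaranteed by the influence-function assumptions referenced in the paper, or can be read off directly from the V-statistic form \eqref{V_statistic} together with \eqref{IF finite horizon}). This yields
\[
\sqrt{n}(Q(t_\ell,\underline{\hat P}) - Q(t_\ell,\underline P)) = \sum_{i=1}^m \sqrt{n}\cdot\frac{1}{n_i}\sum_{j=1}^{n_i} IF_i(t_\ell,X_{i,j};\underline P) + o_p(1),
\]
uniformly over the finite grid. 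Since the influence function has mean zero under $P_i$ and the $m$ samples are independent, a multivariate CLT applied to the $mk$-dimensional vector $\big(\tfrac{1}{\sqrt{n_i}}\sum_j IF_i(t_\ell,X_{i,j};\underline P)\big)_{i,\ell}$, combined with $n_i/n \to \beta_i$ (Assumption \ref{balanced_data}), delivers joint normality of $(\sqrt n(Q(t_\ell,\underline{\hat P})-Q(t_\ell,\underline P)))_\ell$ with the stated covariance $\sum_i \beta_i^{-1}\mathrm{Cov}_{P_i}(IF_i(t,X_i;\underline P),IF_i(s,X_i;\underline P))$; here I need finite second moments of $IF_i$, which follow since the influence function \eqref{IF finite horizon} is bounded (it is a finite sum of conditional probabilities minus a constant, hence lies in $[-T_i, T_i]$).

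For the first component, $\sqrt R(\hat Q(\cdot,\underline{\hat P}) - Q(\cdot,\underline{\hat P}))$, I would condition on the input data $\{X_{i,j}\}$. Conditionally, $\hat Q(\cdot,\underline{\hat P})$ is an ordinary empirical distribution of $R$ i.i.d. draws $Y_1,\dots,Y_R$ from the fixed distribution $Q(\cdot,\underline{\hat P})$, so the ordinary multivariate CLT for empirical processes gives that $(\sqrt R(\hat Q(t_\ell,\underline{\hat P})-Q(t_\ell,\underline{\hat P})))_\ell$, conditionally, is asymptotically $N(0,\Sigma_{\underline{\hat P}})$ with $(\Sigma_{\underline{\hat P}})_{\ell\ell'} = Q(t_\ell\wedge t_{\ell'},\underline{\hat P}) - Q(t_\ell,\underline{\hat P})Q(t_{\ell'},\underline{\hat P})$. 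Because $Q(t_\ell,\underline{\hat P}) \xrightarrow{p} Q(t_\ell,\underline P)$ (a consequence of the V-statistic being consistent, e.g. from the Taylor expansion above with the $o_p(1)$ remainder), the conditional covariance converges in probability to the covariance of the Brownian bridge at times $Q(t_\ell,\underline P)$, namely $\mathrm{Cov}(BB(Q(t_\ell,\underline P)),BB(Q(t_{\ell'},\underline P)))$. A standard argument — e.g. via conditional characteristic functions and the dominated convergence theorem, showing $\mathbb{E}[\exp(i u^\top \sqrt R(\hat Q - Q)(\cdot,\underline{\hat P}))\mid \underline{\hat P}] \to \exp(-\tfrac12 u^\top \Sigma_{\underline P} u)$ in probability and hence in expectation — upgrades this to unconditional convergence in distribution.

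Finally, I would assemble the joint convergence and independence. Write $\phi_n(u,v) = \mathbb{E}\big[\exp(i u^\top A_n + i v^\top B_n)\big]$ where $A_n$ is the Monte Carlo component and $B_n$ the input component on the grid. Conditioning on the input data, $B_n$ is $\sigma(\underline{\hat P})$-measurable while $\mathbb{E}[\exp(iu^\top A_n)\mid \underline{\hat P}] \to \exp(-\tfrac12 u^\top \Sigma_{\underline P} u)$ in probability (and is bounded), so $\phi_n(u,v) = \mathbb{E}[\exp(iv^\top B_n)\,\mathbb{E}[\exp(iu^\top A_n)\mid\underline{\hat P}]] \to \exp(-\tfrac12 u^\top\Sigma_{\underline P}u)\cdot\mathbb{E}[\exp(iv^\top B_\infty)]$, which is the product of the two marginal limiting characteristic functions — giving both joint convergence and the independence of $BB(Q(\cdot,\underline P))$ and $\mathbb{G}(\cdot)$. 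I expect the main obstacle to be making the Taylor-expansion remainder rigorous, i.e. showing the $o_p(1)$ control on $\sqrt n\,[Q(t_\ell,\underline{\hat P}) - Q(t_\ell,\underline P) - \sum_i \tfrac{1}{n_i}\sum_j IF_i(t_\ell,X_{i,j};\underline P)]$; this is precisely where the V-statistic structure of \eqref{V_statistic} must be used, decomposing the V-statistic into its Hoeffding/Hájek projection (the linear influence-function term) plus higher-order degenerate pieces that are $O_p(1/n)$ after scaling, so that $\sqrt n$ times them vanishes. The continuity of $Q(\cdot,\underline P)$ from Assumption \ref{finite_horizon_model} is used to ensure the limiting Brownian-bridge covariance is well-defined and that the grid points are continuity points.
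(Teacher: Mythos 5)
Your proposal follows essentially the same route as the paper's proof: decompose the two components, establish the conditional characteristic-function limit of the Monte Carlo term (conditioning on the input data and using consistency of $Q(t_\ell,\underline{\hat{P}})$ to pass the random conditional covariance to the Brownian-bridge covariance), establish the unconditional CLT for the input term via the influence-function linearization and a multivariate CLT across the $m$ independent samples, and then assemble the joint limit and independence by factoring the joint characteristic function through the conditional expectation given $\underline{\hat{P}}$ and invoking dominated convergence. The one place the paper is more explicit is the conditional CF step: rather than citing an ``ordinary CLT'' applied to the random empirical distribution, the paper carries out a quantitative expansion of the exact conditional CF of the multinomial indicators (using elementary complex-number bounds) so that the convergence of the conditional covariance matrix $\hat\Sigma_{BB}\to\Sigma_{BB}$ directly yields convergence in probability of the conditional CF itself — this is the substance behind your ``standard argument'' phrase, and your sketch correctly anticipates it. Your observation that $IF_i$ is bounded by $T_i$ under Assumption \ref{finite_horizon_model} (so moment conditions are automatic) is a nice shortcut the paper obtains instead by verifying the fourth-moment condition of Assumption \ref{1st_expansion_truth} via Theorem \ref{verification_general_assumptions}.
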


Now we present the weak convergence result (\ref{weak_conv_target}), where we prove the final condition, tightness, along the way.

\begin{theorem}
\label{weak_convergence}Suppose Assumptions \ref{balanced_data} and
\ref{finite_horizon_model} hold. We have%
\[
(\sqrt{R}(\hat{Q}(\cdot,\underline{\hat{P}})-Q(\cdot,\underline{\hat{P}%
})),\sqrt{n}(Q(\cdot,\underline{\hat{P}})-Q(\cdot,\underline{P})))\Rightarrow
(BB({Q}(\cdot,\underline{P})),\mathbb{G}(\cdot))
\]
as $n,R\rightarrow\infty$, where the distribution of $BB({Q}(\cdot
,\underline{P}))$ and $\mathbb{G}(\cdot)$ are given in Proposition
\ref{fidis_convergence}. Moreover $\mathbb{P}(\mathbb{G}\in C_{b}
(\mathbb{R}))=1$, where $C_{b}(\mathbb{R})\subset D_{\pm\infty}$ is
the class of bounded continuous functions on $\mathbb{R}$.
\end{theorem}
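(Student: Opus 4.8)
The plan is to obtain the weak convergence in \eqref{weak_conv_target} from the standard characterization of convergence in law on the Skorohod product space $D_{\pm\infty}\times D_{\pm\infty}$ recalled in Appendix~\ref{sec:Skorohod}: measurability of the prelimit, finite-dimensional convergence, and tightness. The first two are already available --- measurability is Lemma~\ref{measurability}, and finite-dimensional convergence to $(BB(Q(\cdot,\underline{P})),\mathbb{G}(\cdot))$ with the independence and covariance \eqref{covariance_function} as claimed is Proposition~\ref{fidis_convergence}. Hence the whole task collapses to \emph{tightness} of the two-dimensional process $(\sqrt{R}(\hat{Q}(\cdot,\underline{\hat{P}})-Q(\cdot,\underline{\hat{P}})),\sqrt{n}(Q(\cdot,\underline{\hat{P}})-Q(\cdot,\underline{P})))$. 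Since the Skorohod spaces are Polish and a family of laws on a product of metric spaces is tight iff each marginal family is tight, I would establish tightness of the Monte Carlo coordinate and of the input-data coordinate separately; the uniform modulus-of-continuity control I obtain for the second coordinate will simultaneously yield $\mathbb{P}(\mathbb{G}\in C_b(\mathbb{R}))=1$.

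For the Monte Carlo coordinate, condition on the input data. Given $\underline{\hat{P}}$, the outputs $Y_1,\dots,Y_R$ are i.i.d.\ with c.d.f.\ $Q(\cdot,\underline{\hat{P}})$, so $\sqrt{R}(\hat{Q}(\cdot,\underline{\hat{P}})-Q(\cdot,\underline{\hat{P}}))$ is a uniform empirical process composed with the (random, generally discrete) c.d.f.\ $Q(\cdot,\underline{\hat{P}})$. The textbook Donsker/KS statement does not apply directly after conditioning because this base c.d.f.\ is data-dependent and not continuous; to handle the entanglement, I would invoke a Koml\'os--Major--Tusn\'ady approximation (\cite{komlos1975approximation}) whose error bound is \emph{uniform over all possible base distributions}: on an enlarged space there is a Brownian bridge $B_R^{\circ}$ with $\sup_{t}\lvert\sqrt{R}(\hat{Q}(t,\underline{\hat{P}})-Q(t,\underline{\hat{P}}))-B_R^{\circ}(Q(t,\underline{\hat{P}}))\rvert=O_p((\log R)/\sqrt{R})\rightarrow 0$. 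A Glivenko--Cantelli argument for the V-statistic \eqref{V_statistic}, together with P\'olya's theorem using the continuity of $Q(\cdot,\underline{P})$ in Assumption~\ref{finite_horizon_model}, gives $\sup_t\lvert Q(t,\underline{\hat{P}})-Q(t,\underline{P})\rvert\overset{p}{\rightarrow}0$, and the finite-horizon structure makes $\sqrt{R}$ times the largest atom of $Q(\cdot,\underline{\hat{P}})$ be $o_p(1)$; hence $B_R^{\circ}(Q(\cdot,\underline{\hat{P}}))$ is within $o_p(1)$ of $B_R^{\circ}(Q(\cdot,\underline{P}))\overset{d}{=}BB(Q(\cdot,\underline{P}))$ in the Skorohod metric on $D_{\pm\infty}$. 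This yields tightness of the first coordinate and reconfirms its limit.

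For the input-data coordinate, I would Hoeffding-decompose the V-statistic \eqref{V_statistic}. By the influence-function representation \eqref{IF finite horizon}, $\sqrt{n}(Q(t,\underline{\hat{P}})-Q(t,\underline{P}))$ equals its first-order (H\'ajek) projection $\sum_{i=1}^{m}\sqrt{n/n_i}\cdot n_i^{-1/2}\sum_{j=1}^{n_i}IF_i(t,X_{i,j};\underline{P})$ plus a degenerate remainder. The projection is a sum of $m$ independent empirical processes indexed by the classes $\mathcal{F}_i=\{x\mapsto IF_i(t,x;\underline{P}):t\in\mathbb{R}\}$; by \eqref{IF finite horizon} each member of $\mathcal{F}_i$ is bounded by $2T_i$ and, for fixed $x$, the conditional-c.d.f.\ monotonicity in $t$ shows $\mathcal{F}_i$ is built from finitely many nested (monotone-in-$t$) families, hence $P_i$-Donsker by a bracketing argument; this gives tightness of the leading term with asymptotic equicontinuity matching the continuity of the covariance \eqref{covariance_function}. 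For the degenerate remainder I would bound its modulus of continuity directly, generalizing the chaining/oscillation estimates of \cite{silverman1976limit} from a single-sample U-process to the present multi-sample V-process, obtaining that it is $o_p(1)$ uniformly in $t$. Together with the vanishing of this coordinate as $t\rightarrow\pm\infty$ (where $Q(\cdot,\underline{\hat{P}})$ and $Q(\cdot,\underline{P})$ agree), this delivers $\lim_{\delta\downarrow 0}\limsup_{n}\mathbb{P}(w(\sqrt{n}(Q(\cdot,\underline{\hat{P}})-Q(\cdot,\underline{P})),\delta)>\varepsilon)=0$ for the \emph{uniform} modulus $w$, which is tightness in $D_{\pm\infty}$; and since this modulus is uniform while the finite-dimensional limit is continuous, it also forces $\mathbb{P}(\mathbb{G}\in C_b(\mathbb{R}))=1$.

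The main obstacle I anticipate is the uniform control of the degenerate part of the \emph{multi-sample} V-process: Silverman's oscillation/chaining argument is written for a single i.i.d.\ sample, and extending it to $m$ independent samples with possibly unequal sizes $n_i$ --- while keeping the oscillation bound of smaller order than $\sqrt{n}$ uniformly in $t$ --- requires genuinely new bookkeeping rather than a quotation. The uniform-in-distribution KMT step is also non-standard (the classical statement fixes the base c.d.f.), but it can mostly be assembled from existing strong-approximation results with care about the form of the error term and about the vanishing-atom estimate; the V-process oscillation bound is where I expect the real work to lie.
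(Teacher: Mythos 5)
Your high-level framework — reduce to measurability (Lemma~\ref{measurability}), finite-dimensional convergence (Proposition~\ref{fidis_convergence}), and marginal tightness on the Skorohod product space — is exactly the paper's approach, and your treatment of the Monte Carlo coordinate is essentially the paper's Lemma~\ref{tightness_BB}: a KMT strong approximation with base-distribution-uniform error, followed by a Glivenko--Cantelli argument for the V-statistic and a comparison of time-changed Brownian bridges. One minor slip there: you invoke smallness of $\sqrt{R}$ times the largest atom of $Q(\cdot,\underline{\hat{P}})$ to pass from $B^\circ_R(Q(\cdot,\underline{\hat{P}}))$ to $B^\circ_R(Q(\cdot,\underline{P}))$, but atoms are irrelevant to that step. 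The paper's argument is just $\|B^\circ_R(Q(\cdot,\underline{\hat{P}}))-B^\circ_R(Q(\cdot,\underline{P}))\|_\infty\leq w_{B^\circ_R}(\|Q(\cdot,\underline{\hat{P}})-Q(\cdot,\underline{P})\|_\infty)$, which tends to zero almost surely along subsequences because the Brownian bridge has a uniform modulus of continuity and $\|Q(\cdot,\underline{\hat{P}})-Q(\cdot,\underline{P})\|_\infty\to 0$; no atom estimate is needed.

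On the input-data coordinate you take a genuinely different route, and this is where there is a real gap. You propose a Hoeffding/H\'ajek decomposition: control the linear term via Donsker-class theory (a sound plan, since $\{IF_i(t,\cdot;\underline{P}):t\}$ is, by \eqref{IF finite horizon}, a finite sum of monotone-in-$t$ families and hence Donsker), then control the degenerate remainder separately by ``generalizing the chaining/oscillation estimates of \cite{silverman1976limit}.'' But Silverman's Lemma~3 bounds the expected modulus of continuity of a \emph{genuine empirical process}, not of a degenerate U- or V-remainder, so the citation does not transfer to the object you have isolated. The paper avoids this entirely by \emph{not} separating the H\'ajek projection: it first shows the V-statistic differs from the corresponding U-statistic by $O(1/\sqrt n)$ uniformly in $t$, then colors the index set of the U-statistic (vertices are index tuples, edges connect tuples sharing any coordinate), and observes that each color class yields an empirical process $H_i(s)$ of i.i.d.\ $U[0,1]$ variates. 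Silverman's modulus bound then applies directly to each $H_i$, and a convexity (Jensen) argument over the color classes controls the modulus of $Y_n$ itself. This handles the whole process at once and delivers tightness on $D[0,1]$ (hence $\mathbb{P}(\mathbb{G}'\in C[0,1])=1$), which is then pushed back to $D_{\pm\infty}$ and to $\mathbb{P}(\mathbb{G}\in C_b(\mathbb{R}))=1$ via the time change $s\mapsto Q^{-1}(s,\underline{P})$. If you want to keep the H\'ajek-decomposition route, you would need a bona fide uniform-in-$t$ maximal inequality for degenerate multi-sample U/V-processes (e.g., in the spirit of Arcones--Gin\'e decoupling), which is not what \cite{silverman1976limit} supplies; as written, the degenerate-remainder step is the missing ingredient.
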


Theorem \ref{weak_convergence} establishes the joint weak convergence for Monte
Carlo noise $\hat{Q}(\cdot,\underline{\hat{P}})-Q(\cdot,\underline{\hat{P}})$
and input data noise $Q(\cdot,\underline{\hat{P}})-Q(\cdot,\underline{P})$, where the tightness of the Monte Carlo noise is proved by means of the Koml\'{o}s-Major-Tusn\'{a}dy approximation and the tightness of the input data noise is handled by the V-process theory. Recall that Assumption \ref{balanced_randomness} ensures the two noises are of the same order. So we can add them together (with a unified
normalizing constant) to obtain the weak convergence for the total noise
$\hat{Q}(\cdot,\underline{\hat{P}})-Q(\cdot,\underline{P})$. However, we caution that the weak convergence of the summation of two weakly
convergent processes should not be taken for granted on the Skorohod space
$D_{\pm\infty}$. This is because the addition operator $+:D_{\pm\infty}\times
D_{\pm\infty}\mapsto D_{\pm\infty}$ is not a continuous mapping in general.
However, we can show that it is indeed continuous on the subset $C(\mathbb{R})$ where both
$BB({Q}(\cdot,\underline{P}))$ and $\mathbb{G}(\cdot)$ lie in, which allows us
to use the continuous mapping theorem.


\begin{corollary}
\label{weak_convergence_sum}Suppose Assumptions \ref{balanced_data},
\ref{balanced_randomness} and \ref{finite_horizon_model} hold. We have%
\[
\sqrt{R}(\hat{Q}(\cdot,\underline{\hat{P}})-Q(\cdot,\underline{P}%
))\Rightarrow\gamma\mathbb{G}(\cdot)+BB({Q}(\cdot,\underline{P}))
\]
as $n,R\rightarrow\infty$, where $BB(\cdot)$ and $\mathbb{G}(\cdot)$ are
defined in Theorem \ref{weak_convergence}.
\end{corollary}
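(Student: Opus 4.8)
The plan is to obtain Corollary~\ref{weak_convergence_sum} from the joint weak convergence in Theorem~\ref{weak_convergence} via the continuous mapping theorem, once the two distinct normalizing rates $\sqrt{R}$ and $\sqrt{n}$ are collapsed into one. Using the decomposition \eqref{process_decomp}, I would first write
\[
\sqrt{R}\bigl(\hat{Q}(\cdot,\underline{\hat{P}})-Q(\cdot,\underline{P})\bigr)=X_n+c_nY_n,
\]
where $X_n:=\sqrt{R}(\hat{Q}(\cdot,\underline{\hat{P}})-Q(\cdot,\underline{\hat{P}}))$, $Y_n:=\sqrt{n}(Q(\cdot,\underline{\hat{P}})-Q(\cdot,\underline{P}))$, and $c_n:=\sqrt{R/n}$. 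By Theorem~\ref{weak_convergence}, $(X_n,Y_n)\Rightarrow(BB(Q(\cdot,\underline{P})),\mathbb{G})$ in $D_{\pm\infty}\times D_{\pm\infty}$, while Assumption~\ref{balanced_randomness} gives $c_n\to\gamma$ deterministically; since appending a convergent deterministic sequence to a weakly convergent one preserves joint weak convergence, $(X_n,Y_n,c_n)\Rightarrow(BB(Q(\cdot,\underline{P})),\mathbb{G},\gamma)$ in $D_{\pm\infty}\times D_{\pm\infty}\times\mathbb{R}$.

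Next I would apply the continuous mapping theorem to the map $\Phi\colon D_{\pm\infty}\times D_{\pm\infty}\times\mathbb{R}\to D_{\pm\infty}$ given by $\Phi(f,g,c)=f+cg$. As the excerpt stresses, $\Phi$ is \emph{not} continuous on all of its domain, since addition on $D_{\pm\infty}$ is discontinuous at pairs of functions sharing a jump. However, $\Phi$ is continuous at every $(f,g,c)$ with $f,g\in C_b(\mathbb{R})$: on $C_b(\mathbb{R})$ the Skorohod topology agrees with that of uniform convergence (see Appendix~\ref{sec:Skorohod}), so $f_n\to f$ and $g_n\to g$ in $D_{\pm\infty}$ together with $c_n\to c$ yield $\|f_n-f\|_\infty\to0$ and $\|g_n-g\|_\infty\to0$, whence
\[
\|(f_n+c_n g_n)-(f+cg)\|_\infty\le\|f_n-f\|_\infty+|c_n|\,\|g_n-g\|_\infty+|c_n-c|\,\|g\|_\infty\to0,
\]
so $f_n+c_n g_n\to f+cg$ uniformly and hence in $D_{\pm\infty}$. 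The decisive point is that the limit $(BB(Q(\cdot,\underline{P})),\mathbb{G},\gamma)$ lies almost surely in $C_b(\mathbb{R})\times C_b(\mathbb{R})\times\mathbb{R}$: $\mathbb{G}$ has bounded continuous paths a.s.\ by Theorem~\ref{weak_convergence}, and $BB(Q(\cdot,\underline{P}))$ is continuous because the Brownian bridge is continuous on $[0,1]$ and $Q(\cdot,\underline{P})$ is continuous by Assumption~\ref{finite_horizon_model}, with $Q(-\infty,\underline{P})=0$, $Q(\infty,\underline{P})=1$ and $BB(0)=BB(1)=0$, so the composition extends continuously across $\pm\infty$. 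The continuous mapping theorem then gives
\[
\sqrt{R}\bigl(\hat{Q}(\cdot,\underline{\hat{P}})-Q(\cdot,\underline{P})\bigr)=\Phi(X_n,Y_n,c_n)\Rightarrow BB(Q(\cdot,\underline{P}))+\gamma\mathbb{G},
\]
which is the claim, the limit being the stated process since $BB$ and $\mathbb{G}$ are independent by Theorem~\ref{weak_convergence}.

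The only genuinely delicate step is the continuity of $\Phi$ restricted to the continuous-path stratum of the product Skorohod space $D_{\pm\infty}\times D_{\pm\infty}$ — specifically, verifying that convergence in the Skorohod topology to a limit in $C_b(\mathbb{R})$ upgrades to uniform convergence on the (compactified) real line, and checking that $BB(Q(\cdot,\underline{P}))$ indeed belongs to $C_b(\mathbb{R})\subset D_{\pm\infty}$ by inspecting its behavior at $\pm\infty$. The algebraic bound displayed above and the bookkeeping for the deterministic scalar $c_n$ are routine.
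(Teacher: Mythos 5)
Your plan is essentially the paper's: show joint weak convergence of the two noise processes together with the deterministic scalar $\sqrt{R/n}$, then push forward through the (scalar-multiply-then-add) map via the continuous mapping theorem, with continuity holding on the continuous-path stratum. Folding the scalar and the addition into a single map $\Phi$ is a minor streamlining; the paper runs them in two steps (Billingsley Theorem 2.8 with scalar multiplication, then addition). That difference is cosmetic.

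There is, however, a genuine error in your continuity argument. You assert that $f_n\to f$ in $D_{\pm\infty}$ with $f\in C_b(\mathbb{R})$ implies $\|f_n-f\|_\infty\to 0$, citing Appendix~\ref{sec:Skorohod}. That is false on the unbounded time domain: Skorohod convergence in $D_{\pm\infty}$ to a continuous limit upgrades only to \emph{locally} uniform convergence, i.e.\ $\sup_{|t|\le m}|f_n(t)-f(t)|\to 0$ for each fixed $m$. (Appendix~\ref{sec:Skorohod} in fact says the Skorohod topology is weaker than the topology of locally uniform convergence; the ``coincides with uniform topology'' statement in the proof of Theorem~\ref{weak_convergence_sup} is for $D[0,1]$, where the domain is compact.) A concrete counterexample in $D_{\pm\infty}$: $f_n=\mathbf{1}_{[n,\infty)}$ and $f\equiv 0$; one has $d_{\pm\infty}^{\circ}(f_n,0)\to 0$, yet $\|f_n-0\|_\infty=1$ for every $n$. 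So your displayed bound is not valid as stated.

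The fix is small and your argument then goes through. Replace the three global sup-norms on the first two terms with $\sup_{|t|\le m}$:
\begin{equation*}
\sup_{|t|\le m}\bigl|(f_n+c_n g_n)(t)-(f+cg)(t)\bigr|\le\sup_{|t|\le m}|f_n-f|+|c_n|\sup_{|t|\le m}|g_n-g|+|c_n-c|\,\|g\|_\infty.
\end{equation*}
The first two terms vanish for each fixed $m$ by locally uniform convergence (and boundedness of the convergent sequence $c_n$); the third vanishes because $g\in C_b(\mathbb{R})$ makes $\|g\|_\infty<\infty$. Since $f+cg$ is continuous, locally uniform convergence of $f_n+c_n g_n$ to $f+cg$ gives $D_{\pm\infty}$ convergence (take $\lambda_n\equiv\mathrm{id}$ in the Skorohod criterion). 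One further housekeeping item you omit: for the continuous mapping theorem to apply you must also verify that $\Phi$ is Borel measurable on $D_{\pm\infty}\times D_{\pm\infty}\times\mathbb{R}$; this follows from the same finite-dimensional-projection argument the paper uses for the addition map.
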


As the final step, we need to take the supremum in Corollary \ref{weak_convergence_sum} to obtain our main Theorem \ref{weak_convergence_sup}, i.e., the weak convergence of the supremum $\sup_{t\in\mathbb{R}}|\sqrt{R}(\hat{Q}(t,\underline{\hat{P}})-Q(t,\underline{P}))|$. As explained at the end of Section \ref{sec:methodology}, we use a coupling technique by creating the artificial output $Q(Y,\underline{P})$ to avoid the issue of unbounded support which damages the continuity of the supremum operator. By showing that the processes in Corollary \ref{weak_convergence_sum} associated with the two outputs have approximately the same supremum, the continuous mapping theorem applied to the new process proves Theorem \ref{weak_convergence_sup}.

\subsection{Covariance Function Estimation}

\label{sec:theory_cov}


Recall that the covariance function estimator $\hat{\sigma}^{2}(t,t^{\prime})$
is the estimator for the subsampled covariance $\sigma^{2}(t,t^{\prime
})=\theta n\mathrm{Cov}_{\ast}(Q(t,\underline{\hat{P}}_{\theta}^{\ast
}),Q(t^{\prime},\underline{\hat{P}}_{\theta}^{\ast}))$ and the latter is the
approximation for the covariance function $\mathrm{Cov}(\mathbb{G}%
(t),\mathbb{G}(t^{\prime}))$ according to the bootstrap rationale in
(\ref{cov_appro2}). Therefore, to analyze the estimation error $\hat{\sigma
}^{2}(t,t^{\prime})-\mathrm{Cov}(\mathbb{G}(t),\mathbb{G}(t^{\prime}))$, we
consider the following decomposition
\begin{equation}
\hat{\sigma}^{2}(t,t^{\prime})-\mathrm{Cov}(\mathbb{G}(t),\mathbb{G}%
(t^{\prime}))=(\hat{\sigma}^{2}(t,t^{\prime})-\sigma^{2}(t,t^{\prime
}))+(\sigma^{2}(t,t^{\prime})-\mathrm{Cov}(\mathbb{G}(t),\mathbb{G}(t^{\prime
}))),\label{cov_est_decomp}%
\end{equation}
where the first term $\hat{\sigma}^{2}(t,t^{\prime})-\sigma^{2}(t,t^{\prime})$
denotes the Monte Carlo error and the second term $\sigma^{2}(t,t^{\prime
})-\mathrm{Cov}(\mathbb{G}(t),\mathbb{G}(t^{\prime}))$ denotes the bootstrap
error. We will analyze each error individually and then add them together to
analyze the total error $\hat{\sigma}^{2}(t,t^{\prime})-\mathrm{Cov}%
(\mathbb{G}(t),\mathbb{G}(t^{\prime}))$.

We first consider the consistency of our estimator $\hat{\sigma}%
^{2}(t,t^{\prime})$. The following theorem establishes the validity of the
bootstrap rationale, i.e., the bootstrap error is small enough:

\begin{theorem}
\label{consistency_sigma} Suppose Assumptions \ref{balanced_data} and \ref{finite_horizon_model} hold. Then we have $\sigma^{2}(t,t^{\prime})\overset{p}{\rightarrow}\mathrm{Cov}(\mathbb{G}(t),\mathbb{G}(t^{\prime}))$ as $\theta n\rightarrow\infty$ for any $t,t^{\prime}\in\mathbb{R}$.
\end{theorem}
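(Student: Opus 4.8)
The plan is to exploit the V-statistic structure of $Q(t,\underline{\hat{P}}_\theta^\ast)$ and carry out an ANOVA (Hoeffding-type) decomposition whose leading term reproduces the influence-function covariance \eqref{covariance_function}; concretely, this is a bootstrap-consistency statement for the covariance of a multi-sample V-statistic, analogous to classical consistency results for bootstrapped U-statistics. Conditional on the input data, the resample $\{X^\ast_{i,j}\}_{j\le s_i}$ consists, for each $i$, of $s_i=\lfloor\theta n_i\rfloor$ i.i.d.\ draws from $\hat{P}_i$, so by \eqref{V_statistic} the quantity $Q(t,\underline{\hat{P}}_\theta^\ast)$ is exactly a multi-sample V-statistic in these resamples, with indicator kernel $I(h(\cdots)\le t)$ of degree $T_i$ in the $i$-th block and ``ground-truth'' distribution $\hat{P}_i$. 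First I would pass from this V-statistic to its associated U-statistic plus diagonal (repeated-index) corrections, and apply the Hoeffding decomposition of the U-statistic relative to $\hat{P}_i$, writing $Q(t,\underline{\hat{P}}_\theta^\ast)=Q(t,\underline{\hat{P}})+\sum_{i=1}^m \frac{1}{s_i}\sum_{j=1}^{s_i}\widehat{IF}_i(t,X^\ast_{i,j})+(\text{components of order}\ge 2\text{ and diagonal corrections})$, where $\widehat{IF}_i(t,\cdot)$ is the empirical analogue of \eqref{IF finite horizon} with $\underline{P}$ replaced by $\underline{\hat{P}}$; the key point is that the first-order Hoeffding projection of the indicator kernel is exactly the (empirical) influence function.

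Next I would compute the conditional covariance $\mathrm{Cov}_\ast(Q(t,\underline{\hat{P}}_\theta^\ast),Q(t',\underline{\hat{P}}_\theta^\ast))$ term by term, using orthogonality of Hoeffding components across different orders and across blocks $i$. The leading contribution is $\sum_{i=1}^m \frac{1}{s_i}\,\widehat{\mathrm{Cov}}_{\hat{P}_i}\!\big(\widehat{IF}_i(t,\cdot),\widehat{IF}_i(t',\cdot)\big)$, with $\widehat{\mathrm{Cov}}_{\hat{P}_i}$ the empirical covariance under $\hat{P}_i$, while the order-$\ge 2$ components contribute $O(1/s_i^2)$ to the covariance and the diagonal corrections $O(s_i^{-3/2})$. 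Multiplying by $\theta n$ turns the leading term into $\sum_{i=1}^m \frac{\theta n}{s_i}\,\widehat{\mathrm{Cov}}_{\hat{P}_i}\!\big(\widehat{IF}_i(t,\cdot),\widehat{IF}_i(t',\cdot)\big)$. Since $\theta\in(0,1]$, the hypothesis $\theta n\to\infty$ forces $n\to\infty$ and, via Assumption~\ref{balanced_data}, $n_i\to\infty$, $s_i=\lfloor\theta n_i\rfloor\to\infty$, and $\theta n/s_i=(n/n_i)(\theta n_i/s_i)\to 1/\beta_i$; hence all the remainders, which after the $\theta n$ rescaling are $o(1)$ (of order $O(s_i^{-1/2})$), vanish, and it remains only to analyze the leading term.

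It then suffices to show $\widehat{\mathrm{Cov}}_{\hat{P}_i}\!\big(\widehat{IF}_i(t,\cdot),\widehat{IF}_i(t',\cdot)\big)\overset{p}{\rightarrow}\mathrm{Cov}_{P_i}\!\big(IF_i(t,X_i;\underline{P}),IF_i(t',X_i;\underline{P})\big)$ as $n_i\to\infty$. This is a law-of-large-numbers argument: the building blocks of $\widehat{IF}_i$ are the conditional output probabilities $\mathbb{P}_{\underline{\hat{P}}}(h\le t\mid X_i(j)=x)$, themselves V-statistics in the remaining blocks, which converge to their $\underline{P}$-counterparts, and the empirical covariance converges to the population covariance. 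Boundedness of the indicator kernel supplies all moment and uniform-integrability conditions for free, and continuity of $Q(\cdot,\underline{P})$ from Assumption~\ref{finite_horizon_model} disposes of boundary issues. Assembling the pieces gives $\sigma^2(t,t')=\theta n\,\mathrm{Cov}_\ast(Q(t,\underline{\hat{P}}_\theta^\ast),Q(t',\underline{\hat{P}}_\theta^\ast))\overset{p}{\rightarrow}\sum_{i=1}^m \frac{1}{\beta_i}\mathrm{Cov}_{P_i}(IF_i(t,X_i;\underline{P}),IF_i(t',X_i;\underline{P}))=\mathrm{Cov}(\mathbb{G}(t),\mathbb{G}(t'))$.

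I expect the main obstacle to be the bookkeeping in the multi-sample V-statistic ANOVA decomposition under with-replacement resampling: one must cleanly isolate the diagonal terms that separate V- from U-statistics, verify that every Hoeffding component of order $\ge 2$ is genuinely $O(1/s^2)$ in conditional mean square (and the diagonal $O(s^{-3/2})$ after the covariance/Cauchy--Schwarz bound) so that it only survives as $o_p(1)$ after multiplication by $\theta n$, and --- because $\hat{P}_i$ is itself random --- establish that the random first-order projection covariances converge to their deterministic population limits rather than merely being stochastically bounded. The finite-horizon structure (constant $T_i$, bounded kernel) is exactly what makes these estimates tractable.
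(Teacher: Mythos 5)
Your proposal takes essentially the same route as the paper: write $Q(t,\underline{\hat{P}}_\theta^\ast)$ via a first-order (Hoeffding/influence-function) expansion around $\underline{\hat{P}}$, observe that the conditional covariance is dominated by $\sum_i s_i^{-1}\mathbb{E}_\ast[IF_i(t,X^\ast_{i,1};\underline{\hat{P}})IF_i(t',X^\ast_{i,1};\underline{\hat{P}})]$ with $\theta n/s_i\to 1/\beta_i$, bound the higher-order and diagonal remainders so they vanish after rescaling by $\theta n$, and finish with a law of large numbers showing the empirical IF covariances converge to their $\underline{P}$-counterparts (the paper's Lemma \ref{influence_product}). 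The only difference is cosmetic: you re-derive the expansion directly from the multi-sample V-statistic ANOVA decomposition, whereas the paper packages exactly this content into Assumption \ref{1st_expansion_empirical} and verifies it once in Theorem \ref{verification_general_assumptions} by citing \cite{lam2022subsampling}.
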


To study the Monte Carlo error, we focus on the conditional MSE $\mathbb{E}%
_{\ast}[(\hat{\sigma}^{2}(t,t^{\prime})-\sigma^{2}(t,t^{\prime}))^{2}]$, which
can be decomposed as%
\[
\mathbb{E}_{\ast}[(\hat{\sigma}^{2}(t,t^{\prime})-\sigma^{2}(t,t^{\prime
}))^{2}]=\mathrm{Var}_{\ast}(\hat{\sigma}^{2}(t,t^{\prime}))+(\mathbb{E}%
_{\ast}[\hat{\sigma}^{2}(t,t^{\prime})]-\sigma^{2}(t,t^{\prime}))^{2}.
\]
Algorithm \ref{alg:subsample} that outputs the covariance estimator $\hat{\sigma}^{2}(t,t^{\prime})$ has a generic form presented in Appendix \ref{sec: var_of_cov}. By means of the mean and variance formulas in Lemma \ref{general_covariance_undebias}, we obtain the following theorem.

\begin{theorem}
\label{MSE_sigma_hat}Suppose Assumptions \ref{balanced_data} and \ref{finite_horizon_model} hold and the subsample ratio $\theta$ is chosen such that $\theta n=\omega(1)$.
Then the configuration
\begin{equation}
B=\omega(1),\quad R_{s}=\omega(s),\label{configuration_B_R}%
\end{equation}
is sufficient for $\mathbb{E}_{\ast}[(\hat{\sigma}^{2}(t,t^{\prime}%
)-\sigma^{2}(t,t^{\prime}))^{2}]=o_{p}(1)$. If the following non-degeneracy
condition holds%
\begin{equation}
0<Q(t,\underline{P}),Q(t^{\prime},\underline{P})<1,\quad\sum_{i=1}%
^{m}\mathbb{E}_{P_{i}}[IF_{i}^{2}(t,X_{i};\underline{P})]>0,\quad\sum
_{i=1}^{m}\mathbb{E}_{P_{i}}[IF_{i}^{2}(t^{\prime},X_{i};\underline{P}%
)]>0,\label{non-degeneracy}%
\end{equation}
then the configuration (\ref{configuration_B_R}) is also necessary for
$\mathbb{E}_{\ast}[(\hat{\sigma}^{2}(t,t^{\prime})-\sigma^{2}(t,t^{\prime
}))^{2}]=o_{p}(1)$.
\end{theorem}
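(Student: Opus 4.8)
The plan is to work from the conditional bias--variance decomposition displayed just before the statement, $\mathbb{E}_{\ast}[(\hat{\sigma}^{2}(t,t^{\prime})-\sigma^{2}(t,t^{\prime}))^{2}]=\mathrm{Var}_{\ast}(\hat{\sigma}^{2}(t,t^{\prime}))+(\mathbb{E}_{\ast}[\hat{\sigma}^{2}(t,t^{\prime})]-\sigma^{2}(t,t^{\prime}))^{2}$, and to pin down the exact order of each summand as a function of the configuration parameters: an $O_{p}$ upper bound in general, and a matching $\Theta_{p}$ lower bound under the non-degeneracy condition (\ref{non-degeneracy}). Since both summands are nonnegative, the MSE is $o_{p}(1)$ if and only if each one is, so the upper bounds deliver sufficiency of (\ref{configuration_B_R}) and the $\Theta_{p}$ lower bounds deliver its necessity. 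Throughout I will use that $s\sim\theta n$ (immediate from $s=\tfrac1m\sum_i\lfloor\theta n_i\rfloor$, $n=\tfrac1m\sum_i n_i$ and $\theta n=\omega(1)$), so that $R_{s}=\omega(s)$ is the same condition as $R_{s}=\omega(\theta n)$.

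For the bias term, write $\hat{Q}(t,\underline{\hat{P}}^{\ast b}_{\theta})=Q(t,\underline{\hat{P}}^{\ast b}_{\theta})+\varepsilon^{b}_{t}$ with $\varepsilon^{b}_{t}$ the inner Monte Carlo error, which is centered and independent across $b$ conditionally on the resamples. Since $\hat{\sigma}^{2}(t,t^{\prime})$ is $\theta n$ times the unbiased sample covariance of the bivariate array $\{(\hat{Q}(t,\underline{\hat{P}}^{\ast b}_{\theta}),\hat{Q}(t^{\prime},\underline{\hat{P}}^{\ast b}_{\theta}))\}_{b=1}^{B}$, and the cross terms between $Q(\cdot,\underline{\hat{P}}^{\ast}_{\theta})$ and $\varepsilon$ drop out of $\mathbb{E}_{\ast}$ by conditional centering, one gets $\mathbb{E}_{\ast}[\hat{\sigma}^{2}(t,t^{\prime})]=\sigma^{2}(t,t^{\prime})+\tfrac{\theta n}{R_{s}}\mathbb{E}_{\ast}\big[Q(t\wedge t^{\prime},\underline{\hat{P}}^{\ast}_{\theta})-Q(t,\underline{\hat{P}}^{\ast}_{\theta})Q(t^{\prime},\underline{\hat{P}}^{\ast}_{\theta})\big]$, the bracketed quantity being the conditional covariance of the indicators $I(Y\le t),I(Y\le t^{\prime})$. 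By the bootstrap consistency behind Theorem \ref{consistency_sigma}, together with $\mathrm{Cov}_{\ast}(Q(t,\underline{\hat{P}}^{\ast}_{\theta}),Q(t^{\prime},\underline{\hat{P}}^{\ast}_{\theta}))=O_{p}(1/(\theta n))$, this expectation converges in probability to $Q(t\wedge t^{\prime},\underline{P})-Q(t,\underline{P})Q(t^{\prime},\underline{P})$, which is nonnegative and is strictly positive exactly when $0<Q(t,\underline{P}),Q(t^{\prime},\underline{P})<1$. Hence the bias is $O_{p}(\theta n/R_{s})$ always, and $\Theta_{p}(\theta n/R_{s})$ under (\ref{non-degeneracy}); since $R_{s}=\omega(s)$ gives $\theta n/R_{s}\to0$, it is $o_{p}(1)$ under the stated configuration.

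For the variance term, I would feed the same array into the generic mean/variance formulas of Lemma \ref{general_covariance_undebias}, giving $\mathrm{Var}_{\ast}(\hat{\sigma}^{2}(t,t^{\prime}))=\tfrac{(\theta n)^{2}}{B}(\mu_{22}-\mu_{11}^{2})+O_{p}(1/B^{2})$, where $\mu_{11}=\mathrm{Cov}_{\ast}(\hat{Q}(t,\cdot),\hat{Q}(t^{\prime},\cdot))$ and $\mu_{22}=\mathbb{E}_{\ast}[(\hat{Q}(t,\cdot)-\mathbb{E}_{\ast}\hat{Q}(t,\cdot))^{2}(\hat{Q}(t^{\prime},\cdot)-\mathbb{E}_{\ast}\hat{Q}(t^{\prime},\cdot))^{2}]$. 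The key scaling input is that each centered factor $\hat{Q}(t,\cdot)-\mathbb{E}_{\ast}\hat{Q}(t,\cdot)$ decomposes into a ``bootstrap'' piece of conditional variance $\Theta_{p}(1/(\theta n))$, with $\theta n\,\mathrm{Cov}_{\ast}$ converging to the $\mathbb{G}$-covariance as in Theorem \ref{consistency_sigma}, plus an inner-noise piece of conditional variance $O_{p}(1/R_{s})=o_{p}(1/(\theta n))$, and that the fourth moments of these $[0,1]$-valued factors are $O_{p}$ of the squares of their variances. Thus $\mu_{22},\mu_{11}^{2}=O_{p}(1/(\theta n)^{2})$ and $\mathrm{Var}_{\ast}(\hat{\sigma}^{2}(t,t^{\prime}))=O_{p}(1/B)\to0$ iff $B=\omega(1)$; under (\ref{non-degeneracy}) the leading coefficient $(\theta n)^{2}(\mu_{22}-\mu_{11}^{2})$ converges to a strictly positive limit (for the jointly Gaussian limit it equals $\mathrm{Var}(\mathbb{G}(t))\mathrm{Var}(\mathbb{G}(t^{\prime}))+\mathrm{Cov}(\mathbb{G}(t),\mathbb{G}(t^{\prime}))^{2}$, positive since $\sum_{i}\mathbb{E}_{P_{i}}[IF_{i}^{2}(t,X_{i};\underline{P})]>0$ and likewise at $t^{\prime}$), so $\mathrm{Var}_{\ast}(\hat{\sigma}^{2}(t,t^{\prime}))=\Theta_{p}(1/B)$.

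Putting the pieces together: under (\ref{configuration_B_R}) the MSE is $O_{p}(1/B)+O_{p}((\theta n/R_{s})^{2})=o_{p}(1)$, which is sufficiency; conversely, under (\ref{non-degeneracy}), $\mathrm{MSE}=o_{p}(1)$ forces the bias to vanish, hence $\theta n/R_{s}\to0$, i.e.\ $R_{s}=\omega(s)$, and then, with $R_{s}=\omega(\theta n)$ now in hand, forces the $\Theta_{p}(1/B)$ variance to vanish, hence $B=\omega(1)$ --- which is necessity. The main obstacle I anticipate is the variance analysis: controlling the $O_{p}(1/B^{2})$ remainder coming out of Lemma \ref{general_covariance_undebias}, and, more delicately, establishing the $\Theta_{p}$ (not merely $O_{p}$) order of the leading coefficient $(\theta n)^{2}(\mu_{22}-\mu_{11}^{2})$ under non-degeneracy, which requires sharp asymptotics of the subsampled bootstrap variance/covariance of the $V$-statistic $Q(t,\underline{\hat{P}}^{\ast}_{\theta})$ --- namely that $\theta n\,\mathrm{Cov}_{\ast}$ converges to the $\mathbb{G}$-covariance with strictly positive diagonal under the influence-function condition --- rather than just the boundedness of $\hat{Q}$ that suffices for the upper bounds.
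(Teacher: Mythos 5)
Your proposal is correct and follows essentially the same route as the paper's proof: decompose the conditional MSE into squared bias plus conditional variance, identify the bias exactly as $\tfrac{\theta n}{R_s}\mathbb{E}_*[\varepsilon^X\varepsilon^Y]$ via the generic mean formula of Lemma~\ref{general_covariance_undebias}, compute the conditional variance from the same lemma, and feed in the moment asymptotics that the paper supplies in Lemma~\ref{computation_moments}. Both your sufficiency argument and your ``bias $\Rightarrow R_s=\omega(s)$, then variance $\Rightarrow B=\omega(1)$'' necessity argument are sound.

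The one place you genuinely diverge is in which part of the variance formula you use to force $B=\omega(1)$. You lower-bound $\mathrm{Var}_*(\hat\sigma^2)$ by the leading $\tfrac{(\theta n)^2}{B}(\mu_{22}-\mu_{11}^2)$ term, which requires showing that $(\theta n)^2(\mu_{22}-\mu_{11}^2)$ converges to the strictly positive Isserlis constant $\mathrm{Var}(\mathbb{G}(t))\mathrm{Var}(\mathbb{G}(t'))+\mathrm{Cov}(\mathbb{G}(t),\mathbb{G}(t'))^2$. That in turn needs convergence of the conditional \emph{fourth} moment of the subsampled bootstrap V-statistic to the Gaussian fourth moment — not merely the convergence of $\theta n\,\mathrm{Cov}_*$ to the $\mathbb{G}$-covariance that you cite at the end, which is a second-moment fact and is not by itself enough since distributional convergence does not give moment convergence. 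This is exactly the content of $\mathbb{E}_*[(\tau^X)^2(\tau^Y)^2]$ in Lemma~\ref{computation_moments}, so the gap is fillable, but it is the hardest piece of that lemma. The paper instead drops to the nonnegative $\tfrac{1}{B(B-1)}\mathbb{E}_*[(\tau^X)^2]\mathbb{E}_*[(\tau^Y)^2]$ term, obtaining a coarser $\Theta_p(1/B^2)$ lower bound whose $\Theta_p$-order only requires $\mathbb{E}_*[(\tau^X)^2]=\Theta_p(1/s)$, a second-moment statement that drops straight out of the linearization of $Q(t,\underline{\hat P}^*_\theta)$ and the influence-function non-degeneracy in \eqref{non-degeneracy}, and which holds for any $R_s$ rather than only after $R_s=\omega(\theta n)$ is established. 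Your tighter $\Theta_p(1/B)$ bound is also correct once the fourth-moment limit is in hand, but the paper's choice is the more economical input for necessity.
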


The combination of Theorems \ref{consistency_sigma} and \ref{MSE_sigma_hat}
leads to the consistency of $\hat{\sigma}^{2}(t,t^{\prime})$ in Theorem
\ref{validity_alg_cov} and also the minimal simulation budget in Corollary
\ref{min_budget}, which tells us that under a simulation budget $N=\omega(1)$ and a proper algorithm configuration, we have
a negligible estimation error $\hat{\sigma}^{2}(t,t^{\prime})-\mathrm{Cov}%
(\mathbb{G}(t),\mathbb{G}(t^{\prime}))=o_{p}(1)$. As a further step, we would
like to know how to tune the configuration parameters $B,R_{s},\theta$ to
minimize the estimation error given a simulation budget $N=BR_{s}$. Note that
in the decomposition of the estimation error (\ref{cov_est_decomp}), $B$ and
$R_{s}$ only affect the term $\hat{\sigma}^{2}(t,t^{\prime})-\sigma
^{2}(t,t^{\prime})$ while $\theta$ affects both terms $\hat{\sigma}%
^{2}(t,t^{\prime})-\sigma^{2}(t,t^{\prime})$ and $\sigma^{2}(t,t^{\prime
})-\mathrm{Cov}(\mathbb{G}(t),\mathbb{G}(t^{\prime}))$. So we first study the
optimal configuration of $B$ and $R_{s}$ for the term $\hat{\sigma}%
^{2}(t,t^{\prime})-\sigma^{2}(t,t^{\prime})$ by minimizing $\mathbb{E}_{\ast}[(\hat{\sigma}^{2}(t,t^{\prime})-\sigma^{2}(t,t^{\prime}))^{2}]$ given $N$ and $\theta$.

\begin{theorem}
\label{optimal_B_R} Suppose Assumptions \ref{balanced_data}, \ref{finite_horizon_model} and the non-degeneracy condition (\ref{non-degeneracy}) hold. Given a simulation budget $N=\omega(\theta n)$ and a subsample ratio $\theta=\omega(1/n)$, the optimal $B$ and $R_{s}$ that minimize the order of $\mathbb{E}_{\ast}[(\hat{\sigma}^{2}(t,t^{\prime})-\sigma^{2}(t,t^{\prime}))^{2}]$ are%
\begin{equation}
R_{s}^{\ast}=\Theta(N^{1/3}s^{2/3}),B^{\ast}=N/R_{s}^{\ast}%
,\label{optimal_B_Rquation}%
\end{equation}
which leads to the minimal order of the MSE%
\[
\mathbb{E}_{\ast}[(\hat{\sigma}^{2}(t,t^{\prime})-\sigma^{2}(t,t^{\prime
}))^{2}]=\Theta((\theta n)^{2/3}/N^{2/3})(1+o_{p}(1)).
\]
If the non-degeneracy condition (\ref{non-degeneracy}) does not hold, then
under the configuration (\ref{optimal_B_Rquation}), we have%
\[
\mathbb{E}_{\ast}[(\hat{\sigma}^{2}(t,t^{\prime})-\sigma^{2}(t,t^{\prime
}))^{2}]=O((\theta n)^{2/3}/N^{2/3})(1+o_{p}(1)).
\]
\end{theorem}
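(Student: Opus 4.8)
The plan is to split the conditional MSE in the usual way,
\[
\mathbb{E}_{\ast}[(\hat{\sigma}^{2}(t,t^{\prime})-\sigma^{2}(t,t^{\prime}))^{2}]=(\mathbb{E}_{\ast}[\hat{\sigma}^{2}(t,t^{\prime})]-\sigma^{2}(t,t^{\prime}))^{2}+\mathrm{Var}_{\ast}(\hat{\sigma}^{2}(t,t^{\prime})),
\]
pin down the exact order of each summand as a function of $B$, $R_{s}$ and the subsample size $s=\sum_{i}s_{i}/m=\Theta(\theta n)$ using the closed-form mean and variance formulas for the generic non-debiased covariance estimator in Lemma \ref{general_covariance_undebias} (Appendix \ref{sec: var_of_cov}) --- the same machinery behind Theorem \ref{MSE_sigma_hat} --- and then carry out the one-parameter optimization over $(B,R_{s})$ subject to $BR_{s}=N$.

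For the bias, write $\hat{Q}(t,\underline{\hat{P}}^{\ast b}_{\theta})=Q(t,\underline{\hat{P}}^{\ast b}_{\theta})+\varepsilon_{b}(t)$, splitting each bootstrap replicate's empirical output cdf into its true conditional cdf and the inner Monte Carlo noise $\varepsilon_{b}(t)$, which is conditionally mean-zero given the resample with $\mathrm{Cov}(\varepsilon_{b}(t),\varepsilon_{b}(t^{\prime})\mid\text{resample})=R_{s}^{-1}(Q(t\wedge t^{\prime},\underline{\hat{P}}^{\ast b}_{\theta})-Q(t,\underline{\hat{P}}^{\ast b}_{\theta})Q(t^{\prime},\underline{\hat{P}}^{\ast b}_{\theta}))$. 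Since the $B-1$ normalization makes $\hat{\sigma}^{2}$ unbiased within the bootstrap for $\theta n\,\mathrm{Cov}_{\ast}(\hat{Q}(t,\cdot^{\ast b}),\hat{Q}(t^{\prime},\cdot^{\ast b}))$, this yields
\[
\mathbb{E}_{\ast}[\hat{\sigma}^{2}(t,t^{\prime})]-\sigma^{2}(t,t^{\prime})=\frac{\theta n}{R_{s}}\,\mathbb{E}_{\ast}\!\left[Q(t\wedge t^{\prime},\underline{\hat{P}}^{\ast}_{\theta})-Q(t,\underline{\hat{P}}^{\ast}_{\theta})Q(t^{\prime},\underline{\hat{P}}^{\ast}_{\theta})\right],
\]
and under $0<Q(t,\underline{P}),Q(t^{\prime},\underline{P})<1$ the bracket converges in probability to $Q(t\wedge t^{\prime},\underline{P})(1-Q(t\vee t^{\prime},\underline{P}))\in(0,1)$, so the bias is $\Theta_{p}(\theta n/R_{s})$ with an explicit leading constant. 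For the variance, treat $\hat{\sigma}^{2}$ as $\theta n$ times a sample covariance of the $B$ exchangeable replicates and invoke the fourth-moment expansion of Lemma \ref{general_covariance_undebias}; the building blocks are $\mathrm{Var}_{\ast}(Q(t,\underline{\hat{P}}^{\ast}_{\theta}))=\sigma^{2}(t,t)/(\theta n)=\Theta_{p}(1/(\theta n))$ --- the V-statistic H\'ajek-projection variance, positive to leading order precisely because $\sum_{i}\mathbb{E}_{P_{i}}[IF_{i}^{2}(t,X_{i};\underline{P})]>0$ (cf.\ Theorem \ref{consistency_sigma}) --- together with $\mathrm{Var}(\varepsilon_{b}(t)\mid\text{resample})=\Theta_{p}(1/R_{s})$. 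Expanding the product $(\hat{Q}(t,\cdot^{\ast b})-\bar{Q}(t))(\hat{Q}(t^{\prime},\cdot^{\ast b})-\bar{Q}(t^{\prime}))$ and collecting the dominant contributions $\Theta_{p}(s^{-2}+(sR_{s})^{-1}+R_{s}^{-2})=\Theta_{p}(\max(s^{-2},R_{s}^{-2}))$ gives $\mathrm{Var}_{\ast}(\hat{\sigma}^{2}(t,t^{\prime}))=\Theta_{p}\!\left(B^{-1}+(\theta n)^{2}/(BR_{s}^{2})\right)$.

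Combining the two, and noting that $(\theta n)^{2}/(BR_{s}^{2})=B^{-1}\cdot(\theta n)^{2}/R_{s}^{2}$ is negligible against the squared bias because $B=\omega(1)$, we get
\[
\mathbb{E}_{\ast}[(\hat{\sigma}^{2}(t,t^{\prime})-\sigma^{2}(t,t^{\prime}))^{2}]=\Theta_{p}\!\left(\frac{(\theta n)^{2}}{R_{s}^{2}}+\frac{1}{B}\right)=\Theta_{p}\!\left(\frac{(\theta n)^{2}}{R_{s}^{2}}+\frac{R_{s}}{N}\right),
\]
using $B=N/R_{s}$. Minimizing $R_{s}\mapsto(\theta n)^{2}R_{s}^{-2}+R_{s}/N$ over $R_{s}\in[1,N]$, the two terms balance at $R_{s}^{3}=\Theta(N(\theta n)^{2})$, i.e.\ $R_{s}^{\ast}=\Theta(N^{1/3}s^{2/3})$, $B^{\ast}=N/R_{s}^{\ast}$, with minimal order $(\theta n)^{2/3}/N^{2/3}$; since $\sigma^{2}(t,t)$, $\sigma^{2}(t^{\prime},t^{\prime})$ and the bias bracket all have in-probability limits, this refines to $\Theta((\theta n)^{2/3}/N^{2/3})(1+o_{p}(1))$. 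Admissibility of this allocation --- namely $1\ll R_{s}^{\ast}\ll N$, $B^{\ast}=\omega(1)$ and $R_{s}^{\ast}=\omega(\theta n)$, so that all the order estimates above apply and, incidentally, the consistency regime $R_{s}=\omega(s)$ of Theorem \ref{MSE_sigma_hat} holds --- follows at once from $N=\omega(\theta n)$ and $\theta=\omega(1/n)$. Finally, if the non-degeneracy condition \eqref{non-degeneracy} fails, the leading constants in the bias bracket (when some $Q(t,\underline{P})\in\{0,1\}$) and in the V-statistic variance (when some $\sum_{i}\mathbb{E}_{P_{i}}[IF_{i}^{2}(t,X_{i};\underline{P})]=0$) may vanish, so every $\Theta_{p}$ becomes $O_{p}$ and configuration \eqref{optimal_B_Rquation} yields the stated $O((\theta n)^{2/3}/N^{2/3})(1+o_{p}(1))$ bound.

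I expect the variance order to be the main obstacle. The difficulty is isolating the leading term of $\mathrm{Var}_{\ast}(\hat{\sigma}^{2})$ through the fourth-moment expansion of $(\hat{Q}(t,\cdot^{\ast b})-\bar{Q}(t))(\hat{Q}(t^{\prime},\cdot^{\ast b})-\bar{Q}(t^{\prime}))$: one must handle the entanglement between the outer bootstrap-resampling randomness (governing $Q(\cdot,\underline{\hat{P}}^{\ast b}_{\theta})$, a V-statistic) and the inner simulation randomness (governing $\varepsilon_{b}$), control the corrections from centering at $\bar{Q}$ rather than at the true conditional mean, and --- crucially --- verify that the leading coefficients are bounded away from both $0$ and $\infty$ in probability under \eqref{non-degeneracy}, which is what upgrades the bound from $O_{p}$ to $\Theta_{p}$. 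The generic formulas of Lemma \ref{general_covariance_undebias} make this tractable, but matching their output to the clean orders $\Theta_{p}((\theta n)^{2}(s^{-2}+R_{s}^{-2})/B)$ for the variance and $\Theta_{p}(\theta n/R_{s})$ for the bias is the technical heart of the argument.
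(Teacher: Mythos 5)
Your proposal is correct and follows essentially the same route as the paper: decompose the conditional MSE into squared bias and variance, read off the bias $\Theta_p(\theta n/R_s)$ and the leading variance contribution $\Theta_p((\theta n)^2(s^{-2}+R_s^{-2})/B)$ from the generic formulas in Lemma \ref{general_covariance_undebias} (the paper packages these individual moment computations as Lemma \ref{computation_moments}), and then minimize $c_1(\theta n)^2/R_s^2 + c_2 R_s/N$ over $R_s$. The exact bias identity, the reduction of the variance order to $\Theta_p(1/B)$ in the admissible regime $R_s=\omega(s)$, the a posteriori admissibility check of $(B^\ast,R_s^\ast)$, and the degenerate-case downgrade from $\Theta_p$ to $O_p$ all match the paper's argument.
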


Next, to obtain the optimal configuration of $\theta$, we need to study the
influence of $\theta$ on the two terms $\hat{\sigma}^{2}(t,t^{\prime}%
)-\sigma^{2}(t,t^{\prime})$ and $\sigma^{2}(t,t^{\prime})-\mathrm{Cov}%
(\mathbb{G}(t),\mathbb{G}(t^{\prime}))$. The MSE in Theorem \ref{optimal_B_R}
already gives the order of $\theta$ in $\hat{\sigma}^{2}(t,t^{\prime}%
)-\sigma^{2}(t,t^{\prime})$ when $B$ and $R_{s}$ are optimally tuned. The
following theorem reveals the order of $\theta$ in $\sigma^{2}(t,t^{\prime
})-\mathrm{Cov}(\mathbb{G}(t),\mathbb{G}(t^{\prime}))$.

\begin{theorem}
\label{error_true_bootstrap}Suppose Assumptions \ref{balanced_data} and \ref{finite_horizon_model} hold. Then we have%
\[
\sigma^{2}(t,t^{\prime})-\mathrm{Cov}(\mathbb{G}(t),\mathbb{G}(t^{\prime}))=\mathcal{E}_{1}+o_{p}\left(  \frac{1}{s}+\frac{1}{\sqrt{n}}\right)  ,
\]
as $\theta n\rightarrow\infty$ for any $t,t^{\prime}\in\mathbb{R}$, where the leading order term $\mathcal{E}_{1}$ satisfies
\begin{align*}
\mathbb{E}[\mathcal{E}_{1}]  &  =\sum_{i=1}^{m}\left(  \frac{n\mathrm{frac}(\theta n_{i})}{s_{i}n_{i}}+\left(  \frac{n}{n_{i}}-\frac{1}{\beta_{i}}\right)  \right)  \mathrm{Cov}(IF_{i}(t,X_{i};\underline{P}),IF_{i}(t^{\prime},X_{i};\underline{P}))\\
&  +\theta n\sum_{i=1}^{m}\frac{1}{2s_{i}^{2}}\mathrm{Cov}(IF_{i}(t,X_{i};\underline{P}),IF_{ii}(t^{\prime},X_{i},X_{i};\underline{P}))\\
&  +\theta n\sum_{i=1}^{m}\sum_{i^{\prime}=1}^{m}\frac{1}{2s_{i}s_{i^{\prime}}}\mathrm{Cov}(IF_{i}(t,X_{i,1};\underline{P}),IF_{ii^{\prime}i^{\prime}}(t^{\prime},X_{i,1},X_{i^{\prime},2},X_{i^{\prime},2};\underline{P}))\\
&  +\theta n\sum_{i=1}^{m}\frac{1}{2s_{i}^{2}}\mathrm{Cov}(IF_{i}(t^{\prime},X_{i};\underline{P}),IF_{ii}(t,X_{i},X_{i};\underline{P}))\\
&  +\theta n\sum_{i=1}^{m}\sum_{i^{\prime}=1}^{m}\frac{1}{2s_{i}s_{i^{\prime}}}\mathrm{Cov}(IF_{i}(t^{\prime},X_{i,1};\underline{P}),IF_{ii^{\prime}i^{\prime}}(t,X_{i,1},X_{i^{\prime},2},X_{i^{\prime},2};\underline{P}))\\
&  +\theta n\sum_{i_{1},i_{2}=1}^{m}\frac{1}{2s_{i_{1}}s_{i_{2}}}\mathrm{Cov}(IF_{i_{1}i_{2}}(t,X_{i_{1},1},X_{i_{2},2};\underline{P}),IF_{i_{1}i_{2}}(t^{\prime},X_{i_{1},1},X_{i_{2},2};\underline{P}))\\
&  =O\left(  \frac{1}{s}+\sum_{i=1}^{m}\left|  \frac{n}{n_{i}}-\frac{1}{\beta_{i}}\right|  \right)  ,
\end{align*}
and%
\begin{align*}
\mathrm{Var}(\mathcal{E}_{1})  &  =\sum_{i=1}^{m}\frac{1}{n_{i}}\mathrm{Var}\left(  \frac{1}{\beta_{i}}IF_{i}(t,X_{i,1};\underline{P})IF_{i}(t^{\prime},X_{i,1};\underline{P})\right. \\
&  +\sum_{i^{\prime}=1}^{m}\frac{1}{\beta_{i^{\prime}}}\mathbb{E}[IF_{i^{\prime}}(t,X_{i^{\prime},2};\underline{P})IF_{i^{\prime}i}(t^{\prime},X_{i^{\prime},2},X_{i,1};\underline{P})|X_{i,1}]\\
&  \left.  +\sum_{i^{\prime}=1}^{m}\frac{1}{\beta_{i^{\prime}}}\mathbb{E}[IF_{i^{\prime}}(t^{\prime},X_{i^{\prime},2};\underline{P})IF_{i^{\prime}i}(t,X_{i^{\prime},2},X_{i,1};\underline{P})|X_{i,1}]\right)  =O\left(\frac{1}{n}\right)  .
\end{align*}
\end{theorem}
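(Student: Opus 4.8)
This refines Theorem \ref{consistency_sigma} by pinning down the exact rate. The plan is to use that, by Assumption \ref{finite_horizon_model} and the V-statistic representation \eqref{V_statistic}, the map $\underline{P}\mapsto Q(t,\underline{P})$ is a polynomial functional of the input measures of fixed multi-degree $(T_1,\dots,T_m)$; hence its von Mises expansion around $\underline{\hat P}$ is an \emph{exact finite} sum, with coefficients the plug-in influence functions $\widehat{IF}_i(t,\cdot)$, the plug-in second-order influence functions $\widehat{IF}_{ii'}(t,\cdot,\cdot)$, and so on up to order $\sum_i T_i$. Writing $\Delta_i:=\hat P^{*}_{i,s_i}-\hat P_i$ for the centered resample noise, this yields
\[
Q(t,\underline{\hat P}^{*}_{\theta})-Q(t,\underline{\hat P})=\sum_{i}\int \widehat{IF}_i(t,x)\,d\Delta_i(x)+\tfrac12\sum_{i,i'}\iint \widehat{IF}_{ii'}(t,x,x')\,d\Delta_i(x)\,d\Delta_{i'}(x')+\cdots
\]
and likewise for $t'$. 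I would substitute both expansions into $\sigma^2(t,t')=\theta n\,\mathrm{Cov}_*\big(Q(t,\underline{\hat P}^{*}_{\theta}),Q(t',\underline{\hat P}^{*}_{\theta})\big)$, expand the covariance multilinearly, and sort the terms by order.

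The first computation is the bootstrap moments of these multilinear forms. Because resamples across distinct $i$ are independent under $\mathbb{P}_*$ and each $\Delta_i$ is a centered empirical process built from $s_i$ i.i.d. draws from $\hat P_i$, the building block is $\mathbb{E}_*[\int f\,d\Delta_i\int g\,d\Delta_i]=\tfrac1{s_i}\mathrm{Cov}_{\hat P_i}(f(X),g(X))$ together with standard higher-central-moment formulas for an i.i.d. empirical measure (degrees $\ge 3$ contribute only ``diagonal'' terms carrying extra $1/s_i$ factors). Sorting: the product of the two first-order terms gives the leading contribution $\theta n\sum_i\tfrac1{s_i}\mathrm{Cov}_{\hat P_i}(\widehat{IF}_i(t,\cdot),\widehat{IF}_i(t',\cdot))$; pairing a first-order with a second- or third-order term, and two second-order terms, produces the $\theta n/s_i^2$- and $\theta n/(s_is_{i'})$-scaled covariances of $\widehat{IF}_i$ with $\widehat{IF}_{ii}$, with $\widehat{IF}_{ii'i'}$, and of $\widehat{IF}_{i_1i_2}$ with itself; every remaining pairing carries strictly more negative powers of the $s_i$'s and is $o_p(1/s)$ once we use $\theta n\to\infty$ (from $\theta=\omega(1/n)$) and $s_i=\Theta(\theta n)$. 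In these $O(1/s)$ terms the plug-in influence functions and $\mathrm{Cov}_{\hat P_i}$ may be replaced by their true-distribution counterparts at cost $o_p(1/s)$, yielding precisely the five $\theta n$-scaled lines of $\mathbb{E}[\mathcal{E}_1]$.

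The heart of the argument is the leading term. Since the zero-mean property of influence functions gives $\mathrm{Cov}_{\hat P_i}(\widehat{IF}_i(t,\cdot),\widehat{IF}_i(t',\cdot))=C_i(\underline{\hat P})$ with $C_i(\underline P):=\mathrm{Cov}_{P_i}(IF_i(t,X_i;\underline P),IF_i(t',X_i;\underline P))$, and using the exact identities $\theta n/s_i=n/n_i+n\,\mathrm{frac}(\theta n_i)/(s_in_i)$ and $n/n_i=1/\beta_i+(n/n_i-1/\beta_i)$, I would peel off $\mathrm{Cov}(\mathbb{G}(t),\mathbb{G}(t'))=\sum_i\tfrac1{\beta_i}C_i(\underline P)$, leaving (i) the imbalance remainders $\sum_i\big[(n/n_i-1/\beta_i)+n\,\mathrm{frac}(\theta n_i)/(s_in_i)\big]C_i(\underline{\hat P})$ and (ii) the plug-in fluctuation $\sum_i\tfrac1{\beta_i}\big(C_i(\underline{\hat P})-C_i(\underline P)\big)$. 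In (i), replacing $C_i(\underline{\hat P})$ by $C_i(\underline P)$ costs only $o_p(1/\sqrt n)+o_p(1/s)$ since the coefficients are deterministic and respectively $o(1)$ and $O(1/s)$ while $C_i(\underline{\hat P})-C_i(\underline P)=O_p(1/\sqrt n)$, producing the first line of $\mathbb{E}[\mathcal{E}_1]$. In (ii), $C_i$ is again a polynomial functional of $\underline P$, so a first-order functional Taylor expansion is exact up to an $O_p(1/n)$ second-order term and writes $C_i(\underline{\hat P})-C_i(\underline P)=\sum_{i'}\tfrac1{n_{i'}}\sum_j\psi_{i,i'}(X_{i',j})+O_p(1/n)$, where the influence function $\psi_{i,i'}$ collects the centered term $IF_i(t,x)IF_i(t',x)$ from differentiating the expectation in $C_i$ on the diagonal $i'=i$ and the terms $\mathbb{E}[IF_i(t',X_i)IF_{ii'}(t,X_i,x)]$, $\mathbb{E}[IF_i(t,X_i)IF_{ii'}(t',X_i,x)]$ from differentiating $\widehat{IF}_i$ through its dependence on $P_{i'}$. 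Weighting by $1/\beta_i$, summing over $i$, and regrouping by the data index reproduces exactly the random variable displayed inside $\mathrm{Var}(\mathcal{E}_1)$; cross-terms across distinct input distributions vanish by independence and zero means, so $\mathrm{Var}(\mathcal{E}_1)=\sum_i\tfrac1{n_i}\mathrm{Var}_{P_i}(\Psi_i(X_i))$, while $\mathbb{E}$ of this fluctuation is $0$ (influence functions integrate to zero), so the $O(1/n)=o(1/\sqrt n)$ Taylor remainder is absorbed. Boundedness of the indicator kernel (all influence functions bounded, all moments finite) then gives $\mathrm{Var}(\mathcal{E}_1)=O(1/n)$ and $\mathbb{E}[\mathcal{E}_1]=O\big(1/s+\sum_i|n/n_i-1/\beta_i|\big)$.

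The main obstacle I expect is the uniform bookkeeping needed to certify that every discarded piece — the higher Taylor cross-terms in the bootstrap expansion of $\mathrm{Cov}_*$ and the quadratic remainder of the delta-method expansion in (ii) — is genuinely $o_p(1/s+1/\sqrt n)$, given three competing scales: $\theta n\to\infty$ governing the bootstrap remainders, $\sum_i|n/n_i-1/\beta_i|\to 0$ the imbalance remainders, and $1/\sqrt n$ the plug-in fluctuations; each un-paired prefactor, times the relevant (bounded) influence-function moment, must be shown to be dominated. Here the finite-horizon assumption does real work: it makes every expansion terminate and keeps all kernels, hence all influence functions and their moments, bounded. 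A secondary subtlety is that $\widehat{IF}_i$ must be kept distinct from $IF_i$ precisely in the fluctuation term (ii) — where the discrepancy, through $\partial_{P_{i'}}\widehat{IF}_i=\widehat{IF}_{ii'}$, is exactly what generates the second-order influence-function contributions to $\mathrm{Var}(\mathcal{E}_1)$ — whereas in every $O(1/s)$ term it may be harmlessly replaced by the true-distribution version.
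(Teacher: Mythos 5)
Your proposal takes essentially the same route as the paper: expand $Q(\cdot,\underline{\hat P}^*_\theta)$ around $\underline{\hat P}$ in first-, second-, and third-order influence functions, substitute into $\theta n\,\mathrm{Cov}_*$, expand multilinearly and sort by the powers of $1/s_i$ that survive the conditional-independence/zero-mean cancellations, and finally expand the leading empirical covariance around the true $\underline P$ by a H\'ajek-projection argument to produce $\mathcal{E}_1$ and its mean and variance. The only shift is one of packaging — you treat the expansion as the exact finite von Mises expansion of the polynomial functional afforded by Assumption \ref{finite_horizon_model}, whereas the paper routes through the abstract expansion Assumptions \ref{1st_expansion_empirical} and \ref{3rd_expansion_empirical} (later verified for the finite-horizon model) and organizes the bookkeeping via conditional expectations on $X^*_{i,1}$ — but the underlying computation, including which pairings survive at order $1/s$, is identical.
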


Combining the orders in Theorems \ref{optimal_B_R} and
\ref{error_true_bootstrap}, we can specify the optimal configuration of the
subsample size $s$ and the corresponding subsample ratio $\theta$, which is
presented in Theorem \ref{overall_opt_config}.

Finally, we point out that our theorems regarding covariance function estimation hold under more general assumptions than the finite-horizon model. These more general results could be of independent interest in other contexts, and thus we present them in Appendix \ref{sec:general_assumptions}.

\section{Numerical Experiments}

\label{sec:num}

We conduct experiments to validate our theory and illustrate how our methodology can be used to quantify uncertainty for distributional quantities. More specifically, Section \ref{sec:num_verify} compares our input-uncertainty-inflated KS confidence bands with the classic KS bands and the standard bootstrap, and tests various algorithmic configurations in our approach. Section \ref{sec:quantiles} further applies our confidence bands to construct simultaneous confidence regions for output quantiles.

Throughout this section, we use the following two examples:
\begin{enumerate} 
\item An M/M/1 queue. It has 2 input distributions $\underline{P} = (P_1, P_2)$ where $P_1$ denotes the inter-arrival time distribution with rate 0.5 and $P_2$ denotes the service time distribution with rate 1, for which we have $n_1$ and $n_2$ i.i.d. data points available respectively. To satisfy the balanced data Assumption \ref{balanced_data}, we choose $n_2=2n_1$ and only report $\min_i n_i$ for convenience. The output random variable is the average sojourn time (waiting time plus service time) of the first 10 customers entering the system (denoted by $\bar{W}_{10}$).

\item The computer communication network simulation model that has also been used in \citet{cheng1997sensitivity,lin2015single,lam2022subsampling, lam2023bootstrap}. This network contains four nodes and four edges where nodes denote message processing units and edges are transport channels (see Figure \ref{network} in Appendix \ref{sec:details_network}). For every pair of nodes $i,j$ ($i\neq j$), there are external messages which enter into node $i$ from the external and are to be transmitted to node $j$ through a prescribed path. Their arrival time follows a Poisson process with parameter $\lambda_{i,j}$ specified in Table \ref{network_parameter1} in Appendix \ref{sec:details_network}. All the message lengths (unit: bits) are i.i.d. following a common exponential distribution with mean $300$ bits. Suppose each unit spends $0.001$ second to process a message passing it. We assume the node storage is unlimited but the channel storage is restricted to $275000$ bits. Message speed in transport channels is $150000$ miles per second and channel $i$ has length $100i$ miles. Therefore, it takes $l/275000+100i/150000$ seconds for a message with length $l$ bits to pass channel $i$. Suppose the network is empty at the beginning. The output random variable is the average delay for the first 30 messages (denoted by $\bar{W}_{30}$) where delay means the time from the entering node to the destination node. This example has $13$ unknown input distributions, i.e., $12$ inter-arrival time distributions $\mathrm{Exp}(\lambda_{i,j})$ and one message length distribution $\mathrm{Exp}(1/300)$. We keep the data sizes of the 13 input distributions proportional to each other so that the balanced data Assumption \ref{balanced_data} holds and only report $\min_i n_i$.
\end{enumerate}


\subsection{Statistical Validity and Performance Comparisons of Confidence Bands}\label{sec:num_verify}
Our target is to construct confidence bands for the output distribution functions of the above two simulation models. We compare our methodology with both the classic KS confidence band, and an approach that uses the standard bootstrap instead of our proposed subsampling. These comparisons serve to demonstrate our coverage validity compared to the shortfall in using the classic KS and the standard bootstrap under limited simulation budget. In addition, we test various algorithmic configurations of our methodology and investigate whether the optimal order of algorithmic configuration (subsample ratio $\theta$, subsampling split between outer loop $B$ and inner loop $R_s$) suggested in Theorem \ref{overall_opt_config} leads to a better coverage probability in practice.

To calculate the coverage probability of an estimated confidence band, we first run the simulation models driven by the true input distributions $100000$ times and use the empirical output distribution function as an accurate proxy of the true output distribution function. We then perform 500 independent runs of the confidence band construction procedure and report the percentage of runs in which the constructed confidence band covers the true output distribution function entirely (i.e., the empirical coverage probability of the confidence band). For each tested example, we vary the minimum input data size $\min n_i\in\{500,1000,2000\}$, the subsample ratio $\theta\in\{0.01, 0.02, 0.03, 1\}$, $R_s\in\{10,30,100\}$ and $B=\lfloor N/R_s \rfloor$ with simulation budget $N=1000$. Note that in particular $\theta=1$ corresponds to using the standard bootstrap. The discrete grid $\{t_1,\ldots,t_k\}$ in Algorithms \ref{alg:subsample} and \ref{alg:cdf_cb} is chosen as the uniform grid on $[0,10]$ with $k=100$ for the M/M/1 queue and the uniform grid on $[0,0.04]$ with $k=100$ for the computer communication network, where the intervals $[0,10]$ and $[0,0.04]$ are chosen to cover most of the mass of the output random variable. The number of simulation runs $R$ in Algorithm \ref{alg:cdf_cb} is chosen to be equal to $\min n_i$ such that the balanced noise condition in Assumption \ref{balanced_randomness} holds.


Figures \ref{fig:queue_n_500}-\ref{fig:queue_n_2000} report results for the M/M/1 queue and Figures \ref{fig:network_n_500}-\ref{fig:network_n_2000} report the counterpart for the computer communication network, where the dashed lines denote the 95\% nominal confidence level. Besides, in order to compare our input-uncertainty-inflated KS confidence bands with the classic KS confidence bands, we run both methods for each configuration of $\theta,B$ and $R_s$. However, since the classic KS confidence band does not rely on these parameters, it displays the same width in each figure and the variability of its coverage probability is merely due to the Monte Carlo errors of different experimental runs. Looking at the coverage probabilities, we observe that the classic KS confidence bands significantly under-cover the true output distribution functions. On the other hand, our method, but using a standard bootstrap instead of subsampling (i.e., $\theta=1$), significantly over-cover the distribution functions. This observation is also consistent with the large widths when using the standard bootstrap. All these results suggest that both the classic KS and standard bootstrap fail to give statistically valid confidence bands. In contrast, our input-uncertainty-inflated KS confidence bands using suitably configured subsampling have coverage probabilities close to the nominal level. In particular, we observe that for each minimum data size $\min n_i$, the best configuration $(\theta^*,R_s^*)$, i.e., the configuration with a coverage probability that is closest to 95\%, satisfies: 1) $R_s^*$ are all around 30 and 2) $\theta^*$ is roughly decreasing as $\min n_i$ increases. Both observations are consistent with the conclusion in Theorem \ref{overall_opt_config} which suggests that the optimal $R_s$ should be of order $N^{1/3}(s^*)^{2/3}=N^{1/3}(N^{1/4})^{2/3}=N^{1/2}\approx 31$ and the optimal $\theta^*=\Theta(N^{1/4}/n)$ is decreasing as the average data size $n$ increases. With the best configuration $(\theta^*,R_s^*)$, we can see the corresponding coverage probabilities are very close to the 95\% nominal level. 

\begin{figure}[ht]
    \centering
    \includegraphics[scale=0.7]{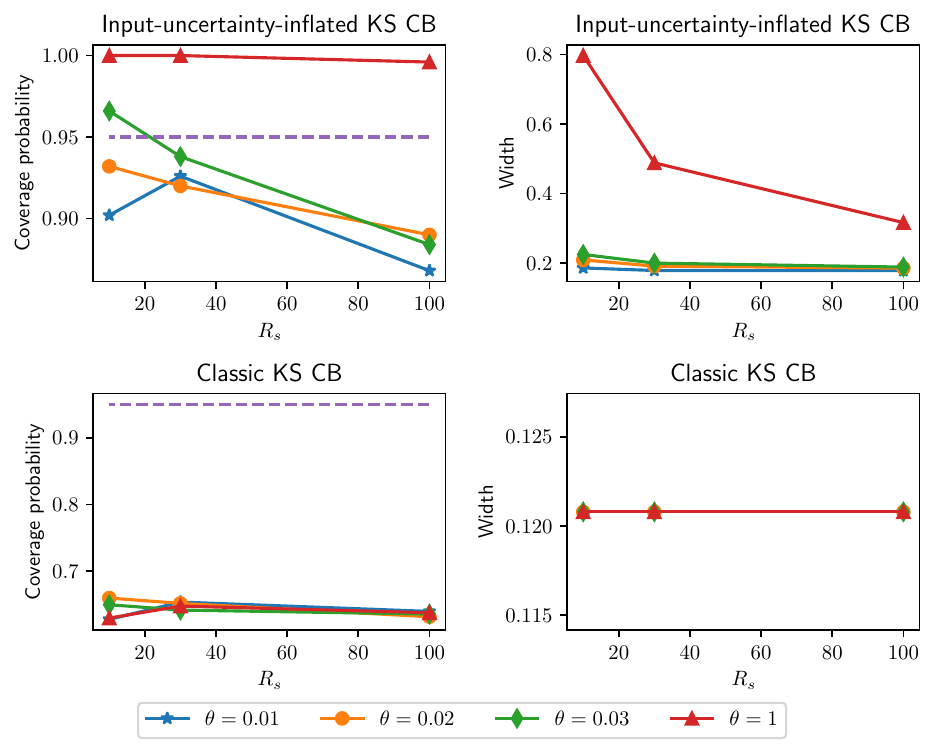}
    \caption{M/M/1 queue with minimum data size $\min n_i=500$. $\theta=1$ corresponds to the standard bootstrap instead of our subsampling bootstrap. 
    %
    }
    \label{fig:queue_n_500}
\end{figure}
\begin{figure}[ht]
    \centering
    \includegraphics[scale=0.7]{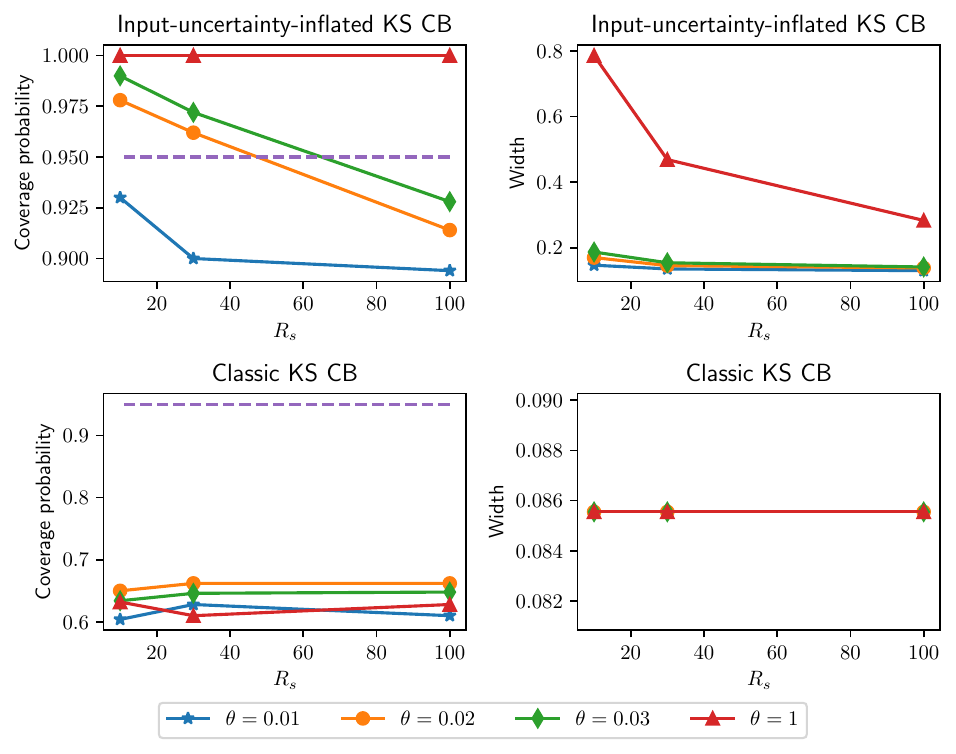}
    \caption{M/M/1 queue with minimum data size $\min n_i=1000$. $\theta=1$ corresponds to the standard bootstrap instead of our subsampling bootstrap. 
    }
    \label{fig:queue_n_1000}
\end{figure}
\begin{figure}[ht]
    \centering
    \includegraphics[scale=0.7]{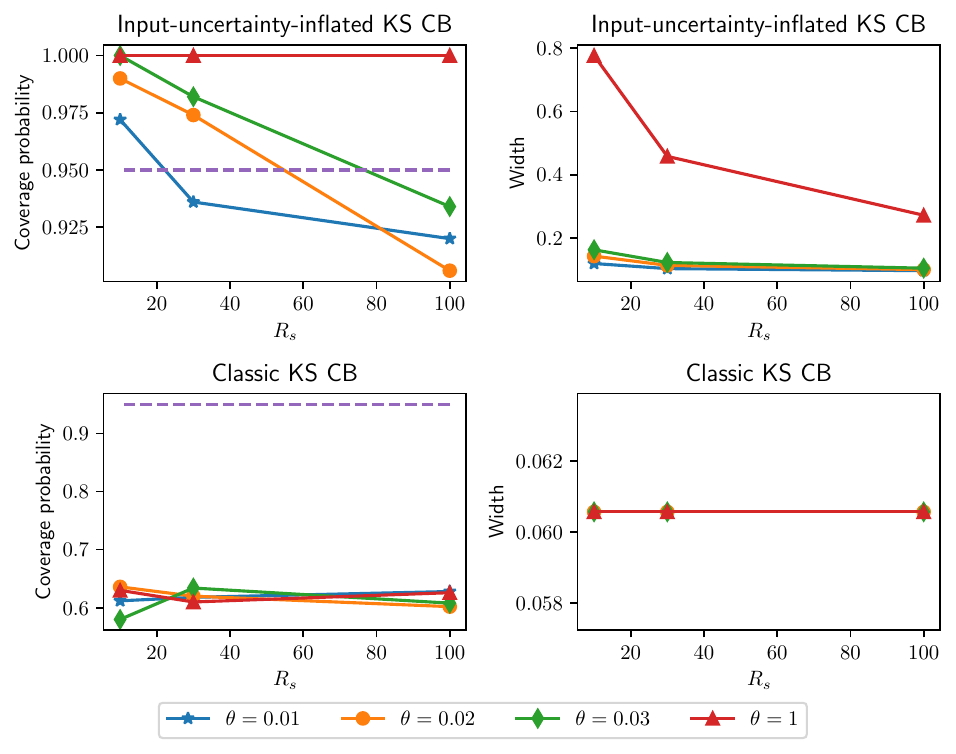}
    \caption{M/M/1 queue with minimum data size $\min n_i=2000$. $\theta=1$ corresponds to the standard bootstrap instead of our subsampling bootstrap. 
    }
    \label{fig:queue_n_2000}
\end{figure}

\begin{figure}
    \centering
    \includegraphics[scale=0.7]{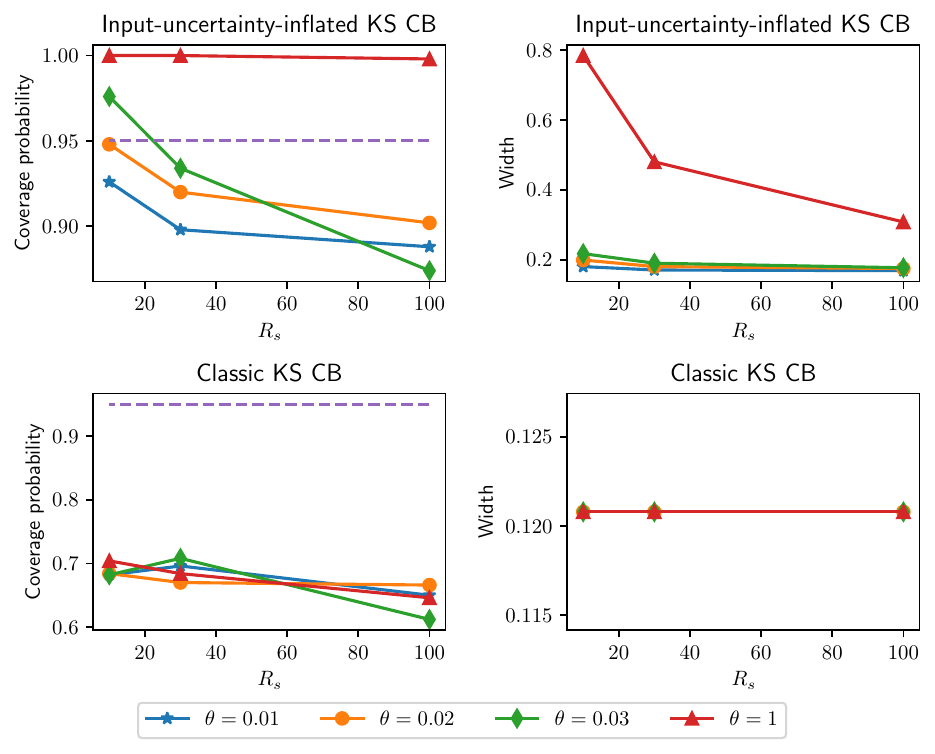}
    \caption{Computer communication network simulation model with minimum data size $\min n_i=500$. $\theta=1$ corresponds to the standard bootstrap instead of our subsampling bootstrap. 
    }
    \label{fig:network_n_500}
\end{figure}
\begin{figure}
    \centering
    \includegraphics[scale=0.7]{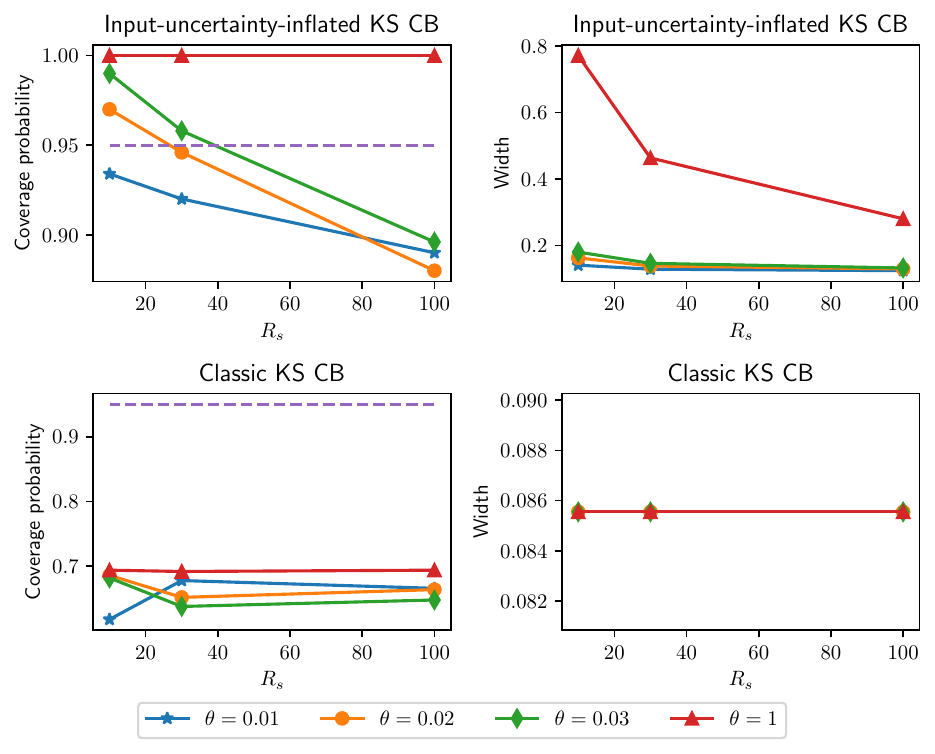}
    \caption{Computer communication network simulation model with minimum data size $\min n_i=1000$. $\theta=1$ corresponds to the standard bootstrap instead of our subsampling bootstrap. 
    }
    \label{fig:network_n_1000}
\end{figure}
\begin{figure}
    \centering
    \includegraphics[scale=0.7]{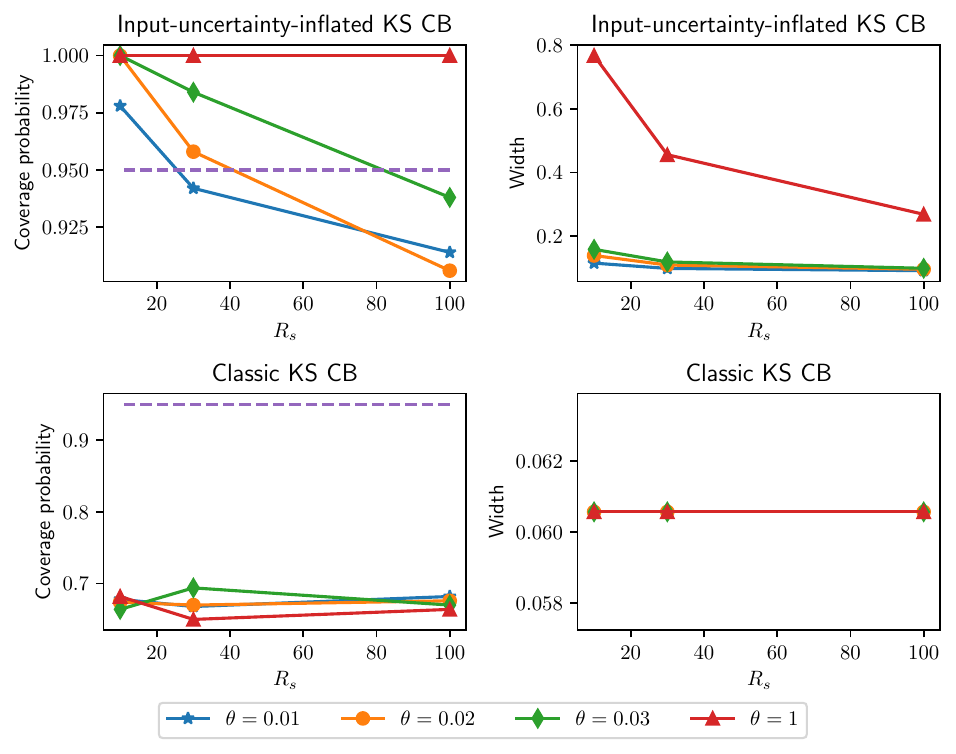}
    \caption{Computer communication network simulation model with minimum data size $\min n_i=2000$. $\theta=1$ corresponds to the standard bootstrap instead of our subsampling bootstrap. 
    }
    \label{fig:network_n_2000}
\end{figure}

\subsection{Extracting Simultaneous Confidence Regions for Quantiles}\label{sec:quantiles}

We illustrate how our input-uncertainty-inflated KS confidence bands capture distributional uncertainty information, specifically by using them to extract simultaneous confidence regions for output quantiles. More concretely, let $q_{0.25},q_{0.5},q_{0.75}$ be the 25\%, 50\%, 75\% quantiles of the output. We aim at constructing a simultaneous confidence region $[\hat{q}_{0.25,l},\hat{q}_{0.25,u}]\times[\hat{q}_{0.5,l},\hat{q}_{0.5,u}]\times[\hat{q}_{0.75,l},\hat{q}_{0.75,u}]$ so that
\[
P(\hat{q}_{0.25,l}\le q_{0.25}\le \hat{q}_{0.25,u}, \hat{q}_{0.5,l}\le q_{0.5}\le \hat{q}_{0.5,u}, \hat{q}_{0.75,l}\le q_{0.75}\le \hat{q}_{0.75,u})\approx 1-\alpha.
\]
To achieve this, we note that, given a confidence band $L_{IU}(t)\le F(t) \le U_{IU}(t) \, \forall t\in\mathbb{R}$ with a nominal level $1-\alpha$, we can invert the involved functions to obtain a simultaneous confidence band for the entire quantile function:
\[
\hat{q}_{s,u}:=\inf\{t:L_{IU}(t)\ge s\} \ge q_{s}\equiv\inf\{t:F(t)\ge s\} \ge \inf\{t:U_{IU}(t)\ge s\}=:\hat{q}_{s,l},\forall s\in(0,1).
\]
Setting $s=0.25, 0.5, 0.75$, we can then obtain the desired simultaneous confidence region for the quantiles $q_{0.25},q_{0.5},q_{0.75}$. We caution that this obtained simultaneous confidence region may over-cover the true quantiles because the following inequality might be strict:
\[
P(\hat{q}_{s,u} \ge q_{s} \ge \hat{q}_{s,l},\forall s\in(0,1))\ge P(L_{IU}(t)\le F(t) \le U_{IU}(t), \forall t\in\mathbb{R}).
\]
However, given that constructing even just one asymptotically valid confidence interval for a single quantile for simulation output under input uncertainty is still open to our best knowledge (the only related work we know is \cite{parmar2022input}, which proposes methods based on the bootstrap or Taylor series but without any theoretical guarantees), our approach provides a reasonable approach towards this direction with an asymptotically lower bound (i.e., $1-\alpha$). Moreover, as the number of target quantiles increases, we expect our confidence regions to get increasingly tighter.

We conduct numerical experiments for the two simulation models to examine our performance. We use the same simulation setups but fix the configuration as the optimal one: $(B,R_s)=(33,30)$, $\theta=0.03$ if $\min n_i=500$, $\theta=0.02$ if $\min n_i=1000$, and $\theta=0.01$ if $\min n_i=2000$. Figure \ref{fig:quantile} displays the results, where the dashed lines denote the 95\% nominal level. We can see that the confidence regions obtained by our method have approximately the correct confidence level while the regions obtained by the classic KS method significantly under-cover the true quantiles. 


\begin{figure}
    \centering
    \subfloat[M/M/1 queue]{\includegraphics[scale=0.5]{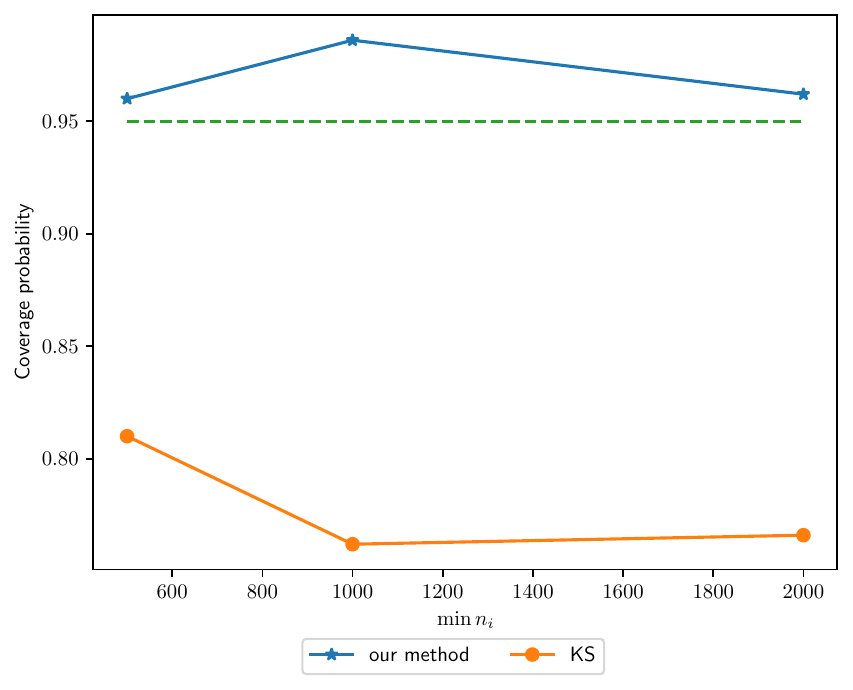}}
    \subfloat[Computer network]{\includegraphics[scale=0.5]{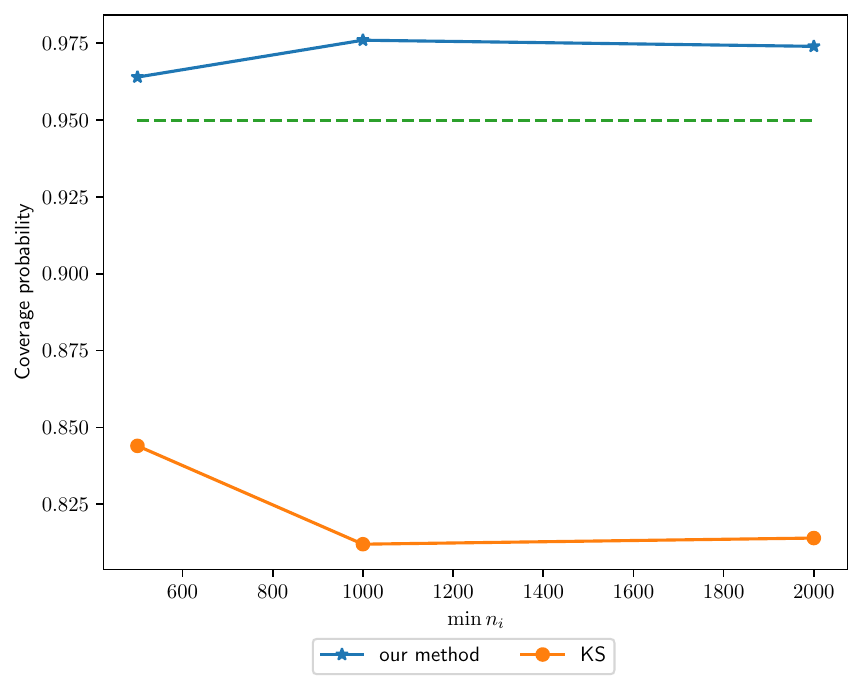}}
    \caption{Empirical coverage probabilities of the simultaneous confidence regions of the quantiles.}
    \label{fig:quantile}
\end{figure}


\section{Conclusion and Discussion}

This paper studies a new simulation input uncertainty problem, viewed at the distributional level where the goal is to construct a confidence band for the entire distribution function of the output random variable. Existing literature in input uncertainty mainly focuses on confidence intervals for real-valued summary statistics of the outputs, more specifically expected values, that account for both input data noise and Monte Carlo noise. While convenient, such an approach could hide valuable output distributional information that is relevant for downstream decision-making. Our methodology provides a more holistic assessment of input uncertainty and, moreover, can be converted into simultaneous confidence regions for many performance measures that are possibly nonlinear in the input distributions. 

Our methodology entails a generalization of the classic KS statistic to properly inflate its uncertainty measurement to include input uncertainty. This new statistic exhibits an asymptotic that comprises the addition of the Brownian bridge inherited from the classic KS statistic with a mean-zero Gaussian process that captures the additional input data noise. In particular, the Gaussian process bears a covariance structure governed by the influence function of the output distribution, which measures the sensitivity with respect to the input distributions. Besides mathematical analyses, we propose an efficient method to approximate the limiting distribution and thereby implement the confidence band, by utilizing a subsampling bootstrap to estimate the covariance function of the Gaussian process. Technically, the  weak convergence analyses of our new statistic leverage the V-process theory and the Koml\'{o}s-Major-Tusn\'{a}dy approximation to handle the entanglements between the input and Monte Carlo noises, as well as a coupling technique to apply the continuous mapping theorem to the supremum operator needed in our construction. Moreover, we derive the statistical guarantees of our subsampling scheme and analyze its optimal algorithmic configuration. 



Our work opens a new line of research on input uncertainty that considers more general target estimation quantities and takes a more holistic, distributional view of simulation outputs. There are many open questions, with some examples below. First, we have so far considered single-dimensional outputs and have demonstrated the challenges and novelties in this setting. A natural direction is to extend to multi-dimensional output distributions and, relatedly, high-dimensional (even just real) outputs where other types of statistics and analysis tools need to be developed. Second, like the classical KS counterpart, our input-uncertainty-inflated KS confidence band has the same width over all points. A direction is to investigate alternative approaches that allow for different widths across points, for example wider for tail regions where less data are available. A third direction is to extend our model settings, for example from finite horizon to steady state, and to the incorporation of dependent input processes.

\section*{Acknowledgments}
We gratefully acknowledge support from the National Science Foundation under grants CAREER CMMI-1834710 and IIS-1849280. A preliminary conference version of this work has appeared in \cite{chen2022distributional}.

\bibliographystyle{plainnat}
\bibliography{distributional.bib}

\clearpage

\newpage

\begin{appendix}
\section{Weak Convergence on the Skorohod Product Space}

\label{sec:Skorohod}

In this section, we give a brief survey of the Skorohod (product) space and a
few probability theory concepts related to it. Most of the materials are taken
from \cite{billingsley2013convergence} and \cite{ferger2010weak}.

\subsection{Skorohod Space}

Let $T$ be a closed (possible infinite) interval on $\mathbb{R}$. Three
representatives are $T=[0,1]$, $T=[0,\infty)$ and $T=(-\infty,\infty)$. A
function $x(t)$ defined on $T$ is called \emph{c\`{a}dl\`{a}g} if it is right
continuous with left limits for any $t\in T$. The Skorohod space
$D(T)$ is defined as the space of all c\`{a}dl\`{a}g functions on
$T$. The spaces $D[0,1]$ and $D[0,\infty)$ are discussed
in details in \cite{billingsley2013convergence} Chapter 3. For our purpose, we
only discuss the space $D_{\pm\infty}:=D(-\infty,\infty)$.
As said in \cite{ferger2010weak}, all the following results about
$D_{\pm\infty}$ are direct generalizations of results for
$D[0,\infty)$ in \cite{billingsley2013convergence} Section 16.

We first define the convergence in $D_{\pm\infty}$. Let $\Lambda$ be
the space of all strictly increasing, continuous and surjective functions from
$\mathbb{R}$ onto $\mathbb{R}$. We say a sequence of functions $\{x_{n}%
\}\subset D_{\pm\infty}$ converges to $x\in D_{\pm\infty}$
if there exists a sequence $\{\lambda_{n}\}\subset\Lambda$ such that
\begin{equation}
\sup_{t\in\mathbb{R}}|\lambda_{n}(t)-t|\rightarrow0, \quad\sup_{t\in
[-m,m]}|x_{n}(\lambda_{n}(t))-x(t)|\rightarrow0, \forall m\in\mathbb{N}.
\label{conv_Skorohod}%
\end{equation}
In fact, this convergence is metrizable by a metric $d_{\pm\infty}^{\circ}$
on $D_{\pm\infty}$, i.e., $x_{n}\rightarrow x$ on $D_{\pm\infty}$ if and only if $d_{\pm\infty}^{\circ}(x_{n},x)\rightarrow0$. The
metric $d_{\pm\infty}^{\circ}$ satisfies $d_{\pm\infty}^{\circ}(x,y)\leq
\sup_{t\in\mathbb{R}}|x(t)-y(t)|$ for $x,y\in D_{\pm\infty}$.
Besides, from the definition (\ref{conv_Skorohod}), we know that topology
induced by $d_{\pm\infty}^{\circ}$ is weaker than the uniform topology induced
by locally uniform convergence. However, the advantage of using this kind of
convergence or metric is that $(D_{\pm\infty},d_{\pm\infty}^{\circ
})$ is a complete separable metric space so that the general weak convergence
theory can be applied.

\subsection{Weak Convergence on the Skorohod Space}

In this section, we present a few results about the weak convergence on the
Skorohod space.

We first define and characterize the measurability on the Skorohod space. Let
$(\Omega,\mathcal{F},\mathbb{P})$ be a probability space on which a collection
of random variables $X(t),t\in\mathbb{R}$ is defined. We endow the Skorohod
space $D_{\pm\infty}$ with its Borel $\sigma$-field $\mathcal{D}_{\pm\infty}$
to obtain the measurable space $(D_{\pm\infty},\mathcal{D}_{\pm\infty})$. The
stochastic process $X(\cdot)$ is measurable (also called a \emph{random element}) on
$(D_{\pm\infty},\mathcal{D}_{\pm\infty})$ if (i) for each $\omega\in\Omega$,
$X(\cdot,\omega)$ is a c\`{a}dl\`{a}g function; (ii) $X$ is a measurable map
from $(\Omega,\mathcal{F})$ to $(D_{\pm\infty},\mathcal{D}_{\pm\infty})$,
i.e., for any $B\in\mathcal{D}_{\pm\infty}$, $\{\omega:X(\cdot,\omega)\in
B\}:=X^{-1}(B)\in\mathcal{F}$. Usually for a given stochastic process $X$,
condition (ii) is hard to directly verify since the pre-image $X^{-1}(B)$ is
not easy to characterize. Luckily, the structure of the c\`{a}dl\`{a}g
function admits a simple verification based on each single random variable
$X(t)$. For $k\in\mathbb{N}$ and $-\infty<t_{1}<\cdots<t_{k}<\infty$, let
$\pi_{t_{1},\ldots,t_{k}}:(D_{\pm\infty},\mathcal{D}_{\pm\infty}%
)\mapsto(\mathbb{R}^{k},\mathcal{B}(\mathbb{R}^{k}))$ be the projection at
$\{t_{1},\ldots,t_{k}\}$, i.e., $\pi_{t_{1},\ldots t_{k}}(x)=(x(t_{1}%
),\ldots,x(t_{k}))$ for $x\in D_{\pm\infty}$. For a subset $T\subset
(-\infty,\infty)$, let $\sigma(\pi_{t},t\in T)$ be the $\sigma$-field
generated by all pre-images $\pi_{t}^{-1}(B),t\in T,B\in
\mathcal{B}(\mathbb{R})$. We have

\begin{theorem}
\label{equivalence_from_projection}If $T$ is dense in $(-\infty,\infty)$, then
$\sigma(\pi_{t},t\in T)=\mathcal{D}_{\pm\infty}$.
\end{theorem}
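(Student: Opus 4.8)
The plan is to prove the two inclusions $\sigma(\pi_{t},t\in T)\subseteq\mathcal{D}_{\pm\infty}$ and $\mathcal{D}_{\pm\infty}\subseteq\sigma(\pi_{t},t\in T)$ separately; only the second will use that $T$ is dense. For the first inclusion it is enough to show that each coordinate map $\pi_{t}\colon D_{\pm\infty}\to\mathbb{R}$ is $\mathcal{D}_{\pm\infty}$-measurable, and for this I would use the standard averaging argument: the functionals $A_{h}(x)=h^{-1}\int_{t}^{t+h}x(s)\,ds$, $h>0$, are continuous on $(D_{\pm\infty},d_{\pm\infty}^{\circ})$ (a Skorohod-convergent sequence $x_{n}\to x$ with time changes $\lambda_{n}\to\mathrm{id}$ uniformly forces $A_{h}(x_{n})\to A_{h}(x)$ after a change of variables, using local boundedness of $x$), and $A_{h}(x)\to x(t)=\pi_{t}(x)$ as $h\downarrow 0$ by right-continuity of $x$. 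Thus $\pi_{t}$ is a pointwise limit of continuous functions, hence Borel; this holds for every $t$ and uses no hypothesis on $T$. In particular, for any countable $T_{0}\subseteq\mathbb{R}$ the coordinate map $J\colon D_{\pm\infty}\to\mathbb{R}^{T_{0}}$, $J(x)=(x(t))_{t\in T_{0}}$, is Borel measurable, with $J^{-1}(\mathcal{B}(\mathbb{R}^{T_{0}}))=\sigma(\pi_{t},t\in T_{0})$.

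For the reverse inclusion I would use separability twice. Since $\mathbb{R}$ is separable and $T$ is dense, $T$ contains a countable subset $T_{0}$ that is still dense in $\mathbb{R}$. Right-continuity then gives $x(t)=\lim_{s\downarrow t,\,s\in T_{0}}x(s)$ for every $t\in\mathbb{R}$, so any two c\`{a}dl\`{a}g functions agreeing on $T_{0}$ coincide; i.e.\ $J$ above is injective. Now $(D_{\pm\infty},d_{\pm\infty}^{\circ})$ is Polish (complete and separable, as recorded before this theorem) and $\mathbb{R}^{T_{0}}\cong\mathbb{R}^{\mathbb{N}}$ is Polish, so by the Lusin--Souslin theorem the injective Borel map $J$ has Borel range $J(D_{\pm\infty})$ and Borel-measurable inverse. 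Hence for every $B\in\mathcal{D}_{\pm\infty}$ the set $J(B)$ is Borel in $J(D_{\pm\infty})$, therefore Borel in $\mathbb{R}^{T_{0}}$, and since $J$ is injective $B=J^{-1}(J(B))\in\sigma(\pi_{t},t\in T_{0})\subseteq\sigma(\pi_{t},t\in T)$. Combined with the first part this yields $\mathcal{D}_{\pm\infty}=\sigma(\pi_{t},t\in T_{0})\subseteq\sigma(\pi_{t},t\in T)\subseteq\mathcal{D}_{\pm\infty}$, which is the assertion.

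If a descriptive-set-theory-free route is preferred, the fallback is Billingsley's direct computation: fix $y$ and show that $x\mapsto d_{\pm\infty}^{\circ}(x,y)$ is $\sigma(\pi_{t},t\in T_{0})$-measurable by rewriting the defining infimum over time changes $\lambda\in\Lambda$ as an infimum over a countable family of piecewise-linear $\lambda$ with rational breakpoints and values, and each $\sup_{t\in[-m,m]}$ as a countable supremum over $t\in T_{0}\cap[-m,m]$ (legitimate by right-continuity of $x\circ\lambda$ and of $y$); this presents $d_{\pm\infty}^{\circ}(\cdot,y)$ as a countable inf of countable sups of continuous functions of finitely many coordinates. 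Since $D_{\pm\infty}$ is separable, its Borel $\sigma$-field is generated by the balls $\{x:d_{\pm\infty}^{\circ}(x,y_{j})<q\}$ for a countable dense set $\{y_{j}\}$ and $q\in\mathbb{Q}_{>0}$, and each such ball then lies in $\sigma(\pi_{t},t\in T_{0})$. Either way, the main obstacle is the same and is the technical heart of the proof: turning the Skorohod metric, defined through an uncountable infimum over time deformations, into something measurable with respect to countably many point evaluations. This is exactly where the c\`{a}dl\`{a}g structure (right-continuity, which makes values on a dense set determine the function) and the separability/completeness of $(D_{\pm\infty},d_{\pm\infty}^{\circ})$ enter; note that the analogous statement for $D[0,1]$ additionally needs $1\in T$ to control the left endpoint, a complication that does not arise on $(-\infty,\infty)$.
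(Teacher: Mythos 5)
Your proof is correct; the paper itself gives no argument for this theorem, merely citing Billingsley's Theorem~16.6(iii) for $D[0,\infty)$ (inside the proof of Lemma~\ref{measurability}) and asserting that the same argument carries over to $D_{\pm\infty}$. So the reference point is Billingsley's direct approach, which is essentially your ``fallback.'' Your primary route is genuinely different, and it is worth noting what each approach buys.

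Billingsley's argument (the one the paper implicitly relies on) shows $\mathcal D_{\pm\infty}\subseteq\sigma(\pi_t,t\in T_0)$ by hands-on manipulation of the Skorohod metric: one exhibits each closed ball of $d_{\pm\infty}^{\circ}$ as a set in $\sigma(\pi_t,t\in T_0)$ by approximating $x$ with step functions built from finitely many coordinate values $x(t_i)$ on a refining finite grid and showing $d^{\circ}(\phi_m(x),x)\to0$, so that $d^{\circ}(x,y)=\lim_m d^{\circ}(\phi_m(x),\phi_m(y))$ is a countable limit of functions measurable in finitely many coordinates. Your sketch of this fallback is in the same spirit, though the reduction of the infimum over $\Lambda$ to a countable family and the reduction of each $\sup_{|t|\le m}$ to a sup over $T_0$ would both need to be spelled out (the latter is fine because $t\mapsto x(\lambda(t))$ is right-continuous when $\lambda$ is an increasing homeomorphism, so values on a dense set determine the sup). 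This route is elementary but laborious.

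Your primary route sidesteps the metric entirely. The easy inclusion $\sigma(\pi_t)\subseteq\mathcal D_{\pm\infty}$ via the mollifiers $A_h$ is clean (the continuity of $A_h$ follows from the standard fact that Skorohod convergence gives $x_n(s)\to x(s)$ at continuity points of $x$ together with local boundedness and dominated convergence, and $A_h(x)\to x(t)$ as $h\downarrow0$ is just right-continuity). The hard inclusion is then delegated to the Lusin--Souslin theorem: $J:D_{\pm\infty}\to\mathbb R^{T_0}$ is an injective Borel map between Polish spaces, so it carries Borel sets to Borel sets, and $B=J^{-1}(J(B))\in\sigma(\pi_t,t\in T_0)$. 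This is considerably slicker and, notably, makes essential use of completeness of $(D_{\pm\infty},d_{\pm\infty}^{\circ})$ (needed for Polishness and hence for Lusin--Souslin), whereas Billingsley's route uses completeness only incidentally via separability. The trade-off is invoking a nontrivial piece of descriptive set theory in place of a bare-hands computation. Your closing remark about $D[0,1]$ requiring $1\in T$ (because the right endpoint cannot be recovered by right limits from within) is correct and is a genuine structural difference with $D_{\pm\infty}$, where no such boundary obstruction exists.
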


Theorem \ref{equivalence_from_projection} ensures that condition (ii) is
automatic if $X(t),t\in T$ are random variables for a dense $T\subset
(-\infty,\infty)$.

Now we define the weak convergence on the Skorohod space. Let $X_{n},n\geq1$
and $X$ be random elements on $(D_{\pm\infty},\mathcal{D}_{\pm\infty})$. We
say $X_{n}$ weakly converges to $X$, denoted by $X_{n}\Rightarrow X$ if
$\mathbb{E}[f(X_{n})]\rightarrow\mathbb{E}[f(X)]$ for any bounded continuous
function $f:D_{\pm\infty}\mapsto\mathbb{R}$. Thanks to the completeness and
separability of the Skorohod space, the famous Prokhorov's theorem can be
applied to obtain the following (sufficient) condition of the weak convergence.

\begin{theorem}
Let $X_{n},n\geq1$ and $X$ be random elements on $(D_{\pm\infty}%
,\mathcal{D}_{\pm\infty})$. If the following two conditions hold:
\begin{description}
\item[(1)] $\{X_{n}\}$ is tight, i.e., for any $\varepsilon>0$, there exists a
compact set $K\subset D_{\pm\infty}$ such that $\mathbb{P}(X_{n}\in
K)>1-\varepsilon$ for any $n$;

\item[(2)] the finite dimensional distributions of $X_{n}$ weakly converges to
the finite dimensional distributions of $X$ (denoted by $X_{n}%
\overset{f.d.}{\rightarrow}X$), i.e., $(X_{n}(t_{1}),\ldots,X_{n}%
(t_{k}))\overset{d}{\rightarrow}(X(t_{1}),\ldots,X(t_{k}))$ for any
$k\in\mathbb{N}$ and $-\infty<t_{1}<\cdots<t_{k}<\infty$,
\end{description}
then $X_{n}\Rightarrow X$. Besides, if $X_{n}\Rightarrow X$, then $\{X_{n}\}$
is tight.
\end{theorem}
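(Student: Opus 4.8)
The plan is to follow the classical route for $D[0,1]$ and $D[0,\infty)$ (\cite{billingsley2013convergence} Section 16): use Prokhorov's theorem to extract subsequential weak limits from tightness, and then pin down each such limit through its finite-dimensional distributions via Theorem \ref{equivalence_from_projection}. The converse assertion requires almost no work: since $(D_{\pm\infty}, d_{\pm\infty}^{\circ})$ is a complete separable metric space, Prokhorov's theorem gives the equivalence of tightness and relative compactness for families of laws on $D_{\pm\infty}$; a weakly convergent sequence $X_{n}\Rightarrow X$ is trivially relatively compact (every subsequence again converges weakly to $X$), hence tight, so $\{X_{n}\}$ is tight.

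For the main direction, I would first invoke Prokhorov's theorem to deduce from condition (1) that $\{X_{n}\}$ is relatively compact. It then suffices to show that every subsequential weak limit of $\{X_{n}\}$ has the same law as $X$: combined with relative compactness, the standard fact that a relatively compact sequence with a unique subsequential limit must converge to it (the weak topology on laws over the Polish space $D_{\pm\infty}$ is metrizable) yields $X_{n}\Rightarrow X$. So suppose $X_{n_{k}}\Rightarrow Y$ for some random element $Y$ on $(D_{\pm\infty},\mathcal{D}_{\pm\infty})$ along a subsequence.

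To identify the law of $Y$, I would restrict attention to the set $T_{Y}:=\{t\in\mathbb{R}:\mathbb{P}(Y(t)=Y(t-))=1\}$ of times at which $Y$ has no fixed discontinuity. A routine argument --- for each $j$ and each compact interval only finitely many $t$ can satisfy $\mathbb{P}(|Y(t)-Y(t-)|>1/j)>1/j$ --- shows that $T_{Y}$ is co-countable, hence dense in $\mathbb{R}$. For any $t_{1}<\cdots<t_{m}$ drawn from $T_{Y}$, the coordinate projection $\pi_{t_{1},\ldots,t_{m}}:D_{\pm\infty}\to\mathbb{R}^{m}$ is continuous on a Borel set of $Y$-probability one, so the continuous mapping theorem gives $(X_{n_{k}}(t_{1}),\ldots,X_{n_{k}}(t_{m}))\overset{d}{\rightarrow}(Y(t_{1}),\ldots,Y(t_{m}))$; on the other hand, condition (2) gives $(X_{n_{k}}(t_{1}),\ldots,X_{n_{k}}(t_{m}))\overset{d}{\rightarrow}(X(t_{1}),\ldots,X(t_{m}))$. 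Hence $(Y(t_{1}),\ldots,Y(t_{m}))$ and $(X(t_{1}),\ldots,X(t_{m}))$ have the same distribution for every finite tuple from $T_{Y}$. The finite-dimensional cylinders $\pi_{t_{1},\ldots,t_{m}}^{-1}(B_{1}\times\cdots\times B_{m})$ with $t_{i}\in T_{Y}$, $B_{i}\in\mathcal{B}(\mathbb{R})$ form a $\pi$-system on which the laws of $Y$ and $X$ agree, and by Theorem \ref{equivalence_from_projection} (applied with the dense set $T_{Y}$) this $\pi$-system generates $\mathcal{D}_{\pm\infty}$; Dynkin's $\pi$-$\lambda$ theorem then forces $Y$ and $X$ to have the same law, which completes the argument.

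The hard part will be the handling of the coordinate projections, which are not continuous everywhere on $D_{\pm\infty}$; this is exactly why the finite-dimensional identification must be carried out along the co-countable ``good'' set $T_{Y}$ rather than at arbitrary times. Everything else --- Prokhorov's theorem on the Polish space $D_{\pm\infty}$, the metrizability of weak convergence used to upgrade subsequential limits to full convergence, and the $\pi$-$\lambda$ identification based on Theorem \ref{equivalence_from_projection} --- is routine bookkeeping once this point is in place.
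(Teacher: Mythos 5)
Your proof is correct, but note that the paper itself offers no proof of this result: it is stated as background in Appendix A and attributed to the literature, with the remark that ``all the following results about $D_{\pm\infty}$ are direct generalizations of results for $D[0,\infty)$ in Billingsley Section 16.'' Your argument is precisely the classical Billingsley route --- Prokhorov's theorem on the Polish space $(D_{\pm\infty},d_{\pm\infty}^{\circ})$ to reduce to identifying subsequential limits, restriction to the co-countable set $T_Y$ of times with no fixed discontinuity so that the projections $\pi_{t_1,\ldots,t_m}$ are $Y$-a.s.\ continuous, the continuous mapping theorem to match the finite-dimensional laws, and the $\pi$-$\lambda$ argument via Theorem \ref{equivalence_from_projection} applied to the dense set $T_Y$ to conclude equality of laws --- so it matches the referenced source and there is nothing to amend.
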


There are a few tightness characterization based on (modified) modulus of
continuity or moment conditions. Interested readers may refer to
\cite{ferger2010weak} for more details.

\subsection{Weak Convergence on the Skorohod Product Space}

Based on the weak convergence on the Skorohod space, the weak convergence on
the Skorohod product space $D_{\pm\infty}\times D_{\pm\infty}$ is just a
natural generalization.

We endow $D_{\pm\infty}\times D_{\pm\infty}$ with the product topology induced
by the metric
\[
\rho((x_{1},y_{1}),(x_{2},y_{2}))=d_{\pm\infty}^{\circ}(x_{1},x_{2})\vee
d_{\pm\infty}^{\circ}(y_{1},y_{2}),\quad(x_{i},y_{i})\in D_{\pm\infty}\times
D_{\pm\infty}.
\]
It is well-known that $(D_{\pm\infty}\times D_{\pm\infty},\rho)$ is also a
complete separable metric space and its Borel $\sigma$-field is exactly the
product $\sigma$-field $\mathcal{D}_{\pm\infty}\times\mathcal{D}_{\pm\infty}$.
In this case, the two-dimensional stochastic process $(X,Y)$ is measurable on
$(D_{\pm\infty}\times D_{\pm\infty},\mathcal{D}_{\pm\infty}\times
\mathcal{D}_{\pm\infty})$ if and only if $X$ and $Y$ are both measurable on
$(D_{\pm\infty},\mathcal{D}_{\pm\infty})$.

The weak convergence is defined similarly. Let $(X_{n},Y_{n}),n\geq1$ and
$(X,Y)$ be random elements on $(D_{\pm\infty}\times D_{\pm\infty}%
,\mathcal{D}_{\pm\infty}\times\mathcal{D}_{\pm\infty})$. We say $(X_{n}%
,Y_{n})$ weakly converges to $(X,Y)$, denoted by $(X_{n},Y_{n})\Rightarrow
(X,Y)$ if $\mathbb{E}[f(X_{n},Y_{n})]\rightarrow\mathbb{E}[f(X,Y)]$ for any
bounded continuous function $f:D_{\pm\infty}\times D_{\pm\infty}%
\mapsto\mathbb{R}$. Moreover, we have the following sufficient condition:

\begin{theorem}
Let $(X_{n},Y_{n}),n\geq1$ and $(X,Y)$ be random elements on $(D_{\pm\infty
}\times D_{\pm\infty},\mathcal{D}_{\pm\infty}\times\mathcal{D}_{\pm\infty})$.
If the following two conditions hold:
\begin{description}
\item[(1)] $\{X_{n}\}$ and $\{Y_{n}\}$ are both individually tight;

\item[(2)] $(X_{n},Y_{n})\overset{f.d.}{\rightarrow}(X,Y)$, i.e.,
\[
(X_{n}(t_{1}),\ldots,X_{n}(t_{k}),Y_{n}(t_{1}),\ldots,Y_{n}(t_{k}))\overset{d}{\rightarrow}(X(t_{1}),\ldots,X(t_{k}),Y(t_{1}),\ldots,Y(t_{k}))
\]
for any $k\in\mathbb{N}$ and $-\infty<t_{1}<\cdots<t_{k}<\infty$,
\end{description}
then $(X_{n},Y_{n})\Rightarrow(X,Y)$.
\end{theorem}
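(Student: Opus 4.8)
The plan is to verify the two standard ingredients of a weak-convergence proof in a Polish space --- joint tightness and identification of every subsequential limit --- and then invoke Prokhorov's theorem on the complete separable metric space $D_{\pm\infty}\times D_{\pm\infty}$. First I would deduce joint tightness of $\{(X_n,Y_n)\}$ from the individual tightness in (1): given $\varepsilon>0$, pick compact sets $K_1,K_2\subset D_{\pm\infty}$ with $\inf_n\mathbb{P}(X_n\in K_1)\ge 1-\varepsilon/2$ and $\inf_n\mathbb{P}(Y_n\in K_2)\ge 1-\varepsilon/2$; then $K_1\times K_2$ is compact in the product space, and a union bound gives $\mathbb{P}((X_n,Y_n)\in K_1\times K_2)\ge 1-\varepsilon$ for all $n$. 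Since $(D_{\pm\infty}\times D_{\pm\infty},\rho)$ is complete and separable, Prokhorov's theorem then yields relative compactness: every subsequence of $\{(X_n,Y_n)\}$ admits a further subsequence with $(X_{n_j},Y_{n_j})\Rightarrow(X',Y')$ for some random element $(X',Y')$ on the product space.

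Next I would identify any such subsequential limit $(X',Y')$ with $(X,Y)$. Let $T'\subset(-\infty,\infty)$ be the set of $t$ at which both $X'$ and $Y'$ are almost surely continuous; since a c\`{a}dl\`{a}g path has at most countably many jumps, a standard argument shows $T'$ is co-countable and hence dense. For $t_1<\cdots<t_k$ in $T'$, the joint evaluation map $(x,y)\mapsto(x(t_1),\ldots,x(t_k),y(t_1),\ldots,y(t_k))$ from $D_{\pm\infty}\times D_{\pm\infty}$ to $\mathbb{R}^{2k}$ is continuous $\mathbb{P}_{(X',Y')}$-almost everywhere, so the continuous mapping theorem applied to $(X_{n_j},Y_{n_j})\Rightarrow(X',Y')$ shows the corresponding finite-dimensional distributions converge to those of $(X',Y')$; by hypothesis (2) they also converge to those of $(X,Y)$, so the finite-dimensional laws of $(X',Y')$ and $(X,Y)$ coincide on all tuples drawn from $T'$. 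By the product analogue of Theorem~\ref{equivalence_from_projection} --- namely $\sigma(\pi^{(1)}_t,\pi^{(2)}_t:t\in T')=\mathcal{D}_{\pm\infty}\times\mathcal{D}_{\pm\infty}$ for dense $T'$, which follows from applying Theorem~\ref{equivalence_from_projection} coordinatewise together with the fact that the product Borel $\sigma$-field is the product $\sigma$-field --- the cylinder sets based on $T'$ form a $\pi$-system generating the product Borel $\sigma$-field, so Dynkin's $\pi$--$\lambda$ theorem forces $\mathbb{P}_{(X',Y')}=\mathbb{P}_{(X,Y)}$, i.e., $(X',Y')\overset{d}{=}(X,Y)$.

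Finally, since every subsequence of $\{(X_n,Y_n)\}$ has a further subsequence converging weakly to the fixed limit $(X,Y)$, and weak convergence on the separable metric space $D_{\pm\infty}\times D_{\pm\infty}$ is metrizable (for instance by the L\'{e}vy--Prokhorov metric), the whole sequence satisfies $(X_n,Y_n)\Rightarrow(X,Y)$. The only genuinely delicate point is the limit-identification step: projections on $D_{\pm\infty}$ are not globally continuous, only at continuity points of the limiting paths, so one must check that the set $T'$ of common continuity points is still dense and --- more importantly --- that cylinder sets restricted to $T'$ remain a determining class, which is exactly what the product version of Theorem~\ref{equivalence_from_projection} supplies. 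Everything else (joint tightness from marginal tightness, Prokhorov, and the subsequence principle for metrized weak convergence) is routine once the product space is known to be complete, separable, and to carry the product Borel $\sigma$-field.
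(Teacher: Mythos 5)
Your proof is correct. The paper does not actually prove this theorem: Appendix~A is a survey, and the result is invoked later (in the proof of Theorem~\ref{weak_convergence}) by a citation to Theorem~5.1 of \cite{ferger2010weak}, which generalizes the $D[0,\infty)$ theory in Billingsley. Your argument is the standard one for this situation: joint tightness follows from marginal tightness via compact rectangles $K_1\times K_2$ and a union bound; Prokhorov's theorem on the Polish product space then gives relative compactness; any subsequential weak limit $(X',Y')$ is identified with $(X,Y)$ by matching finite-dimensional laws over the co-countable dense set $T'$ of common almost-sure continuity points of the limit, where the joint projection maps are $\mathbb{P}_{(X',Y')}$-a.e.\ continuous so the continuous mapping theorem applies; the product analogue of Theorem~\ref{equivalence_from_projection} (projections over a dense set generate $\mathcal{D}_{\pm\infty}\times\mathcal{D}_{\pm\infty}$) plus a $\pi$--$\lambda$ argument forces $(X',Y')\overset{d}{=}(X,Y)$; and the subsequence principle for metrizable weak convergence closes the loop. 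You correctly flag the one non-mechanical point, namely that the projections are only a.e.\ continuous and that $T'$ must still be a determining class, and your appeal to the product version of Theorem~\ref{equivalence_from_projection} resolves it.
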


\section{General Assumptions and Results for Covariance Function Estimation}\label{sec:general_assumptions}

In this section, we restate theorems regarding the covariance function estimation under more general assumptions than the finite-horizon model. We first present these assumptions, most of which are related to the smoothness of the output distribution function $Q(t,\underline{\hat{P}})$ and $Q(t,\underline{\hat{P}}_{\theta}^{\ast})$. In fact, all of them can be proved under Assumptions \ref{balanced_data} and \ref{finite_horizon_model} (see Theorem \ref{verification_general_assumptions} below). 

\begin{assumption}[Consistency of the output distribution function]\label{convergence_in_p_to_truth}
For any $t\in\mathbb{R}$,
$Q(t,\underline{\hat{P}})$ is measurable with respect to the $\sigma$-field
generated by the data and $Q(t,\underline{\hat{P}})\overset{p}{\rightarrow}%
{Q}(t,\underline{P})$ as $n\rightarrow\infty$.
\end{assumption}

\begin{assumption}[First order expansion at true input distributions]
\label{1st_expansion_truth}The output distribution function
$Q(t,\underline{\hat{P}})$ satisfies the following first order expansion%
\[
Q(t,\underline{\hat{P}})=Q(t,\underline{P})+\sum_{i=1}^{m}\int IF_{i}%
(t,x;\underline{P})d(\hat{P}_{i}(x)-P_{i}(x))+\varepsilon(t),
\]
where the influence function and the remainder term satisfy%
\[
\int IF_{i}(t,x;\underline{P})dP_{i}(x)=\mathbb{E}_{P_{i}}[IF_{i}%
(t,X_{i};\underline{P})]=0,\forall t\in\mathbb{R},
\]%
\[
\mathbb{E}_{P_{i}}[IF_{i}^{4}(t,X_{i};\underline{P})]<\infty,\forall
t\in\mathbb{R},
\]
and%
\[
\mathbb{E}[\varepsilon(t)^{2}]=o(n^{-1}),\forall t\in\mathbb{R}.
\]
\end{assumption}

\begin{assumption}[First order expansion at empirical input distributions]
\label{1st_expansion_empirical} The output distribution function $Q(t,\underline{\hat{P}}_{\theta}^{\ast})$ satisfies the following first order expansion%
\[
Q(t,\underline{\hat{P}}_{\theta}^{\ast})=Q(t,\underline{\hat{P}})+\sum_{i=1}^{m}\int IF_{i}(t,x;\underline{\hat{P}})d(\hat{P}_{i,s_{i}}^{\ast}-\hat{P}_{i})(x)+\varepsilon^{\ast}(t),
\]
where the influence function and the remainder term satisfy%
\[
\mathbb{E}_{\ast}[IF_{i}(t,X_{i,1}^{\ast};\underline{\hat{P}})]=\int IF_{i}(t,x;\underline{\hat{P}})d\hat{P}_{i}(x)=\sum_{j=1}^{n_{i}}\frac{1}{n_{i}}IF_{i}(t,X_{i,j};\underline{\hat{P}})=0,\forall t\in\mathbb{R},
\]%
\[
\mathbb{E}[(IF_{i}(t,X_{i};\underline{\hat{P}})-IF_{i}(t,X_{i};\underline{P}))^{4}]\rightarrow0,\forall t\in\mathbb{R},
\]%
\[
\mathbb{E}_{\ast}[\varepsilon^{\ast}(t)]=O_{p}((\theta n)^{-1}),\forall t\in\mathbb{R},%
\]%
\[
\mathbb{E}_{\ast}[(\varepsilon^{\ast}(t)-\mathbb{E}_{\ast}[\varepsilon^{\ast}(t)])^{2}]=o_{p}((\theta n)^{-1}),\forall t\in\mathbb{R},
\]%
\[
\mathbb{E}_{\ast}[\varepsilon^{\ast}(t)^{4}]=o_{p}((\theta n)^{-2}),\forall t\in\mathbb{R}.
\]
\end{assumption}

In section \ref{sec:theory_cov}, to study the order of $\theta$ in $\sigma
^{2}(t,t^{\prime})-\mathrm{Cov}(\mathbb{G}(t),\mathbb{G}(t^{\prime}))$, we
need a more precise expansion of the output distribution function. In the
following, we define and present the third order expansion of $Q(t,\cdot)$ with
respect to the input distributions.

\begin{definition}
[Third Order Expansion]Let $\underline{P}^{\prime}=(P_{1}^{\prime}%
,\ldots,P_{m}^{\prime})$ and $\underline{P}^{\prime\prime}=(P_{1}%
^{\prime\prime},\ldots,P_{m}^{\prime\prime})$ be two arbitrary input
distributions. For $v=(v_{1},\ldots,v_{m})\in\lbrack0,1]^{m}$, let
$\underline{P}^{v}=((1-v_{1})P_{1}^{\prime}+v_{1}P_{1}^{\prime\prime}%
,\ldots,(1-v_{m})P_{m}^{\prime}+v_{m}P_{m}^{\prime\prime})$. If there exist
functions $IF_{i_{1},i_{2}}(t,x_{1},x_{2};\underline{P}^{\prime})$ and
$IF_{i_{1},i_{2},i_{3}}(t,x_{1},x_{2},x_{3};\underline{P}^{\prime})$ for any
$1\leq i_{1},i_{2},i_{3}\leq m$ such that:

(i) they are symmetric under
permutations, i.e.,
\[
IF_{i_{1},i_{2}}(t,x_{1},x_{2};\underline{P}^{\prime})=IF_{i_{\pi(1)}%
,i_{\pi(2)}}(t,x_{\pi(1)},x_{\pi(2)};\underline{P}^{\prime})\text{ for any
permutation }\pi\text{ on }\{1,2\},
\]%
\[
IF_{i_{1},i_{2},i_{3}}(t,x_{1},x_{2},x_{3};\underline{P}^{\prime}%
)=IF_{i_{\pi(1)},i_{\pi(2)},i_{\pi(3)}}(t,x_{\pi(1)},x_{\pi(2)},x_{\pi
(3)};\underline{P}^{\prime})\text{ for any permutation }\pi\text{
on }\{1,2,3\},
\]
(ii) they have marginal zero mean under $\underline{P}^{\prime}$, i.e.,%
\[
\mathbb{E}_{P_{i_{2}}^{\prime}}[IF_{i_{1},i_{2}}(t,x_{1},X_{2}%
;\underline{P}^{\prime})]=0,\forall x_{1}\in\mathbb{R},
\]%
\[
\mathbb{E}_{P_{i_{3}}^{\prime}}[IF_{i_{1},i_{2},i_{3}}(t,x_{1},x_{2}%
,X_{3};\underline{P}^{\prime})]=0,\forall x_{1},x_{2}\in\mathbb{R},
\]
(iii) the following expansion holds%
\begin{align*}
&  Q(t,\underline{P}^{v})\\
&  =Q(t,\underline{P}^{\prime})+\sum_{i=1}^{m}v_{i}\int IF_{i}%
(t,x;\underline{P}^{\prime})d(P_{i}^{\prime\prime}-P_{i}^{\prime})(x)\\
&  +\frac{1}{2}\sum_{i_{1},i_{2}=1}^{m}v_{i_{1}}v_{i_{2}}\int IF_{i_{1},i_{2}%
}(t,x_{1},x_{2};\underline{P}^{\prime})d(P_{i_{1}}^{\prime\prime}-P_{i_{1}%
}^{\prime})(x_{1})d(P_{i_{2}}^{\prime\prime}-P_{i_{2}}^{\prime})(x_{2})\\
&  +\frac{1}{6}\sum_{i_{1},i_{2},i_{3}=1}^{m}v_{i_{1}}v_{i_{2}}v_{i_{3}}\int
IF_{i_{1},i_{2},i_{3}}(t,x_{1},x_{2},x_{3};\underline{P}^{\prime})d(P_{i_{1}%
}^{\prime\prime}-P_{i_{1}}^{\prime})(x_{1})d(P_{i_{2}}^{\prime\prime}%
-P_{i_{2}}^{\prime})(x_{2})d(P_{i_{3}}^{\prime\prime}-P_{i_{3}}^{\prime
})(x_{3})\\
&  +o\left(  \left(  \sum_{i=1}^{m}v_{i}^{2}\right)  ^{3/2}\right)  ,
\end{align*}
then $IF_{i_{1},i_{2}}(t,x_{1},x_{2};\underline{P}^{\prime})$ and
$IF_{i_{1},i_{2},i_{3}}(t,x_{1},x_{2},x_{3};\underline{P}^{\prime})$ are
called the second order and third order influence functions of $Q(t,\cdot)$ at
$\underline{P}^{\prime}$, and the expansion is called the third order
expansion of $Q(t,\underline{P}^{v})$ at $\underline{P}^{\prime}$.
\end{definition}

\begin{assumption}[Higher order influence functions at true input distributions]
\label{3rd_IF_at_truth}The second order and third order
influence functions of $Q(t,\cdot)$ at $\underline{P}$ exist. They have finite
moments as follows:
\[
\mathbb{E}[IF_{i_{1}i_{2}}^{4}(t,X_{i_{1},1},X_{i_{2},j_{2}};\underline{P}%
)]<\infty,\quad\mathbb{E}[IF_{i_{1}i_{2}i_{3}}^{2}(t,X_{i_{1},1}%
,X_{i_{2},j_{2}},X_{i_{3},j_{3}};\underline{P})]<\infty,
\]
for any $1\leq i_{1},i_{2},i_{3}\leq m,1\leq j_{2}\leq2,1\leq j_{3}\leq3$ and
$\forall t\in\mathbb{R}$.
\end{assumption}

\begin{assumption}[Third order expansion at empirical input distributions]
\label{3rd_expansion_empirical}$Q(t,\underline{\hat{P}}_{\theta}^{\ast
})$ satisfies the following third order expansion for any $t\in\mathbb{R}$%
\begin{align*}
&  Q(t,\underline{\hat{P}}_{\theta}^{\ast})\\
&  =Q(t,\underline{\hat{P}})+\sum_{i=1}^{m}\int IF_{i}(t,x;\underline{\hat{P}%
})d(\hat{P}_{i,s_{i}}^{\ast}-\hat{P}_{i})(x)\\
&  +\frac{1}{2}\sum_{i_{1},i_{2}=1}^{m}\int IF_{i_{1}i_{2}}(t,x_{1}%
,x_{2};\underline{\hat{P}})d(\hat{P}_{i_{1},s_{i_{1}}}^{\ast}-\hat{P}_{i_{1}%
})(x_{1})d(\hat{P}_{i_{2},s_{i_{2}}}^{\ast}-\hat{P}_{i_{2}})(x_{2})\\
&  +\frac{1}{6}\sum_{i_{1},i_{2},i_{3}=1}^{m}\int IF_{i_{1}i_{2}i_{3}}%
(t,x_{1},x_{2},x_{3};\underline{\hat{P}})d(\hat{P}_{i_{1},s_{i_{1}}}^{\ast
}-\hat{P}_{i_{1}})(x_{1})d(\hat{P}_{i_{2},s_{i_{2}}}^{\ast}-\hat{P}_{i_{2}%
})(x_{2})d(\hat{P}_{i_{3},s_{i_{3}}}^{\ast}-\hat{P}_{i_{3}})(x_{3}%
)+\varepsilon_{3}^{\ast}(t),
\end{align*}
where the influence function and the remainder term satisfy%
\[
\mathbb{E}[(IF_{i_{1}i_{2}}(t,X_{i_{1},1},X_{i_{2},j_{2}};\underline{\hat{P}%
})-IF_{i_{1}i_{2}}(t,X_{i_{1},1},X_{i_{2},j_{2}};\underline{P}))^{2}%
]\rightarrow0,
\]%
\[
\mathbb{E}[(IF_{i_{1}i_{2}i_{3}}(t,X_{i_{1},1},X_{i_{2},j_{2}},X_{i_{3},j_{3}%
};\underline{\hat{P}})-IF_{i_{1}i_{2}i_{3}}(t,X_{i_{1},1},X_{i_{2},j_{2}%
},X_{i_{3},j_{3}};\underline{P}))^{2}]\rightarrow0,
\]%
\[
\mathbb{E}_{\ast}[\varepsilon_{3}^{\ast}(t)^{2}]=o_{p}\left(  s^{-3}\right),
\]
for any $1\leq i_{1},i_{2},i_{3}\leq m,1\leq j_{2}\leq2,1\leq j_{3}\leq3$ and
$\forall t\in\mathbb{R}$. Additionally, the first order influence function
$IF_{i}(t,X_{i,1};\underline{\hat{P}})$ admits the following Taylor expansion%
\begin{align*}
&  IF_{i}(t,X_{i,1};\underline{\hat{P}})\\
&  =IF_{i}(t,X_{i,1};\underline{P})+\sum_{i^{\prime}=1}^{m}\int IF_{ii^{\prime
}}(t,X_{i,1},x;\underline{P})d(\hat{P}_{i^{\prime}}-P_{i^{\prime}})(x)-\int
IF_{i}(t,x;\underline{P})d(\hat{P}_{i}-P_{i})(x)+\varepsilon_{IF_{i}%
}(t,X_{i,1})\\
&  =IF_{i}(t,X_{i,1};\underline{P})+\sum_{i^{\prime}=1}^{m}\int IF_{ii^{\prime
}}(t,X_{i,1},x;\underline{P})d\hat{P}_{i^{\prime}}(x)-\int IF_{i}%
(t,x;\underline{P})d\hat{P}_{i}(x)+\varepsilon_{IF_{i}}(t,X_{i,1})
\end{align*}
where $\varepsilon_{IF_{i}}(t,X_{i,1})$ satisfies $\mathbb{E}[(\varepsilon
_{IF_{i}}(t,X_{i,1}))^{2}]=o(1/n)$ for any $t\in\mathbb{R}$.
\end{assumption}

All assumptions in this section are general versions of the finite-horizon model to establish the results for covariance function estimation. In fact, we can prove all these assumptions under the finite-horizon model.

\begin{theorem}\label{verification_general_assumptions}
Under Assumptions \ref{balanced_data} and \ref{finite_horizon_model}, Assumptions \ref{convergence_in_p_to_truth}-\ref{3rd_expansion_empirical} hold.
\end{theorem}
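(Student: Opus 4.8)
The plan is to exploit that, under Assumption~\ref{finite_horizon_model}, the exact output distribution function $Q(t,\underline{\hat P})$ is the multi-sample V-statistic \eqref{V_statistic} with the bounded kernel $g_t(\cdot):=I(h(\cdot)\le t)$, and that $\underline{P}\mapsto Q(t,\underline{P})=\int g_t\, d\!\left(\prod_{i=1}^m P_i^{\otimes T_i}\right)$ is a von Mises functional that is polynomial of degree $T_i$ in each $P_i$. Two structural facts then do most of the work: since $|g_t|\le 1$, every conditional expectation of $g_t$ lies in $[0,1]$, so all influence functions of all orders are uniformly bounded and hence have finite moments of every order; and since the functional has finite polynomial degree, every Taylor/Hoeffding expansion terminates, so each ``remainder'' appearing in Assumptions~\ref{1st_expansion_truth}, \ref{1st_expansion_empirical}, \ref{3rd_expansion_empirical} is literally the sum of the expansion terms of degree higher than those displayed, whose moments are controlled by elementary U/V-statistic variance computations. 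To start, Assumption~\ref{convergence_in_p_to_truth} is immediate: $Q(t,\underline{\hat P})$ is a finite sum of indicators of measurable events, hence data-measurable, and since $|g_t|\le 1$ the V-statistic differs from the corresponding multi-sample U-statistic by a diagonal correction of deterministic order $O(1/n)$, while the U-statistic converges a.s.\ to $\mathbb{E}_{\underline{P}}[g_t]=Q(t,\underline{P})$ by the strong law for U-statistics; hence $Q(t,\underline{\hat P})\overset{p}{\rightarrow}Q(t,\underline{P})$.

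For Assumptions~\ref{1st_expansion_truth} and \ref{1st_expansion_empirical} I would write the H\'{a}jek/Hoeffding (ANOVA) decomposition of the V-statistic. Its single-observation projection in sample $i$ is $\frac{1}{n_i}\sum_{j}IF_i(t,X_{i,j};\underline{P})$; inserting the product structure of $h$ into this projection yields exactly the closed form \eqref{IF finite horizon} together with the centering $\mathbb{E}_{P_i}[IF_i(t,X_i;\underline{P})]=0$, and $\mathbb{E}_{P_i}[IF_i^4]<\infty$ follows from $|IF_i|\le T_i$. The remainder $\varepsilon(t)$ is the sum of the second- and higher-order Hoeffding components plus the $V$-minus-$U$ diagonal terms; the former are degenerate U-statistics and the latter a bounded $O(1/n)$ average, so $\mathbb{E}[\varepsilon(t)^2]=O(n^{-2})=o(n^{-1})$. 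Assumption~\ref{1st_expansion_empirical} is the same decomposition carried out with base measure $\underline{\hat P}$ and resample $\underline{\hat P}^{\ast}_{\theta}$ in place of $(\underline{P},\underline{\hat P})$; the three remainder bounds for $\varepsilon^{\ast}(t)$ follow from the identical counting, now conditional on the data and with resample sizes $s_i=\lfloor\theta n_i\rfloor$, which is why they appear at the scales $(\theta n)^{-1}$ and $(\theta n)^{-2}$. Finally, $\mathbb{E}[(IF_i(t,X_i;\underline{\hat P})-IF_i(t,X_i;\underline{P}))^4]\to 0$ holds because, for fixed $x$, $IF_i(t,x;\underline{\hat P})$ is a V-statistic-type average of $g_t$ converging a.s.\ to $IF_i(t,x;\underline{P})$, and dominated convergence (all terms bounded by $2T_i$) upgrades this to $L^4$.

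For Assumptions~\ref{3rd_IF_at_truth} and \ref{3rd_expansion_empirical} I would use that $v\mapsto Q(t,\underline{P}^v)$ is, by multilinearity of the product measure, a polynomial of total degree at most $\sum_i T_i$ and of degree at most $T_i$ in $v_i$; hence its exact Taylor expansion about $v=0$ realizes the third order expansion in the sense of the definition, with remainder equal to the degree-$\ge 4$ monomials, which is $O\!\left((\sum_i v_i^2)^2\right)=o\!\left((\sum_i v_i^2)^{3/2}\right)$. After the standard symmetrization and conditional centering built into the Hoeffding decomposition, the degree-two and degree-three Taylor coefficients are the claimed $IF_{i_1 i_2}$ and $IF_{i_1 i_2 i_3}$; being finite sums of conditional expectations of $g_t$, they are bounded, so all moment conditions in Assumption~\ref{3rd_IF_at_truth} hold. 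Assumption~\ref{3rd_expansion_empirical} is then obtained by instantiating this exact polynomial expansion at $\underline{P}'=\underline{\hat P}$, $\underline{P}''=\underline{\hat P}^{\ast}_{\theta}$ (its remainder $\varepsilon_3^{\ast}(t)$ being the degree-$\ge 4$ resample terms, with $\mathbb{E}_{\ast}[\varepsilon_3^{\ast}(t)^2]=o_p(s^{-3})$ from the resample-variance bound), together with an analogous exact Taylor expansion of $\underline{\hat P}\mapsto IF_i(t,x;\underline{\hat P})$ about $\underline{P}$ whose remainder $\varepsilon_{IF_i}$ satisfies $\mathbb{E}[\varepsilon_{IF_i}^2]=o(1/n)$ by the same $V$-minus-$U$ counting; the convergences of the empirical higher-order influence functions to the true ones are handled exactly as in the first-order case.

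The main obstacle I anticipate is bookkeeping rather than any single deep estimate: verifying that the multi-sample Hoeffding/Taylor coefficients genuinely coincide with the symmetric, conditionally centered influence functions appearing in the definition of the third order expansion, and tracking the precise $o_p$ orders (in $s$ and $\theta n$) of the resample remainders of Assumption~\ref{3rd_expansion_empirical} through the three nested layers of randomness (original data, subsample resample, and the conditional expectations inside $h$). Given the boundedness of $g_t$ these reduce to careful constant-chasing, together with the uniformity needed for the $L^4$ convergences of the empirical influence functions.
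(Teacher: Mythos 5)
Your proposal is correct in substance, but it takes a noticeably different route from the paper's own proof, which is almost entirely a pointer to prior work. The paper verifies Assumption~\ref{convergence_in_p_to_truth} via the V-statistic law of large numbers (as you do), and then dispatches Assumptions~\ref{1st_expansion_truth}, \ref{1st_expansion_empirical} (most of it), \ref{3rd_IF_at_truth}, and \ref{3rd_expansion_empirical} by directly citing Theorems~6 and~9 of \cite{lam2022subsampling}, where the required Hoeffding-type expansions for finite-horizon V-statistics have already been established. The only original work in the paper's proof is closing the two moment bounds $\mathbb{E}_{\ast}[\varepsilon^{\ast}(t)]=O_{p}((\theta n)^{-1})$ and $\mathbb{E}_{\ast}[(\varepsilon^{\ast}(t)-\mathbb{E}_{\ast}[\varepsilon^{\ast}(t)])^{2}]=o_{p}((\theta n)^{-1})$, which it does by subtracting the second- and third-order terms of the (already cited) third-order expansion, invoking marginal zero means of the higher-order influence functions so that $\mathbb{E}_\ast[\varepsilon^\ast(t)]=\mathbb{E}_\ast[\varepsilon_3^\ast(t)]$, and applying H\"older to the known $\varepsilon_3^\ast$ and $(\varepsilon^\ast)^4$ bounds. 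Your sketch instead rebuilds the content of Lam--Qian Theorems~6 and~9 from scratch, starting from the observation that $\underline{P}\mapsto Q(t,\underline{P})$ is multilinear of degree $T_i$ in each $P_i$ (so every Taylor/Hoeffding expansion terminates and every remainder is an explicit sum of higher-degree degenerate terms), and that the bounded indicator kernel forces uniform boundedness of all influence functions. This is more self-contained and, incidentally, delivers the two ``extra'' remainder bounds directly from the degree-$\ge2$ structure rather than needing the paper's separate H\"older argument; the trade-off is that you must carry out the multi-sample ANOVA bookkeeping yourself, and you should make explicit that the generic $X_i$ in the fourth-moment convergence $\mathbb{E}[(IF_i(t,X_i;\underline{\hat P})-IF_i(t,X_i;\underline{P}))^4]\to0$ can be one of the data points forming $\underline{\hat P}$ (as it is in the paper's Lemma~\ref{influence_product}), which is harmless only because a single data point perturbs $\underline{\hat P}$ by a uniformly $O(1/n_i)$ amount under the bounded kernel.
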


Under these general assumptions, theorems regarding the covariance function hold as well. Since they could be of independent interest, we restate them in the following.


\begin{theorem}
\label{consistency_sigma2} Suppose Assumptions \ref{balanced_data}, \ref{1st_expansion_truth} and \ref{1st_expansion_empirical} hold. Then the same conclusion in Theorem \ref{consistency_sigma} holds.
\end{theorem}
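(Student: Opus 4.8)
The plan is to reduce $\sigma^2(t,t')=\theta n\,\mathrm{Cov}_\ast\big(Q(t,\underline{\hat P}_\theta^\ast),Q(t',\underline{\hat P}_\theta^\ast)\big)$ to an explicit finite sum of sample second moments of plug-in influence functions and then pass to the limit. First I would invoke the first-order expansion in Assumption \ref{1st_expansion_empirical}, writing $Q(t,\underline{\hat P}_\theta^\ast)-Q(t,\underline{\hat P})=L(t)+\varepsilon^\ast(t)$ with the linear term $L(t)=\sum_{i=1}^m s_i^{-1}\sum_{j=1}^{s_i}IF_i(t,X_{i,j}^\ast;\underline{\hat P})$. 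Conditionally on the data the resamples $X_{i,j}^\ast$ are independent across $(i,j)$, and $\mathbb{E}_\ast[IF_i(t,X_{i,j}^\ast;\underline{\hat P})]=0$ by the same assumption, so a direct covariance computation gives $\mathrm{Cov}_\ast(L(t),L(t'))=\sum_{i=1}^m s_i^{-1}\,\widehat m_i(t,t')$ with $\widehat m_i(t,t'):=n_i^{-1}\sum_{j=1}^{n_i}IF_i(t,X_{i,j};\underline{\hat P})\,IF_i(t',X_{i,j};\underline{\hat P})$. Since $Q(t,\underline{\hat P})$ is constant given the data, expanding the bilinear covariance leaves $\sigma^2(t,t')=\sum_{i=1}^m(\theta n/s_i)\,\widehat m_i(t,t')+\mathrm{(remainder)}$, where the remainder is $\theta n$ times the three cross/quadratic covariances $\mathrm{Cov}_\ast(L(t),\varepsilon^\ast(t'))$, $\mathrm{Cov}_\ast(\varepsilon^\ast(t),L(t'))$, $\mathrm{Cov}_\ast(\varepsilon^\ast(t),\varepsilon^\ast(t'))$.

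Next I would analyze the leading term. Since $\theta\le1$ the regime $\theta n\to\infty$ forces $n\to\infty$, so Assumption \ref{balanced_data} together with $s_i=\lfloor\theta n_i\rfloor$ and $\mathrm{frac}(\theta n_i)/(\theta n_i)\to0$ gives $\theta n/s_i\to1/\beta_i$. The key claim is $\widehat m_i(t,t')\overset{p}{\rightarrow}\mathrm{Cov}_{P_i}\big(IF_i(t,X_i;\underline{P}),IF_i(t',X_i;\underline{P})\big)$, which I would prove by the triangle inequality in two stages: (a) bound $\big|\widehat m_i(t,t')-n_i^{-1}\sum_j IF_i(t,X_{i,j};\underline{P})IF_i(t',X_{i,j};\underline{P})\big|$ by splitting $IF_i(t,\cdot;\underline{\hat P})IF_i(t',\cdot;\underline{\hat P})-IF_i(t,\cdot;\underline{P})IF_i(t',\cdot;\underline{P})$ additively and applying Cauchy--Schwarz over $j$, so that each piece is a product of an empirical fourth moment of $IF_i(\cdot,X_{i,j};\underline{\hat P})-IF_i(\cdot,X_{i,j};\underline{P})$, which is $o_p(1)$ by the $L^4$-convergence in Assumption \ref{1st_expansion_empirical} and Markov's inequality, and an empirical second/fourth moment of $IF_i(\cdot,X_{i,j};\underline{P})$, which is $O_p(1)$ by the $L^4$-bound in Assumption \ref{1st_expansion_truth} and the law of large numbers; and (b) apply the law of large numbers to $n_i^{-1}\sum_j IF_i(t,X_{i,j};\underline{P})IF_i(t',X_{i,j};\underline{P})\to\mathbb{E}_{P_i}[IF_i(t,X_i;\underline{P})IF_i(t',X_i;\underline{P})]$, the product being integrable since each factor lies in $L^4\subset L^2$, and identify the limit with the covariance because $\mathbb{E}_{P_i}[IF_i(t,X_i;\underline{P})]=0$ by Assumption \ref{1st_expansion_truth}. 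Summing over the finitely many $i$ then yields $\sum_i(\theta n/s_i)\widehat m_i(t,t')\overset{p}{\rightarrow}\sum_i\beta_i^{-1}\mathrm{Cov}_{P_i}\big(IF_i(t,X_i;\underline{P}),IF_i(t',X_i;\underline{P})\big)$, which equals $\mathrm{Cov}(\mathbb{G}(t),\mathbb{G}(t'))$ by \eqref{covariance_function}.

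Finally I would dispose of the remainder. Setting $t'=t$ in the leading-term analysis shows $\mathrm{Var}_\ast(L(t))=\sum_i s_i^{-1}\widehat m_i(t,t)=O_p\big(1/(\theta n)\big)$, while Assumption \ref{1st_expansion_empirical} gives directly $\mathrm{Var}_\ast(\varepsilon^\ast(t))=\mathbb{E}_\ast[(\varepsilon^\ast(t)-\mathbb{E}_\ast[\varepsilon^\ast(t)])^2]=o_p\big(1/(\theta n)\big)$. By the conditional Cauchy--Schwarz inequality $|\mathrm{Cov}_\ast(L(t),\varepsilon^\ast(t'))|\le\sqrt{\mathrm{Var}_\ast(L(t))\,\mathrm{Var}_\ast(\varepsilon^\ast(t'))}=o_p\big(1/(\theta n)\big)$, and likewise $|\mathrm{Cov}_\ast(\varepsilon^\ast(t),\varepsilon^\ast(t'))|=o_p\big(1/(\theta n)\big)$; multiplying each by $\theta n$ makes every remainder term $o_p(1)$. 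Combining the last two paragraphs gives $\sigma^2(t,t')\overset{p}{\rightarrow}\mathrm{Cov}(\mathbb{G}(t),\mathbb{G}(t'))$ as $\theta n\to\infty$, which is exactly the conclusion of Theorem \ref{consistency_sigma}. I expect step (a) to be the main obstacle: it requires simultaneously replacing the empirical measures $\hat P_i$ by $P_i$ \emph{and} the plug-in influence functions $IF_i(\cdot;\underline{\hat P})$ by $IF_i(\cdot;\underline{P})$, and the fourth-moment hypotheses in Assumptions \ref{1st_expansion_truth} and \ref{1st_expansion_empirical} are precisely what make this decoupling go through via repeated Cauchy--Schwarz.
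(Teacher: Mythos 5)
Your proof is correct and mirrors the paper's argument: the same first-order decomposition $Q(t,\underline{\hat P}_\theta^\ast)-Q(t,\underline{\hat P})=L(t)+\varepsilon^\ast(t)$, the identification $\mathrm{Cov}_\ast(L(t),L(t'))=\sum_i s_i^{-1}\widehat m_i(t,t')$, a Cauchy--Schwarz/LLN argument (the paper's Lemma~\ref{influence_product}) to pass from $\widehat m_i$ to the true covariance, and a conditional Cauchy--Schwarz bound on the $\varepsilon^\ast$ cross terms. One minor imprecision: after Cauchy--Schwarz over $j$ the factors are empirical \emph{second} moments, not fourth moments (the $L^4$ assumption is used only to deduce $L^2$ convergence of $IF_i(\cdot;\underline{\hat P})-IF_i(\cdot;\underline P)$), but this does not affect the validity of the argument.
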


\begin{theorem}
\label{MSE_sigma_hat2}Suppose Assumptions \ref{balanced_data}, \ref{1st_expansion_truth} and \ref{1st_expansion_empirical} hold and the subsample ratio $\theta$ is chosen such that $\theta n=\omega(1)$. Then the same conclusion in Theorem \ref{MSE_sigma_hat} holds.
\end{theorem}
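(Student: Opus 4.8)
The plan is to bound the conditional mean-squared error through its bias--variance decomposition $\mathbb{E}_{\ast}[(\hat{\sigma}^{2}(t,t')-\sigma^{2}(t,t'))^{2}]=\mathrm{Var}_{\ast}(\hat{\sigma}^{2}(t,t'))+(\mathbb{E}_{\ast}[\hat{\sigma}^{2}(t,t')]-\sigma^{2}(t,t'))^{2}$, controlling each piece using only the first-order expansions of Assumptions \ref{1st_expansion_truth} and \ref{1st_expansion_empirical} in place of the explicit V-statistic structure that Assumption \ref{finite_horizon_model} affords. The fact that makes this work is that $\hat{\sigma}^{2}(t,t')$ is $\theta n$ times a sample covariance of the $B$ conditionally i.i.d.\ pairs $(\hat{Q}(t,\underline{\hat{P}}_{\theta}^{\ast b}),\hat{Q}(t',\underline{\hat{P}}_{\theta}^{\ast b}))$, each of which is an inner Monte Carlo average over $R_{s}$ runs; this is precisely the generic nested estimator whose exact conditional mean and variance are given by the formulas in Lemma \ref{general_covariance_undebias}. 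So the proof amounts to substituting the expansions into those formulas and reading off orders---exactly as in the proof of Theorem \ref{MSE_sigma_hat}, but with the combinatorial bookkeeping for V-statistics replaced by invocations of the assumed moment bounds.

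First I would treat the bias. Conditioning on the outer resample and applying the law of total covariance gives $\mathbb{E}_{\ast}[\hat{\sigma}^{2}(t,t')]=\sigma^{2}(t,t')+\theta n\,\mathbb{E}_{\ast}[\mathrm{Cov}(\hat{Q}(t,\underline{\hat{P}}_{\theta}^{\ast}),\hat{Q}(t',\underline{\hat{P}}_{\theta}^{\ast})\mid\underline{\hat{P}}_{\theta}^{\ast})]$; since the simulated outputs are i.i.d.\ given the resample, the inner covariance equals $R_{s}^{-1}(Q(t\wedge t',\underline{\hat{P}}_{\theta}^{\ast})-Q(t,\underline{\hat{P}}_{\theta}^{\ast})Q(t',\underline{\hat{P}}_{\theta}^{\ast}))$, so the bias is $(\theta n/R_{s})\,\mathbb{E}_{\ast}[Q(t\wedge t',\underline{\hat{P}}_{\theta}^{\ast})-Q(t,\cdot)Q(t',\cdot)]$. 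Using Assumption \ref{balanced_data} to get $s=\theta n(1+o(1))$, and the consistency $Q(\cdot,\underline{\hat{P}})\to Q(\cdot,\underline{P})$ that follows from Assumption \ref{1st_expansion_truth} (i.e.\ Assumption \ref{convergence_in_p_to_truth}) together with the expansion of Assumption \ref{1st_expansion_empirical} to show the bracketed expectation converges in probability to $Q(t\wedge t',\underline{P})-Q(t,\underline{P})Q(t',\underline{P})$, the bias is $\Theta_{p}(\theta n/R_{s})=\Theta_{p}(s/R_{s})$ whenever $0<Q(t,\underline{P}),Q(t',\underline{P})<1$: it is $o_{p}(1)$ iff $R_{s}=\omega(s)$, and is bounded away from zero in probability along any subsequence on which $R_{s}/s$ stays bounded.

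Next the variance. Writing $U_{b}=\hat{Q}(t,\underline{\hat{P}}_{\theta}^{\ast b})$, $V_{b}=\hat{Q}(t',\underline{\hat{P}}_{\theta}^{\ast b})$, $\mu_{U}=\mathbb{E}_{\ast}[U_{1}]$, the sample-covariance variance formula gives $\mathrm{Var}_{\ast}(\hat{\sigma}^{2})=\tfrac{(\theta n)^{2}}{B}\mathrm{Var}_{\ast}((U_{1}-\mu_{U})(V_{1}-\mu_{V}))$ up to corrections of relative order $1/B$. I would decompose $U_{1}-\mu_{U}$ into the outer-resample linear term $\sum_{i}\int IF_{i}(t,x;\underline{\hat{P}})\,d(\hat{P}_{i,s_{i}}^{\ast}-\hat{P}_{i})(x)$---a conditionally centered average of order $\Theta_{p}(s^{-1/2})$ when $\sum_{i}\mathbb{E}_{P_{i}}[IF_{i}^{2}(t,X_{i};\underline{P})]>0$, using the $IF$ fourth-moment bound in Assumption \ref{1st_expansion_truth} and the $L^{4}$ convergence $IF_{i}(\cdot;\underline{\hat{P}})\to IF_{i}(\cdot;\underline{P})$ in Assumption \ref{1st_expansion_empirical}---plus the inner Monte Carlo term of order $\Theta_{p}(R_{s}^{-1/2})$ (conditional second moment $\asymp Q(t,\cdot)(1-Q(t,\cdot))/R_{s}$), plus the centered remainder $\varepsilon^{\ast}(t)-\mathbb{E}_{\ast}\varepsilon^{\ast}(t)$. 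Expanding the quartic $\mathrm{Var}_{\ast}((U_{1}-\mu_{U})(V_{1}-\mu_{V}))$ and bounding every term carrying $\varepsilon^{\ast}$ or the gap $IF_{i}(\cdot;\underline{\hat{P}})-IF_{i}(\cdot;\underline{P})$ by Cauchy--Schwarz via the sharp rates $\mathbb{E}_{\ast}[(\varepsilon^{\ast})^{2}]=o_{p}((\theta n)^{-1})$, $\mathbb{E}_{\ast}[(\varepsilon^{\ast}-\mathbb{E}_{\ast}\varepsilon^{\ast})^{2}]=o_{p}((\theta n)^{-1})$, $\mathbb{E}_{\ast}[(\varepsilon^{\ast})^{4}]=o_{p}((\theta n)^{-2})$ from Assumption \ref{1st_expansion_empirical}, one obtains $\mathrm{Var}_{\ast}((U_{1}-\mu_{U})(V_{1}-\mu_{V}))=\Theta_{p}(\min(s,R_{s})^{-2})$ under the non-degeneracy condition and $O_{p}(\min(s,R_{s})^{-2})$ in general, hence $\mathrm{Var}_{\ast}(\hat{\sigma}^{2})=\Theta_{p}((\theta n)^{2}/(B\min(s,R_{s})^{2}))$, which under $R_{s}=\omega(s)$ reduces to $\Theta_{p}(1/B)$. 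Combined with the bias estimate this gives sufficiency of $B=\omega(1),R_{s}=\omega(s)$ for $\mathbb{E}_{\ast}[(\hat{\sigma}^{2}-\sigma^{2})^{2}]=o_{p}(1)$; under the non-degeneracy condition it is also necessary, since if $R_{s}/s$ is bounded along a subsequence the squared bias does not vanish, and otherwise (so $\min(s,R_{s})=s$) a bounded-$B$ subsequence makes $\mathrm{Var}_{\ast}(\hat{\sigma}^{2})\asymp 1/B$ not vanish.

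The main obstacle is the last step's fourth-order moment accounting: one must verify that substituting the first-order expansion of $Q(\cdot,\underline{\hat{P}}_{\theta}^{\ast})$---with a sampling-dependent influence function and a stochastic remainder---into the quartic conditional moments appearing in the variance of a sample covariance produces only lower-order contributions from every term built out of $\varepsilon^{\ast}$ or out of $IF_{i}(\cdot;\underline{\hat{P}})-IF_{i}(\cdot;\underline{P})$, relative to the leading $\min(s,R_{s})^{-2}$ term. The moment bounds in Assumptions \ref{1st_expansion_truth} and \ref{1st_expansion_empirical} are engineered to supply precisely this, so the difficulty is careful bookkeeping rather than a genuinely new idea; under Assumption \ref{finite_horizon_model} the identical accounting is carried out concretely via the Hoeffding/ANOVA decomposition of the output V-statistic, and Theorem \ref{verification_general_assumptions} confirms that the two routes are consistent.
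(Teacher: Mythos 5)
Your proposal is correct and follows essentially the paper's own route: the paper's proof of Theorem \ref{MSE_sigma_hat2} simply observes that the proof of Theorem \ref{MSE_sigma_hat} never uses the V-statistic structure directly but only the expansion properties packaged in Lemma \ref{computation_moments} and the exact conditional mean/variance formulas of Lemma \ref{general_covariance_undebias}, which is precisely the bias--variance bookkeeping you carry out (your bias identity via the law of total covariance is the paper's $\mathbb{E}_{\ast}[\hat\sigma^2]-\sigma^2=(\theta n/R_s)\mathbb{E}_\ast[\varepsilon^X\varepsilon^Y]$, and your quartic moment expansion reproduces Lemma \ref{computation_moments}). The one place you are slightly informal is the claim that the sample-covariance variance is $\tfrac{(\theta n)^2}{B}\mathrm{Var}_\ast((U_1-\mu_U)(V_1-\mu_V))$ ``up to corrections of relative order $1/B$'' --- a priori the subtraction involving $\bar{\bar X},\bar{\bar Y}$ contributes a negative term of the same $1/B$ order, so one must verify (as the paper does in Lemma \ref{general_covariance_undebias}, where each bracketed piece of the rearranged variance formula is shown nonnegative) that the leading $\Theta_p((\theta n)^2/(Bs^2))$ piece survives; under the $IF$ non-degeneracy it reduces to the strictly positive $\mathbb{E}[(\tau^X)^2]\mathbb{E}[(\tau^Y)^2]+(\mathbb{E}[\tau^X\tau^Y])^2$ rather than cancelling, so your conclusion is safe, but the step deserves to be made explicit in a full write-up.
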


\begin{theorem}
\label{validity_alg_cov2} Suppose Assumptions \ref{balanced_data}, \ref{1st_expansion_truth} and \ref{1st_expansion_empirical} hold. Moreover, suppose the configuration in Algorithm \ref{alg:subsample} satisfies%
\begin{equation*}
\theta=\omega(1/n),\quad B=\omega(1),\quad R_{s}=\omega(s).
\end{equation*}
Then the same conclusion in Theorem \ref{validity_alg_cov} holds.
\end{theorem}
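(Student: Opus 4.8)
The plan is to assemble the two ingredients already available under the general assumptions, namely Theorem \ref{consistency_sigma2} (validity of the bootstrap rationale) and Theorem \ref{MSE_sigma_hat2} (smallness of the conditional Monte Carlo MSE), via the decomposition \eqref{cov_est_decomp}
\[
\hat{\sigma}^{2}(t,t^{\prime})-\mathrm{Cov}(\mathbb{G}(t),\mathbb{G}(t^{\prime}))=\bigl(\hat{\sigma}^{2}(t,t^{\prime})-\sigma^{2}(t,t^{\prime})\bigr)+\bigl(\sigma^{2}(t,t^{\prime})-\mathrm{Cov}(\mathbb{G}(t),\mathbb{G}(t^{\prime}))\bigr).
\]
First I would check that the configuration hypotheses feed into the hypotheses of those two theorems. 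The condition $\theta=\omega(1/n)$ is exactly $\theta n=\omega(1)$, and since $s_{i}=\lfloor\theta n_{i}\rfloor$ with $n_{i}/n\to\beta_{i}>0$ by Assumption \ref{balanced_data}, the subsample size $s=\sum_{i=1}^{m}s_{i}/m$ satisfies $s=\Theta(\theta n)=\omega(1)$; in particular $\theta n\to\infty$ and $s\to\infty$. Hence Theorem \ref{consistency_sigma2} applies and yields $\sigma^{2}(t,t^{\prime})\rightarrow\mathrm{Cov}(\mathbb{G}(t),\mathbb{G}(t^{\prime}))$ in probability, which disposes of the bootstrap-error term. Likewise, $B=\omega(1)$ and $R_{s}=\omega(s)$ together with $\theta n=\omega(1)$ are precisely the hypotheses under which Theorem \ref{MSE_sigma_hat2} gives $\mathbb{E}_{\ast}[(\hat{\sigma}^{2}(t,t^{\prime})-\sigma^{2}(t,t^{\prime}))^{2}]=o_{p}(1)$.

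Next I would upgrade this conditional-MSE bound to an unconditional stochastic-order statement. Write $\Delta:=\hat{\sigma}^{2}(t,t^{\prime})-\sigma^{2}(t,t^{\prime})$ and $M:=\mathbb{E}_{\ast}[\Delta^{2}]$. For any $\varepsilon>0$, the conditional Markov inequality gives $\mathbb{P}_{\ast}(|\Delta|>\varepsilon)\le\min\{1,M/\varepsilon^{2}\}$ almost surely; taking expectations and using that $\min\{1,M/\varepsilon^{2}\}$ is bounded by $1$ and tends to $0$ in probability (as $M=o_{p}(1)$), the bounded convergence theorem yields $\mathbb{P}(|\Delta|>\varepsilon)\to0$, i.e.\ $\Delta=o_{p}(1)$. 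Combining the two $o_{p}(1)$ terms in \eqref{cov_est_decomp} gives $\hat{\sigma}^{2}(t,t^{\prime})-\mathrm{Cov}(\mathbb{G}(t),\mathbb{G}(t^{\prime}))=o_{p}(1)$ for every fixed pair $t,t^{\prime}$, which is \eqref{decomposition_est_err}. Finally, for consistency of the matrix $V$, note that $V$ has $(i,j)$-entry $\hat{\sigma}^{2}(t_{i},t_{j})$ while the covariance matrix of $(\mathbb{G}(t_{1}),\ldots,\mathbb{G}(t_{k}))$ has $(i,j)$-entry $\mathrm{Cov}(\mathbb{G}(t_{i}),\mathbb{G}(t_{j}))$; since $k$ is fixed and finite, the entrywise convergence just established implies convergence in probability in any fixed matrix norm, which is the asserted consistency.

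The main point to watch is that there is essentially no new difficulty in this theorem: all the analytic work sits inside Theorems \ref{consistency_sigma2} and \ref{MSE_sigma_hat2}, so the proof reduces to a careful bookkeeping of hypotheses. The only two steps needing a little care are (a) confirming that $s$, and hence each $s_{i}$, diverges so that the bootstrap-consistency regime $\theta n\to\infty$ is genuinely in force (rather than merely $\theta n=\omega(1)$ being assumed formally), and (b) the conditional-to-unconditional passage for the Monte Carlo error, where one must invoke the bounded convergence theorem on the truncated conditional tail probability rather than a naive iterated-expectation bound, since $M$ is only $o_{p}(1)$ and need not be $L^{1}$-small.
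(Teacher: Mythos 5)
Your proof is correct and follows essentially the same route as the paper's: the paper also decomposes the estimation error via \eqref{cov_est_decomp} and invokes Theorems \ref{consistency_sigma2} and \ref{MSE_sigma_hat2}. You fill in a detail the paper leaves implicit, namely the passage from the conditional MSE bound $\mathbb{E}_{\ast}[\Delta^2]=o_p(1)$ to the unconditional statement $\Delta=o_p(1)$ via conditional Chebyshev and bounded convergence on the truncated tail probability, which is the right way to make that step rigorous.
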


\begin{corollary}
\label{min_budget2} Suppose Assumptions \ref{balanced_data}, \ref{1st_expansion_truth} and \ref{1st_expansion_empirical} hold. Then the same conclusion in Corollary \ref{min_budget} holds.
\end{corollary}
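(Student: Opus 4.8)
The plan is to prove Corollary~\ref{min_budget2} in exactly the same way Corollary~\ref{min_budget} is established, the only substitution being that the consistency input is now Theorem~\ref{validity_alg_cov2} --- the general-assumptions analogue of Theorem~\ref{validity_alg_cov}, itself assembled from Theorems~\ref{consistency_sigma2} and~\ref{MSE_sigma_hat2} --- rather than Theorem~\ref{validity_alg_cov}. Since Theorem~\ref{validity_alg_cov2} states that any configuration obeying \eqref{configuration_B_R_theta} yields $\hat{\sigma}^{2}(t,t^{\prime})-\mathrm{Cov}(\mathbb{G}(t),\mathbb{G}(t^{\prime}))=o_{p}(1)$ under Assumptions~\ref{balanced_data}, \ref{1st_expansion_truth} and~\ref{1st_expansion_empirical}, it suffices to verify two purely arithmetic facts about the total budget $N=BR_{s}$: (i) \eqref{configuration_B_R_theta} forces $N=\omega(1)$, so a bounded budget is never admissible; and (ii) conversely, for every budget sequence with $N=\omega(1)$ one can choose $\theta,B,R_{s}$ so that \eqref{configuration_B_R_theta} holds with $BR_{s}\le N$.

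For (i): if \eqref{configuration_B_R_theta} holds, then $\theta=\omega(1/n)$ gives $s=\lfloor\theta n\rfloor\to\infty$, hence $R_{s}=\omega(s)\to\infty$, and together with $B=\omega(1)\to\infty$ we obtain $N=BR_{s}\to\infty$, i.e.\ $N=\omega(1)$. For (ii), given any $N=\omega(1)$, take $s=\min(\lceil N^{1/4}\rceil,n)$ with $\theta=s/n\in(0,1]$, $R_{s}=\lceil N^{1/2}\rceil$, and $B=\lfloor N/R_{s}\rfloor$. Then $BR_{s}\le N$; since $s\to\infty$ we have $\theta n=s=\omega(1)$ so $\theta=\omega(1/n)$; one checks $R_{s}/s\to\infty$ (it equals $\Theta(N^{1/4})$ in the generic case $s=\lceil N^{1/4}\rceil$, and is at least $\lceil N^{1/2}\rceil/n\ge n$ in the residual case $s=n$), so $R_{s}=\omega(s)$; and $B=\Theta(N^{1/2})\to\infty$, so $B=\omega(1)$. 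Thus \eqref{configuration_B_R_theta} holds and Theorem~\ref{validity_alg_cov2} gives consistency of $\hat{\sigma}^{2}(t,t^{\prime})$ and hence of the matrix $V$. Combining (i) and (ii), the minimal order of admissible budget is exactly $N=\omega(1)$, which is the assertion of Corollary~\ref{min_budget} now under the weaker Assumptions~\ref{balanced_data}, \ref{1st_expansion_truth} and~\ref{1st_expansion_empirical}. (Note that the construction in (ii) is, up to constants, the optimal allocation identified in Theorem~\ref{overall_opt_config}, namely $s^{\ast}=\Theta(N^{1/4})$, $R_{s}^{\ast}=B^{\ast}=\Theta(N^{1/2})$.)

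There is no genuine obstacle here: the corollary is a bookkeeping consequence of Theorem~\ref{validity_alg_cov2}, and the only points needing a little care are that the floor/ceiling rounding preserves the stated rates and keeps $B\ge 1$ for all large $n$ (both automatic because $N\to\infty$), and that $\theta$ stays in $(0,1]$ (handled by the $\min(\cdot,n)$ above). Finally, this is consistent with Theorem~\ref{verification_general_assumptions}, which shows the finite-horizon model of Assumption~\ref{finite_horizon_model} implies Assumptions~\ref{convergence_in_p_to_truth}--\ref{3rd_expansion_empirical}, so Corollary~\ref{min_budget2} indeed generalizes Corollary~\ref{min_budget}.
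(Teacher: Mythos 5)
Your proof is correct and follows the same strategy as the paper: the arithmetic necessity/sufficiency argument from Corollary~\ref{min_budget}, with Theorem~\ref{validity_alg_cov2} supplying consistency in place of Theorem~\ref{validity_alg_cov}. The only difference from the paper is your choice of witness configuration in the sufficiency step (you use the allocation $s=\Theta(N^{1/4})$, $R_s=B=\Theta(N^{1/2})$ matching Theorem~\ref{overall_opt_config}, while the paper uses $\theta=\sqrt{N}/n$, $B=N^{1/3}$, $R_s=N^{2/3}$), and you are somewhat more careful about keeping $\theta\le 1$ via the $\min(\cdot,n)$ cap — both cosmetic.
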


\begin{theorem}
\label{optimal_B_R2} Suppose Assumptions \ref{balanced_data}, \ref{1st_expansion_truth}, \ref{1st_expansion_empirical} and the non-degeneracy condition (\ref{non-degeneracy}) hold. Then the same conclusion in Theorem \ref{optimal_B_R} holds.
\end{theorem}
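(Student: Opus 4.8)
The plan is to verify that the proof of Theorem~\ref{optimal_B_R} uses Assumption~\ref{finite_horizon_model} only through the expansion facts of Assumptions~\ref{1st_expansion_truth} and \ref{1st_expansion_empirical} (and the consistency of Assumption~\ref{convergence_in_p_to_truth}, which itself follows from \ref{1st_expansion_truth}), so that the argument goes through essentially verbatim once the finite-horizon hypothesis is replaced by these. The starting point, valid for any model, is the conditional bias--variance decomposition $\mathbb{E}_{\ast}[(\hat{\sigma}^{2}(t,t^{\prime})-\sigma^{2}(t,t^{\prime}))^{2}]=\mathrm{Var}_{\ast}(\hat{\sigma}^{2}(t,t^{\prime}))+(\mathbb{E}_{\ast}[\hat{\sigma}^{2}(t,t^{\prime})]-\sigma^{2}(t,t^{\prime}))^{2}$ together with the generic mean and variance formulas for the empirical covariance estimator in Lemma~\ref{general_covariance_undebias}. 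For the squared bias, the law of total covariance gives $\mathbb{E}_{\ast}[\hat{\sigma}^{2}(t,t^{\prime})]=\sigma^{2}(t,t^{\prime})+\theta n\,\mathbb{E}_{\ast}[\mathrm{Cov}(\hat{Q}(t,\cdot),\hat{Q}(t^{\prime},\cdot)\mid\underline{\hat{P}}_{\theta}^{\ast})]$, and the inner conditional covariance equals $R_{s}^{-1}(Q(\min(t,t^{\prime}),\underline{\hat{P}}_{\theta}^{\ast})-Q(t,\underline{\hat{P}}_{\theta}^{\ast})Q(t^{\prime},\underline{\hat{P}}_{\theta}^{\ast}))$; using $Q(t,\underline{\hat{P}})\overset{p}{\rightarrow}Q(t,\underline{P})$ and the non-degeneracy $0<Q(t,\underline{P}),Q(t^{\prime},\underline{P})<1$, which makes the limiting constant strictly positive, the squared bias is of exact order $\Theta_{p}((\theta n)^{2}/R_{s}^{2})$.

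For the conditional variance, I would expand $\hat{Q}(t,\underline{\hat{P}}_{\theta}^{\ast b})$ into its leading bootstrap first-order term $\sum_{i}s_{i}^{-1}\sum_{j}IF_{i}(t,X_{i,j}^{\ast b};\underline{\hat{P}})$ (Assumption~\ref{1st_expansion_empirical}), the remainder $\varepsilon^{\ast}(t)$, and the inner Monte Carlo fluctuation of the $R_{s}$ simulation runs. The fourth-moment bounds on $IF_{i}(t,\cdot;\underline{P})$, the $L^{4}$ convergence $IF_{i}(t,\cdot;\underline{\hat{P}})\to IF_{i}(t,\cdot;\underline{P})$, the mean/variance/fourth-moment bounds on $\varepsilon^{\ast}$, the non-degeneracy $\sum_{i}\mathbb{E}_{P_{i}}[IF_{i}^{2}(t,X_{i};\underline{P})]>0$, and $0<Q(t,\underline{\hat{P}}_{\theta}^{\ast})<1$ then give that the variance of a single product term $(\hat{Q}(t,\underline{\hat{P}}_{\theta}^{\ast b})-\bar{Q}(t))(\hat{Q}(t^{\prime},\underline{\hat{P}}_{\theta}^{\ast b})-\bar{Q}(t^{\prime}))$ has order $\Theta_{p}(s^{-2}+s^{-1}R_{s}^{-1}+R_{s}^{-2})$. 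Using Assumption~\ref{balanced_data} (so $s_{i}\asymp s\asymp\theta n$), multiplying by $(\theta n)^{2}$ and dividing by $B$, with Lemma~\ref{general_covariance_undebias} supplying the exact mean/variance formulas and the lower-order corrections from centering at $\bar{Q}(t),\bar{Q}(t^{\prime})$, yields $\mathrm{Var}_{\ast}(\hat{\sigma}^{2}(t,t^{\prime}))=\Theta_{p}(B^{-1}(1+s/R_{s}+s^{2}/R_{s}^{2}))$.

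Combining the two pieces, $\mathbb{E}_{\ast}[(\hat{\sigma}^{2}(t,t^{\prime})-\sigma^{2}(t,t^{\prime}))^{2}]=\Theta_{p}((\theta n)^{2}/R_{s}^{2}+B^{-1}+s/(BR_{s})+s^{2}/(BR_{s}^{2}))$; substituting $B=N/R_{s}$ and using $N=\omega(\theta n)=\omega(s)$, the last two terms are dominated at any near-optimal $R_{s}$, so the order reduces to $(\theta n)^{2}/R_{s}^{2}+R_{s}/N$, which is minimized at $R_{s}^{\ast}=\Theta(((\theta n)^{2}N)^{1/3})=\Theta(N^{1/3}s^{2/3})$, $B^{\ast}=N/R_{s}^{\ast}$, with minimal order $\Theta_{p}((\theta n)^{2/3}/N^{2/3})$; one checks $R_{s}^{\ast}=\omega(s)$ and $B^{\ast}=\omega(1)$, so the configuration is admissible, and if the non-degeneracy condition is dropped the same expressions remain valid upper bounds, giving $O_{p}((\theta n)^{2/3}/N^{2/3})$ under $(R_{s}^{\ast},B^{\ast})$. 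The main obstacle is the variance step: one must carefully disentangle the bootstrap-resampling fluctuation, the inner Monte Carlo randomness that is \emph{shared} between the two time points $t$ and $t^{\prime}$, and the expansion remainder $\varepsilon^{\ast}$, and verify through the assumed moment bounds that all cross-moments (and the corrections from centering at the empirical means rather than the true conditional means) contribute only lower-order terms; everything else transfers directly from the proof of Theorem~\ref{optimal_B_R} with Assumption~\ref{finite_horizon_model} swapped for Assumptions~\ref{1st_expansion_truth} and \ref{1st_expansion_empirical}.
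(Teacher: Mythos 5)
Your proposal is correct and takes essentially the same approach as the paper: the paper's own proof of Theorem~\ref{optimal_B_R2} is a one-line observation that the proof of Theorem~\ref{optimal_B_R} invokes Assumption~\ref{finite_horizon_model} only through Theorem~\ref{verification_general_assumptions} (which supplies Assumptions~\ref{1st_expansion_truth} and~\ref{1st_expansion_empirical}), so assuming the latter directly makes the original argument go through verbatim. Your reconstruction of the bias--variance bookkeeping (conditional bias of order $\Theta_p(\theta n/R_s)$ via the inner-loop conditional covariance, conditional variance of order $\Theta_p(1/B)$ under $R_s=\omega(s)$, then optimizing $s^2/R_s^2 + R_s/N$ to get $R_s^*=\Theta(N^{1/3}s^{2/3})$) faithfully mirrors the combination of Lemma~\ref{general_covariance_undebias} and Lemma~\ref{computation_moments} underlying Theorem~\ref{optimal_B_R}, with the needed consistency of $Q(t,\underline{\hat P})$ correctly noted to follow from Assumption~\ref{1st_expansion_truth}.
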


\begin{theorem}
\label{error_true_bootstrap2}Suppose Assumptions \ref{balanced_data}, \ref{1st_expansion_truth}, \ref{1st_expansion_empirical}, \ref{3rd_IF_at_truth} and \ref{3rd_expansion_empirical} hold. Then the same conclusion in Theorem \ref{error_true_bootstrap} holds.
\end{theorem}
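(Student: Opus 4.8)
The plan is to start from the identity $\sigma^{2}(t,t')=\theta n\,\mathrm{Cov}_{\ast}(Q(t,\underline{\hat{P}}_{\theta}^{\ast}),Q(t',\underline{\hat{P}}_{\theta}^{\ast}))$, insert the third order expansion of $Q(\cdot,\underline{\hat{P}}_{\theta}^{\ast})$ at $\underline{\hat{P}}$ supplied by Assumption \ref{3rd_expansion_empirical}, compute the conditional covariance over the resampling randomness term by term, and finally linearize the data-dependent coefficients through the Taylor expansion of $IF_{i}(t,\cdot;\underline{\hat{P}})$ around $IF_{i}(t,\cdot;\underline{P})$ (also in Assumption \ref{3rd_expansion_empirical}) to read off $\mathcal{E}_{1}$. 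Concretely, I would first write $Q(t,\underline{\hat{P}}_{\theta}^{\ast})-Q(t,\underline{\hat{P}})=L_{t}+\tfrac12 Q_{t}^{(2)}+\tfrac16 Q_{t}^{(3)}+\varepsilon_{3}^{\ast}(t)$, where $L_{t},Q_{t}^{(2)},Q_{t}^{(3)}$ are the linear, quadratic and cubic forms in $\hat{P}_{i,s_{i}}^{\ast}-\hat{P}_{i}$ built from $IF_{i}$, $IF_{i_{1}i_{2}}$ and $IF_{i_{1}i_{2}i_{3}}$ at $\underline{\hat{P}}$. Since $\mathrm{Cov}_{\ast}$ ignores additive constants, $\sigma^{2}(t,t')/(\theta n)$ is the bilinear expansion of $\mathrm{Cov}_{\ast}$ of these four pieces against their $t'$-counterparts; and the marginal zero-mean properties of the influence functions at $\underline{\hat{P}}$ (the Definition of the third order expansion with $\underline{P}'=\underline{\hat{P}}$, together with the zero-mean relation in Assumption \ref{1st_expansion_empirical}) let me rewrite each of $L_{t},Q_{t}^{(2)},Q_{t}^{(3)}$ as a centered multi-sample $V$-statistic in the i.i.d.\ resamples $\{X_{i,a}^{\ast}\}$ drawn from $\hat{P}_{i}$.

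Next I would evaluate the conditional covariances. Because the resamples are independent across input distributions and i.i.d.\ within each, a conditional covariance between two such $V$-statistics reduces, by matching resample indices, to a sum over configurations in which indices coincide, and the marginal zero-mean properties annihilate every configuration leaving a free index. This yields $\mathrm{Cov}_{\ast}(L_{t},L_{t'})=\sum_{i}\tfrac1{s_{i}}\widehat{C}_{i}(t,t')$ with $\widehat{C}_{i}(t,t')=\tfrac1{n_{i}}\sum_{j}IF_{i}(t,X_{i,j};\underline{\hat{P}})IF_{i}(t',X_{i,j};\underline{\hat{P}})$; $\tfrac12\mathrm{Cov}_{\ast}(L_{t},Q_{t'}^{(2)})=\sum_{i}\tfrac1{2s_{i}^{2}}\,\tfrac1{n_{i}}\sum_{j}IF_{i}(t,X_{i,j};\underline{\hat{P}})IF_{ii}(t',X_{i,j},X_{i,j};\underline{\hat{P}})$; $\tfrac16\mathrm{Cov}_{\ast}(L_{t},Q_{t'}^{(3)})=\sum_{i}\sum_{i'}\tfrac1{2s_{i}s_{i'}}\,\tfrac1{n_{i}}\sum_{j}IF_{i}(t,X_{i,j};\underline{\hat{P}})\,\mathbb{E}_{\hat{P}_{i'}}[IF_{ii'i'}(t',X_{i,j},\cdot,\cdot;\underline{\hat{P}})]$; $\tfrac14\mathrm{Cov}_{\ast}(Q_{t}^{(2)},Q_{t'}^{(2)})=\sum_{i_{1},i_{2}}\tfrac1{2s_{i_{1}}s_{i_{2}}}\,\mathbb{E}_{\hat{P}_{i_{1}}\otimes\hat{P}_{i_{2}}}[IF_{i_{1}i_{2}}(t,\cdot,\cdot;\underline{\hat{P}})IF_{i_{1}i_{2}}(t',\cdot,\cdot;\underline{\hat{P}})]$; the two $t\leftrightarrow t'$ mirror images of the cross terms give the remaining contributions; and everything else (two copies of $Q^{(3)}$, $Q^{(2)}$ with $Q^{(3)}$, or any term involving $\varepsilon_{3}^{\ast}$) I expect to be $o_{p}(s^{-2})$ conditionally by order counting together with $\mathbb{E}_{\ast}[\varepsilon_{3}^{\ast}(t)^{2}]=o_{p}(s^{-3})$ and Cauchy--Schwarz, hence $o_{p}(1/s)$ after multiplying by $\theta n$.

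I would then multiply by $\theta n$ and pass to the true input distributions. Using $\theta n/s_{i}=(n/n_{i})(1+\mathrm{frac}(\theta n_{i})/s_{i})$, the main piece is $\sum_{i}\tfrac{n}{n_{i}}\widehat{C}_{i}(t,t')+\sum_{i}\tfrac{n\,\mathrm{frac}(\theta n_{i})}{n_{i}s_{i}}\widehat{C}_{i}(t,t')$. Substituting the Taylor expansion $IF_{i}(t,X_{i,1};\underline{\hat{P}})=IF_{i}(t,X_{i,1};\underline{P})+\sum_{i'}\int IF_{ii'}(t,X_{i,1},x;\underline{P})d(\hat{P}_{i'}-P_{i'})(x)-\int IF_{i}(t,x;\underline{P})d(\hat{P}_{i}-P_{i})(x)+\varepsilon_{IF_{i}}(t,X_{i,1})$, noting that each correction integral is a mean-zero average of order $n^{-1/2}$ (by the zero-mean relations) with $\mathbb{E}[\varepsilon_{IF_{i}}^{2}]=o(1/n)$, I expand the product in $\widehat{C}_{i}$: the piece $\tfrac1{n_{i}}\sum_{j}IF_{i}(t,X_{i,j};\underline{P})IF_{i}(t',X_{i,j};\underline{P})$ equals $\mathrm{Cov}_{P_{i}}(IF_{i}(t,X_{i};\underline{P}),IF_{i}(t',X_{i};\underline{P}))$ plus a mean-zero $n^{-1/2}$-order fluctuation, the $n^{-1/2}$-order linearized pieces summed with weights $1/\beta_{i}$ form precisely the Hájek projections inside $\mathrm{Var}(\mathcal{E}_{1})$, the degenerate second order $U$-statistic remainders from that linearization are $O_{p}(1/n)=o_{p}(1/\sqrt{n})$, and the $\varepsilon_{IF_{i}}$ contributions are $o_{p}(1/\sqrt{n})$ by Cauchy--Schwarz. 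Combining this with $n/n_{i}=1/\beta_{i}+(n/n_{i}-1/\beta_{i})$, the convergence $\widehat{C}_{i}(t,t')\overset{p}{\to}\mathrm{Cov}_{P_{i}}(IF_{i}(t,X_{i};\underline{P}),IF_{i}(t',X_{i};\underline{P}))$ (via the $L^{4}$ convergence of $IF_{i}(\cdot;\underline{\hat{P}})$), and $\mathrm{frac}(\theta n_{i})/s_{i}=O(1/s)$, the main piece becomes the first line of $\mathbb{E}[\mathcal{E}_{1}]$ plus the variance-carrying Hájek part plus $o_{p}(\sum_{i}|n/n_{i}-1/\beta_{i}|+1/s)$; for the other four terms, replacing $\underline{\hat{P}}$ by $\underline{P}$ in their coefficients (legitimate by the $L^{2}$ convergence of the higher order influence functions in Assumption \ref{3rd_expansion_empirical}, with $\theta n/(s_{i}s_{i'})=\Theta(1/s)$) and using the zero-mean relations of the influence functions converts each into the matching line of $\mathbb{E}[\mathcal{E}_{1}]$ up to $o_{p}(1/s)$.

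Finally, subtracting $\mathrm{Cov}(\mathbb{G}(t),\mathbb{G}(t'))=\sum_{i}\tfrac1{\beta_{i}}\mathrm{Cov}_{P_{i}}(IF_{i}(t,X_{i};\underline{P}),IF_{i}(t',X_{i};\underline{P}))$ leaves $\mathcal{E}_{1}$ at leading order; taking $\mathbb{E}$ of the six collected lines (the Hájek fluctuations vanishing in mean) yields the displayed $\mathbb{E}[\mathcal{E}_{1}]$, and since the fluctuation part of $\mathcal{E}_{1}$ is a sum over observations of independent mean-zero per-observation contributions (one Hájek projection per data point from each input distribution), $\mathrm{Var}(\mathcal{E}_{1})=\sum_{i}\tfrac1{n_{i}}$ times the variance of the generic contribution, which is the displayed formula; the stated orders $O(1/s+\sum_{i}|n/n_{i}-1/\beta_{i}|)$ and $O(1/n)$ then follow from the moment bounds in Assumptions \ref{1st_expansion_truth} and \ref{3rd_IF_at_truth} and from $s_{i}=\Theta(s)$, $n_{i}=\Theta(n)$. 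The hard part will be the combinatorial bookkeeping of the second step --- enumerating the surviving index-matching configurations across the first, second and third order multi-sample $V$-statistics with the right multiplicities and marginal-mean cancellations --- compounded, in the third step, by the need to track through the influence-function linearization exactly which $n^{-1/2}$-order pieces persist into $\mathcal{E}_{1}$ (and hence into $\mathrm{Var}(\mathcal{E}_{1})$) and which collapse into the $o_{p}$ remainder, since $\mathbb{E}[\mathcal{E}_{1}]=O(1/s)$ and $\sqrt{\mathrm{Var}(\mathcal{E}_{1})}=O(1/\sqrt{n})$ are not comparable without pinning down the rate of $\theta$.
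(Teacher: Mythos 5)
Your plan is correct and follows essentially the same route as the paper: the paper likewise passes to the (third-order) influence-function expansion of $Q(\cdot,\underline{\hat{P}}_{\theta}^{\ast})$ at $\underline{\hat{P}}$, evaluates the conditional covariance term by term using the marginal zero-mean and index-matching cancellations of the multi-sample $V$-statistic structure, and then linearizes $IF_i(\cdot;\underline{\hat P})$ around $\underline{P}$ via the Taylor expansion in Assumption \ref{3rd_expansion_empirical} to extract the $\mathrm{frac}(\theta n_i)/s_i$ and Hájek-projection pieces that make up $\mathcal E_1$. The only (inessential) difference is bookkeeping: the paper first uses the first-order expansion to split $\sigma^2(t,t')$ into a linear-linear term plus cross and remainder terms and then re-expands the remainder $\varepsilon^{\ast}$, computing the cross terms through $\mathbb{E}_{\ast}[\varepsilon^{\ast}(t')\,|\,X_{i,1}^{\ast}]-\mathbb{E}_{\ast}[\varepsilon^{\ast}(t')]$, whereas you go directly to the third-order decomposition and expand the bilinear $\mathrm{Cov}_{\ast}$; these reproduce the same surviving terms.
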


\begin{theorem}
\label{overall_opt_config2}Suppose Assumptions \ref{balanced_data}, \ref{1st_expansion_truth}, \ref{1st_expansion_empirical}, \ref{3rd_IF_at_truth} and \ref{3rd_expansion_empirical} hold. Then the same conclusion in Theorem \ref{overall_opt_config} holds.
\end{theorem}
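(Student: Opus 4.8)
The statement is the restatement of Theorem~\ref{overall_opt_config} with the finite-horizon Assumption~\ref{finite_horizon_model} replaced by the smoothness and expansion Assumptions~\ref{1st_expansion_truth}, \ref{1st_expansion_empirical}, \ref{3rd_IF_at_truth}, \ref{3rd_expansion_empirical}. The plan is to observe that the proof of Theorem~\ref{overall_opt_config} never invokes Assumption~\ref{finite_horizon_model} directly; it uses it only through the conclusions of Theorem~\ref{optimal_B_R} (the optimal split of the budget $N=BR_{s}$ between the outer and inner loops for the Monte Carlo error $\hat\sigma^{2}-\sigma^{2}$) and of Theorem~\ref{error_true_bootstrap} (the order of the bootstrap error $\sigma^{2}-\mathrm{Cov}(\mathbb G(t),\mathbb G(t'))$), together with the combination hypothesis ``$\hat\sigma^{2}-\mathrm{Cov}(\mathbb G(t),\mathbb G(t'))=\Theta_p(\mathcal O_1+\mathcal O_2)$ whenever $\ldots$'' that is built into the statement itself. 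Since Theorems~\ref{optimal_B_R2} and \ref{error_true_bootstrap2} deliver exactly those same conclusions under the present assumption set, I would simply re-run the optimization argument of Theorem~\ref{overall_opt_config} with the ``general'' versions substituted for the finite-horizon ones.

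Concretely, I would fix $\theta$ (so that $s=\sum_{i=1}^m\lfloor\theta n_i\rfloor/m=\Theta(\theta n)$ by Assumption~\ref{balanced_data}) and read off the two error orders as functions of the configuration. From Theorem~\ref{optimal_B_R2}, tuning $R_{s}=\Theta(N^{1/3}s^{2/3})$, $B=N/R_{s}$ makes the conditional MSE of $\hat\sigma^{2}-\sigma^{2}$ of order $(\theta n)^{2/3}/N^{2/3}$, i.e.\ $\mathcal O_1=\Theta(s^{1/3}/N^{1/3})$, and this is the smallest order of that term achievable by any split of $N$. From Theorem~\ref{error_true_bootstrap2}, under the non-degeneracy condition~\eqref{non-degeneracy} and the hypothesized orders $\mathbb E[\mathcal E_1]=\Theta(s^{-1}+\sum_i|n/n_i-1/\beta_i|)$, $\mathrm{Var}(\mathcal E_1)=\Theta(1/n)$, the bootstrap error equals $\mathcal E_1+o_p(s^{-1}+n^{-1/2})=\Theta_p(s^{-1}+\sum_i|n/n_i-1/\beta_i|+n^{-1/2})$, so $\mathcal O_2=\Theta(s^{-1}+\sum_i|n/n_i-1/\beta_i|+n^{-1/2})$. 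Feeding these into the combination hypothesis gives
\[
\hat\sigma^{2}(t,t')-\mathrm{Cov}(\mathbb G(t),\mathbb G(t'))=\Theta_p\!\left(\frac{s^{1/3}}{N^{1/3}}+\frac1s+\sum_{i=1}^m\Big|\frac{n}{n_i}-\frac1{\beta_i}\Big|+\frac1{\sqrt n}\right).
\]
The remaining step is the one-dimensional optimization over $s$ (equivalently over $\theta=\Theta(s/n)$): only $s^{1/3}N^{-1/3}$ and $s^{-1}$ depend on $s$, and by AM--GM (or by setting $\tfrac{d}{ds}(s^{1/3}N^{-1/3}+s^{-1})=0$) their sum is minimized, up to constants, at $s^{4/3}=\Theta(N^{1/3})$, i.e.\ $s^{*}=\Theta(N^{1/4})$, at which both terms equal $\Theta(N^{-1/4})$. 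Translating back, $\theta^{*}=\Theta(s^{*}/n)=\Theta(N^{1/4}/n)$, and with $s=s^{*}$ Theorem~\ref{optimal_B_R2} gives $R_{s}^{*}=\Theta(N^{1/3}(s^{*})^{2/3})=\Theta(N^{1/2})$, $B^{*}=N/R_{s}^{*}=\Theta(N^{1/2})$, which is the claimed configuration, and substituting yields the minimal order~\eqref{minimal_estimation_error}. I would finish by checking that this configuration meets every hypothesis invoked---$\theta^{*}=\omega(1/n)$, $\theta^{*}n=s^{*}=\omega(1)$, $N=\omega(\theta^{*}n)$, $R_{s}^{*}=\omega(s^{*})$, $B^{*}=\omega(1)$---each of which follows from $N=\omega(1)$ and $s^{*}=\Theta(N^{1/4})=o(N)$.

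The optimization itself is elementary, so within this proof there is no real obstacle; the genuine content is already carried by Theorems~\ref{optimal_B_R2} and \ref{error_true_bootstrap2}. The two points one must still be careful about are: (i) that $\mathcal O_1=\Theta(s^{1/3}/N^{1/3})$ is a genuine two-sided rate---the upper side is conditional Chebyshev applied to the MSE, but the matching lower side needs the non-degeneracy condition~\eqref{non-degeneracy} so that the conditional variance of $\hat\sigma^{2}$ does not collapse faster than the budget allows---and (ii) that the conditional Monte Carlo fluctuation and the unconditional input-data fluctuation do not cancel when added, which is precisely what the ``$\Theta_p(\mathcal O_1+\mathcal O_2)$ whenever $\ldots$'' clause in the statement grants. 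The only ``general-assumption'' concern beyond the finite-horizon case is that Theorems~\ref{optimal_B_R2} and \ref{error_true_bootstrap2} have indeed been established from Assumptions~\ref{1st_expansion_truth}--\ref{3rd_expansion_empirical} alone, which by design they have; no appeal to Assumption~\ref{finite_horizon_model} (or to Theorem~\ref{verification_general_assumptions}) is needed or wanted here.
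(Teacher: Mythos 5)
Your proposal is correct and takes essentially the same approach as the paper. The paper proves Theorem~\ref{overall_opt_config2} (together with the other general-assumption variants) in a single line by observing that the proof of Theorem~\ref{overall_opt_config} uses Assumption~\ref{finite_horizon_model} only through the conclusions of Theorems~\ref{optimal_B_R} and~\ref{error_true_bootstrap}, whose general-assumption counterparts Theorems~\ref{optimal_B_R2} and~\ref{error_true_bootstrap2} deliver the same statements; you simply unpack that observation by re-running the one-dimensional optimization over $s$ and verifying the admissibility of the resulting configuration.
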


\section{Estimating the Covariance of Two Conditional Expectations}\label{sec: var_of_cov}

We consider two random variables $X,Y$ and their conditional expectations
given another random element $Z$. We are interested in the covariance of the
conditional expectations, i.e., $\mathrm{Cov}(\mathbb{E}[X|Z],\mathbb{E}%
[Y|Z])$. The two random variables $X,Y$ are not necessarily independent. This
is a generalization of the variance estimation in \cite{sun2011efficient}. By
modifying the procedure in \cite{sun2011efficient}, we obtain the following
Algorithm \ref{alg:general_covariance_undebias} (assuming $B\geq2,R\geq2$).

\begin{algorithm}[hbt!]
\caption{Covariance Estimation for Two Conditional Expectations}
\label{alg:general_covariance_undebias}
\textbf{Inputs:} number of outer simulation runs $B$, number of inner simulation runs $R$
\begin{algorithmic}
\FOR{$b= 1, \dots, B$ }
\STATE Sample $Z_b$ from $F_Z$.
\FOR{$r= 1, \dots, R$ }
\STATE Sample $(X_{br},Y_{br})$ from $F_{(X,Y)|Z=Z_b}$.
\ENDFOR
\STATE Compute $\bar{X}_{b}=\sum_{r=1}^{R}X_{br}/R$ and $\bar{Y}_{b}=\sum_{r=1}^{R}Y_{br}/R$.
\ENDFOR
\STATE Compute $\bar{\bar{X}}=\sum_{b=1}^{B}\bar{X}_{b}/B$ and $\bar{\bar{Y}}=\sum_{b=1}^{B}\bar{Y}_{b}/B$.
\RETURN
\[
\hat{\sigma}_{Cov}^{2}=\frac{1}{B-1}\sum_{b=1}^{B}(\bar{X}_{b}-\bar{\bar{X}})(\bar{Y}_{b}-\bar{\bar{Y}}).
\]
\end{algorithmic}
\end{algorithm}

In particular, the simulated random elements (or variables) in Algorithm \ref{alg:general_covariance_undebias} satisfy the two-layer
simulation assumptions:
\begin{assumption}\label{assumption_cov}
We have the following:
\begin{description}
\item[(1)] $Z_{1},\ldots, Z_{B}$ are mutually independent.
\item[(2)] The random vectors $(X_{b1},Y_{b1}),\ldots,(X_{bR},Y_{bR})$ are conditionally mutually independent given $Z_{b}$.
\item[(3)] The random vectors $(X_{b1},Y_{b1},\ldots,X_{bR},Y_{bR}),b=1,\ldots,B$ are mutually independent.
\end{description}
\end{assumption}

We have the following lemma for the mean and variance of $\hat{\sigma}%
_{Cov}^{2}$.

\begin{lemma}
\label{general_covariance_undebias}The expectation and variance of
$\hat{\sigma}_{Cov}^{2}$ are given by%
\[
\mathbb{E}[\hat{\sigma}_{Cov}^{2}]=\mathbb{E}[\tau^{X}\tau^{Y}]+\frac{1}%
{R}\mathbb{E}[\varepsilon^{X}\varepsilon^{Y}]=\mathrm{Cov}(\mathbb{E}%
[X|Z],\mathbb{E}[Y|Z])+\frac{1}{R}\mathbb{E}[\mathrm{Cov}(X,Y|Z)],
\]%
\begin{align*}
&  \mathrm{Var}(\hat{\sigma}_{Cov}^{2})\\
&  =\frac{1}{B}\mathbb{E}[(\tau^{X})^{2}(\tau^{Y})^{2}]+\frac{1}{BR}%
\mathbb{E}[(\tau^{X})^{2}(\varepsilon^{Y})^{2}]+\frac{4}{BR}\mathbb{E}%
[\tau^{X}\tau^{Y}\varepsilon^{X}\varepsilon^{Y}]+\frac{2}{BR^{2}}%
\mathbb{E}[\tau^{X}\varepsilon^{X}(\varepsilon^{Y})^{2}]\\
&  +\frac{1}{BR}\mathbb{E}[(\varepsilon^{X})^{2}(\tau^{Y})^{2}]+\frac
{2}{BR^{2}}\mathbb{E}[(\varepsilon^{X})^{2}\tau^{Y}\varepsilon^{Y}]+\frac
{1}{BR^{3}}\mathbb{E}[(\varepsilon^{X})^{2}(\varepsilon^{Y})^{2}]+\frac
{R-1}{BR^{3}}\mathbb{E}[\mathbb{E}[(\varepsilon^{X})^{2}|Z]\mathbb{E}%
[(\varepsilon^{Y})^{2}|Z]]\\
&  +\frac{2(R-1)}{BR^{3}}\mathbb{E}[(\mathbb{E}[\varepsilon^{X}\varepsilon
^{Y}|Z])^{2}]+\frac{1}{B(B-1)}\mathbb{E}[(\tau^{X})^{2}]\mathbb{E}[(\tau
^{Y})^{2}]+\frac{1}{B(B-1)R}\mathbb{E}[(\tau^{X})^{2}]\mathbb{E}%
[(\varepsilon^{Y})^{2}]\\
&  +\frac{1}{B(B-1)R}\mathbb{E}[(\varepsilon^{X})^{2}]\mathbb{E}[(\tau
^{Y})^{2}]+\frac{1}{B(B-1)R^{2}}\mathbb{E}[(\varepsilon^{X})^{2}%
]\mathbb{E}[(\varepsilon^{Y})^{2}]\\
&  -\frac{B-2}{B(B-1)}(\mathbb{E}[\tau^{X}\tau^{Y}])^{2}-\frac{2(B-2)}%
{B(B-1)R}\mathbb{E}[\tau^{X}\tau^{Y}]\mathbb{E}[\varepsilon^{X}\varepsilon
^{Y}]-\frac{B-2}{B(B-1)R^{2}}(\mathbb{E}[\varepsilon^{X}\varepsilon^{Y}%
])^{2}\\
&  =\frac{1}{B}\left(  \mathbb{E}\left[  \left(  \tau^{X}\tau^{Y}+\frac{1}%
{R}\mathbb{E}[\varepsilon^{X}\varepsilon^{Y}|Z]\right)  ^{2}\right]  -\left(
\mathbb{E}\left[  \tau^{X}\tau^{Y}+\frac{1}{R}\varepsilon^{X}\varepsilon
^{Y}\right]  \right)  ^{2}\right)  \\
&  +\frac{1}{BR}\mathbb{E}\left[  \left(  \tau^{X}\varepsilon^{Y}%
+\varepsilon^{X}\tau^{Y}+\frac{1}{R}\varepsilon^{X}\varepsilon^{Y}\right)
^{2}\right]  +\frac{R-1}{BR^{3}}\mathbb{E}[\mathbb{E}[(\varepsilon^{X}%
)^{2}|Z]\mathbb{E}[(\varepsilon^{Y})^{2}|Z]]\\
&  +\frac{R-2}{BR^{3}}\mathbb{E}[(\mathbb{E}[\varepsilon^{X}\varepsilon
^{Y}|Z])^{2}]+\frac{1}{B(B-1)}\mathbb{E}[(\tau^{X})^{2}]\mathbb{E}[(\tau
^{Y})^{2}]+\frac{1}{B(B-1)R}\mathbb{E}[(\tau^{X})^{2}]\mathbb{E}%
[(\varepsilon^{Y})^{2}]\\
&  +\frac{1}{B(B-1)R}\mathbb{E}[(\varepsilon^{X})^{2}]\mathbb{E}[(\tau
^{Y})^{2}]+\frac{1}{B(B-1)R^{2}}\mathbb{E}[(\varepsilon^{X})^{2}%
]\mathbb{E}[(\varepsilon^{Y})^{2}]+\frac{1}{B(B-1)}\left(  \mathbb{E}\left[
\tau^{X}\tau^{Y}+\frac{1}{R}\varepsilon^{X}\varepsilon^{Y}\right]  \right)
^{2}%
\end{align*}
provided these expectations are finite, where $(X,Y,Z)$ is a generic random
vector generated in Algorithm \ref{alg:general_covariance_undebias}, i.e.,
$(X,Y,Z)\overset{d}{=}(X_{11},Y_{11},Z_{1})$ and%
\[
\tau^{X}=\mathbb{E}[X|Z]-\mathbb{E}[X],\quad\tau^{Y}=\mathbb{E}%
[Y|Z]-\mathbb{E}[Y],\quad\varepsilon^{X}=X-\mathbb{E}[X|Z],\quad
\varepsilon^{Y}=Y-\mathbb{E}[Y|Z].
\]
Moreover, each (bracketed) term in the second formula of $\mathrm{Var}%
(\hat{\sigma}_{Cov}^{2})$ is non-negative.
\end{lemma}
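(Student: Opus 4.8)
The plan is to compute the mean and variance of $\hat\sigma_{Cov}^2$ by a direct second-moment expansion, exploiting the conditional-independence structure in Assumption \ref{assumption_cov} to kill most cross terms. First I would rewrite the estimator in terms of the centered quantities: setting $\tau^X_b=\mathbb{E}[X|Z_b]-\mathbb{E}[X]$ and $\varepsilon^X_{br}=X_{br}-\mathbb{E}[X|Z_b]$ (and analogously for $Y$), one has $\bar X_b-\mathbb{E}[X]=\tau^X_b+\bar\varepsilon^X_b$ where $\bar\varepsilon^X_b=\frac1R\sum_r\varepsilon^X_{br}$, and the standard identity for the sample covariance gives
\[
\hat\sigma_{Cov}^2=\frac{1}{B-1}\sum_{b=1}^B(\bar X_b-\bar{\bar X})(\bar Y_b-\bar{\bar Y})=\frac{1}{B-1}\sum_{b=1}^B(\bar X_b-\mathbb{E}[X])(\bar Y_b-\mathbb{E}[Y])-\frac{B}{B-1}(\bar{\bar X}-\mathbb{E}[X])(\bar{\bar Y}-\mathbb{E}[Y]).
\]
For the mean, I take expectations term by term: $\mathbb{E}[\tau^X_b\tau^Y_b]=\mathrm{Cov}(\mathbb{E}[X|Z],\mathbb{E}[Y|Z])$, the cross terms $\mathbb{E}[\tau^X_b\bar\varepsilon^Y_b]$ vanish since $\mathbb{E}[\varepsilon^Y_{br}|Z_b]=0$, and $\mathbb{E}[\bar\varepsilon^X_b\bar\varepsilon^Y_b]=\frac1R\mathbb{E}[\mathrm{Cov}(X,Y|Z)]$ using conditional independence across $r$; combining these with the subtracted grand-mean term (which contributes a $-\frac1B$ correction exactly cancelled by the $\frac{B}{B-1}$ bookkeeping) yields the stated $\mathbb{E}[\hat\sigma_{Cov}^2]=\mathbb{E}[\tau^X\tau^Y]+\frac1R\mathbb{E}[\varepsilon^X\varepsilon^Y]$.

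For the variance, I would compute $\mathbb{E}[(\hat\sigma_{Cov}^2)^2]$ by expanding the double sum $\sum_{b,b'}$ and then subtract $(\mathbb{E}[\hat\sigma_{Cov}^2])^2$. The key simplification is that terms with $b\neq b'$ factor into products of per-block expectations (by block independence, item (3) of Assumption \ref{assumption_cov}), so only the diagonal $b=b'$ terms carry genuine fourth-moment information; the off-diagonal terms combine with $(\mathbb{E}[\hat\sigma_{Cov}^2])^2$ to produce the $1/(B(B-1))$ pieces. Within a single block, I expand $(\bar X_b-\mathbb{E}[X])(\bar Y_b-\mathbb{E}[Y])=(\tau^X_b+\bar\varepsilon^X_b)(\tau^Y_b+\bar\varepsilon^Y_b)$, square it, and take expectations using: (i) $\mathbb{E}[\tau$-only terms$]$ are clean; (ii) any term linear in a single $\bar\varepsilon$ vanishes; (iii) for terms involving $\bar\varepsilon^X_b\bar\varepsilon^Y_b$ or higher, expand $\bar\varepsilon^X_b=\frac1R\sum_r\varepsilon^X_{br}$ and use conditional independence across $r$ given $Z_b$ to reduce $\mathbb{E}[(\bar\varepsilon^X_b)^2(\bar\varepsilon^Y_b)^2]$-type quantities to sums over matched/unmatched inner indices, which produces the distinction between $\frac1{R^3}\mathbb{E}[(\varepsilon^X)^2(\varepsilon^Y)^2]$ and the $\frac{R-1}{R^3}\mathbb{E}[\mathbb{E}[(\varepsilon^X)^2|Z]\mathbb{E}[(\varepsilon^Y)^2|Z]]$ and $\frac{R-1}{R^3}\mathbb{E}[(\mathbb{E}[\varepsilon^X\varepsilon^Y|Z])^2]$ terms. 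Careful tracking of the combinatorial coefficients $R$, $R-1$, $R^2$, $R^3$ in the inner layer and $B$, $B-1$, $B-2$ in the outer layer gives the first (long) formula.

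To obtain the second, manifestly-non-negative formula, I would regroup the terms of the first formula by completing squares: the $\frac1B$ coefficient terms assemble into $\frac1B\big(\mathbb{E}[(\tau^X\tau^Y+\frac1R\mathbb{E}[\varepsilon^X\varepsilon^Y|Z])^2]-(\mathbb{E}[\tau^X\tau^Y+\frac1R\varepsilon^X\varepsilon^Y])^2\big)$, which is $\frac1B$ times a variance of a (measurable function of $Z$) and hence $\ge 0$; the $\frac1{BR}$ coefficient terms collect into $\frac1{BR}\mathbb{E}[(\tau^X\varepsilon^Y+\varepsilon^X\tau^Y+\frac1R\varepsilon^X\varepsilon^Y)^2]\ge 0$; the remaining $\frac{R-1}{BR^3}$, $\frac{R-2}{BR^3}$ (the latter from combining $-\frac{1}{BR^3}\mathbb{E}[(\varepsilon^X)^2(\varepsilon^Y)^2]$ with $\frac{2(R-1)}{BR^3}\mathbb{E}[(\mathbb{E}[\varepsilon^X\varepsilon^Y|Z])^2]$ and a relabeling --- here one uses $\mathbb{E}[(\mathbb{E}[\varepsilon^X\varepsilon^Y|Z])^2]\le\mathbb{E}[\mathbb{E}[(\varepsilon^X)^2|Z]\mathbb{E}[(\varepsilon^Y)^2|Z]]$ by conditional Cauchy--Schwarz to confirm non-negativity of the $R-2$ piece when $R\ge 2$), and the $\frac1{B(B-1)}$-scaled products of second moments, are each non-negative as products of expectations of squares. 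The main obstacle is purely bookkeeping: correctly enumerating all index-matching cases in the fourth-moment expansion and verifying that the outer-block off-diagonal contributions combine exactly with $(\mathbb{E}[\hat\sigma_{Cov}^2])^2$ to give the stated coefficients with no residue; I would organize this with a table of the $3\times 3=9$ (or, after symmetry, fewer) monomial types in $(\tau^X,\bar\varepsilon^X)\times(\tau^Y,\bar\varepsilon^Y)$ per block and their pairwise products across the two sum indices.
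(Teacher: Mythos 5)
Your proposal is correct and follows essentially the same route as the paper: both rest on the $\tau/\varepsilon$ decomposition (group effects $\tau^X_b=\mathbb{E}[X|Z_b]-\mathbb{E}[X]$, residuals $\varepsilon^X_{br}$), the same conditional-zero-mean and within/across-block independence observations, and the same completion-of-squares to pass from the first to the second variance formula. The only organizational difference is in the second-moment bookkeeping: the paper keeps the $\bar{\bar X}$-centered form and uses the exchangeability of blocks to reduce $\mathrm{Var}(\hat\sigma^2_{Cov})$ to one variance plus one covariance of diagonal/off-diagonal centered products, whereas you re-center at $\mathbb{E}[X],\mathbb{E}[Y]$ (where the per-block quantities become genuinely zero-mean and i.i.d.), expand the resulting double sum, and let independence factor the off-diagonal pieces. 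Both strategies are equivalent and yield the same enumeration of fourth-moment monomial types. One small inaccuracy in the write-up: the $\tfrac{R-2}{BR^3}\mathbb{E}[(\mathbb{E}[\varepsilon^X\varepsilon^Y|Z])^2]$ term is non-negative for $R\ge 2$ directly, as a positive multiple of an expectation of a square; no conditional Cauchy--Schwarz inequality is needed, and the $\tfrac{1}{BR^3}\mathbb{E}[(\varepsilon^X)^2(\varepsilon^Y)^2]$ term is absorbed (with positive sign) into the $\tfrac{1}{BR}\mathbb{E}[(\tau^X\varepsilon^Y+\varepsilon^X\tau^Y+\tfrac1R\varepsilon^X\varepsilon^Y)^2]$ square rather than combined against the $\mathbb{E}[(\mathbb{E}[\varepsilon^X\varepsilon^Y|Z])^2]$ terms. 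The $R-2$ coefficient instead comes from splitting $\tfrac{2(R-1)}{BR^3}\mathbb{E}[(\mathbb{E}[\varepsilon^X\varepsilon^Y|Z])^2]$ as $\tfrac{R}{BR^3}+\tfrac{R-2}{BR^3}$, the first piece going into the leading variance-of-a-$Z$-measurable-function bracket.
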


\section{Further Details on the Computer Network}\label{sec:details_network}

The network can be represented by an undirected graph in Figure \ref{network}. The true inter-arrival rates are displayed in Table \ref{network_parameter1}.

\begin{figure}[htbp]
\centering
\includegraphics[width=0.45\textwidth]{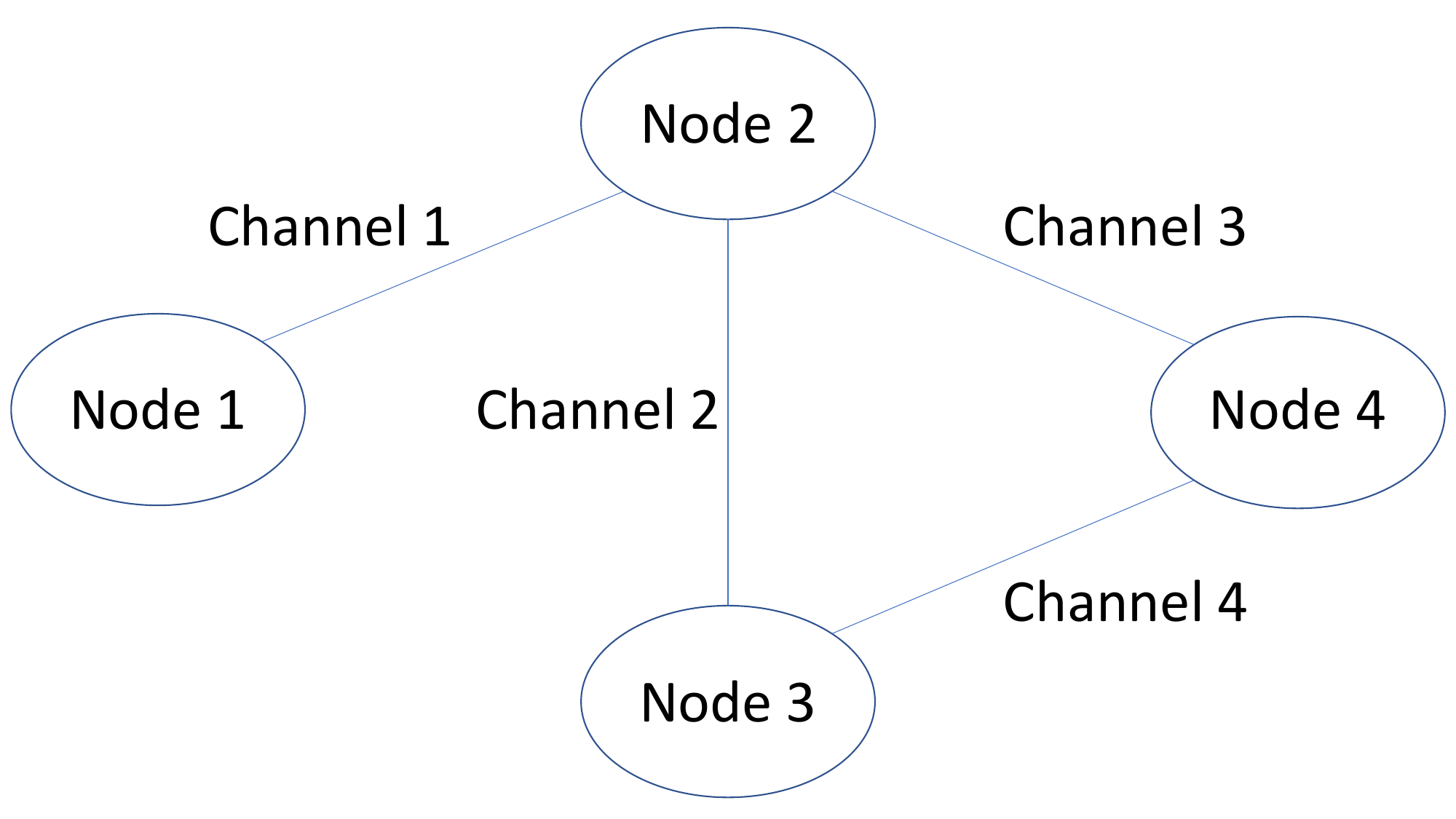}

\caption{A computer network with four nodes and four channels.}
\label{network}
\end{figure}

\begin{table}[ht]
\caption{Arrival rates $\lambda_{i,j}$ of messages to be transmitted from node $i$ to node $j$.}
\medskip
\label{network_parameter1}
\centering
\begin{tabular}{@{}ccccc@{}}
\toprule
         & \multicolumn{4}{c}{Node $j$} \\ \cmidrule(l){2-5}
Node $i$ & 1     & 2     & 3     & 4    \\ 
\midrule
1        & N.A.  & 40    & 30    & 35   \\
2        & 50    & N.A.  & 45    & 15   \\
3        & 60    & 15    & N.A.  & 20   \\
4        & 25    & 30    & 40    & N.A. \\ 
\bottomrule
\end{tabular}
\end{table}

\section{Proofs}\label{sec:proofs}


The proofs are presented in the logical order, which may not be the same as the order the results appear. Section \ref{sec:proofs_convergence} contains proofs of results in Sections \ref{sec:main theory} and \ref{sec:theory_convergence} that are related to weak convergence. Section \ref{sec:proofs_subsampling} contains proofs of results in Sections \ref{sec:cov subsampling} and \ref{sec:theory_cov} that are related to subsampling. Section \ref{sec:proofs_general_assumptions} contains proofs of results in Appendix \ref{sec:general_assumptions}. Section \ref{sec:proofs_var_formula} contains proofs of results in Appendix \ref{sec: var_of_cov}. 

\subsection{Proofs of Results in Sections \ref{sec:main theory} and \ref{sec:theory_convergence}}\label{sec:proofs_convergence}
\begin{proof}{Proof of Lemma \ref{measurability}.}
Since the Borel $\sigma$-field of $D_{\pm\infty}\times D_{\pm\infty}$ is the
product $\sigma$-field $\mathcal{D}_{\pm\infty}\times\mathcal{D}_{\pm\infty}$,
to show $(\sqrt{R}(\hat{Q}(\cdot,\underline{\hat{P}})-Q(\cdot,\underline{\hat
{P}})),\sqrt{n}(Q(\cdot,\underline{\hat{P}})-Q(\cdot,\underline{P})))$ is a
measurable map into $(D_{\pm\infty}\times D_{\pm\infty},\mathcal{D}_{\pm
\infty}\times\mathcal{D}_{\pm\infty})$, it suffices to show the individual
components $\sqrt{R}(\hat{Q}(\cdot,\underline{\hat{P}})-Q(\cdot
,\underline{\hat{P}}))$ and $\sqrt{n}(Q(\cdot,\underline{\hat{P}}%
)-Q(\cdot,\underline{P}))$ are measurable maps into $(D_{\pm\infty
},\mathcal{D}_{\pm\infty})$. Based on the assumption, it suffices to show the
following result: the map $X:(\Omega,\mathcal{F})\mapsto(D_{\pm\infty
},\mathcal{D}_{\pm\infty})$ is measurable if and only if each coordinate
$X(t):(\Omega,\mathcal{F})\mapsto(\mathbb{R},\mathcal{B}(\mathbb{R}))$ is measurable.

For $k\in\mathbb{N}$ and $-\infty<t_{1}<\cdots<t_{k}<\infty$, let $\pi
_{t_{1},\ldots,t_{k}}:(D_{\pm\infty},\mathcal{D}_{\pm\infty})\mapsto
(\mathbb{R}^{k},\mathcal{B}(\mathbb{R}^{k}))$ be the projection at
$\{t_{1},\ldots,t_{k}\}$, i.e., $\pi_{t_{1},\ldots t_{k}}(x)=(x(t_{1}%
),\ldots,x(t_{k}))$ for $x\in D_{\pm\infty}$. It can be shown that
$\mathcal{D}_{\pm\infty}=\sigma(\pi_{t},t\in\mathbb{R})$ (a proof in the space
$D[0,\infty)$ can be found in \cite{billingsley2013convergence} Theorem 16.6
(iii) and a similar argument also holds in $D_{\pm\infty}$). Therefore, the
``only if'' part follows from the
measurability of the composition $\pi_{t}\circ X=X(t)$ if $X$ is measurable.

Now let's show the ``if'' part. Since each
$\pi_{t}$ is measurable and $\mathcal{B}(\mathbb{R}^{k})=\mathcal{B}%
(\mathbb{R})^{k}$ (the product $\sigma$-field), we know that $\pi
_{t_{1},\ldots,t_{k}}$ is also measurable and consequently%
\begin{equation}
\mathcal{D}_{\pm\infty}=\sigma(\pi_{t_{1},\ldots,t_{k}},t_{1}<\cdots
<t_{k},k\in\mathbb{N}). \label{generating_system}%
\end{equation}
Now we define the finite dimensional set%
\[
\mathcal{P}=\{A\in\mathcal{D}_{\pm\infty}:A=\pi_{t_{1},\ldots,t_{k}}%
^{-1}H,t_{1}<\cdots<t_{k},H\in\mathcal{B}(\mathbb{R}^{k}),k\in\mathbb{N}\}
\]
and
\[
\mathcal{L}=\{A\in\mathcal{D}_{\pm\infty}:X^{-1}A\in\mathcal{F}\}.
\]
Then $\mathcal{P}$ is a $\pi$-system and $\mathcal{L}$ is a $\lambda$-system.
Since each coordinate $X(t):(\Omega,\mathcal{F})\mapsto(\mathbb{R}%
,\mathcal{B}(\mathbb{R}))$ is measurable and $\mathcal{B}(\mathbb{R}%
^{k})=\mathcal{B}(\mathbb{R})^{k}$, we know that $(X(t_{1}),\ldots
,X(t_{k})):(\Omega,\mathcal{F})\mapsto(\mathbb{R}^{k},\mathcal{B}%
(\mathbb{R}^{k}))$ are also measurable for any $k\in\mathbb{N}$ and
$-\infty<t_{1}<\cdots<t_{k}<\infty$. Therefore, for each $H\in\mathcal{B}%
(\mathbb{R}^{k})$,%
\[
X^{-1}(\pi_{t_{1},\ldots,t_{k}}^{-1}H)=(\pi_{t_{1},\ldots,t_{k}}\circ
X)^{-1}H=(X(t_{1}),\ldots,X(t_{k}))^{-1}H\in\mathcal{F},
\]
which implies $\mathcal{P}\subset\mathcal{L}$. By the $\pi$-$\lambda$ Theorem
and (\ref{generating_system}), we know that%
\[
\mathcal{D}_{\pm\infty}=\sigma(\pi_{t_{1},\ldots,t_{k}},t_{1}<\cdots
<t_{k},k\in\mathbb{N})\equiv\sigma(\mathcal{P})\subset\mathcal{L\subset
D}_{\pm\infty},
\]
i.e., $\mathcal{L=D}_{\pm\infty}$ and thus $X:(\Omega,\mathcal{F}%
)\mapsto(D_{\pm\infty},\mathcal{D}_{\pm\infty})$ is measurable.

Using the ``if'' part and the assumption
(note that $Q(t,\underline{P})$ is a constant and clearly measurable), we know
that the components $\sqrt{R}(\hat{Q}(\cdot,\underline{\hat{P}})-Q(\cdot
,\underline{\hat{P}}))$ and $\sqrt{n}(Q(\cdot,\underline{\hat{P}}%
)-Q(\cdot,\underline{P}))$ are measurable maps into $(D_{\pm\infty
},\mathcal{D}_{\pm\infty})$ and hence $(\sqrt{R}(\hat{Q}(\cdot,\underline{\hat
{P}})-Q(\cdot,\underline{\hat{P}})),\sqrt{n}(Q(\cdot,\underline{\hat{P}%
})-Q(\cdot,\underline{P})))$ is a measurable map into $(D_{\pm\infty}\times
D_{\pm\infty},\mathcal{D}_{\pm\infty}\times\mathcal{D}_{\pm\infty})$.
\end{proof}

To prove Proposition \ref{fidis_convergence}, we need the following results
from Lemma 3.3.19, Lemma 3.4.3 and Lemma 3.4.4 from
\cite{durrett2019probability}.

\begin{lemma}
\label{bounds_complex_number}We have the following inequalities for the
complex number:%
\begin{equation}
\left\vert e^{ix}-\sum_{m=0}^{k}\frac{(ix)^{m}}{m!}\right\vert \leq
\frac{|x|^{k+1}}{(k+1)!},\forall x\in\mathbb{R},k\in\mathbb{N},
\label{inequality_complex_Taylor}%
\end{equation}%
\begin{equation}
\left\vert z^{k}-w^{k}\right\vert \leq k|z-w|(\max\{|z|,|w|\})^{k-1},\forall
z,w\in\mathbb{C},k\in\mathbb{N}, \label{inequality_complex_power}%
\end{equation}
and%
\begin{equation}
|e^{z}-1-z|\leq|z|^{2},\forall|z|\leq1,z\in\mathbb{C}.
\label{inequality_1st_complex_Taylor}%
\end{equation}

\end{lemma}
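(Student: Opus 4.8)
The plan is to prove the three inequalities independently, each by elementary means, since they are classical estimates (see \cite{durrett2019probability}, Lemmas 3.3.19 and 3.4.3--3.4.4).

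For \eqref{inequality_complex_Taylor}, I would induct on $k$. Set $R_{k}(x)=e^{ix}-\sum_{m=0}^{k}(ix)^{m}/m!$; differentiating term by term (legitimate by absolute convergence of the exponential series on all of $\mathbb{C}$) gives $R_{k}'(x)=iR_{k-1}(x)$ with $R_{k}(0)=0$, hence $R_{k}(x)=i\int_{0}^{x}R_{k-1}(s)\,ds$. The base case is $R_{0}(x)=e^{ix}-1=i\int_{0}^{x}e^{is}\,ds$, so $|R_{0}(x)|\le|x|$. The inductive step then yields
\[
|R_{k}(x)|\le\left|\int_{0}^{x}|R_{k-1}(s)|\,ds\right|\le\left|\int_{0}^{x}\frac{|s|^{k}}{k!}\,ds\right|=\frac{|x|^{k+1}}{(k+1)!},
\]
where keeping the absolute-value bars around the integral handles $x<0$ uniformly.

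For \eqref{inequality_complex_power}, I would use the algebraic factorization $z^{k}-w^{k}=(z-w)\sum_{j=0}^{k-1}z^{j}w^{k-1-j}$, apply the triangle inequality, and bound each $|z|^{j}|w|^{k-1-j}$ by $M^{k-1}$ with $M:=\max\{|z|,|w|\}$; the factor $k$ comes from the number of summands. For \eqref{inequality_1st_complex_Taylor}, I would expand $e^{z}-1-z=\sum_{m\ge2}z^{m}/m!$, take absolute values, factor out $|z|^{2}$, and bound the tail series using $m!\ge2^{m-1}$ for $m\ge1$:
\[
\sum_{m\ge2}\frac{|z|^{m-2}}{m!}\le\sum_{m\ge2}\frac{|z|^{m-2}}{2^{m-1}}=\frac{1}{2-|z|}\le1\qquad\text{whenever }|z|\le1,
\]
which gives $|e^{z}-1-z|\le|z|^{2}$.

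None of these steps presents a genuine obstacle; the only places requiring a little care are ensuring the inductive integral bound in \eqref{inequality_complex_Taylor} is written so that it is valid for both signs of $x$, and justifying the term-by-term differentiation and the interchange of summation with the modulus, both of which follow from the absolute convergence of the exponential series.
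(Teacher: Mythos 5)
Your proposal is correct. The paper does not actually prove this lemma; it simply cites Lemmas 3.3.19, 3.4.3, and 3.4.4 from Durrett (2019), so your write-up supplies details the paper omits. All three arguments you give are the standard textbook ones and match the cited source in spirit: the integral-remainder induction for \eqref{inequality_complex_Taylor}, the telescoping factorization for \eqref{inequality_complex_power}, and the series-tail estimate for \eqref{inequality_1st_complex_Taylor}. The only stylistic difference is your choice of $m!\ge 2^{m-1}$ to bound the tail in \eqref{inequality_1st_complex_Taylor}; one could equally well bound the tail by $|z|^{2}\sum_{m\ge 2}1/m! = |z|^{2}(e-2)\le |z|^{2}$ when $|z|\le 1$, which is the more common presentation, but both are elementary and correct.
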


By means of Lemma \ref{bounds_complex_number}, we show the following lemma.

\begin{lemma}
\label{fidis_separate}Suppose Assumptions \ref{balanced_data},
\ref{convergence_in_p_to_truth} and \ref{1st_expansion_truth} hold. Then for
any $k\in\mathbb{N}$, any $t_{1}<t_{2}<\cdots<t_{k}$ and any $u=(u_{1}%
,\ldots,u_{k})^{\top}\in\mathbb{R}^{k}$, the (conditional) characteristic
functions satisfy%
\begin{equation}
\mathbb{E}_{\ast}\left[  \exp\left(  i\sum_{j=1}^{k}u_{j}\sqrt{R}(\hat
{Q}(t_{j},\underline{\hat{P}})-Q(t_{j},\underline{\hat{P}}))\right)  \right]
\overset{p}{\rightarrow}\exp\left(  -\frac{1}{2}u^{\top}\Sigma_{BB}u\right)  ,
\label{convergence_con_CF}%
\end{equation}
and%
\begin{equation}
\mathbb{E}\left[  \exp\left(  i\sum_{j=1}^{k}u_{j}\sqrt{n}(Q(t_{j}%
,\underline{\hat{P}})-Q(t_{j},\underline{P}))\right)  \right]  \rightarrow
\exp\left(  -\frac{1}{2}u^{\top}\Sigma_{GP}u\right)  ,
\label{convergence_GP_CF}%
\end{equation}
where $\Sigma_{BB}$ and $\Sigma_{GP}$ are the covariance matrices of
$BB({Q}(\cdot,\underline{P}))$ and $\mathbb{G}(\cdot)$ respectively at
$\{t_{1},\ldots,t_{k}\}$.
\end{lemma}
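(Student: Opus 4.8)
The plan is to prove the two convergences \eqref{convergence_con_CF} and \eqref{convergence_GP_CF} by separate arguments, each of which reduces the relevant quantity to a sum of independent summands so that a Taylor-expansion/CLT argument applies. For \eqref{convergence_con_CF} I would condition on the input data $\underline{\hat{P}}$ and use that $\sqrt{R}(\hat{Q}(t_j,\underline{\hat{P}})-Q(t_j,\underline{\hat{P}}))=R^{-1/2}\sum_{r=1}^{R}(I(Y_r\le t_j)-Q(t_j,\underline{\hat{P}}))$ is, conditionally, a sum of i.i.d.\ bounded terms, so the left-hand side is the $R$-th power of the characteristic function of a single summand. For \eqref{convergence_GP_CF} I would plug in the first-order expansion of Assumption \ref{1st_expansion_truth} to write $\sqrt{n}(Q(t_j,\underline{\hat{P}})-Q(t_j,\underline{P}))$ as a sum of i.i.d.\ influence-function terms (one independent block per input distribution) plus a negligible remainder, and then invoke the classical multivariate CLT.

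For \eqref{convergence_con_CF}: set $\xi_r=\sum_{j=1}^{k}u_j(I(Y_r\le t_j)-Q(t_j,\underline{\hat{P}}))$, which is bounded by the deterministic constant $2\sum_j|u_j|$ and satisfies $\mathbb{E}_\ast[\xi_r]=0$. The conditional characteristic function equals $(\mathbb{E}_\ast[\exp(iR^{-1/2}\xi_1)])^R$. Applying \eqref{inequality_complex_Taylor} with $k=2$, the inner expectation equals $1+z_R/R$ with $z_R=-\tfrac12\mathbb{E}_\ast[\xi_1^2]+O(R^{-1/2})$, the $O(R^{-1/2})$ term being deterministic since $|\xi_1|$ is uniformly bounded. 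Because $\mathbb{E}_\ast[\xi_1^2]=\sum_{j,l}u_ju_l(Q(t_j\wedge t_l,\underline{\hat{P}})-Q(t_j,\underline{\hat{P}})Q(t_l,\underline{\hat{P}}))$, Assumption \ref{convergence_in_p_to_truth} together with the continuous mapping theorem gives $\mathbb{E}_\ast[\xi_1^2]\overset{p}{\rightarrow}u^\top\Sigma_{BB}u$, hence $z_R\overset{p}{\rightarrow}z:=-\tfrac12 u^\top\Sigma_{BB}u$. Finally I would raise to the $R$-th power: writing $e^{z_R}=(e^{z_R/R})^R$ and combining \eqref{inequality_complex_power} with \eqref{inequality_1st_complex_Taylor} (valid once $|z_R|/R\le 1$) yields $|(1+z_R/R)^R-e^{z_R}|\le R\,|z_R/R|^2(\max\{|1+z_R/R|,|e^{z_R/R}|\})^{R-1}\le |z_R|^2 e^{|z_R|}/R$, which is $o_p(1)$ since $z_R=O_p(1)$; continuity of the exponential then gives $e^{z_R}\overset{p}{\rightarrow}e^z$, establishing \eqref{convergence_con_CF}.

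For \eqref{convergence_GP_CF}: by Assumption \ref{1st_expansion_truth} and the zero-mean property $\mathbb{E}_{P_i}[IF_i(t,X_i;\underline{P})]=0$, one has $\sqrt{n}(Q(t_j,\underline{\hat{P}})-Q(t_j,\underline{P}))=\sum_{i=1}^{m}\sqrt{n/n_i}\,\cdot n_i^{-1/2}\sum_{l=1}^{n_i}IF_i(t_j,X_{i,l};\underline{P})+\sqrt{n}\,\varepsilon(t_j)$. For each fixed $i$, the variables $\zeta_{i,l}:=\sum_{j}u_j IF_i(t_j,X_{i,l};\underline{P})$ are i.i.d., mean-zero, and have finite variance (Assumption \ref{1st_expansion_truth} gives $\mathbb{E}_{P_i}[IF_i^4(t,X_i;\underline{P})]<\infty$, hence a finite second moment), so $n_i^{-1/2}\sum_l\zeta_{i,l}\overset{d}{\rightarrow}N(0,\mathrm{Var}(\zeta_{i,1}))$ by the CLT, while $\sqrt{n/n_i}\to\beta_i^{-1/2}$ by Assumption \ref{balanced_data}. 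Since the $m$ data blocks are independent, the sum over $i$ converges in distribution to a centered Gaussian with variance $\sum_{i}\beta_i^{-1}\mathrm{Var}(\zeta_{i,1})=\sum_{j,l}u_ju_l\sum_i\beta_i^{-1}\mathrm{Cov}_{P_i}(IF_i(t_j,X_i;\underline{P}),IF_i(t_l,X_i;\underline{P}))=u^\top\Sigma_{GP}u$. The remainder obeys $\mathbb{E}[(\sqrt{n}\,\varepsilon(t_j))^2]=n\cdot o(n^{-1})=o(1)$, so $\sqrt{n}\sum_j u_j\varepsilon(t_j)\overset{p}{\rightarrow}0$, and Slutsky's theorem gives $\sqrt{n}\sum_j u_j(Q(t_j,\underline{\hat{P}})-Q(t_j,\underline{P}))\overset{d}{\rightarrow}N(0,u^\top\Sigma_{GP}u)$, i.e.\ \eqref{convergence_GP_CF}.

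The main obstacle is the $R$-th power step in the first part: the base $1+z_R/R$ carries a coefficient $z_R$ that is random (it depends on $\underline{\hat{P}}$), complex-valued, and converges only in probability, so the elementary limit $(1+a/R)^R\to e^a$ cannot be invoked directly; the complex-analytic estimates in Lemma \ref{bounds_complex_number} are exactly what make the argument go through, and one must keep the bounds uniform in the realization of $\underline{\hat{P}}$ so that the resulting error is genuinely $o_p(1)$ rather than merely stochastically bounded. The CLT-based second part is comparatively routine once the first-order expansion of Assumption \ref{1st_expansion_truth} and the moment bound on the influence function are in hand.
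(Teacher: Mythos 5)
Your proof is correct and follows essentially the same route as the paper's: condition on $\underline{\hat{P}}$, factorize the conditional characteristic function as an $R$-th power of a single-term CF, Taylor-expand via the second-order complex bound, control the $R$-th power using the complex power and first-order exponential bounds, and close with the plug-in consistency of $Q(\cdot,\underline{\hat{P}})$; the second part via the first-order expansion, block independence, CLT and Slutsky is also identical in substance. The only cosmetic differences from the paper are that you absorb $-\tfrac12\mathbb{E}_{\ast}[\xi_1^2]$ and the deterministic $O(R^{-1/2})$ remainder into a single $z_R$ and compare $(1+z_R/R)^R$ against the intermediate quantity $e^{z_R}$ before invoking continuity of $\exp$, whereas the paper tracks $-\tfrac{1}{2R}u^{\top}\hat{\Sigma}_{BB}u$ and its error separately and compares the $R$-th power directly against $(\exp(-\tfrac{1}{2R}u^{\top}\Sigma_{BB}u))^R$. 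Your worry about $z_R$ being random is well-placed in principle, but in fact it is moot here: since $|\xi_1|$ is bounded by the deterministic constant $2\sum_j|u_j|$, the quantity $|z_R|$ is uniformly bounded, so the bound $|z_R|^2 e^{|z_R|}/R$ is deterministically $O(1/R)$ and no probabilistic uniformity argument is needed.
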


\begin{proof}{Proof of Lemma \ref{fidis_separate}.}
We first prove (\ref{convergence_con_CF}). According to the covariance
function of the Brownian bridge, the matrix $\Sigma_{BB}$ is given by%
\begin{equation}
\Sigma_{BB,jl}=\min\{{Q}(t_{j},\underline{P}),{Q}(t_{l},\underline{P}%
)\}-{Q}(t_{j},\underline{P}){Q}(t_{l},\underline{P}),\forall1\leq j,l\leq k.
\label{BB_matrix}%
\end{equation}
Recall that
\[
\hat{Q}(t,\underline{\hat{P}})=\frac{1}{R}\sum_{r=1}^{R}I(Y_{r}\leq t),
\]
where $Y_{r},r=1,\ldots,R$ denote i.i.d. Monte Carlo outputs from the
simulation model with input distributions $\underline{\hat{P}}$. The
conditional characteristic function can be simplified as%
\begin{align}
&  \mathbb{E}_{\ast}\left[  \exp\left(  i\sum_{j=1}^{k}u_{j}\sqrt{R}(\hat
{Q}(t_{j},\underline{\hat{P}})-Q(t_{j},\underline{\hat{P}}))\right)  \right]
\nonumber\\
&  =\mathbb{E}_{\ast}\left[  \exp\left(  i\sum_{j=1}^{k}u_{j}\sqrt{R}\hat
{Q}(t_{j},\underline{\hat{P}})\right)  \right]  \exp\left(  -i\sum_{j=1}%
^{k}u_{j}\sqrt{R}Q(t_{j},\underline{\hat{P}})\right) \nonumber\\
&  =\mathbb{E}_{\ast}\left[  \exp\left(  i\sum_{j=1}^{k}\frac{u_{j}}{\sqrt{R}%
}\sum_{r=1}^{R}I(Y_{r}\leq t_{j})\right)  \right]  \exp\left(  -i\sum
_{j=1}^{k}u_{j}\sqrt{R}Q(t_{j},\underline{\hat{P}})\right) \nonumber\\
&  =\mathbb{E}_{\ast}\left[  \prod_{r=1}^{R}\exp\left(  i\sum_{j=1}^{k}%
\frac{u_{j}}{\sqrt{R}}I(Y_{r}\leq t_{j})\right)  \right]  \exp\left(
-i\sum_{j=1}^{k}u_{j}\sqrt{R}Q(t_{j},\underline{\hat{P}})\right) \nonumber\\
&  =\left(  \mathbb{E}_{\ast}\left[  \exp\left(  i\sum_{j=1}^{k}\frac{u_{j}%
}{\sqrt{R}}I(Y_{1}\leq t_{j})\right)  \right]  \right)  ^{R}\exp\left(
-i\sum_{j=1}^{k}u_{j}\sqrt{R}Q(t_{j},\underline{\hat{P}})\right)  .
\label{conputation_con_CF1}%
\end{align}
where the first inequality is due to the measurability of $Q(t_{j}%
,\underline{\hat{P}})$ with respect to the $\sigma$-field generated by the
data and the last inequality is due to the conditional independence of $Y_{r}%
$'s. By direct calculation, we have%
\begin{equation}
\mathbb{E}_{\ast}\left[  \exp\left(  i\sum_{j=1}^{k}\frac{u_{j}}{\sqrt{R}%
}I(Y_{1}\leq t_{j})\right)  \right]  =\sum_{j=1}^{k+1}(Q(t_{j},\underline{\hat
{P}})-Q(t_{j-1},\underline{\hat{P}}))\exp\left(  i\sum_{l=j}^{k}\frac{u_{l}%
}{\sqrt{R}}\right)  , \label{conputation_con_CF2}%
\end{equation}
where $t_{0}:=-\infty$ and $t_{k+1}:=\infty$ with $Q(t_{0},\underline{\hat{P}%
})=0$ and $Q(t_{k+1},\underline{\hat{P}})=1$. Plugging
(\ref{conputation_con_CF2}) into (\ref{conputation_con_CF1}), the conditional
characteristic function can be further simplified as%
\begin{align}
&  \mathbb{E}_{\ast}\left[  \exp\left(  i\sum_{j=1}^{k}u_{j}\sqrt{R}(\hat
{Q}(t_{j},\underline{\hat{P}})-Q(t_{j},\underline{\hat{P}}))\right)  \right]
\nonumber\\
&  =\left(  \sum_{j=1}^{k+1}(Q(t_{j},\underline{\hat{P}})-Q(t_{j-1}%
,\underline{\hat{P}}))\exp\left(  i\sum_{l=j}^{k}\frac{u_{l}}{\sqrt{R}%
}\right)  \right)  ^{R}\left(  \exp\left(  -i\sum_{m=1}^{k}\frac{u_{m}}%
{\sqrt{R}}Q(t_{m},\underline{\hat{P}})\right)  \right)  ^{R}\nonumber\\
&  =\left(  \sum_{j=1}^{k+1}(Q(t_{j},\underline{\hat{P}})-Q(t_{j-1}%
,\underline{\hat{P}}))\exp\left(  i\sum_{l=j}^{k}\frac{u_{l}}{\sqrt{R}}%
-i\sum_{m=1}^{k}\frac{u_{m}}{\sqrt{R}}Q(t_{m},\underline{\hat{P}})\right)
\right)  ^{R}. \label{conputation_con_CF3}%
\end{align}
Next, by (\ref{inequality_complex_Taylor}), we have%
\begin{align*}
&  \exp\left(  i\sum_{l=j}^{k}\frac{u_{l}}{\sqrt{R}}-i\sum_{m=1}^{k}%
\frac{u_{m}}{\sqrt{R}}Q(t_{m},\underline{\hat{P}})\right) \\
&  =1+\frac{i}{\sqrt{R}}\left(  \sum_{l=j}^{k}u_{l}-\sum_{m=1}^{k}u_{m}%
Q(t_{m},\underline{\hat{P}})\right)  -\frac{1}{2R}\left(  \sum_{l=j}^{k}%
u_{l}-\sum_{m=1}^{k}u_{m}Q(t_{m},\underline{\hat{P}})\right)  ^{2}%
+\varepsilon_{j},
\end{align*}
where the error term $\varepsilon_{j}$ satisfies%
\[
|\varepsilon_{j}|\leq\frac{1}{6R^{3/2}}\left\vert \sum_{l=j}^{k}u_{l}%
-\sum_{m=1}^{k}u_{m}Q(t_{m},\underline{\hat{P}})\right\vert ^{3}\leq\frac
{1}{6R^{3/2}}\left(  \sum_{m=1}^{k}|u_{m}|\right)  ^{3}.
\]
By direct calculation, we can get%
\begin{align}
&  \sum_{j=1}^{k+1}(Q(t_{j},\underline{\hat{P}})-Q(t_{j-1},\underline{\hat{P}%
}))\exp\left(  i\sum_{l=j}^{k}\frac{u_{l}}{\sqrt{R}}-i\sum_{m=1}^{k}%
\frac{u_{m}}{\sqrt{R}}Q(t_{m},\underline{\hat{P}})\right) \nonumber\\
&  =\sum_{j=1}^{k+1}(Q(t_{j},\underline{\hat{P}})-Q(t_{j-1},\underline{\hat
{P}}))\left(  1+\frac{i}{\sqrt{R}}\left(  \sum_{l=j}^{k}u_{l}-\sum_{m=1}%
^{k}u_{m}Q(t_{m},\underline{\hat{P}})\right)  \right. \nonumber\\
&  \left.  -\frac{1}{2R}\left(  \sum_{l=j}^{k}u_{l}-\sum_{m=1}^{k}u_{m}%
Q(t_{m},\underline{\hat{P}})\right)  ^{2}+\varepsilon_{j}\right) \nonumber\\
&  =1-\frac{1}{2R}u^{\top}\hat{\Sigma}_{BB}u+\varepsilon,
\label{conputation_con_CF4}%
\end{align}
where the matrix $\hat{\Sigma}_{BB}\in\mathbb{R}^{k\times k}$ is given by
\begin{equation}
\hat{\Sigma}_{BB,jl}=\min\{{Q}(t_{j},\underline{\hat{P}}),{Q}(t_{l}%
,\underline{\hat{P}})\}-{Q}(t_{j},\underline{\hat{P}}){Q}(t_{l}%
,\underline{\hat{P}}), \label{Sigma_hat}%
\end{equation}
and the cumulative error $\varepsilon$ is given by
\[
\varepsilon=\sum_{j=1}^{k+1}(Q(t_{j},\underline{\hat{P}})-Q(t_{j-1}%
,\underline{\hat{P}}))\varepsilon_{j}%
\]
with the bound%
\[
|\varepsilon|\leq\sum_{j=1}^{k+1}(Q(t_{j},\underline{\hat{P}})-Q(t_{j-1}%
,\underline{\hat{P}}))|\varepsilon_{j}|\leq\frac{1}{6R^{3/2}}\left(
\sum_{m=1}^{k}|u_{m}|\right)  ^{3}.
\]
Plugging (\ref{conputation_con_CF4}) into (\ref{conputation_con_CF3}), we have%
\[
\mathbb{E}_{\ast}\left[  \exp\left(  i\sum_{j=1}^{k}u_{j}\sqrt{R}(\hat
{Q}(t_{j},\underline{\hat{P}})-Q(t_{j},\underline{\hat{P}}))\right)  \right]
=\left(  1-\frac{1}{2R}u^{\top}\hat{\Sigma}_{BB}u+\varepsilon\right)  ^{R}%
\]
and thus by the inequality (\ref{inequality_complex_power})%
\begin{align}
&  \left\vert \mathbb{E}_{\ast}\left[  \exp\left(  i\sum_{j=1}^{k}u_{j}%
\sqrt{R}(\hat{Q}(t_{j},\underline{\hat{P}})-Q(t_{j},\underline{\hat{P}%
}))\right)  \right]  -\exp\left(  -\frac{1}{2}u^{\top}\Sigma_{BB}u\right)
\right\vert \nonumber\\
&  =\left\vert \left(  1-\frac{1}{2R}u^{\top}\hat{\Sigma}_{BB}u+\varepsilon
\right)  ^{R}-\left(  \exp\left(  -\frac{1}{2R}u^{\top}\Sigma_{BB}u\right)
\right)  ^{R}\right\vert \nonumber\\
&  \leq R\left(  \max\left\{  \left\vert 1-\frac{1}{2R}u^{\top}\hat{\Sigma
}_{BB}u+\varepsilon\right\vert ,\left\vert \exp\left(  -\frac{1}{2R}u^{\top
}\Sigma_{BB}u\right)  \right\vert \right\}  \right)  ^{R-1}\nonumber\\
&  \times\left\vert 1-\frac{1}{2R}u^{\top}\hat{\Sigma}_{BB}u+\varepsilon
-\exp\left(  -\frac{1}{2R}u^{\top}\Sigma_{BB}u\right)  \right\vert .
\label{conputation_con_CF5}%
\end{align}
Now we analyze the terms in (\ref{conputation_con_CF5}). By the definition in
(\ref{Sigma_hat}), we have $|\hat{\Sigma}_{BB,jl}|\leq1$ and consequently
$|u^{\top}\hat{\Sigma}_{BB}u|\leq(\sum_{j=1}^{k}|u_{j}|)^{2}$. Therefore,
\begin{align}
&  \left(  \max\left\{  \left\vert 1-\frac{1}{2R}u^{\top}\hat{\Sigma}%
_{BB}u+\varepsilon\right\vert ,\left\vert \exp\left(  -\frac{1}{2R}u^{\top
}\Sigma_{BB}u\right)  \right\vert \right\}  \right)  ^{R-1}\nonumber\\
&  \leq\left(  1+\frac{1}{2R}|u^{\top}\hat{\Sigma}_{BB}u|+|\varepsilon
|\right)  ^{R-1}\nonumber\\
&  \leq\left(  1+\frac{1}{2R}\left(  \sum_{j=1}^{k}|u_{j}|\right)  ^{2}%
+\frac{1}{6R^{3/2}}\left(  \sum_{m=1}^{k}|u_{m}|\right)  ^{3}\right)  ^{R-1}.
\label{conputation_con_CF6}%
\end{align}
Besides, when $R\geq u^{\top}\Sigma_{BB}u/2$, by
(\ref{inequality_1st_complex_Taylor}), we have
\begin{align}
&  \left\vert 1-\frac{1}{2R}u^{\top}\hat{\Sigma}_{BB}u+\varepsilon-\exp\left(
-\frac{1}{2R}u^{\top}\Sigma_{BB}u\right)  \right\vert \nonumber\\
&  =\left\vert 1-\frac{1}{2R}u^{\top}\Sigma_{BB}u-\exp\left(  -\frac{1}%
{2R}u^{\top}\Sigma_{BB}u\right)  +\frac{1}{2R}u^{\top}\Sigma_{BB}u-\frac
{1}{2R}u^{\top}\hat{\Sigma}_{BB}u+\varepsilon\right\vert \nonumber\\
&  \leq\left\vert 1-\frac{1}{2R}u^{\top}\Sigma_{BB}u-\exp\left(  -\frac{1}%
{2R}u^{\top}\Sigma_{BB}u\right)  \right\vert +\frac{1}{2R}|u^{\top}%
(\Sigma_{BB}-\hat{\Sigma}_{BB})u|+|\varepsilon|\nonumber\\
&  \leq\frac{1}{4R^{2}}(u^{\top}\Sigma_{BB}u)^{2}+\frac{1}{2R}|u^{\top}%
(\Sigma_{BB}-\hat{\Sigma}_{BB})u|+\frac{1}{6R^{3/2}}\left(  \sum_{m=1}%
^{k}|u_{m}|\right)  ^{3}. \label{conputation_con_CF7}%
\end{align}
Plugging (\ref{conputation_con_CF6}) and (\ref{conputation_con_CF7}) into
(\ref{conputation_con_CF5}), we obtain%
\begin{align}
&  \left\vert \mathbb{E}_{\ast}\left[  \exp\left(  i\sum_{j=1}^{k}u_{j}%
\sqrt{R}(\hat{Q}(t_{j},\underline{\hat{P}})-Q(t_{j},\underline{\hat{P}%
}))\right)  \right]  -\exp\left(  -\frac{1}{2}u^{\top}\Sigma_{BB}u\right)
\right\vert \nonumber\\
&  \leq\left(  1+\frac{1}{2R}\left(  \sum_{j=1}^{k}|u_{j}|\right)  ^{2}%
+\frac{1}{6R^{3/2}}\left(  \sum_{m=1}^{k}|u_{m}|\right)  ^{3}\right)
^{R-1}\nonumber\\
&  \times\left(  \frac{1}{4R^{2}}(u^{\top}\Sigma_{BB}u)^{2}+\frac{1}%
{2R}|u^{\top}(\Sigma_{BB}-\hat{\Sigma}_{BB})u|+\frac{1}{6R^{3/2}}\left(
\sum_{m=1}^{k}|u_{m}|\right)  ^{3}\right)  . \label{conputation_con_CF8}%
\end{align}
By Assumption \ref{convergence_in_p_to_truth} and continuous mapping theorem,
we have $|u^{\top}(\Sigma_{BB}-\hat{\Sigma}_{BB})u|\overset{p}{\rightarrow}0$
as $n\rightarrow\infty$ and thus%
\[
\frac{1}{4R^{2}}(u^{\top}\Sigma_{BB}u)^{2}+\frac{1}{2R}|u^{\top}(\Sigma
_{BB}-\hat{\Sigma}_{BB})u|+\frac{1}{6R^{3/2}}\left(  \sum_{m=1}^{k}%
|u_{m}|\right)  ^{3}\overset{p}{\rightarrow}0
\]
as $n,R\rightarrow\infty$. Moreover, as $R\rightarrow\infty$,%
\[
\left(  1+\frac{1}{2R}\left(  \sum_{j=1}^{k}|u_{j}|\right)  ^{2}+\frac
{1}{6R^{3/2}}\left(  \sum_{m=1}^{k}|u_{m}|\right)  ^{3}\right)  ^{R-1}%
\rightarrow\exp\left(  \frac{1}{2}\left(  \sum_{j=1}^{k}|u_{j}|\right)
^{2}\right)  <\infty,
\]
which, by (\ref{conputation_con_CF8}), finally implies that%
\[
\left\vert \mathbb{E}_{\ast}\left[  \exp\left(  i\sum_{j=1}^{k}u_{j}\sqrt
{R}(\hat{Q}(t_{j},\underline{\hat{P}})-Q(t_{j},\underline{\hat{P}}))\right)
\right]  -\exp\left(  -\frac{1}{2}u^{\top}\Sigma_{BB}u\right)  \right\vert
\overset{p}{\rightarrow}0,
\]
This proves (\ref{convergence_con_CF}).

Next, we prove (\ref{convergence_GP_CF}). It suffices to show%
\[
(\sqrt{n}(Q(t_{1},\underline{\hat{P}})-Q(t_{1},\underline{P})),\ldots,\sqrt
{n}(Q(t_{k},\underline{\hat{P}})-Q(t_{k},\underline{P}%
)))\overset{d}{\rightarrow}N(0,\Sigma_{GP}),
\]
where the matrix $\Sigma_{GP}\in\mathbb{R}^{k\times k}$ is given by%
\begin{equation}
\Sigma_{GP,jl}=\sum_{i=1}^{m}\frac{1}{\beta_{i}}\mathrm{Cov}_{P_{i}}%
(IF_{i}(t_{j},X_{i};\underline{P}),IF_{i}(t_{l},X_{i};\underline{P}%
)),\forall1\leq j,l\leq k. \label{GP_matrix}%
\end{equation}
By Assumption \ref{1st_expansion_truth}, we have%
\begin{align*}
Q(t,\underline{\hat{P}})  &  =Q(t,\underline{P})+\sum_{i=1}^{m}\int
IF_{i}(t,x;\underline{P})d(\hat{P}_{i}(x)-P_{i}(x))+\varepsilon(t)\\
&  =Q(t,\underline{P})+\sum_{i=1}^{m}\frac{1}{n_{i}}\sum_{j=1}^{n_{i}}%
IF_{i}(t,X_{i,j};\underline{P})+\varepsilon(t).
\end{align*}
Therefore, we can get%
\begin{align}
&  (\sqrt{n}(Q(t_{1},\underline{\hat{P}})-Q(t_{1},\underline{P})),\ldots
,\sqrt{n}(Q(t_{k},\underline{\hat{P}})-Q(t_{k},\underline{P})))\nonumber\\
&  =\left(  \sum_{i=1}^{m}\frac{\sqrt{n}}{n_{i}}\sum_{j=1}^{n_{i}}IF_{i}%
(t_{1},X_{i,j};\underline{P}),\ldots,\sum_{i=1}^{m}\frac{\sqrt{n}}{n_{i}}%
\sum_{j=1}^{n_{i}}IF_{i}(t_{k},X_{i,j};\underline{P})\right)  +(\sqrt
{n}\varepsilon(t_{1}),\ldots,\sqrt{n}\varepsilon(t_{k}))\nonumber\\
&  =\sum_{i=1}^{m}\sqrt{\frac{n}{n_{i}}}\left(  \frac{1}{\sqrt{n_{i}}}%
\sum_{j=1}^{n_{i}}IF_{i}(t_{1},X_{i,j};\underline{P}),\ldots,\frac{1}%
{\sqrt{n_{i}}}\sum_{j=1}^{n_{i}}IF_{i}(t_{k},X_{i,j};\underline{P})\right)
+(\sqrt{n}\varepsilon(t_{1}),\ldots,\sqrt{n}\varepsilon(t_{k})).
\label{decomposition_fd_GP}%
\end{align}
For each $i$, by the multivariate central limit theorem, we have%
\begin{equation}
\left(  \frac{1}{\sqrt{n_{i}}}\sum_{j=1}^{n_{i}}IF_{i}(t_{1},X_{i,j}%
;\underline{P}),\ldots,\frac{1}{\sqrt{n_{i}}}\sum_{j=1}^{n_{i}}IF_{i}%
(t_{k},X_{i,j};\underline{P})\right)  \overset{d}{\rightarrow}N(0,\Sigma_{i}),
\label{ith_convergence}%
\end{equation}
where the matrix $\Sigma_{i}\in\mathbb{R}^{k\times k}$ is given by%
\[
\Sigma_{i,jl}=\mathrm{Cov}_{P_{i}}(IF_{i}(t_{j},X_{i};\underline{P}%
),IF_{i}(t_{l},X_{i};\underline{P})),\forall1\leq j,l\leq k.
\]
Since the left hand sides of (\ref{ith_convergence}) are independent for
different $i$ and $n/n_{i}\rightarrow1/\beta_{i}$ by Assumption
\ref{balanced_data}, we have%
\begin{equation}
\sum_{i=1}^{m}\sqrt{\frac{n}{n_{i}}}\left(  \frac{1}{\sqrt{n_{i}}}\sum
_{j=1}^{n_{i}}IF_{i}(t_{1},X_{i,j};\underline{P}),\ldots,\frac{1}{\sqrt{n_{i}%
}}\sum_{j=1}^{n_{i}}IF_{i}(t_{k},X_{i,j};\underline{P})\right)
\overset{d}{\rightarrow}N\left(  0,\sum_{i=1}^{m}\frac{1}{\beta_{i}}\Sigma
_{i}\right)  \overset{d}{=}N(0,\Sigma_{GP}). \label{leading_convergence1}%
\end{equation}
Moreover, by Assumption \ref{1st_expansion_truth}, we know that $\mathbb{E}%
[(\sqrt{n}\varepsilon(t))^{2}]=o(1)$ for each $t\in\mathbb{R}$. Therefore, we
have $\sqrt{n}\varepsilon(t)\overset{p}{\rightarrow}0$ for each $t\in
\mathbb{R}$ and thus%
\begin{equation}
(\sqrt{n}\varepsilon(t_{1}),\ldots,\sqrt{n}\varepsilon(t_{k}%
))\overset{p}{\rightarrow}\mathbf{0}. \label{negligible_term1}%
\end{equation}
By Slutsky's Theorem and plugging (\ref{leading_convergence1}) and
(\ref{negligible_term1}) into (\ref{decomposition_fd_GP}), we obtain%
\[
(\sqrt{n}(Q(t_{1},\underline{\hat{P}})-Q(t_{1},\underline{P})),\ldots,\sqrt
{n}(Q(t_{k},\underline{\hat{P}})-Q(t_{k},\underline{P}%
)))\overset{d}{\rightarrow}N(0,\Sigma_{GP}).
\]
This concludes our proof.
\end{proof}

Now we prove Proposition \ref{fidis_convergence}.

\begin{proof}{Proof of Proposition \ref{fidis_convergence}.}
First, by Theorem \ref{verification_general_assumptions}, Assumptions \ref{convergence_in_p_to_truth}-\ref{3rd_expansion_empirical} hold under Assumptions \ref{balanced_data} and \ref{finite_horizon_model}. Therefore, the conclusions in Lemma \ref{fidis_separate} hold. Fix $k\in\mathbb{N}$ and $t_{1}<t_{2}<\cdots<t_{k}$. We need to show%
\begin{align*}
&  (\sqrt{R}(\hat{Q}(t_{1},\underline{\hat{P}})-Q(t_{1},\underline{\hat{P}%
})),\ldots,\sqrt{R}(\hat{Q}(t_{k},\underline{\hat{P}})-Q(t_{k},\underline{\hat
{P}})),\\
&  \sqrt{n}(Q(t_{1},\underline{\hat{P}})-Q(t_{1},\underline{P})),\ldots
,\sqrt{n}(Q(t_{k},\underline{\hat{P}})-Q(t_{k},\underline{P})))\\
&  \overset{d}{\rightarrow}N\left(
\begin{pmatrix}
0_{k\times1}\\
0_{k\times1}%
\end{pmatrix}
,%
\begin{pmatrix}
\Sigma_{BB} & 0_{k\times k}\\
0_{k\times k} & \Sigma_{GP}%
\end{pmatrix}
\right)  ,
\end{align*}
where $\Sigma_{BB}$ and $\Sigma_{GP}$ are defined in (\ref{BB_matrix}) and
(\ref{GP_matrix}) respectively. For any $u=(u_{1},\ldots,u_{k})^{\top}%
\in\mathbb{R}^{k}$ and $v=(v_{1},\ldots,v_{k})^{\top}\in\mathbb{R}^{k}$, we
have%
\begin{align}
&  \mathbb{E}\left[  \exp\left(  i\sum_{j=1}^{k}u_{j}\sqrt{R}(\hat{Q}%
(t_{j},\underline{\hat{P}})-Q(t_{j},\underline{\hat{P}}))+i\sum_{j=1}^{k}%
v_{j}\sqrt{n}(Q(t_{j},\underline{\hat{P}})-Q(t_{j},\underline{P}))\right)
\right] \nonumber\\
&  =\mathbb{E}\left[  \mathbb{E}_{\ast}\left[  \exp\left(  i\sum_{j=1}%
^{k}u_{j}\sqrt{R}(\hat{Q}(t_{j},\underline{\hat{P}})-Q(t_{j},\underline{\hat
{P}}))+i\sum_{j=1}^{k}v_{j}\sqrt{n}(Q(t_{j},\underline{\hat{P}})-Q(t_{j}%
,\underline{P}))\right)  \right]  \right] \nonumber\\
&  =\mathbb{E}\left[  \exp\left(  i\sum_{j=1}^{k}v_{j}\sqrt{n}(Q(t_{j}%
,\underline{\hat{P}})-Q(t_{j},\underline{P}))\right)  \mathbb{E}_{\ast}\left[
\exp\left(  i\sum_{j=1}^{k}u_{j}\sqrt{R}(\hat{Q}(t_{j},\underline{\hat{P}%
})-Q(t_{j},\underline{\hat{P}}))\right)  \right]  \right] \nonumber\\
&  =\exp\left(  -\frac{1}{2}u^{\top}\Sigma_{BB}u\right)  \mathbb{E}\left[
\exp\left(  i\sum_{j=1}^{k}v_{j}\sqrt{n}(Q(t_{j},\underline{\hat{P}}%
)-Q(t_{j},\underline{P}))\right)  \right] \nonumber\\
&  +\mathbb{E}\left[  \left(  \mathbb{E}_{\ast}\left[  \exp\left(  i\sum
_{j=1}^{k}u_{j}\sqrt{R}(\hat{Q}(t_{j},\underline{\hat{P}})-Q(t_{j}%
,\underline{\hat{P}}))\right)  \right]  -\exp\left(  -\frac{1}{2}u^{\top
}\Sigma_{BB}u\right)  \right)  \right. \nonumber\\
&  \times\left.  \exp\left(  i\sum_{j=1}^{k}v_{j}\sqrt{n}(Q(t_{j}%
,\underline{\hat{P}})-Q(t_{j},\underline{P}))\right)  \right]  .
\label{decomposition_fd_whole}%
\end{align}
By (\ref{convergence_con_CF}) and the dominated convergence theorem (which
also holds if almost sure convergence is replaced by convergence in
probability), we have%
\begin{align}
&  \left\vert \mathbb{E}\left[  \left(  \mathbb{E}_{\ast}\left[  \exp\left(
i\sum_{j=1}^{k}u_{j}\sqrt{R}(\hat{Q}(t_{j},\underline{\hat{P}})-Q(t_{j}%
,\underline{\hat{P}}))\right)  \right]  -\exp\left(  -\frac{1}{2}u^{\top
}\Sigma_{BB}u\right)  \right)  \right.  \right. \nonumber\\
&  \left.  \times\left.  \exp\left(  i\sum_{j=1}^{k}v_{j}\sqrt{n}%
(Q(t_{j},\underline{\hat{P}})-Q(t_{j},\underline{P}))\right)  \right]
\right\vert \nonumber\\
&  \leq\mathbb{E}\left[  \left\vert \mathbb{E}_{\ast}\left[  \exp\left(
i\sum_{j=1}^{k}u_{j}\sqrt{R}(\hat{Q}(t_{j},\underline{\hat{P}})-Q(t_{j}%
,\underline{\hat{P}}))\right)  \right]  -\exp\left(  -\frac{1}{2}u^{\top
}\Sigma_{BB}u\right)  \right\vert \right]  \rightarrow0.
\label{negligible_term2}%
\end{align}
Plugging (\ref{convergence_GP_CF}) and (\ref{negligible_term2}) into
(\ref{decomposition_fd_whole}), we obtain%
\begin{align*}
\mathbb{E}\left[  \exp\left(  i\sum_{j=1}^{k}u_{j}\sqrt{R}(\hat{Q}%
(t_{j},\underline{\hat{P}})-Q(t_{j},\underline{P}))\right)  \right]   &
\rightarrow\exp\left(  -\frac{1}{2}u^{\top}\Sigma_{BB}u-\frac{1}{2}v^{\top
}\Sigma_{GP}v\right) \\
&  =\exp\left(  -\frac{1}{2}(u^{\top},v^{\top})%
\begin{pmatrix}
\Sigma_{BB} & 0_{k\times k}\\
0_{k\times k} & \Sigma_{GP}%
\end{pmatrix}%
\begin{pmatrix}
u\\
v
\end{pmatrix}
\right)
\end{align*}
which implies the desired finite-dimensional convergence. This completes our proof.
\end{proof}

The proof of Theorem \ref{weak_convergence} relies on two tightness results.
Lemma \ref{tightness_BB} establishes the tightness of the Brownian bridge part while Lemma \ref{tightness_GP} establishes the tightness of the Gaussian process part. Recall that $||x||_{\infty}=\sup_{t\in\mathbb{R}}|x(t)|$ which could be $\infty$. If $X$ is a random element in $D_{\pm\infty}$ and $||X||_{\infty}$ is finite almost surely, then $||X||_{\infty}$ is measurable (i.e., a random variable) since%
\[
||X||_{\infty}=\sup_{t\in\mathbb{Q}}||X(t)||
\]
by right-continuity and each $X(t)$ is a random variable from the proof of Lemma \ref{measurability}.

\begin{lemma}
\label{tightness_BB}Suppose for each $t\in\mathbb{R}$, $Q(t,\underline{\hat
{P}})$ is measurable with respect to the $\sigma$-field generated by the data
and only takes finitely many values. Further, assume that%
\[
||Q(\cdot,\underline{\hat{P}})-Q(\cdot,\underline{P})||_{\infty}%
\overset{p}{\rightarrow}0
\]
as $n\rightarrow\infty$. Then we have%
\[
\sqrt{R}(\hat{Q}(\cdot,\underline{\hat{P}})-Q(\cdot,\underline{\hat{P}%
}))\Rightarrow BB({Q}(\cdot,\underline{P}))
\]
in the space $(D_{\pm\infty},\mathcal{D}_{\pm\infty},d_{\pm\infty}^{\circ})$
as $n,R\rightarrow\infty$, where $BB(\cdot)$ is the standard Brownian bridge.
In particular, $\sqrt{R}(\hat{Q}(\cdot,\underline{\hat{P}})-Q(\cdot
,\underline{\hat{P}}))$ is tight.
\end{lemma}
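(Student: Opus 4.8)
The plan is to observe that, conditionally on the input data, $\sqrt{R}(\hat{Q}(\cdot,\underline{\hat{P}})-Q(\cdot,\underline{\hat{P}}))$ is nothing but a \emph{time-changed uniform empirical process}, and then to collapse the random time change onto the deterministic one $Q(\cdot,\underline{P})$ using the Koml\'{o}s-Major-Tusn\'{a}dy (KMT) coupling together with the uniform-consistency hypothesis $\|Q(\cdot,\underline{\hat{P}})-Q(\cdot,\underline{P})\|_{\infty}\overset{p}{\rightarrow}0$. Measurability of the process as a random element of $D_{\pm\infty}$ is granted by Lemma~\ref{measurability}, since $Q(t,\underline{\hat{P}})$ is assumed measurable and $\hat{Q}(t,\underline{\hat{P}})$ is an average of indicators.

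First I would realize the Monte Carlo outputs via the quantile transform. Conditionally on the data, $Y_{1},\dots,Y_{R}$ are i.i.d.\ with c.d.f.\ $Q(\cdot,\underline{\hat{P}})$, so I may take $Y_{r}=Q^{-1}(U_{r};\underline{\hat{P}})$ with $Q^{-1}(u;\underline{\hat{P}})=\inf\{s:Q(s,\underline{\hat{P}})\ge u\}$ and $U_{1},\dots,U_{R}$ i.i.d.\ $U[0,1]$ independent of the data; the hypothesis that $Q(\cdot,\underline{\hat{P}})$ takes only finitely many values guarantees that this is a genuine step c.d.f.\ with a well-behaved generalized inverse. Right-continuity yields $\{Y_{r}\le t\}=\{U_{r}\le Q(t,\underline{\hat{P}})\}$, so with $\alpha_{R}^{U}(u):=\sqrt{R}(R^{-1}\sum_{r=1}^{R}I(U_{r}\le u)-u)$ the uniform empirical process one has the exact identity $\sqrt{R}(\hat{Q}(\cdot,\underline{\hat{P}})-Q(\cdot,\underline{\hat{P}}))=\alpha_{R}^{U}(Q(\cdot,\underline{\hat{P}}))$. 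Since the law of $\alpha_{R}^{U}$ does not involve the data, I invoke KMT: one can construct Brownian bridges $B_{R}\overset{d}{=}BB$ on $[0,1]$ with $\|\alpha_{R}^{U}-B_{R}\|_{\infty}=O_{p}(R^{-1/2}\log R)=o_{p}(1)$, and composing with the $[0,1]$-valued $Q(\cdot,\underline{\hat{P}})$ cannot enlarge the sup, so
\[
\sqrt{R}\bigl(\hat{Q}(\cdot,\underline{\hat{P}})-Q(\cdot,\underline{\hat{P}})\bigr)=B_{R}\bigl(Q(\cdot,\underline{\hat{P}})\bigr)+r_{R},\qquad \|r_{R}\|_{\infty}=o_{p}(1).
\]

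Next I would freeze the time change. Writing $B_{R}(Q(\cdot,\underline{\hat{P}}))=B_{R}(Q(\cdot,\underline{P}))+\{B_{R}(Q(\cdot,\underline{\hat{P}}))-B_{R}(Q(\cdot,\underline{P}))\}$ and using that $Q(\cdot,\underline{\hat{P}}),Q(\cdot,\underline{P})$ are $[0,1]$-valued together with the fact that the modulus of continuity $\omega_{B_{R}}$ of $B_{R}$ on $[0,1]$ is nondecreasing,
\[
\bigl\|B_{R}(Q(\cdot,\underline{\hat{P}}))-B_{R}(Q(\cdot,\underline{P}))\bigr\|_{\infty}\le\omega_{B_{R}}\bigl(\|Q(\cdot,\underline{\hat{P}})-Q(\cdot,\underline{P})\|_{\infty}\bigr).
\]
Since $\omega_{B_{R}}\overset{d}{=}\omega_{BB}$ for every $R$ with $\omega_{BB}(\delta)\to0$ a.s.\ as $\delta\downarrow0$, and $\|Q(\cdot,\underline{\hat{P}})-Q(\cdot,\underline{P})\|_{\infty}\overset{p}{\rightarrow}0$, a standard $\varepsilon$--$\delta$ argument (choose $\delta_{0}$ so that $\mathbb{P}(\omega_{BB}(\delta_{0})>\varepsilon)$ is small, uniformly in $R$) gives that this term is $o_{p}(1)$ in $\|\cdot\|_{\infty}$. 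Hence $\sqrt{R}(\hat{Q}(\cdot,\underline{\hat{P}})-Q(\cdot,\underline{\hat{P}}))=B_{R}(Q(\cdot,\underline{P}))+o_{p}(1)$ in sup-norm, where $B_{R}(Q(\cdot,\underline{P}))\overset{d}{=}BB(Q(\cdot,\underline{P}))$ for every $R$ because the time change is now deterministic. As $d_{\pm\infty}^{\circ}(x,y)\le\|x-y\|_{\infty}$, the converging-together lemma on the complete separable metric space $(D_{\pm\infty},d_{\pm\infty}^{\circ})$ then yields $\sqrt{R}(\hat{Q}(\cdot,\underline{\hat{P}})-Q(\cdot,\underline{\hat{P}}))\Rightarrow BB(Q(\cdot,\underline{P}))$, and tightness follows by Prokhorov.

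The hard part is the third step: replacing the \emph{random} time change $Q(\cdot,\underline{\hat{P}})$ by the deterministic $Q(\cdot,\underline{P})$ inside $B_{R}$. One cannot simply appeal to ``continuity of Brownian-bridge paths,'' since $B_{R}$ varies with $R$; what is genuinely needed — and what holds here for free because each $B_{R}$ is a standard Brownian bridge — is tightness of the family $\{\omega_{B_{R}}\}_{R}$ at $0$. This is also why the KMT sup-norm coupling is preferable to trying to push $(\alpha_{R}^{U},Q(\cdot,\underline{\hat{P}}))\Rightarrow(BB,Q(\cdot,\underline{P}))$ through a continuous-mapping argument for the composition $(f,g)\mapsto f\circ g$: that map is only well-behaved on the continuous-$f$ slice, and absent the explicit coupling one must contend with the Skorohod topology on the $g$-coordinate. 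A secondary care point is the validity of the quantile-transform identity at the jumps of the step c.d.f.\ $Q(\cdot,\underline{\hat{P}})$, and the fact that the $O_{p}(R^{-1/2}\log R)$ bound must be taken for the \emph{uniform} empirical process precisely so that its constant does not depend on the random distribution $Q(\cdot,\underline{\hat{P}})$.
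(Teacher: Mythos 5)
Your proof is correct and relies on exactly the same two technical pillars as the paper's argument: the Koml\'{o}s-Major-Tusn\'{a}dy sup-norm coupling, used so that the control over the conditional empirical process holds uniformly over the random c.d.f.\ $Q(\cdot,\underline{\hat P})$, and the modulus of continuity of the Brownian bridge on $[0,1]$ together with the hypothesis $\|Q(\cdot,\underline{\hat P})-Q(\cdot,\underline{P})\|_\infty\overset{p}{\rightarrow}0$ to replace the random time change by the deterministic one. What differs is the outer framing. The paper verifies the Portmanteau characterization directly: for each bounded Lipschitz $f$ it decomposes $\mathbb{E}f(\sqrt R(\hat Q-Q(\cdot,\underline{\hat P})))-\mathbb{E}f(BB(Q(\cdot,\underline P)))$ into a conditional-KMT term (estimated via the Lipschitz constant and then sent to $0$ by dominated convergence over the data) and a modulus-of-continuity term (handled by a subsequence extraction from convergence in probability), after first enlarging the space to carry a $BB$ independent of $Q(\cdot,\underline{\hat P})$. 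You instead realize the Monte Carlo outputs by the quantile transform, produce an explicit sup-norm coupling $\sqrt R(\hat Q-Q(\cdot,\underline{\hat P}))=B_R(Q(\cdot,\underline P))+o_p(1)$, and conclude with the converging-together (Slutsky) lemma on $(D_{\pm\infty},d^\circ_{\pm\infty})$. Your version is somewhat leaner and has the minor incidental benefit of never requiring the composition $B_R(Q(\cdot,\underline{\hat P}))$ to be a well-defined $D_{\pm\infty}$-valued random element (only the sup-norm distances to the deterministic-time process need to be measurable reals), whereas the paper uses the finitely-many-values hypothesis on $Q(\cdot,\underline{\hat P})$ precisely to make $BB(Q(\cdot,\underline{\hat P}))$ a legitimate intermediate random element. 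You correctly flag the two care points that make both proofs work: the KMT error bound must not depend on the underlying distribution (hence working at the uniform-process scale before composing), and the family $\{\omega_{B_R}\}_R$ must be tight at $0$, which holds for free since every $B_R$ is a standard Brownian bridge.
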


\begin{proof}{Proof of Lemma \ref{tightness_BB}.}
By the Portmanteau Theorem, to show the weak convergence, it suffices to show
\[
\mathbb{E}f(\sqrt{R}(\hat{Q}(\cdot,\underline{\hat{P}})-Q(\cdot
,\underline{\hat{P}})))\rightarrow\mathbb{E}f(BB({Q}(\cdot,\underline{P})))
\]
for each bounded and Lipschitz function $f$ on $(D_{\pm\infty},\mathcal{D}%
_{\pm\infty},d_{\pm\infty}^{\circ})$. By possibly enlarging the underlying
probability space, there exists a standard Brownian bridge $BB(\cdot)$ which
is independent of $Q(\cdot,\underline{\hat{P}})$. Since $Q(t,\underline{\hat
{P}})$ only takes finitely many values, say $q_{1},\ldots q_{l}$, we can see
$BB({Q}(t,\underline{\hat{P}}))$ is a random variable by%
\[
\{BB({Q}(t,\underline{\hat{P}}))\in H\}=\bigcup_{i=1}^{l}\{BB(q_{i})\in
H\}\cap\{{Q}(t,\underline{\hat{P}})=q_{i}\},H\in\mathcal{B}(\mathbb{R}).
\]
Then from the proof of Lemma \ref{measurability}, $BB({Q}(\cdot
,\underline{\hat{P}}))$ is a random element of $D_{\pm\infty}$ (in the
possibly enlarged probability space). Therefore, we can write%
\begin{align}
&  \mathbb{E}f(\sqrt{R}(\hat{Q}(\cdot,\underline{\hat{P}})-Q(\cdot
,\underline{\hat{P}})))-\mathbb{E}f(BB({Q}(\cdot,\underline{P})))\nonumber\\
&  =\mathbb{E[}f(\sqrt{R}(\hat{Q}(\cdot,\underline{\hat{P}})-Q(\cdot
,\underline{\hat{P}})))-f(BB({Q}(\cdot,\underline{\hat{P}})))]+\mathbb{E[}%
f(BB({Q}(\cdot,\underline{\hat{P}})))-f(BB({Q}(\cdot,\underline{P}%
)))]\nonumber\\
&  =\mathbb{E[E}^{\ast}\mathbb{[}f(\sqrt{R}(\hat{Q}(\cdot,\underline{\hat{P}%
})-Q(\cdot,\underline{\hat{P}})))-f(BB({Q}(\cdot,\underline{\hat{P}%
})))]]+\mathbb{E[}f(BB({Q}(\cdot,\underline{\hat{P}})))-f(BB({Q}%
(\cdot,\underline{P})))] \label{decomposition_weak_conv}%
\end{align}
We will show each term in (\ref{decomposition_weak_conv}) converges to $0$.

To handle the first term, recall that Koml\'{o}s-Major-Tusn\'{a}dy
approximation (\cite{komlos1975approximation}) tells us that on a suitable
probability space say $(\Omega_{1},\mathcal{F}_{1},\mathbb{P}_{1})$ (the expectation in this space is written as $\mathbb{E}_1$),
\[
\mathbb{P}_{1}\left(  \sup_{0\leq t\leq1}|\sqrt{R}(\hat{F}_{R}%
(t)-t)-BB(t)|>\frac{C\log R+x}{\sqrt{R}}\right)  <Ke^{-\lambda x}%
\]
for all $x$, where $C$, $K$, $\lambda$ are positive absolute constant and
$\hat{F}_{R}(\cdot)$ is the empirical process of $R$ i.i.d. $U[0,1]$ random
variables. Therefore, for any distribution function $G(t)$ and setting
$x=C\log R$, we obtain%
\[
\mathbb{P}_{1}\left(  ||\sqrt{R}(\hat{F}_{R}(G(t))-G(t))-BB(G(t))||_{\infty
}>\frac{2C\log R}{\sqrt{R}}\right)  <\frac{K}{R^{C\lambda}},
\]
where $\hat{F}_{R}(G(\cdot))$ is the empirical process of $R$ i.i.d. random
variables with distribution $G$. By measurability assumption,
$Q(t,\underline{\hat{P}})$ is a deterministic distribution function
conditional on the data (we take it as the above $G(t)$). Let $L$ and $M$ be the Lipschitz
constant (with respect to $d_{\pm\infty}^{\circ}$) and bound of $f$
respectively. Since $d_{\pm\infty}^{\circ}(x,y)\leq||x-y||_{\infty}$, $f$ is
also $L$-Lipschitz with respect to $||\cdot||_{\infty}$. Therefore,%
\begin{align*}
&  |\mathbb{E}^{\ast}[f(\sqrt{R}(\hat{Q}(\cdot,\underline{\hat{P}}%
)-Q(\cdot,\underline{\hat{P}})))-f(BB({Q}(\cdot,\underline{\hat{P}})))]|\\
&  =|\mathbb{E}_{1}[f(\sqrt{R}(\hat{F}_{R}(G(\cdot))-G(\cdot)))-f(BB(G(\cdot
)))]|\\
&  \leq\mathbb{E}_{1}[|f(\sqrt{R}(\hat{F}_{R}(G(\cdot))-G(\cdot
)))-f(BB(G(\cdot)))|]\\
&  \leq\frac{2CL\log R}{\sqrt{R}}\mathbb{P}_{1}\left(  ||\sqrt{R}(\hat{F}%
_{R}(G(t))-G(t))-BB(G(t))||_{\infty}\leq\frac{2C\log R}{\sqrt{R}}\right) \\
&  +2M\mathbb{P}_{1}\left(  ||\sqrt{R}(\hat{F}_{R}%
(G(t))-G(t))-BB(G(t))||_{\infty}>\frac{2C\log R}{\sqrt{R}}\right) \\
&  \leq\frac{2CL\log R}{\sqrt{R}}+\frac{2MK}{R^{C\lambda}}\rightarrow0
\end{align*}
as $R\rightarrow\infty$. Therefore, by the dominated convergence theorem,
\[
\mathbb{E[E}^{\ast}\mathbb{[}f(\sqrt{R}(\hat{Q}(\cdot,\underline{\hat{P}%
})-Q(\cdot,\underline{\hat{P}})))-f(BB({Q}(\cdot,\underline{\hat{P}%
})))]]\rightarrow0
\]
as $R\rightarrow\infty$. This proves the first term in
(\ref{decomposition_weak_conv}) converges to $0$.

Now we consider the second term in (\ref{decomposition_weak_conv}). We define
the modulus of continuity of an arbitrary function $x(\cdot)$ on $[0,1]$ as%
\begin{equation}
w_{x}(\delta)=\sup_{|s-t|\leq\delta}|x(s)-x(t)|. \label{modulus_of_continuity}%
\end{equation}
If $x$ is continuous on $[0,1]$ (and thus uniformly continuous on $[0,1]$),
then $w_{x}(\delta)\rightarrow0$ as $\delta\rightarrow0$. From the assumption
$||Q(\cdot,\underline{\hat{P}})-Q(\cdot,\underline{P})||_{\infty
}\overset{p}{\rightarrow}0$ and the property of convergence in probability, we
know that each subsequence (we omit the index for simplicity) contains a
further subsequence such that
\[
||Q(\cdot,\underline{\hat{P}})-Q(\cdot,\underline{P})||_{\infty}%
\overset{a.s.}{\rightarrow}0.
\]
Since the sample paths of $BB(\cdot)$ are all continuous, we can deduce that%
\[
||BB(Q(\cdot,\underline{\hat{P}}))-BB(Q(\cdot,\underline{P}))||_{\infty}\leq
w_{BB}(||Q(\cdot,\underline{\hat{P}})-Q(\cdot,\underline{P})||_{\infty
})\overset{a.s.}{\rightarrow}0.
\]
In particular, this implies $BB(Q(\cdot,\underline{\hat{P}}%
))\overset{a.s.}{\rightarrow}BB(Q(\cdot,\underline{P}))$ under the metric
$d_{\pm\infty}^{\circ}$. By the continuity of $f$, we have%
\[
f(BB({Q}(\cdot,\underline{\hat{P}})))\overset{a.s.}{\rightarrow}f(BB({Q}%
(\cdot,\underline{P}))),
\]
which implies
\begin{equation}
\mathbb{E[}f(BB({Q}(\cdot,\underline{\hat{P}})))-f(BB({Q}(\cdot,\underline{P}%
)))]\rightarrow0 \label{subsequence_convergence}%
\end{equation}
by the dominated convergence theorem. The above argument shows that for any
subsequence, there is a further subsequence such that
(\ref{subsequence_convergence}). So (\ref{subsequence_convergence}) must hold
as $n\rightarrow\infty$.

Combining the convergence results of the two terms in
(\ref{decomposition_weak_conv}), we know that%
\[
\mathbb{E}f(\sqrt{R}(\hat{Q}(\cdot,\underline{\hat{P}})-Q(\cdot
,\underline{\hat{P}})))\rightarrow\mathbb{E}f(BB({Q}(\cdot,\underline{P})))
\]
as $n,R\rightarrow\infty$ for any bounded and Lipschitz function $f$.
Therefore, we have%
\[
\sqrt{R}(\hat{Q}(\cdot,\underline{\hat{P}})-Q(\cdot,\underline{\hat{P}%
}))\Rightarrow BB({Q}(\cdot,\underline{P}))
\]
as $n,R\rightarrow\infty$. Since $(D_{\pm\infty},\mathcal{D}_{\pm\infty
},d_{\pm\infty}^{\circ})$ is complete and separable, by the Prokhorov's
theorem (\cite{billingsley2013convergence} Theorem 5.2), $\sqrt{R}(\hat
{Q}(\cdot,\underline{\hat{P}})-Q(\cdot,\underline{\hat{P}}))$ is tight.
\end{proof}

We next establish the tightness of $\sqrt{n}(Q(\cdot,\underline{\hat{P}%
})-Q(\cdot,\underline{P}))$ which highly relies on the structure of the V-statistic.

\begin{lemma}
\label{tightness_GP}Suppose Assumptions \ref{balanced_data} and
\ref{finite_horizon_model} hold. Then
\[
\sqrt{n}(Q(\cdot,\underline{\hat{P}})-Q(\cdot,\underline{P}))\Rightarrow
\mathbb{G}(\cdot),
\]
where $\mathbb{G}(\cdot)$ has the finite-dimensional distribution specified in
Proposition \ref{fidis_convergence} and
\[
\mathbb{P}(\mathbb{G}\in C_{b}(\mathbb{R}))=1.
\]

\end{lemma}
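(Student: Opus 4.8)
The plan is to deduce the weak convergence from the finite-dimensional convergence already established in Proposition~\ref{fidis_convergence} (whose second marginal is precisely $\sqrt n(Q(\cdot,\underline{\hat P})-Q(\cdot,\underline P))\overset{f.d.}{\rightarrow}\mathbb G(\cdot)$) plus tightness of $\{\sqrt n(Q(\cdot,\underline{\hat P})-Q(\cdot,\underline P))\}$ in the Skorohod space $D_{\pm\infty}$, and to establish separately that the limit concentrates on $C_b(\mathbb R)$. Measurability of each process is covered by Lemma~\ref{measurability}, since the V-statistic in \eqref{V_statistic} takes finitely many values. By the sufficient criterion for weak convergence on $D_{\pm\infty}$ recalled in Appendix~\ref{sec:Skorohod}, everything then reduces to (i) controlling the modulus of continuity of the process on each compact interval and (ii) controlling its behaviour as $t\to\pm\infty$.

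The first step is to split the V-statistic into its Hájek (linear) projection plus a remainder. By Assumption~\ref{1st_expansion_truth} (which holds under the finite-horizon model by Theorem~\ref{verification_general_assumptions}),
\[
Q(t,\underline{\hat P})-Q(t,\underline P)=L_n(t)+\varepsilon(t),\qquad L_n(t):=\sum_{i=1}^m\frac{1}{n_i}\sum_{j=1}^{n_i}IF_i(t,X_{i,j};\underline P),
\]
with $\mathbb E[\varepsilon(t)^2]=o(1/n)$ for each fixed $t$, where $IF_i$ is given by \eqref{IF finite horizon}. For the linear part, write $\sqrt n\,L_n(\cdot)=\sum_{i=1}^m\sqrt{n/n_i}\,\mathbb G^{(i)}_{n_i}(\cdot)$ where $\mathbb G^{(i)}_{n_i}(t)=n_i^{-1/2}\sum_{j=1}^{n_i}IF_i(t,X_{i,j};\underline P)$ is the empirical process indexed by $\mathcal F_i=\{x\mapsto IF_i(t,x;\underline P):t\in\mathbb R\}$. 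Since each $\mathbb P(h\le t\mid X_i(j)=x)$ is a conditional distribution function and $T_iQ(t,\underline P)$ is continuous and nondecreasing, $t\mapsto IF_i(t,x;\underline P)$ is, uniformly in $x$, a difference of bounded monotone functions and hence has total variation at most $2T_i$; therefore $\mathcal F_i$ is $P_i$-Donsker. Consequently $\mathbb G^{(i)}_{n_i}$ is asymptotically tight with a tight Gaussian limit whose sample paths are $\rho_i$-uniformly continuous, and because $Q(\cdot,\underline P)$ is continuous the covariance semimetric $\rho_i$ is continuous with respect to the Euclidean topology on $\mathbb R$ (continuity of $Q$ forces $\mathbb P(h=t\mid X_i)=0$ a.s.), so these paths are continuous in $t$. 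Summing over the independent samples $i=1,\dots,m$ and using $n/n_i\to1/\beta_i$ from Assumption~\ref{balanced_data} shows that $\sqrt n\,L_n(\cdot)$ is tight with the covariance function \eqref{covariance_function}.

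The crux is to upgrade the pointwise control of $\varepsilon$ to a uniform one, $\sqrt n\,\|\varepsilon\|_\infty=o_p(1)$, and I expect this to be the main obstacle. The remainder $\varepsilon(\cdot)$ consists of the U-statistic components of order at least two in the Hoeffding decomposition of the V-statistic \eqref{V_statistic}, together with the ``diagonal'' corrections by which a V-statistic differs from the corresponding U-statistic. I would bound $\mathbb E\|\varepsilon\|_\infty$, and more precisely the modulus of continuity of the degenerate part, by adapting the U-process estimates of \cite{silverman1976limit} to the multi-sample kernels $I(h(X_{1,j_{11}},\dots,X_{m,j_{mT_m}})\le t)$ of Assumption~\ref{finite_horizon_model}; the fact that these kernels are monotone in $t$ is what makes the required chaining/bracketing bound work, so that every degenerate component of order $c\ge2$ is $o_p(n^{-1/2})$ uniformly in $t$ and the bias from the diagonal terms is $O(1/n)$. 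The same estimates, applied for $|t|\to\infty$ where the $L^2$-sizes of $IF_i(t,\cdot;\underline P)$ and of the higher-order kernels shrink to $0$ (the conditional distribution functions and $Q(t,\underline P)$ approach $0$ or $1$), supply condition (ii) above; the corresponding condition for $\sqrt n\,L_n$ follows from the tightness of the $\mathbb G^{(i)}_{n_i}$ already obtained.

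Putting the pieces together, $\sqrt n(Q(\cdot,\underline{\hat P})-Q(\cdot,\underline P))=\sqrt n\,L_n(\cdot)+o_p(1)$ uniformly, hence in the $d_{\pm\infty}^{\circ}$ metric, so the left-hand side is tight in $D_{\pm\infty}$; combined with the finite-dimensional convergence of Proposition~\ref{fidis_convergence} this gives $\sqrt n(Q(\cdot,\underline{\hat P})-Q(\cdot,\underline P))\Rightarrow\mathbb G(\cdot)$. Finally, since $\mathbb G$ is the uniform limit of the continuous-path processes $\sqrt n\,L_n$, it concentrates on $C_b(\mathbb R)$, i.e.\ $\mathbb P(\mathbb G\in C_b(\mathbb R))=1$.
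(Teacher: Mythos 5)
Your route is genuinely different from the paper's. The paper never decomposes the V-statistic into Hájek projection plus remainder; instead it (i) replaces the V-statistic by the U-statistic $Q_U$ with a deterministic uniform error $O(1/\sqrt n)$, (ii) changes variables to $s=Q(t,\underline{P})$ so the process lives on $D[0,1]$, and (iii) bounds the modulus of continuity of the whole (non-decomposed) U-empirical process $Y_n(s)=\sqrt n(Q_U(Q^{-1}(s,\underline{P}),\underline{\hat{P}})-s)$ by coloring the index set of the U-statistic with $\Delta(\mathcal G)+1$ colors so that each color class is a sum of i.i.d. indicators, to which Lemma~3 of \cite{silverman1976limit} applies directly. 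The coloring trick lets the paper treat the full U-statistic at once, avoiding any need to control the degenerate Hoeffding components separately.

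The main gap in your proposal is precisely the step you flag as the crux: the claim that $\sqrt n\,\|\varepsilon\|_\infty=o_p(1)$. Assumption~\ref{1st_expansion_truth} only gives pointwise $\mathbb{E}[\varepsilon(t)^2]=o(1/n)$, and upgrading this to a sup-norm bound requires a genuine uniform estimate over the degenerate (order-$\ge 2$) parts of the multi-sample Hoeffding decomposition together with the U-vs-V diagonal corrections. You write that you ``would bound'' this ``by adapting the U-process estimates of \cite{silverman1976limit}'' using the monotonicity of the kernels, but this is an outline, not a proof: the adaptation of Silverman's single-sample modulus bound to the degenerate pieces of a multi-sample V-statistic is exactly the technical content the paper supplies (and it does so differently, via the coloring). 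Until that step is made precise, the tightness of $\sqrt n(Q(\cdot,\underline{\hat P})-Q(\cdot,\underline P))$ is not established.

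A second, smaller error is the closing argument for $\mathbb{P}(\mathbb{G}\in C_b(\mathbb{R}))=1$. You assert that $\sqrt n\,L_n$ has continuous paths and that $\mathbb G$ inherits continuity as its ``uniform limit.'' Neither half of this holds: $IF_i(\cdot,x;\underline{P})$ is a difference of conditional c\`adl\`ag distribution functions that need not be continuous in $t$ even though the marginal $Q(\cdot,\underline{P})$ is, so $L_n$ can jump; and weak convergence in $D_{\pm\infty}$ does not by itself transfer path regularity from the prelimit to the limit. The correct argument, which your middle paragraph already contains the ingredients for, is that the Donsker limit has a.s. $\rho_i$-uniformly continuous paths and that continuity of $Q(\cdot,\underline{P})$ makes $\rho_i$ continuous in the Euclidean topology; you should also verify that $C_b(\mathbb R)$ is a measurable subset of $D_{\pm\infty}$ (the paper does this via the representation in its equation \eqref{C_b}). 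On the Donsker claim for $L_n$ itself: framing it in terms of the total variation of $t\mapsto IF_i(t,x;\underline P)$ is misleading; what is actually used is that $\{g_j(t,\cdot):t\in\mathbb R\}$ is a monotone-in-$t$ family, which gives the needed bracketing/VC control over the index class, and this can be made rigorous with a bit more care.

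Finally, your control of the process as $t\to\pm\infty$ on the unbounded domain $D_{\pm\infty}$ is asserted but not worked out. The paper sidesteps this by changing variables to $s\in[0,1]$, where the tail behavior is subsumed in the near-boundary modulus of continuity; if you stay on $\mathbb R$ you must show explicitly that the criteria for tightness on $D_{\pm\infty}$ (not just on each compact) are met.
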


\begin{proof}{Proof of Lemma \ref{tightness_GP}.}
Based on Assumption \ref{finite_horizon_model}, $Q(\cdot,\underline{\hat{P}})$
is a multi-sample V-statistic given by%
\[
Q(t,\underline{\hat{P}})=\frac{1}{\prod_{i=1}^{m}n_{i}^{T_{i}}}\sum_{i=1}%
^{m}\sum_{1\leq j_{i1},\ldots,j_{iT_{i}}\leq n_{i}}I(h(X_{1,j_{11}}%
,\ldots,X_{1,j_{1T_{1}}},\ldots,X_{m,j_{m1}},\ldots,X_{m,j_{mT_{m}}})\leq t).
\]
When $n_{i}\geq T_{i}$, we define the corresponding multi-sample U-statistic
by%
\begin{equation}
Q_{U}(t,\underline{\hat{P}})=\frac{1}{\prod_{i=1}^{m}\prod_{j=0}^{T_{i}%
-1}(n_{i}-j)}\sum_{i=1}^{m}\sum_{1\leq j_{i1}\neq\cdots\neq j_{iT_{i}}\leq
n_{i}}I(h(X_{1,j_{11}},\ldots,X_{1,j_{1T_{1}}},\ldots,X_{m,j_{m1}}%
,\ldots,X_{m,j_{mT_{m}}})\leq t). \label{U_statistic}%
\end{equation}
We first show the pointwise convergence%
\[
\sqrt{n}||Q(\cdot,\underline{\hat{P}})-Q_{U}(\cdot,\underline{\hat{P}%
})||_{\infty}\rightarrow0.
\]
In fact, for any $t\in\mathbb{R}$%
\begin{align*}
&  |Q(t,\underline{\hat{P}})-Q_{U}(t,\underline{\hat{P}})|\\
&  \leq\left\vert \frac{1}{\prod_{i=1}^{m}n_{i}^{T_{i}}}\sum_{i=1}^{m}%
\sum_{1\leq j_{i1},\ldots,j_{iT_{i}}\leq n_{i}}I(h(X_{1,j_{11}},\ldots
,X_{1,j_{1T_{1}}},\ldots,X_{m,j_{m1}},\ldots,X_{m,j_{mT_{m}}})\leq t)\right.
\\
&  \left.  -\frac{1}{\prod_{i=1}^{m}n_{i}^{T_{i}}}\sum_{i=1}^{m}\sum_{1\leq
j_{i1}\neq\cdots\neq j_{iT_{i}}\leq n_{i}}I(h(X_{1,j_{11}},\ldots
,X_{1,j_{1T_{1}}},\ldots,X_{m,j_{m1}},\ldots,X_{m,j_{mT_{m}}})\leq
t)\right\vert \\
&  +\left\vert \frac{1}{\prod_{i=1}^{m}n_{i}^{T_{i}}}\sum_{i=1}^{m}\sum_{1\leq
j_{i1}\neq\cdots\neq j_{iT_{i}}\leq n_{i}}I(h(X_{1,j_{11}},\ldots
,X_{1,j_{1T_{1}}},\ldots,X_{m,j_{m1}},\ldots,X_{m,j_{mT_{m}}})\leq t)\right.
\\
&  \left.  -\frac{1}{\prod_{i=1}^{m}\prod_{j=0}^{T_{i}-1}(n_{i}-j)}\sum
_{i=1}^{m}\sum_{1\leq j_{i1}\neq\cdots\neq j_{iT_{i}}\leq n_{i}}%
I(h(X_{1,j_{11}},\ldots,X_{1,j_{1T_{1}}},\ldots,X_{m,j_{m1}},\ldots
,X_{m,j_{mT_{m}}})\leq t)\right\vert \\
&  \leq\frac{\prod_{i=1}^{m}n_{i}^{T_{i}}-\prod_{i=1}^{m}\prod_{j=0}^{T_{i}%
-1}(n_{i}-j)}{\prod_{i=1}^{m}n_{i}^{T_{i}}}+\prod_{i=1}^{m}\prod_{j=0}%
^{T_{i}-1}(n_{i}-j)\left(  \frac{1}{\prod_{i=1}^{m}\prod_{j=0}^{T_{i}-1}%
(n_{i}-j)}-\frac{1}{\prod_{i=1}^{m}n_{i}^{T_{i}}}\right)  ,
\end{align*}
which implies%
\begin{align}
\sqrt{n}||Q(\cdot,\underline{\hat{P}})-Q_{U}(\cdot,\underline{\hat{P}%
})||_{\infty}  &  \leq\sqrt{n}\left(  2-2\frac{\prod_{i=1}^{m}\prod
_{j=0}^{T_{i}-1}(n_{i}-j)}{\prod_{i=1}^{m}n_{i}^{T_{i}}}\right) \nonumber\\
&  =2\sqrt{n}\frac{\prod_{i=1}^{m}n_{i}^{T_{i}}-\prod_{i=1}^{m}\prod
_{j=0}^{T_{i}-1}(n_{i}-j)}{\prod_{i=1}^{m}n_{i}^{T_{i}}}.
\label{UV_difference}%
\end{align}
Note that for each $i=1,\ldots,m$,%
\[
\prod_{j=0}^{T_{i}-1}(n_{i}-j)=n_{i}^{T_{i}}-\frac{(T_{i}-1)T_{i}}{2}%
n_{i}^{T_{i}-1}+o(n_{i}^{T_{i}-1})
\]
and thus%
\begin{align}
\prod_{i=1}^{m}\prod_{j=0}^{T_{i}-1}(n_{i}-j)  &  =\prod_{i=1}^{m}\left(
n_{i}^{T_{i}}-\frac{(T_{i}-1)T_{i}}{2}n_{i}^{T_{i}-1}+o(n_{i}^{T_{i}%
-1})\right) \nonumber\\
&  =\prod_{i=1}^{m}n_{i}^{T_{i}}-\left(  \sum_{i=1}^{m}\frac{(T_{i}-1)T_{i}%
}{2n_{i}}\right)  \prod_{i=1}^{m}n_{i}^{T_{i}}+o\left(  n^{\left(  \sum
_{i=1}^{m}T_{i}\right)  -1}\right)  , \label{product_order1}%
\end{align}
where the $o(\cdot)$ in the second inequality is due to $n_{i}=\Theta(n)$ by
Assumption \ref{balanced_data}. Plugging (\ref{product_order1}) into
(\ref{UV_difference}), we have%
\begin{equation}
\sqrt{n}||Q(\cdot,\underline{\hat{P}})-Q_{U}(\cdot,\underline{\hat{P}%
})||_{\infty}\leq2\sqrt{n}\frac{\left(  \sum_{i=1}^{m}\frac{(T_{i}-1)T_{i}%
}{2n_{i}}\right)  \prod_{i=1}^{m}n_{i}^{T_{i}}-o\left(  n^{\left(  \sum
_{i=1}^{m}T_{i}\right)  -1}\right)  }{\prod_{i=1}^{m}n_{i}^{T_{i}}}=O\left(
\frac{1}{\sqrt{n}}\right)  \rightarrow0 \label{UV_negligible_diff}%
\end{equation}
as $n\rightarrow\infty$. Therefore, to prove the tightness of $\sqrt
{n}(Q(\cdot,\underline{\hat{P}})-Q(\cdot,\underline{P}))$, it suffices to
study the weak convergence of $\sqrt{n}(Q_{U}(\cdot,\underline{\hat{P}%
})-Q(\cdot,\underline{P}))$ since by Theorem 3.1 in
\cite{billingsley2013convergence}, the weak convergence of $\sqrt{n}%
(Q_{U}(\cdot,\underline{\hat{P}})-Q(\cdot,\underline{P}))$ and%
\begin{align*}
&  d_{\pm\infty}^{\circ}(\sqrt{n}(Q(\cdot,\underline{\hat{P}})-Q(\cdot
,\underline{P})),\sqrt{n}(Q_{U}(\cdot,\underline{\hat{P}})-Q(\cdot
,\underline{P})))\\
&  \leq||\sqrt{n}(Q(\cdot,\underline{\hat{P}})-Q(\cdot,\underline{P}%
))-\sqrt{n}(Q_{U}(\cdot,\underline{\hat{P}})-Q(\cdot,\underline{P}%
))||_{\infty}\\
&  =\sqrt{n}||Q(\cdot,\underline{\hat{P}})-Q_{U}(\cdot,\underline{\hat{P}%
})||_{\infty}\rightarrow0
\end{align*}
imply the weak convergence (and thus tightness) of $\sqrt{n}(Q(\cdot
,\underline{\hat{P}})-Q(\cdot,\underline{P}))$.

Now we study the weak convergence of $\sqrt{n}(Q_{U}(\cdot,\underline{\hat{P}%
})-Q(\cdot,\underline{P}))$. We start by showing the weak convergence of
\[
\sqrt{n}(Q_{U}(Q^{-1}(s,\underline{P}),\underline{\hat{P}})-Q(Q^{-1}%
(s,\underline{P}),\underline{P}))=\sqrt{n}(Q_{U}(Q^{-1}(s,\underline{P}%
),\underline{\hat{P}})-s),s\in\lbrack0,1]
\]
on the space $D[0,1]$ where $Q^{-1}(s,\underline{P})$ is the generalized
inverse defined by%
\[
Q^{-1}(s,\underline{P})=\inf\{t:Q(t,\underline{P})\geq s\}
\]
and the equality $Q(Q^{-1}(s,\underline{P}),\underline{P})=s$ holds because
$Q(t,\underline{P})$ is a continuous function. First, we need to verify that
$Q_{U}(Q^{-1}(\cdot,\underline{P}),\underline{\hat{P}})$ is indeed a random
element in $D[0,1]$, i.e., $Q_{U}(Q^{-1}(\cdot,\underline{P}),\underline{\hat
{P}})$ has paths in $D[0,1]$ almost surely. We note that $Q^{-1}%
(\cdot,\underline{P})$ is left-continuous and has right limits, and
$Q_{U}(\cdot,\underline{\hat{P}})$ is right-continuous and has left limits.
Moreover, both $Q^{-1}(\cdot,\underline{P})$ and $Q_{U}(\cdot,\underline{\hat
{P}})$ are non-decreasing. Therefore, $Q_{U}(Q^{-1}(\cdot,\underline{P}%
),\underline{\hat{P}})$ has left limits. In order to show almost surely all
paths of $Q_{U}(Q^{-1}(\cdot,\underline{P}),\underline{\hat{P}})$ are
right-continuous, we consider two cases based on whether $Q^{-1}%
(\cdot,\underline{P})$ is right-continuous at $s\in\lbrack0,1)$. If
$Q^{-1}(\cdot,\underline{P})$ is right-continuous at $s$, then $Q_{U}%
(Q^{-1}(\cdot,\underline{P}),\underline{\hat{P}})$ is clearly right-continuous
at $s$ as well. If $Q^{-1}(\cdot,\underline{P})$ is not right-continuous at
$s$, i.e., $\lim_{t\downarrow s}Q^{-1}(t,\underline{P}):=x_{0}>Q^{-1}%
(s,\underline{P})$, by the continuity of $Q(\cdot,\underline{P})$, we can see
that $Q(\cdot,\underline{P})$ takes constant value $s$ in the interval
$[Q^{-1}(s,\underline{P}),x]$, i.e., the probability distribution of
$h(\mathbf{X}_{1},\ldots,\mathbf{X}_{m})$ puts zero mass in $(Q^{-1}%
(s,\underline{P}),x]$. Therefore,%
\[
\lim_{t\downarrow s}Q_{U}(Q^{-1}(t,\underline{P}),\underline{\hat{P}}%
)=Q_{U}\left(  \lim_{t\downarrow s}Q^{-1}(t,\underline{P}),\underline{\hat{P}%
}\right)  =Q_{U}(x_{0},\underline{\hat{P}})=Q_{U}(Q^{-1}(s,\underline{P}%
),\underline{\hat{P}}),\text{a.s.,}%
\]
where the first equality uses the right-continuity of $Q_{U}$ and the last
continuity holds almost surely since the empirical distribution of the
U-statistics also puts zero mass in $(Q^{-1}(s,\underline{P}),x]$ almost
surely. This proves that, almost surely all paths of $Q_{U}(Q^{-1}%
(\cdot,\underline{P}),\underline{\hat{P}})$ are right-continuous at a jump
point $s$ of $Q^{-1}(\cdot,\underline{P})$. The fact that all jump points of
$Q^{-1}(\cdot,\underline{P})$ are countable finally implies that almost surely
all paths of $Q_{U}(Q^{-1}(\cdot,\underline{P}),\underline{\hat{P}})$ are
right-continuous in $[0,1]$, which concludes that $Q_{U}(Q^{-1}(\cdot
,\underline{P}),\underline{\hat{P}})$ is a random element in $D[0,1]$. Now we
start to show the weak convergence. Finite-dimensional convergence follows
from Proposition \ref{fidis_convergence}, (\ref{UV_negligible_diff}) and
Slutsky's Theorem. So it suffices to show the tightness of $Y_{n}(s):=\sqrt
{n}(Q_{U}(Q^{-1}(s,\underline{P}),\underline{\hat{P}})-s)$. The proof is
generalized from \cite{silverman1976limit} which proves the tightness of the
single-sample U empirical process. The main idea is to partition the summation
in $Q_{U}$ into a few parts such that the summation is conducted for i.i.d.
random variables within each part. Recall that%
\[
Q_{U}(\cdot,\underline{\hat{P}})=\frac{1}{\prod_{i=1}^{m}\prod_{j=0}^{T_{i}%
-1}(n_{i}-j)}\sum_{i=1}^{m}\sum_{1\leq j_{i1}\neq\cdots\neq j_{iT_{i}}\leq
n_{i}}I(h(X_{1,j_{11}},\ldots,X_{1,j_{1T_{1}}},\ldots,X_{m,j_{m1}}%
,\ldots,X_{m,j_{mT_{m}}})\leq\cdot).
\]
In the following, we assume $n_{i}\geq T_{i}$. We collect the indices used in
the summation of $Q_{U}$:%
\[
S(\mathbf{n},\mathbf{T}):=\{(j_{11},\ldots,j_{1T_{1}},\ldots,j_{m1}%
,\ldots,j_{mT_{m}}):1\leq j_{i1}\neq\cdots\neq j_{iT_{i}}\leq n_{i},1\leq
i\leq m\}.
\]
We construct a graph $\mathcal{G}$ with vertices labelled by the elements in
$S(\mathbf{n},\mathbf{T})$ and with an edge between%
\[
J=(j_{11},\ldots,j_{1T_{1}},\ldots,j_{m1},\ldots,j_{mT_{m}}),\quad
K=(k_{11},\ldots,k_{1T_{1}},\ldots,k_{m1},\ldots,k_{mT_{m}})
\]
if and only if%
\[
\bigcup_{i=1}^{m}(\{j_{i1},\ldots,j_{iT_{i}}\}\cap\{k_{i1},\ldots,k_{iT_{i}%
}\})\neq\emptyset.
\]
In the graph theory, the chromatic number $\chi(\mathcal{G})$ is the smallest
number of colors needed to color the vertices of $\mathcal{G}$ such that no
adjacent vertices have the same color. Let $\Delta(\mathcal{G})$ be the
maximum degree of vertices in $\mathcal{G}$. Then the greedy coloring
algorithm (\cite{west2001introduction} Proposition 5.1.13) shows that%
\[
\chi(\mathcal{G})\leq\Delta(\mathcal{G})+1,
\]
which implies that we can use $\Delta(\mathcal{G})+1$ colors to color the
vertices of $\mathcal{G}$ so that no adjacent vertices have the same color. So
we can partition the vertices
\[
S(\mathbf{n},\mathbf{T})=\bigcup_{i=1}^{\Delta(\mathcal{G})+1}S_{i},
\]
where $S_{i}$ is composed of vertices with color $i$. By the property of the
coloring, we know that the random variables
\[
h(X_{1,j_{11}},\ldots,X_{1,j_{1T_{1}}},\ldots,X_{m,j_{m1}},\ldots
,X_{m,j_{mT_{m}}}),\quad(j_{11},\ldots,j_{1T_{1}},\ldots,j_{m1},\ldots
,j_{mT_{m}})\in S_{i}%
\]
are mutually independent. We write $|S_{i}|$ as the cardinality of $S_{i}$ and
define%
\begin{align*}
&  H_{i}(s)\\
&  =\sqrt{|S_{i}|}\left(  \frac{1}{|S_{i}|}\sum_{(j_{11},\ldots,j_{mT_{m}})\in
S_{i}}I(h(X_{1,j_{11}},\ldots,X_{1,j_{1T_{1}}},\ldots,X_{m,j_{m1}}%
,\ldots,X_{m,j_{mT_{m}}})\leq Q^{-1}(s,\underline{P}))-s\right)  .
\end{align*}
Note that $H_{i}(s)$ has the same finite-dimensional distributions as the
empirical process of $|S_{i}|$ i.i.d. $U[0,1]$ random variables due to
$Q(Q^{-1}(s,\underline{P}),\underline{P})=s$ from the continuity of
$Q(t,\underline{P})$. Besides, by (\ref{generating_system}), the class of
finite-dimensional sets generates $\mathcal{D}_{\pm\infty}$ and is a $\pi
$-system. Therefore, by Theorem 3.3 in \cite{billingsley2008probability}, as a random element on $D_{\pm\infty}$, $H_{i}(s)$ has the same distribution as the empirical process of $|S_{i}|$
i.i.d. $U[0,1]$ random variables. Recall the
modulus of continuity defined in (\ref{modulus_of_continuity}):%
\[
w_{x}(\delta)=\sup_{|s-t|\leq\delta}|x(s)-x(t)|,\quad x:[0,1]\mapsto
\mathbb{R}.
\]
Then by Lemma 3 in \cite{silverman1976limit}, we have%
\begin{equation}
\mathbb{E}[w_{H_{i}}(\delta)]\leq K(\alpha(|S_{i}|)+\beta(\delta)),
\label{modulus_bound_Hi}%
\end{equation}
where $K$ is a constant,%
\[
\alpha(x)=\left\{
\begin{array}
[c]{l}%
x^{-1/2}\log x\\
x^{-1/2}(\log2-(2-x)/2)
\end{array}
\left.
\begin{array}
[c]{l}%
\text{if }x\geq2\\
\text{if }1\leq x\leq2
\end{array}
\right.  \right.  ,
\]
and%
\[
\beta(x)=\left\{
\begin{array}
[c]{l}%
(-x\log x)^{1/2}\\
e^{-1/2}%
\end{array}
\left.
\begin{array}
[c]{l}%
\text{if }0<x\leq e^{-1}\\
\text{if }x>e^{-1}%
\end{array}
\right.  \right.  .
\]
Besides, we can represent $Y_{n}$ as%
\[
Y_{n}(s)=\sqrt{n}(Q_{U}(Q^{-1}(s,\underline{P}),\underline{\hat{P}}%
)-s)=\frac{\sqrt{n}}{|S(\mathbf{n},\mathbf{T})|}\sum_{i=1}^{\Delta
(\mathcal{G})+1}\sqrt{|S_{i}|}H_{i}(s),
\]
which implies%
\[
w_{Y_{n}}(\delta)\leq\frac{\sqrt{n}}{|S(\mathbf{n},\mathbf{T})|}\sum
_{i=1}^{\Delta(\mathcal{G})+1}\sqrt{|S_{i}|}w_{H_{i}}(\delta).
\]
Therefore, by (\ref{modulus_bound_Hi}) we have%
\begin{equation}
\mathbb{E}[w_{Y_{n}}(\delta)]\leq\frac{K\sqrt{n}}{|S(\mathbf{n},\mathbf{T}%
)|}\sum_{i=1}^{\Delta(\mathcal{G})+1}\sqrt{|S_{i}|}(\alpha(|S_{i}%
|)+\beta(\delta)). \label{modulus_bound_Y1}%
\end{equation}
Since $x^{1/2}$ and $x^{1/2}\alpha(x)$ are concave functions in $[1,\infty)$,
by Jensen's inequality, we have%
\begin{equation}
\frac{1}{\Delta(\mathcal{G})+1}\sum_{i=1}^{\Delta(\mathcal{G})+1}\sqrt
{|S_{i}|}\leq\sqrt{\frac{1}{\Delta(\mathcal{G})+1}\sum_{i=1}^{\Delta
(\mathcal{G})+1}|S_{i}|}=\sqrt{\frac{|S(\mathbf{n},\mathbf{T})|}%
{\Delta(\mathcal{G})+1}}, \label{Jensen1}%
\end{equation}
and%
\begin{equation}
\frac{1}{\Delta(\mathcal{G})+1}\sum_{i=1}^{\Delta(\mathcal{G})+1}\sqrt
{|S_{i}|}\alpha(|S_{i}|)\leq\sqrt{\frac{|S(\mathbf{n},\mathbf{T})|}%
{\Delta(\mathcal{G})+1}}\alpha\left(  \frac{|S(\mathbf{n},\mathbf{T})|}%
{\Delta(\mathcal{G})+1}\right)  . \label{Jensen2}%
\end{equation}
Plugging (\ref{Jensen1}) and (\ref{Jensen2}) into (\ref{modulus_bound_Y1}), we
obtain%
\begin{align}
\mathbb{E}[w_{Y_{n}}(\delta)]  &  \leq\frac{K\sqrt{n}(\Delta(\mathcal{G}%
)+1)}{|S(\mathbf{n},\mathbf{T})|}\sqrt{\frac{|S(\mathbf{n},\mathbf{T}%
)|}{\Delta(\mathcal{G})+1}}\left(  \alpha\left(  \frac{|S(\mathbf{n}%
,\mathbf{T})|}{\Delta(\mathcal{G})+1}\right)  +\beta(\delta)\right)
\nonumber\\
&  =K\sqrt{n}\sqrt{\frac{\Delta(\mathcal{G})+1}{|S(\mathbf{n},\mathbf{T})|}%
}\left(  \alpha\left(  \frac{|S(\mathbf{n},\mathbf{T})|}{\Delta(\mathcal{G}%
)+1}\right)  +\beta(\delta)\right)  . \label{modulus_bound_Y2}%
\end{align}
From the construction of $\mathcal{G}$, we know that%
\[
\Delta(\mathcal{G})+1=\prod_{i=1}^{m}\prod_{j=0}^{T_{i}-1}(n_{i}%
-j)-\prod_{i=1}^{m}\prod_{j=0}^{T_{i}-1}(n_{i}-T_{i}-j),
\]
and
\[
|S(\mathbf{n},\mathbf{T})|=\prod_{i=1}^{m}\prod_{j=0}^{T_{i}-1}(n_{i}-j).
\]
A similar argument like (\ref{product_order1}) shows that%
\begin{align}
\prod_{i=1}^{m}\prod_{j=0}^{T_{i}-1}(n_{i}-T_{i}-j)  &  =\prod_{i=1}%
^{m}\left(  n_{i}^{T_{i}}-\frac{(3T_{i}-1)T_{i}}{2}n_{i}^{T_{i}-1}%
+o(n_{i}^{T_{i}-1})\right) \nonumber\\
&  =\prod_{i=1}^{m}n_{i}^{T_{i}}-\left(  \sum_{i=1}^{m}\frac{(3T_{i}-1)T_{i}%
}{2n_{i}}\right)  \prod_{i=1}^{m}n_{i}^{T_{i}}+o\left(  n^{\left(  \sum
_{i=1}^{m}T_{i}\right)  -1}\right)  . \label{product_order2}%
\end{align}
Therefore, by (\ref{product_order1}) and (\ref{product_order2}),
\begin{align*}
n\frac{\Delta(\mathcal{G})+1}{|S(\mathbf{n},\mathbf{T})|}  &  =n\frac
{\prod_{i=1}^{m}\prod_{j=0}^{T_{i}-1}(n_{i}-j)-\prod_{i=1}^{m}\prod
_{j=0}^{T_{i}-1}(n_{i}-T_{i}-j)}{\prod_{i=1}^{m}\prod_{j=0}^{T_{i}-1}%
(n_{i}-j)}\\
&  =n\frac{\left(  \sum_{i=1}^{m}\frac{T_{i}^{2}}{n_{i}}\right)  \prod
_{i=1}^{m}n_{i}^{T_{i}}+o\left(  n^{\left(  \sum_{i=1}^{m}T_{i}\right)
-1}\right)  }{\prod_{i=1}^{m}n_{i}^{T_{i}}-\left(  \sum_{i=1}^{m}\frac
{(T_{i}-1)T_{i}}{2n_{i}}\right)  \prod_{i=1}^{m}n_{i}^{T_{i}}+o\left(
n^{\left(  \sum_{i=1}^{m}T_{i}\right)  -1}\right)  }\\
&  =\frac{\sum_{i=1}^{m}T_{i}^{2}\frac{n}{n_{i}}+o(1)}{1-O\left(  \frac{1}%
{n}\right)  }\rightarrow\sum_{i=1}^{m}\frac{T_{i}^{2}}{\beta_{i}},
\end{align*}
where we use $\lim_{n\rightarrow\infty}n_{i}/n=\beta_{i}$ from Assumption
\ref{balanced_data} when analyzing the orders and obtaining the final limit.
With this limit and $\lim_{x\rightarrow\infty}\alpha(x)=\lim_{x\rightarrow
0}\beta(x)=0$, it follows from (\ref{modulus_bound_Y2}) that%
\[
\lim_{\delta\rightarrow0,n\rightarrow\infty}\mathbb{E}[w_{Y_{n}}(\delta)]=0.
\]
So we can see (7.7) in \cite{billingsley2013convergence} holds. (7.6) in
\cite{billingsley2013convergence} is trivial because $Y_{n}(0)=0$ a.s.. Thus,
by Corollary below Theorem 13.4 in \cite{billingsley2013convergence}, we have%
\[
Y_{n}(s)=\sqrt{n}(Q_{U}(Q^{-1}(s,\underline{P}),\underline{\hat{P}%
})-s)\Rightarrow\mathbb{G}^{\prime}(s),
\]
where the finite-dimensional distributions of $\mathbb{G}^{\prime}$ agrees
with the limiting finite-dimensional distributions of $Y_{n}$ and
$\mathbb{P}(\mathbb{G}^{\prime}\in C[0,1])=1$.

Next, we deduce the weak convergence of $\sqrt{n}(Q_{U}(t,\underline{\hat{P}%
})-Q(t,\underline{P}))$ from $Y_{n}$. We define a function $\psi:D[0,1]\mapsto
D_{\pm\infty}$ by $(\psi x)(\cdot)=x(Q(\cdot,\underline{P}))$ for $x\in
D[0,1]$. $\psi$ is a measurable function since the finite-dimensional sets are
measurable:%
\[
\{x:((\psi x)(t_{1}),\ldots,(\psi x)(t_{k}))\in H\}=\{x:(x(Q(t_{1}%
,\underline{P})),\ldots,x(Q(t_{1},\underline{P})))\in H\}\in\mathcal{D}[0,1]
\]
for $H\in\mathcal{B}(\mathbb{R}^{k})$. We show $\psi$ is continuous on
$C[0,1]$. Suppose $d^{\circ}(x_{n},x)\rightarrow0$ for $x_{n}\in D[0,1]$ and
$x\in C[0,1]$, where $d^{\circ}$ is the metric making $D[0,1]$ complete (see
\cite{billingsley2013convergence} Section 12). When $x\in$ $C[0,1]$, Skorohod
convergence implies uniform convergence, we have
\[
\sup_{t\in\lbrack0,1]}|x_{n}(t)-x(t)|\rightarrow0
\]
and thus%
\[
d_{\pm\infty}^{\circ}(\psi x_{n},\psi x)\leq||\psi x_{n}-\psi x||_{\infty}%
\leq\sup_{t\in\lbrack0,1]}|x_{n}(t)-x(t)|\rightarrow0,
\]
which means $\psi$ is continuous on $C[0,1]$. Since $\mathbb{P}(\mathbb{G}%
^{\prime}\in C[0,1])=1$, by continuous mapping theorem, we know that%
\[
\psi(Y_{n})\Rightarrow\psi(\mathbb{G}^{\prime}),
\]
i.e.,%
\begin{equation}
\sqrt{n}(Q_{U}(Q^{-1}(Q(\cdot,\underline{P}),\underline{P}),\underline{\hat
{P}})-Q(\cdot,\underline{P}))\Rightarrow\mathbb{G}^{\prime}(Q(\cdot
,\underline{P})):=\mathbb{G}(\cdot) \label{weak_convergence_U1}%
\end{equation}
on $D_{\pm\infty}$. Then we need to show $Q_{U}(Q^{-1}(Q(\cdot,\underline{P}%
),\underline{P}),\underline{\hat{P}})=Q_{U}(\cdot,\underline{\hat{P}})$ a.s..
Since $Q^{-1}(Q(t,\underline{P}),\underline{P})\leq t$, we can see%
\begin{align*}
&  ||Q_{U}(Q^{-1}(Q(\cdot,\underline{P}),\underline{P}),\underline{\hat{P}%
})-Q_{U}(\cdot,\underline{\hat{P}})||_{\infty}\\
&  =\sup_{t\in\mathbb{Q}}|Q_{U}(Q^{-1}(Q(t,\underline{P}),\underline{P}%
),\underline{\hat{P}})-Q_{U}(t,\underline{\hat{P}})|\\
&  =\sup_{t\in\mathbb{Q}}\frac{1}{|S(\mathbf{n},\mathbf{T})|}\sum
_{(j_{11},\ldots,j_{mT_{m}})\in S(\mathbf{n},\mathbf{T})}I(Q^{-1}%
(Q(t,\underline{P}),\underline{P})<h(X_{1,j_{11}},\ldots,X_{m,j_{mT_{m}}})\leq
t).
\end{align*}
However, by the continuity of $Q(t,\underline{P})$, we have
\begin{align*}
&  P(Q^{-1}(Q(t,\underline{P}),\underline{P})<h(X_{1,j_{11}},\ldots
,X_{m,j_{mT_{m}}})\leq t)\\
&  =Q(t,\underline{P})-Q(Q^{-1}(Q(t,\underline{P}),\underline{P}%
),\underline{P})\\
&  =0,
\end{align*}
which implies%
\[
\frac{1}{|S(\mathbf{n},\mathbf{T})|}\sum_{(j_{11},\ldots,j_{mT_{m}})\in
S(\mathbf{n},\mathbf{T})}I(Q^{-1}(Q(t,\underline{P}),\underline{P}%
)<h(X_{1,j_{11}},\ldots,X_{m,j_{mT_{m}}})\leq t)=0\text{ a.s.,}%
\]
and%
\begin{align*}
&  ||Q_{U}(Q^{-1}(Q(\cdot,\underline{P}),\underline{P}),\underline{\hat{P}%
})-Q_{U}(\cdot,\underline{\hat{P}})||_{\infty}\\
&  =\sup_{t\in\mathbb{Q}}\frac{1}{|S(\mathbf{n},\mathbf{T})|}\sum
_{(j_{11},\ldots,j_{mT_{m}})\in S(\mathbf{n},\mathbf{T})}I(Q^{-1}%
(Q(t,\underline{P}),\underline{P})<h(X_{1,j_{11}},\ldots,X_{m,j_{mT_{m}}})\leq
t)\\
&  =0\text{ a.s..}%
\end{align*}
In other words, $Q_{U}(Q^{-1}(Q(\cdot,\underline{P}),\underline{P}%
),\underline{\hat{P}})$ and $Q_{U}(\cdot,\underline{\hat{P}})$ are
indistinguishable as stochastic processes. Then from
(\ref{weak_convergence_U1}), we obtain%
\[
\sqrt{n}(Q_{U}(\cdot,\underline{\hat{P}})-Q(\cdot,\underline{P}))\Rightarrow
\mathbb{G}(\cdot).
\]
Further, by (\ref{UV_negligible_diff}) and the discussion below it, we obtain%
\[
\sqrt{n}(Q(\cdot,\underline{\hat{P}})-Q(\cdot,\underline{P}))\Rightarrow
\mathbb{G}(\cdot),
\]
and $\mathbb{G}(\cdot)$ must have the finite-dimensional distributions
specified in Proposition \ref{fidis_convergence}.

Finally, we show that $\mathbb{P}(\mathbb{G}\in C_{b}(\mathbb{R}))=1$. Since
$\mathbb{P}(\mathbb{G}^{\prime}\in C[0,1])=1$ and $\mathbb{G}(\cdot
)=\mathbb{G}^{\prime}(Q(\cdot,\underline{P}))$, we know that almost all the
paths of $\mathbb{G}$ are in $C_{b}(\mathbb{R})$. Therefore, it suffices to
show $C_{b}(\mathbb{R})$ is a measurable set in $(D_{\pm\infty},\mathcal{D}%
_{\pm\infty})$. Note that
\begin{equation}
C_{b}(\mathbb{R})=\left(  \bigcup_{k=1}^{\infty}\{x:||x||_{\infty}\leq
k\}\right)  \bigcap\left(  \bigcap_{k=1}^{\infty}\left\{  x:\sup
_{t\in\mathbb{R}}|x(t)-x(t-)|\leq\frac{1}{k}\right\}  \right)  . \label{C_b}%
\end{equation}
So it suffices to show each set is in $\mathcal{D}_{\pm\infty}$. For
$\{x:||x||_{\infty}\leq k\}$, we have%
\begin{equation}
\{x:||x||_{\infty}\leq k\}=\bigcap_{t\in\mathbb{Q}}\{x:|x(t)|\leq
k\}\in\mathcal{D}_{\pm\infty} \label{measurability_supremum}%
\end{equation}
since each finite-dimensional set is measurable. Next, we consider
$\{x:\sup_{t\in\mathbb{R}}|x(t)-x(t-)|\leq1/k\}$. We will prove it is a closed
set. Suppose that $d_{\pm\infty}^{\circ}(x_{n},x)\rightarrow0$ and $\sup
_{t\in\mathbb{R}}|x_{n}(t)-x_{n}(t-)|\leq1/k$. We need to show $\sup
_{t\in\mathbb{R}}|x(t)-x(t-)|\leq1/k$. First, by (5) in \cite{ferger2010weak},
$d_{\pm\infty}^{\circ}(x_{n},x)\rightarrow0$ implies that $\exists$ strictly
increasing, continuous, surjective mappings $\lambda_{n}$ from $\mathbb{R}$
onto itself such that%
\begin{equation}
\sup_{|t|\leq m}|x(t)-x_{n}\circ\lambda_{n}(t)|\rightarrow0
\label{deformation_uni_conv}%
\end{equation}
for any $m\in\mathbb{N}$. For $|t|\leq m$, we know that%
\begin{align*}
&  |x(t)-x(t-)-(x_{n}\circ\lambda_{n}(t)-x_{n}\circ\lambda_{n}(t-))|\\
&  \leq|x(t)-x_{n}\circ\lambda_{n}(t)|+|x(t-)-x_{n}\circ\lambda_{n}(t-)|\\
&  =|x(t)-x_{n}\circ\lambda_{n}(t)|+\lim_{s\uparrow t}|x(s)-x_{n}\circ
\lambda_{n}(s)|\\
&  \leq2\sup_{|t|\leq m+1}|x(t)-x_{n}\circ\lambda_{n}(t)|,
\end{align*}
which further implies that%
\begin{align*}
&  \left\vert \sup_{|t|\leq m}|x(t)-x(t-)|-\sup_{|t|\leq m}|x_{n}\circ
\lambda_{n}(t)-x_{n}\circ\lambda_{n}(t-)|\right\vert \\
&  \leq\sup_{|t|\leq m}|x(t)-x(t-)-(x_{n}\circ\lambda_{n}(t)-x_{n}\circ
\lambda_{n}(t-))|\\
&  \leq2\sup_{|t|\leq m+1}|x(t)-x_{n}\circ\lambda_{n}(t)|.
\end{align*}
Therefore, we have%
\begin{align*}
\sup_{|t|\leq m}|x(t)-x(t-)|  &  \leq\sup_{|t|\leq m}|x_{n}\circ\lambda
_{n}(t)-x_{n}\circ\lambda_{n}(t-)|+2\sup_{|t|\leq m+1}|x(t)-x_{n}\circ
\lambda_{n}(t)|\\
&  \leq\sup_{t\in\mathbb{R}}|x_{n}\circ\lambda_{n}(t)-x_{n}\circ\lambda
_{n}(t-)|+2\sup_{|t|\leq m+1}|x(t)-x_{n}\circ\lambda_{n}(t)|\\
&  =\sup_{t\in\mathbb{R}}|x_{n}(t)-x_{n}(t-)|+2\sup_{|t|\leq m+1}%
|x(t)-x_{n}\circ\lambda_{n}(t)|\\
&  \leq\frac{1}{k}+2\sup_{|t|\leq m+1}|x(t)-x_{n}\circ\lambda_{n}(t)|.
\end{align*}
Letting $n\rightarrow\infty$ and by means of (\ref{deformation_uni_conv}), we
have%
\[
\sup_{|t|\leq m}|x(t)-x(t-)|\leq\frac{1}{k}.
\]
Letting $k\rightarrow\infty$, we have%
\[
\sup_{t\in\mathbb{R}}|x(t)-x(t-)|=\lim_{m\rightarrow\infty}\sup_{|t|\leq
m}|x(t)-x(t-)|\leq\frac{1}{k},
\]
which proves $\{x:\sup_{t\in\mathbb{R}}|x(t)-x(t-)|\leq1/k\}$ is a closed set.
So by the representation in (\ref{C_b}), $C_{b}(\mathbb{R})\in\mathcal{D}%
_{\pm\infty}$. This completes our proof.
\end{proof}

Now we are ready to prove Theorem \ref{weak_convergence}.

\begin{proof}{Proof of Theorem \ref{weak_convergence}.}
According to Theorem 5.1 in \cite{ferger2010weak}, it suffices to show the tightness of the processes in each component and also show the convergence of finite-dimensional distribution. So it suffices to verify the conditions in Proposition \ref{fidis_convergence}, Lemma \ref{tightness_BB} and Lemma \ref{tightness_GP}. Since the conditions of Proposition \ref{fidis_convergence} and Lemma \ref{tightness_GP} are the same conditions assumed in this theorem, it suffices to verify the condition $||Q(\cdot,\underline{\hat{P}})-Q(\cdot,\underline{P})||_{\infty}\overset{p}{\rightarrow}0$ in Lemma \ref{tightness_BB}.

We will show a stronger result $||Q(\cdot,\underline{\hat{P}})-Q(\cdot,\underline{P})||_{\infty}\overset{a.s.}{\rightarrow}0$. In fact, by the small discrepancy of the U-statistic and V-statistic in (\ref{UV_negligible_diff}), we obtain $Q(t,\underline{\hat{P}})\overset{a.s.}{\rightarrow}Q(t,\underline{P})$ for any $t\in\mathbb{R}$. Similar, we can also get $Q(t-,\underline{\hat{P}})\overset{a.s.}{\rightarrow}Q(t-,\underline{P})$ for any $t\in\mathbb{R}$. By the similar proof of classic Glivenko-Cantelli Theorem (\cite{van2000asymptotic} Theorem 19.1), we obtain $||Q(\cdot,\underline{\hat{P}})-Q(\cdot,\underline{P})||_{\infty}\overset{a.s.}{\rightarrow}0$. This concludes our proof.
\end{proof}

\begin{proof}{Proof of Corollary \ref{weak_convergence_sum}.}
We first unify the normalizing constant by proving%
\begin{equation}
(\sqrt{R}(\hat{Q}(\cdot,\underline{\hat{P}})-Q(\cdot,\underline{\hat{P}%
})),\sqrt{R}(Q(\cdot,\underline{\hat{P}})-Q(\cdot,\underline{P})))\Rightarrow
(BB({Q}(\cdot,\underline{P})),\gamma\mathbb{G}(\cdot)).
\label{normalized_weak_conv}%
\end{equation}
The finite-dimensional convergence follows from Proposition
\ref{fidis_convergence} and Slutsky's Theorem. Tightness of $\sqrt{R}(\hat
{Q}(\cdot,\underline{\hat{P}})-Q(\cdot,\underline{\hat{P}}))$ is automatic by
Theorem \ref{weak_convergence}. Then by Theorem 5.1 in \cite{ferger2010weak},
it suffices to show the tightness of $\sqrt{R}(Q(\cdot,\underline{\hat{P}%
})-Q(\cdot,\underline{P}))$. By Theorem \ref{weak_convergence}, $\sqrt
{n}(Q(\cdot,\underline{\hat{P}})-Q(\cdot,\underline{P}))\Rightarrow
\mathbb{G}(\cdot)$. Besides, $\sqrt{R/n}\Rightarrow\gamma$ when viewed as the
degenerate random variables and $\sqrt{R/n}$ is clearly independent of
$\sqrt{n}(Q(\cdot,\underline{\hat{P}})-Q(\cdot,\underline{P}))$. Then by
Theorem 2.8 in \cite{billingsley2013convergence} (the product space is
separable because $D_{\pm\infty}$ and $\mathbb{R}$ are separable),%
\[
(\sqrt{n}(Q(\cdot,\underline{\hat{P}})-Q(\cdot,\underline{P})),\sqrt
{R/n})\Rightarrow(\mathbb{G}(\cdot),\gamma).
\]
It's easy to see the scalar multiplication $f(x,a)=ax$ for $x\in D_{\pm\infty
}$ and $a\in\mathbb{R}$ is a continuous function from $D_{\pm\infty}%
\times\mathbb{R}$ to $D_{\pm\infty}$. So by continuous mapping theorem, we
have%
\[
\sqrt{R}(Q(\cdot,\underline{\hat{P}})-Q(\cdot,\underline{P}))=\sqrt{\frac
{R}{n}}\sqrt{n}(Q(\cdot,\underline{\hat{P}})-Q(\cdot,\underline{P}%
))\Rightarrow\gamma\mathbb{G}(\cdot),
\]
which implies the tightness of $\sqrt{R}(Q(\cdot,\underline{\hat{P}}%
)-Q(\cdot,\underline{P}))$ and thus proves (\ref{normalized_weak_conv}).

Next, we show the weak convergence of the sum of two processes in
(\ref{normalized_weak_conv}). We define $f:D_{\pm\infty}\times D_{\pm\infty
}\mapsto D_{\pm\infty}$ as $f(x,y)=x+y$. We first show its measurability. By
(\ref{generating_system}), it suffices to show the finite-dimensional
projections are measurable, i.e., for any $t_{1}<\cdots<t_{k}$ and
$H\in\mathcal{B}(\mathbb{R}^{k})$, we have%
\begin{equation}
\{(x,y):(x(t_{1})+y(t_{1}),\ldots,x(t_{k})+y(t_{k}))\in H\}\in\mathcal{D}%
_{\pm\infty}\times\mathcal{D}_{\pm\infty}. \label{measurability_proj_add}%
\end{equation}
When $H$ is the one-sided open rectangle%
\begin{equation}
H=(-\infty,a_{1})\times\cdots\times(-\infty,a_{k}), \label{rectangle}%
\end{equation}
we obtain%
\begin{align*}
&  \{(x,y):(x(t_{1})+y(t_{1}),\ldots,x(t_{k})+y(t_{k}))\in H\}\\
=  &  \{(x,y):x(t_{1})+y(t_{1})<a_{1},\ldots,x(t_{k})+y(t_{k})<a_{k}\}\\
=  &  \bigcup_{(r_{1},\ldots,r_{k}),(s_{1},\ldots,s_{k})\in\mathbb{Q}%
^{k},r_{i}+s_{i}<a_{i}}\{x:x(t_{i})<r_{i}\}\times\{y:y(t_{i})<s_{i}%
\}\in\mathcal{D}_{\pm\infty}\times\mathcal{D}_{\pm\infty},
\end{align*}
i.e., (\ref{measurability_proj_add}) holds if $H$ is in the form of
(\ref{rectangle}). Since all one-sided open rectangles is a $\pi$-system
generating $\mathcal{B}(\mathbb{R}^{k})$, it follows that
(\ref{measurability_proj_add}) holds for each $H\in\mathcal{B}(\mathbb{R}%
^{k})$. This proves the measurability of $f$. Then we show that $f$ is
continuous on $C(\mathbb{R})\times C(\mathbb{R})$. Recall that according to
(5) in \cite{ferger2010weak}, $d_{\pm\infty}^{\circ}(x_{n},x)\rightarrow0$ if
and only if there is a sequence of $\lambda_{n}(\cdot)$ that are strictly
increasing, continuous and surjective from $\mathbb{R}$ onto $\mathbb{R}$ such
that%
\begin{equation}
\left\{
\begin{array}
[c]{l}%
\sup_{t\in\mathbb{R}}|\lambda_{n}(t)-t|\rightarrow0\\
\sup_{|t|\leq m}|x(t)-x_{n}(\lambda_{n}(t))|\rightarrow0\text{ for any }%
m\in\mathbb{N}%
\end{array}
\right.  . \label{convergence_in_D1}%
\end{equation}
We first show that (\ref{convergence_in_D1}) is equivalent to: there is a
sequence of $\mu_{n}(\cdot)$ that are strictly increasing, continuous and
surjective from $\mathbb{R}$ onto $\mathbb{R}$ such that%
\begin{equation}
\left\{
\begin{array}
[c]{l}%
\sup_{t\in\mathbb{R}}|\mu_{n}(t)-t|\rightarrow0\\
\sup_{|t|\leq m}|x(\mu_{n}(t))-x_{n}(t)|\rightarrow0\text{ for any }%
m\in\mathbb{N}%
\end{array}
\right.  . \label{convergence_in_D2}%
\end{equation}
Assume (\ref{convergence_in_D1}) holds. We show $\mu_{n}=\lambda_{n}^{-1}$
satisfies (\ref{convergence_in_D2}). The first result in
(\ref{convergence_in_D2}) follows from%
\[
\sup_{t\in\mathbb{R}}|\mu_{n}(t)-t|=\sup_{t\in\mathbb{R}}|\lambda_{n}%
^{-1}(t)-t|=\sup_{t\in\mathbb{R}}|\lambda_{n}^{-1}(\lambda_{n}(t))-\lambda
_{n}(t)|=\sup_{t\in\mathbb{R}}|\lambda_{n}(t)-t|\rightarrow0.
\]
Moreover, we know that when $n$ is large, $|\mu_{n}(t)-t|\leq1/2$ for any
$t\in\mathbb{R}$. Therefore, when $n$ is large, $|t|\leq m$ implies $|\mu
_{n}(t)|\leq m+1$. We fix any $t\in\lbrack-m,m]$ and obtain%
\[
|x(\mu_{n}(t))-x_{n}(t)|=|x(\mu_{n}(t))-x_{n}(\lambda_{n}(\mu_{n}%
(t)))|\leq\sup_{|u|\leq m+1}|x(u)-x_{n}(\lambda_{n}(u))|.
\]
So when $n$ is large, we have%
\[
\sup_{|t|\leq m}|x(\mu_{n}(t))-x_{n}(t)|\leq\sup_{|u|\leq m+1}|x(u)-x_{n}%
(\lambda_{n}(u))|\rightarrow0,
\]
which proves (\ref{convergence_in_D2}). A similar argument tells that
(\ref{convergence_in_D2}) implies (\ref{convergence_in_D1}). Thus,
(\ref{convergence_in_D2}) and (\ref{convergence_in_D1}) are equivalent and
both are equivalent to $d_{\pm\infty}^{\circ}(x_{n},x)\rightarrow0$. Then we
show $d_{\pm\infty}^{\circ}(x_{n},x)\rightarrow0$ and $x\in C(\mathbb{R})$
implies locally uniform convergence, i.e., for any $m\in\mathbb{N}$%
\[
\sup_{|t|\leq m}|x(t)-x_{n}(t)|\rightarrow0.
\]
We define the local modulus of continuity as
\[
w_{x,m}(\delta)=\sup_{|s-t|\leq\delta,|t|\leq m,|s|\leq m}|x(t)-x(s)|.
\]
If $x\in C(\mathbb{R})$, then $x$ is uniformly continuous on $[-m,m]$ and thus
$w_{x,m}(\delta)\rightarrow0$ as $\delta\rightarrow0$ for any $m\in\mathbb{N}%
$. As proved above, $d_{\pm\infty}^{\circ}(x_{n},x)\rightarrow0$ implies
(\ref{convergence_in_D2}). So when $n$ is large, $|\mu_{n}(t)-t|\leq1/2$ for
any $t\in\mathbb{R}$ and thus $|t|\leq m$ implies $|\mu_{n}(t)|\leq m+1$. It
follows that%
\begin{align*}
&  \sup_{|t|\leq m}|x(t)-x_{n}(t)|\\
&  \leq\sup_{|t|\leq m}|x(t)-x(\mu_{n}(t))|+\sup_{|t|\leq m}|x(\mu
_{n}(t))-x_{n}(t)|\\
&  \leq w_{x,m+1}\left(  \sup_{t\in\mathbb{R}}|\mu_{n}(t)-t|\right)
+\sup_{|t|\leq m}|x(\mu_{n}(t))-x_{n}(t)|\rightarrow0.
\end{align*}
So the locally uniform convergence holds. Now let us prove $f(x,y)=x+y$ is
continuous on $C(\mathbb{R})\times C(\mathbb{R})$. Assume $d_{\pm\infty
}^{\circ}(x_{n},x)\rightarrow0$ and $d_{\pm\infty}^{\circ}(y_{n}%
,y)\rightarrow0$ with $x_{n},y_{n}\in D_{\pm\infty}$ and $x,y\in
C(\mathbb{R})$. From the locally uniform convergence, we have%
\[
\sup_{|t|\leq m}|x(t)+y(t)-x_{n}(t)-y_{n}(t)|\leq\sup_{|t|\leq m}%
|x(t)-x_{n}(t)|+\sup_{|t|\leq m}|y(t)-y_{n}(t)|\rightarrow0,
\]
which by the criterion (\ref{convergence_in_D1}) with $\lambda_{n}(t)\equiv t$
implies that $d_{\pm\infty}^{\circ}(x_{n}+y_{n},x+y)\rightarrow0$, i.e.,
$f(x,y)=x+y$ is continuous on $C(\mathbb{R})\times C(\mathbb{R})$. Now from
(\ref{normalized_weak_conv}), $\mathbb{P}((BB({Q}(\cdot,\underline{P}%
)),\mathbb{G}(\cdot))\in C_{b}(\mathbb{R})\times C_{b}(\mathbb{R}))=1$ and
continuous mapping theorem, we obtain%
\[
\sqrt{R}(\hat{Q}(\cdot,\underline{\hat{P}})-Q(\cdot,\underline{P}%
))\Rightarrow\gamma\mathbb{G}(\cdot)+BB({Q}(\cdot,\underline{P})).
\]

\end{proof}

\begin{proof}{Proof of Theorem \ref{weak_convergence_sup}.}
In the same simulation model specified in Assumption
\ref{finite_horizon_model}, we consider another output $Q(h(\mathbf{X}%
_{1},\ldots,\mathbf{X}_{m}),\underline{P})$ whose distribution is $U[0,1]$ due
to the continuity of $Q(t,\underline{P})$. Here we assume the two kinds of
output $h(\mathbf{X}_{1},\ldots,\mathbf{X}_{m})$ and $Q(h(\mathbf{X}%
_{1},\ldots,\mathbf{X}_{m}),\underline{P})$ are returned by the same
simulation model so that they are in the same probability space and their
behaviors are comparable. The new output is supported on $[0,1]$ and Corollary
\ref{weak_convergence_sum} for the new output should read as follows:%
\begin{equation}
\sqrt{R}(\hat{\tilde{Q}}(t,\underline{\hat{P}})-t)\Rightarrow\gamma
\mathbb{\tilde{G}}(t)+BB({t}) \label{new_weak_conv}%
\end{equation}
on $D[0,1]$ (we use $\tilde{\cdot}$ to denote the quantities for the new
output). A small difference between Corollary \ref{weak_convergence_sum} and
(\ref{new_weak_conv}) is that the result in Corollary
\ref{weak_convergence_sum} is actually proved in $D_{\pm\infty}$ not $D[0,1]$.
So to make the argument rigorous, we consider a function $f:D_{\pm\infty
}\mapsto D[0,1]$ defined as $f(x)=x|_{D[0,1]}$ where $x|_{D[0,1]}$ is the
restriction of $x$ on $D[0,1]$. $f$ is clearly measurable (by considering the
finite-dimensional sets) and also continuous on $C(\mathbb{R})$ since
$d_{\pm\infty}^{\circ}(x_{n},x)\rightarrow0$ and $x\in C(\mathbb{R})$ implies
locally uniform convergence (proof of Corollary \ref{weak_convergence_sum})
and thus the Skorohod convergence on $D[0,1]$. Therefore, by the continuous
mapping theorem, (\ref{new_weak_conv}) holds on $D[0,1]$.

Now we define the supremum operator $f(x)=\sup_{t\in\lbrack0,1]}|x(t)|$ on
$D[0,1]$. $f(x)$ is measurable because $f(x)=\sup_{t\in\mathbb{Q}\cap
\lbrack0,1]}|x(t)|$. Besides, it is continuous on $C[0,1]$ because Skorohod
convergence to $x\in C[0,1]$ implies uniform convergence to $x$ (proof of
Corollary \ref{weak_convergence_sum} or \cite{billingsley2013convergence}
p.124). Since $\mathbb{P}(\gamma\mathbb{\tilde{G}}(t)+BB({t})\in C[0,1])=1$,
by the continuous mapping theorem, we have%
\begin{equation}
\sup_{t\in\lbrack0,1]}|\sqrt{R}(\hat{\tilde{Q}}(t,\underline{\hat{P}%
})-t)|\overset{d}{\rightarrow}\sup_{t\in\lbrack0,1]}|\gamma\mathbb{\tilde{G}%
}(t)+BB({t})|. \label{new_weak_conv_sup}%
\end{equation}

Next, we show%
\begin{equation}
\sup_{t\in\mathbb{R}}|\sqrt{R}(\hat{Q}(t,\underline{\hat{P}}%
)-Q(t,\underline{P}))|-\sup_{t\in\lbrack0,1]}|\sqrt{R}(\hat{\tilde{Q}%
}(t,\underline{\hat{P}})-t)|\overset{L_{1}}{\rightarrow}0.
\label{L1_convergence}%
\end{equation}
First, $\sup_{t\in\mathbb{R}}|\sqrt{R}(\hat{Q}(t,\underline{\hat{P}%
})-Q(t,\underline{P}))|$ is a random variable because
\[
\sup_{t\in\mathbb{R}}|\sqrt{R}(\hat{Q}(t,\underline{\hat{P}}%
)-Q(t,\underline{P}))|=\sup_{t\in\mathbb{Q}}|\sqrt{R}(\hat{Q}%
(t,\underline{\hat{P}})-Q(t,\underline{P}))|
\]
and it is bounded by $\sqrt{R}$. Since the range of $Q(t,\underline{P})$
contains $(0,1)$ due to its continuity, we can see%
\[
\sup_{t\in\lbrack0,1]}|\sqrt{R}(\hat{\tilde{Q}}(t,\underline{\hat{P}%
})-t)|=\sup_{t\in\mathbb{R}}|\sqrt{R}(\hat{\tilde{Q}}(Q(t,\underline{P}%
),\underline{\hat{P}})-Q(t,\underline{P}))|,
\]
and%
\begin{equation}
\sup_{t\in\lbrack0,1]}|\gamma\mathbb{\tilde{G}}(t)+BB({t})|=\sup
_{t\in\mathbb{R}}|\gamma\mathbb{\tilde{G}}(Q(t,\underline{P}%
))+BB(Q(t,\underline{P}))|. \label{change_of_variable}%
\end{equation}
To show (\ref{L1_convergence}), we will bound the following expectation%
\begin{align}
&  \mathbb{E}\left[  \left\vert \sup_{t\in\mathbb{R}}|\sqrt{R}(\hat
{Q}(t,\underline{\hat{P}})-Q(t,\underline{P}))|-\sup_{t\in\lbrack0,1]}%
|\sqrt{R}(\hat{\tilde{Q}}(t,\underline{\hat{P}})-t)|\right\vert \right]
\nonumber\\
&  =\mathbb{E}\left[  \left\vert \sup_{t\in\mathbb{R}}|\sqrt{R}(\hat
{Q}(t,\underline{\hat{P}})-Q(t,\underline{P}))|-\sup_{t\in\mathbb{R}}|\sqrt
{R}(\hat{\tilde{Q}}(Q(t,\underline{P}),\underline{\hat{P}})-Q(t,\underline{P}%
))|\right\vert \right] \nonumber\\
&  \leq\mathbb{E}\left[  \sup_{t\in\mathbb{R}}\sqrt{R}|\hat{Q}%
(t,\underline{\hat{P}})-\hat{\tilde{Q}}(Q(t,\underline{P}),\underline{\hat{P}%
})|\right]  . \label{difference_expectation1}%
\end{align}
By the definition of $\hat{Q}$ and the construction of the new output, we have%
\begin{equation}
\hat{Q}(t,\underline{\hat{P}})=\frac{1}{R}\sum_{r=1}^{R}I(Y_{r}\leq
t),\quad\hat{\tilde{Q}}(Q(t,\underline{P}),\underline{\hat{P}})=\frac{1}%
{R}\sum_{r=1}^{R}I(Q(Y_{r},\underline{P})\leq Q(t,\underline{P})),
\label{definition_two_Qs}%
\end{equation}
where $Y_{r},r=1,\ldots,R$ are the i.i.d. original outputs of the simulation
model with the empirical input distributions $\underline{\hat{P}}$. Besides,
for any random variable $X$ and any $t\in\mathbb{R}$, we have the simple fact%
\[
I(X\leq t)\leq I(Q(X,\underline{P})\leq Q(t,\underline{P}))
\]
since $Q(\cdot,\underline{P})$ is nondecreasing. We will intensively use this
fact to remove the absolute values during the following computation. In the
following, we assume $n$ is large such that $n_{i}\geq T_{i}$. Plugging
(\ref{definition_two_Qs}) into (\ref{difference_expectation1}), we obtain%
\begin{align}
&  \mathbb{E}\left[  \left\vert \sup_{t\in\mathbb{R}}|\sqrt{R}(\hat
{Q}(t,\underline{\hat{P}})-Q(t,\underline{P}))|-\sup_{t\in\lbrack0,1]}%
|\sqrt{R}(\hat{\tilde{Q}}(t,\underline{\hat{P}})-t)|\right\vert \right]
\nonumber\\
&  \leq\mathbb{E}\left[  \sup_{t\in\mathbb{R}}\sqrt{R}\frac{1}{R}\sum
_{r=1}^{R}[I(Q(Y_{r},\underline{P})\leq Q(t,\underline{P}))-I(Y_{r}\leq
t)]\right] \nonumber\\
&  \leq\mathbb{E}\left[  \frac{1}{\sqrt{R}}\sum_{r=1}^{R}\sup_{t\in\mathbb{R}%
}[I(Q(Y_{r},\underline{P})\leq Q(t,\underline{P}))-I(Y_{r}\leq t)]\right]
\nonumber\\
&  =\mathbb{E}\left[  \mathbb{E}_{\ast}\left[  \frac{1}{\sqrt{R}}\sum
_{r=1}^{R}\sup_{t\in\mathbb{R}}[I(Q(Y_{r},\underline{P})\leq Q(t,\underline{P}%
))-I(Y_{r}\leq t)]\right]  \right] \nonumber\\
&  =\mathbb{E}\left[  \sqrt{R}\mathbb{E}_{\ast}\left[  \sup_{t\in\mathbb{R}%
}[I(Q(Y_{1},\underline{P})\leq Q(t,\underline{P}))-I(Y_{1}\leq t)]\right]
\right] \nonumber\\
&  =\mathbb{E}\left[  \sqrt{R}\frac{1}{\prod_{i=1}^{m}n_{i}^{T_{i}}}\sum
_{i=1}^{m}\sum_{1\leq j_{i1},\ldots,j_{iT_{i}}\leq n_{i}}\sup_{t\in\mathbb{R}%
}[I(Q(h(X_{1,j_{11}},\ldots,X_{m,j_{mT_{m}}}),\underline{P})\leq
Q(t,\underline{P}))\right. \nonumber\\
&  -I(h(X_{1,j_{11}},\ldots,X_{m,j_{mT_{m}}})\leq t)\Bigg]\nonumber\\
&  =\mathbb{E}\left[  \sqrt{R}\frac{1}{\prod_{i=1}^{m}n_{i}^{T_{i}}}\sum
_{i=1}^{m}\sum_{1\leq j_{i1}\neq\cdots\neq j_{iT_{i}}\leq n_{i}}\sup
_{t\in\mathbb{R}}[I(Q(h(X_{1,j_{11}},\ldots,X_{m,j_{mT_{m}}}),\underline{P}%
)\leq Q(t,\underline{P}))\right. \nonumber\\
&  -I(h(X_{1,j_{11}},\ldots,X_{m,j_{mT_{m}}})\leq t)]\Bigg]+\varepsilon,
\label{difference_expectation2}%
\end{align}
where the remainder term $\varepsilon$ is the difference between the ``V-type'' summation and ``U-type'' summation and it satisfies%
\begin{align}
|\varepsilon|  &  \leq\sqrt{R}\frac{1}{\prod_{i=1}^{m}n_{i}^{T_{i}}}\left(
\sum_{i=1}^{m}\sum_{1\leq j_{i1},\ldots,j_{iT_{i}}\leq n_{i}}1-\sum_{i=1}%
^{m}\sum_{1\leq j_{i1}\neq\cdots\neq j_{iT_{i}}\leq n_{i}}1\right) \nonumber\\
&  =\sqrt{R}\frac{\prod_{i=1}^{m}n_{i}^{T_{i}}-\prod_{i=1}^{m}\prod
_{j=0}^{T_{i}-1}(n_{i}-j)}{\prod_{i=1}^{m}n_{i}^{T_{i}}}\nonumber\\
&  =\sqrt{R}\frac{\left(  \sum_{i=1}^{m}\frac{(T_{i}-1)T_{i}}{2n_{i}}\right)
\prod_{i=1}^{m}n_{i}^{T_{i}}-o\left(  n^{\left(  \sum_{i=1}^{m}T_{i}\right)
-1}\right)  }{\prod_{i=1}^{m}n_{i}^{T_{i}}}\nonumber\\
&  =O\left(  \frac{\sqrt{R}}{n}\right)  \rightarrow0
\label{difference_expectation3}%
\end{align}
where the second equality follows from (\ref{product_order1}) and the order
analysis uses Assumptions \ref{balanced_data} and \ref{balanced_randomness}.
As for the leading term in (\ref{difference_expectation2}), we have%
\begin{align*}
&  \mathbb{E}\left[  \sqrt{R}\frac{1}{\prod_{i=1}^{m}n_{i}^{T_{i}}}\sum
_{i=1}^{m}\sum_{1\leq j_{i1}\neq\cdots\neq j_{iT_{i}}\leq n_{i}}\sup
_{t\in\mathbb{R}}[I(Q(h(X_{1,j_{11}},\ldots,X_{m,j_{mT_{m}}}),\underline{P}%
)\leq Q(t,\underline{P}))\right. \\
&  -I(h(X_{1,j_{11}},\ldots,X_{m,j_{mT_{m}}})\leq t)]\Bigg]\\
&  =\sqrt{R}\frac{1}{\prod_{i=1}^{m}n_{i}^{T_{i}}}\sum_{i=1}^{m}\sum_{1\leq
j_{i1}\neq\cdots\neq j_{iT_{i}}\leq n_{i}}\mathbb{E}\left[  \sup
_{t\in\mathbb{R}}[I(Q(h(X_{1,j_{11}},\ldots,X_{m,j_{mT_{m}}}),\underline{P}%
)\leq Q(t,\underline{P}))\right. \\
&  -I(h(X_{1,j_{11}},\ldots,X_{m,j_{mT_{m}}})\leq t)\bigg]\\
&  =\sqrt{R}\frac{\prod_{i=1}^{m}\prod_{j=0}^{T_{i}-1}(n_{i}-j)}{\prod
_{i=1}^{m}n_{i}^{T_{i}}}\mathbb{E}\left[  \sup_{t\in\mathbb{R}}[I(Q(h(X_{1,1}%
,\ldots,X_{1,T_{1}},\ldots,X_{m,1},\ldots,X_{m,T_{m}}),\underline{P})\leq
Q(t,\underline{P}))\right. \\
&  -I(h(X_{1,1},\ldots,X_{1,T_{1}},\ldots,X_{m,1},\ldots,X_{m,T_{m}})\leq
t)\bigg]\\
&  =\sqrt{R}\frac{\prod_{i=1}^{m}\prod_{j=0}^{T_{i}-1}(n_{i}-j)}{\prod
_{i=1}^{m}n_{i}^{T_{i}}}\mathbb{E}\left[  \sup_{t\in\mathbb{Q}}[I(Q(h(X_{1,1}%
,\ldots,X_{1,T_{1}},\ldots,X_{m,1},\ldots,X_{m,T_{m}}),\underline{P})\leq
Q(t,\underline{P}))\right. \\
&  -I(h(X_{1,1},\ldots,X_{1,T_{1}},\ldots,X_{m,1},\ldots,X_{m,T_{m}})\leq
t)\bigg].
\end{align*}
However, for any $t\in\mathbb{R}$,%
\begin{align*}
&  \mathbb{P}(Q(h(X_{1,1},\ldots,X_{1,T_{1}},\ldots,X_{m,1},\ldots,X_{m,T_{m}%
}),\underline{P})\leq Q(t,\underline{P}))\\
&  -\mathbb{P}(h(X_{1,1},\ldots,X_{1,T_{1}},\ldots,X_{m,1},\ldots,X_{m,T_{m}%
})\leq t)\\
&  =Q(t,\underline{P})-Q(t,\underline{P})=0,
\end{align*}
which implies%
\[
I(Q(h(X_{1,1},\ldots,X_{m,T_{m}}),\underline{P})\leq Q(t,\underline{P}%
))-I(h(X_{1,1},\ldots,X_{m,T_{m}})\leq t)\overset{a.s.}{=}0
\]
for any $t\in\mathbb{R}$. For the supremum over a countable set $\mathbb{Q}$,
we still have%
\[
\sup_{t\in\mathbb{Q}}[I(Q(h(X_{1,1},\ldots,X_{m,T_{m}}),\underline{P})\leq
Q(t,\underline{P}))-I(h(X_{1,1},\ldots,X_{m,T_{m}})\leq t)]\overset{a.s.}{=}%
0.
\]
Therefore, we obtain%
\begin{align}
&  \mathbb{E}\left[  \sqrt{R}\frac{1}{\prod_{i=1}^{m}n_{i}^{T_{i}}}\sum
_{i=1}^{m}\sum_{1\leq j_{i1}\neq\cdots\neq j_{iT_{i}}\leq n_{i}}\sup
_{t\in\mathbb{R}}[I(Q(h(X_{1,j_{11}},\ldots,X_{m,j_{mT_{m}}}),\underline{P}%
)\leq Q(t,\underline{P}))\right. \nonumber\\
&  -I(h(X_{1,j_{11}},\ldots,X_{m,j_{mT_{m}}})\leq t)]\Bigg]=0.
\label{difference_expectation4}%
\end{align}
Plugging (\ref{difference_expectation3}) and (\ref{difference_expectation4})
into (\ref{difference_expectation2}), we obtain the $L_{1}$ convergence in
(\ref{L1_convergence}). Since $L_{1}$ convergence implies convergence in
probability, by Slutsky's theorem and (\ref{new_weak_conv_sup}), we have%
\begin{equation}
\sup_{t\in\mathbb{R}}|\sqrt{R}(\hat{Q}(t,\underline{\hat{P}}%
)-Q(t,\underline{P}))|\overset{d}{\rightarrow}\sup_{t\in\lbrack0,1]}%
|\gamma\mathbb{\tilde{G}}(t)+BB({t})|=\sup_{t\in\mathbb{R}}|\gamma
\mathbb{\tilde{G}}(Q(t,\underline{P}))+BB(Q(t,\underline{P}))|
\label{weak_conv_tilde}%
\end{equation}
as $n,R\rightarrow\infty$, where the equality follows from
(\ref{change_of_variable}).

To show the desired weak convergence, it remains to prove%
\[
\sup_{t\in\mathbb{R}}|\gamma\mathbb{\tilde{G}}(Q(t,\underline{P}%
))+BB(Q(t,\underline{P}))|\overset{d}{=}\sup_{t\in\mathbb{R}}|\gamma
\mathbb{G}(t)+BB(Q(t,\underline{P}))|.
\]
It suffices to show%
\[
\gamma\mathbb{\tilde{G}}(Q(\cdot,\underline{P}))+BB(Q(\cdot,\underline{P}%
))\overset{d}{=}\gamma\mathbb{G}(\cdot)+BB(Q(\cdot,\underline{P}))
\]
on $D_{\pm\infty}$. Since finite-dimensional sets generate $\mathcal{D}%
_{\pm\infty}$, it suffices to show the two processes have the same
finite-dimensional distribution. Since both $\mathbb{\tilde{G}}$ and
$\mathbb{G}$ are independent of $BB$, it suffices to show $\mathbb{\tilde{G}%
}(Q(\cdot,\underline{P}))$ and $\mathbb{G}(\cdot)$ have the same
finite-dimensional distributions. By Proposition \ref{fidis_convergence}, the
finite-dimensional distributions of $\mathbb{G}$ and $\mathbb{\tilde{G}}$ are
determined by
\[
\sqrt{n}(Q(\cdot,\underline{\hat{P}})-Q(\cdot,\underline{P}%
))\overset{f.d.}{\rightarrow}\mathbb{G}(\cdot),\quad\sqrt{n}(\tilde{Q}%
(\cdot,\underline{\hat{P}})-\cdot)\overset{f.d.}{\rightarrow}\mathbb{\tilde
{G}}(\cdot),
\]
and thus%
\[
\sqrt{n}(Q(\cdot,\underline{\hat{P}})-Q(\cdot,\underline{P}%
))\overset{f.d.}{\rightarrow}\mathbb{G}(\cdot),\quad\sqrt{n}(\tilde{Q}%
(Q(\cdot,\underline{P}),\underline{\hat{P}})-Q(\cdot,\underline{P}%
))\overset{f.d.}{\rightarrow}\mathbb{\tilde{G}}(Q(\cdot,\underline{P})).
\]
Therefore, by Slutsky's theorem, it suffices to show%
\[
\sqrt{n}(Q(t,\underline{\hat{P}})-\tilde{Q}(Q(t,\underline{P}),\underline{\hat
{P}}))\overset{p}{\rightarrow}0
\]
for any $t\in\mathbb{R}$. This is actually automatic from the results we
proved just now. Notice that the previous proof shows that expectation in
(\ref{difference_expectation1}) converges to $0$, i.e.,%
\[
\mathbb{E}\left[  \sup_{t\in\mathbb{R}}\sqrt{R}|\hat{Q}(t,\underline{\hat{P}%
})-\hat{\tilde{Q}}(Q(t,\underline{P}),\underline{\hat{P}})|\right]
\rightarrow0.
\]
This implies%
\[
\sqrt{R}|\hat{Q}(t,\underline{\hat{P}})-\hat{\tilde{Q}}(Q(t,\underline{P}%
),\underline{\hat{P}})|\leq\sup_{t\in\mathbb{R}}\sqrt{R}|\hat{Q}%
(t,\underline{\hat{P}})-\hat{\tilde{Q}}(Q(t,\underline{P}),\underline{\hat{P}%
})|\overset{p}{\rightarrow}0.
\]
By Assumption \ref{balanced_randomness}, $R=\Theta(n)$ and thus we know
\[
\sqrt{n}(Q(t,\underline{\hat{P}})-\tilde{Q}(Q(t,\underline{P}),\underline{\hat
{P}}))\overset{p}{\rightarrow}0.
\]
This proves the desired weak convergence result.

Finally, we show the absolute continuity of the limiting distribution. We use
the form of limiting distribution in (\ref{weak_conv_tilde}), i.e., we need to
prove the absolute continuity of the following random variable%
\begin{equation}
\sup_{t\in\mathbb{R}}|\gamma\mathbb{\tilde{G}}(Q(t,\underline{P}%
))+BB(Q(t,\underline{P}))|=\sup_{t\in\lbrack0,1]}|\gamma\mathbb{\tilde{G}%
}(t)+BB(t)|, \label{form_limiting_distribution}%
\end{equation}
where the equality follows from the continuity of $Q(t,\underline{P})$. By
Theorem \ref{weak_convergence} we know that
\[
\mathbb{P}(\mathbb{\tilde{G}}(t)\in C[0,1])=1.
\]
According to \cite{billingsley2013convergence} p.124, the Skorohod topology on
$D[0,1]$ relativized to $C[0,1]$ coincides with the uniform topology there. So
we can view $\gamma\mathbb{\tilde{G}}(t)+BB(t)$ as random elements in $C[0,1]$
with the sup norm. Now according to Corollary in \cite{lifshits1983absolute},
(\ref{form_limiting_distribution}) can have an atom only at the point%
\[
a_{M}=\sup_{t:Var(\gamma\mathbb{\tilde{G}}(t)+BB(t))=0}|\mathbb{E}%
[\gamma\mathbb{\tilde{G}}(t)+BB(t)]|=\sup_{t=0,1}|\mathbb{E}[\gamma
\mathbb{\tilde{G}}(t)+BB(t)]|=0,
\]
it vanishes on the ray $(-\infty,0)$ and is absolutely continuous with respect
to Lebesgue measure on $(0,\infty)$. However, it is easy to see
(\ref{form_limiting_distribution}) cannot have positive probability at $0$
because%
\begin{align*}
&  \mathbb{P}\left(  \sup_{t\in\lbrack0,1]}|\gamma\mathbb{\tilde{G}%
}(t)+BB(t)|=0\right) \\
&  =\mathbb{P(}|\gamma\mathbb{\tilde{G}}(t)+BB(t)|=0,\forall t\in
\lbrack0,1])\\
&  \leq\mathbb{P}\left(  \left\vert \gamma\mathbb{\tilde{G}}\left(  \frac
{1}{2}\right)  +BB\left(  \frac{1}{2}\right)  \right\vert =0\right) \\
&  =\mathbb{P}\left(  \left\vert N\left(  0,\gamma^{2}Var\left(
\mathbb{\tilde{G}}\left(  \frac{1}{2}\right)  \right)  +\frac{1}{4}\right)
\right\vert =0\right) \\
&  =0.
\end{align*}
Therefore, the limiting distribution is absolutely continuous with respect to
Lebesgue measure on $[0,\infty)$.
\end{proof}

\begin{proof}{Proof of Corollary \ref{CB_direct}.}
This follows directly from Theorem \ref{weak_convergence_sup}.
\end{proof}

\subsection{Proofs of Results in Sections \ref{sec:cov subsampling} and \ref{sec:theory_cov}}\label{sec:proofs_subsampling}
To show Theorem \ref{consistency_sigma}. we first prove the following lemma.

\begin{lemma}
\label{influence_product}Suppose Assumptions \ref{balanced_data},
\ref{1st_expansion_truth} and \ref{1st_expansion_empirical} hold. We have%
\[
\mathbb{E}_{\ast}[IF_{i}(t,X_{i,1}^{\ast};\underline{\hat{P}})IF_{i}%
(t^{\prime},X_{i,1}^{\ast};\underline{\hat{P}})]\overset{p}{\rightarrow
}\mathrm{Cov}_{P_{i}}(IF_{i}(t,X_{i};\underline{P}),IF_{i}(t^{\prime}%
,X_{i};\underline{P})),\forall t,t^{\prime}\in\mathbb{R},
\]%
\[
\mathbb{E}_{\ast}[IF_{i}^{2}(t,X_{i,1}^{\ast};\underline{\hat{P}})IF_{i}%
^{2}(t^{\prime},X_{i,1}^{\ast};\underline{\hat{P}})]\overset{p}{\rightarrow
}\mathbb{E}_{P_{i}}[IF_{i}^{2}(t,X_{i};\underline{P})IF_{i}^{2}(t^{\prime
},X_{i};\underline{P})],\forall t,t^{\prime}\in\mathbb{R},
\]
as $n\rightarrow\infty$. Consequently,
\[
\mathbb{E}_{\ast}[IF_{i}(t,X_{i,1}^{\ast};\underline{\hat{P}})IF_{i}%
(t^{\prime},X_{i,1}^{\ast};\underline{\hat{P}})]=O_{p}(1),\forall t,t^{\prime
}\in\mathbb{R},
\]%
\[
\mathbb{E}_{\ast}[IF_{i}^{2}(t,X_{i,1}^{\ast};\underline{\hat{P}})IF_{i}%
^{2}(t^{\prime},X_{i,1}^{\ast};\underline{\hat{P}})]=O_{p}(1),\forall
t,t^{\prime}\in\mathbb{R},
\]
as $n\rightarrow\infty$.
\end{lemma}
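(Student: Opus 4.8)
The plan is to first reduce the conditional bootstrap expectation to an empirical average over the original data. Since $X_{i,1}^{\ast}$ is drawn uniformly with replacement from $X_{i,1},\dots,X_{i,n_{i}}$,
\[
\mathbb{E}_{\ast}[IF_{i}(t,X_{i,1}^{\ast};\underline{\hat{P}})IF_{i}(t^{\prime},X_{i,1}^{\ast};\underline{\hat{P}})]=\frac{1}{n_{i}}\sum_{j=1}^{n_{i}}IF_{i}(t,X_{i,j};\underline{\hat{P}})IF_{i}(t^{\prime},X_{i,j};\underline{\hat{P}}),
\]
and analogously for the squared version. I would then split the estimation error into a \emph{plug-in} part, which replaces $IF_{i}(\cdot;\underline{\hat{P}})$ by $IF_{i}(\cdot;\underline{P})$ inside the sum, and a \emph{sampling} part, which replaces $\frac{1}{n_{i}}\sum_{j}$ by $\mathbb{E}_{P_{i}}$.

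For the sampling part the summands involve only $IF_{i}(\cdot;\underline{P})$ evaluated at the i.i.d.\ data $X_{i,j}$, so the strong law of large numbers applies directly; the needed integrability, namely $\mathbb{E}_{P_{i}}[IF_{i}^{2}(t,X_{i};\underline{P})IF_{i}^{2}(t^{\prime},X_{i};\underline{P})]<\infty$ (and the analogous fourth-power statement), follows from Cauchy--Schwarz together with the finite fourth moments in Assumption \ref{1st_expansion_truth}. The zero-mean property $\mathbb{E}_{P_{i}}[IF_{i}(t,X_{i};\underline{P})]=0$ then identifies the limit as $\mathrm{Cov}_{P_{i}}(IF_{i}(t,X_{i};\underline{P}),IF_{i}(t^{\prime},X_{i};\underline{P}))$ in the first assertion.

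For the plug-in part I would use the algebraic identities $ab-\tilde{a}\tilde{b}=a(b-\tilde{b})+\tilde{b}(a-\tilde{a})$ and $a^{2}b^{2}-\tilde{a}^{2}\tilde{b}^{2}=a^{2}(b-\tilde{b})(b+\tilde{b})+\tilde{b}^{2}(a-\tilde{a})(a+\tilde{a})$, applied termwise with $a=IF_{i}(t,X_{i,j};\underline{\hat{P}})$, $\tilde{a}=IF_{i}(t,X_{i,j};\underline{P})$, etc., and bound the resulting averages by (iterated) Cauchy--Schwarz. The key inputs are: (i) $\frac{1}{n_{i}}\sum_{j}(IF_{i}(t,X_{i,j};\underline{\hat{P}})-IF_{i}(t,X_{i,j};\underline{P}))^{2}=o_{p}(1)$ and the same with fourth powers, which follow from Markov's inequality using $\mathbb{E}[(IF_{i}(t,X_{i};\underline{\hat{P}})-IF_{i}(t,X_{i};\underline{P}))^{4}]\to0$ in Assumption \ref{1st_expansion_empirical} (exchangeability of the $X_{i,j}$ under $\underline{\hat{P}}$ lets one pull the expectation inside the average, and Lyapunov handles the second-power version); and (ii) $\frac{1}{n_{i}}\sum_{j}IF_{i}^{2}(t,X_{i,j};\underline{\hat{P}})=O_{p}(1)$ and $\frac{1}{n_{i}}\sum_{j}IF_{i}^{4}(t,X_{i,j};\underline{\hat{P}})=O_{p}(1)$, obtained from the elementary bounds $c^{2}\le2\tilde{c}^{2}+2(c-\tilde{c})^{2}$ and $c^{4}\le8\tilde{c}^{4}+8(c-\tilde{c})^{4}$, the SLLN applied to the $IF_{i}(\cdot;\underline{P})$ averages, and (i). Combining, the plug-in part is $O_{p}(1)\cdot o_{p}(1)=o_{p}(1)$, which together with the sampling part yields both convergence statements; the two ``consequently'' claims then follow since convergence in probability to a finite constant implies stochastic boundedness.

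The main obstacle is organizational rather than conceptual: the summands $IF_{i}(t,X_{i,j};\underline{\hat{P}})$ are not independent, since $\underline{\hat{P}}$ depends on the whole sample, so no law of large numbers applies to them directly; the workaround is to always reduce to the i.i.d.\ quantities $IF_{i}(\cdot;\underline{P})$ via the SLLN plus a remainder whose average is controlled in $L^{1}$ through Assumption \ref{1st_expansion_empirical}, while keeping track of which Cauchy--Schwarz pairing keeps every exponent within the available fourth-moment budget.
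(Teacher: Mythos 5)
Your proposal is correct and follows essentially the same route as the paper: rewrite the bootstrap conditional expectation as the empirical average $\frac{1}{n_i}\sum_j IF_i(t,X_{i,j};\underline{\hat P})IF_i(t',X_{i,j};\underline{\hat P})$, split into a plug-in error (swap $\underline{\hat P}\to\underline P$ inside the summand) and a sampling error, control the former via the algebraic telescoping identity plus Cauchy--Schwarz using the $L^4$-convergence of $IF_i(\cdot;\underline{\hat P})-IF_i(\cdot;\underline P)$ from Assumption \ref{1st_expansion_empirical} together with exchangeability of $X_{i,1},\dots,X_{i,n_i}$ under $\underline{\hat P}$, and handle the latter by the LLN using the fourth-moment bounds in Assumption \ref{1st_expansion_truth}. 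The only cosmetic difference is where Cauchy--Schwarz is applied: the paper first takes the $L^1$-norm of the difference, reduces to the single index $j=1$ by exchangeability, and then applies Cauchy--Schwarz at the moment level, whereas you apply Cauchy--Schwarz at the empirical-average level and then show each empirical factor is $o_p(1)$ or $O_p(1)$; both organizations land on the same moment conditions and are fully equivalent.
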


\begin{proof}{Proof of Lemma \ref{influence_product}.}
By definition, we have%
\[
\mathbb{E}_{\ast}[IF_{i}(t,X_{i,1}^{\ast};\underline{\hat{P}})IF_{i}%
(t^{\prime},X_{i,1}^{\ast};\underline{\hat{P}})]=\frac{1}{n_{i}}\sum
_{j=1}^{n_{i}}IF_{i}(t,X_{i,j};\underline{\hat{P}})IF_{i}(t^{\prime}%
,X_{i,j};\underline{\hat{P}}).
\]
Notice that%
\begin{align*}
&  \mathbb{E}\left[  \left\vert \frac{1}{n_{i}}\sum_{j=1}^{n_{i}}%
IF_{i}(t,X_{i,j};\underline{\hat{P}})IF_{i}(t^{\prime},X_{i,j};\underline{\hat
{P}})-\frac{1}{n_{i}}\sum_{j=1}^{n_{i}}IF_{i}(t,X_{i,j};\underline{P}%
)IF_{i}(t^{\prime},X_{i,j};\underline{P})\right\vert \right] \\
&  \leq\frac{1}{n_{i}}\sum_{j=1}^{n_{i}}\mathbb{E}[|IF_{i}(t,X_{i,j}%
;\underline{\hat{P}})IF_{i}(t^{\prime},X_{i,j};\underline{\hat{P}}%
)-IF_{i}(t,X_{i,j};\underline{P})IF_{i}(t^{\prime},X_{i,j};\underline{P})|]\\
&  =\mathbb{E}[|IF_{i}(t,X_{i,1};\underline{\hat{P}})IF_{i}(t^{\prime}%
,X_{i,1};\underline{\hat{P}})-IF_{i}(t,X_{i,1};\underline{P})IF_{i}(t^{\prime
},X_{i,1};\underline{P})|]\\
&  \leq\mathbb{E}[|IF_{i}(t,X_{i,1};\underline{\hat{P}})IF_{i}(t^{\prime
},X_{i,1};\underline{\hat{P}})-IF_{i}(t,X_{i,1};\underline{P})IF_{i}%
(t^{\prime},X_{i,1};\underline{\hat{P}})|]\\
&  +\mathbb{E}[|IF_{i}(t,X_{i,1};\underline{P})IF_{i}(t^{\prime}%
,X_{i,1};\underline{\hat{P}})-IF_{i}(t,X_{i,1};\underline{P})IF_{i}(t^{\prime
},X_{i,1};\underline{P})|]\\
&  \leq\sqrt{\mathbb{E}[(IF_{i}(t,X_{i,1};\underline{\hat{P}})-IF_{i}%
(t,X_{i,1};\underline{P}))^{2}]\mathbb{E}[IF_{i}^{2}(t^{\prime},X_{i,1}%
;\underline{\hat{P}})]}\\
&  +\sqrt{\mathbb{E}[IF_{i}^{2}(t,X_{i,1};\underline{P})]\mathbb{E}%
[(IF_{i}(t^{\prime},X_{i,1};\underline{\hat{P}})-IF_{i}(t^{\prime}%
,X_{i,1};\underline{P}))^{2}]},
\end{align*}
where the equality follows from $IF_{i}(t,X_{i,j};\underline{\hat{P}%
})\overset{d}{=}IF_{i}(t,X_{i,1};\underline{\hat{P}})$. By Assumptions
\ref{1st_expansion_truth} and \ref{1st_expansion_empirical}, we know that
\[
\mathbb{E}[(IF_{i}(t,X_{i};\underline{\hat{P}})-IF_{i}(t,X_{i};\underline{P}%
))^{2}]\rightarrow0,\mathbb{E}[IF_{i}^{2}(t,X_{i};\underline{P})]<\infty
,\forall t\in\mathbb{R}.
\]
Therefore,%
\[
\mathbb{E}\left[  \left\vert \frac{1}{n_{i}}\sum_{j=1}^{n_{i}}IF_{i}%
(t,X_{i,j};\underline{\hat{P}})IF_{i}(t^{\prime},X_{i,j};\underline{\hat{P}%
})-\frac{1}{n_{i}}\sum_{j=1}^{n_{i}}IF_{i}(t,X_{i,j};\underline{P}%
)IF_{i}(t^{\prime},X_{i,j};\underline{P})\right\vert \right]  \rightarrow0.
\]
Combining it with the weak law of large numbers%
\[
\frac{1}{n_{i}}\sum_{j=1}^{n_{i}}IF_{i}(t,X_{i,j};\underline{P})IF_{i}%
(t^{\prime},X_{i,j};\underline{P})\overset{p}{\rightarrow}\mathrm{Cov}_{P_{i}%
}(IF_{i}(t,X_{i};\underline{P}),IF_{i}(t^{\prime},X_{i};\underline{P})),
\]
we obtain%
\[
\mathbb{E}_{\ast}[IF_{i}(t,X_{i,1}^{\ast};\underline{\hat{P}})IF_{i}%
(t^{\prime},X_{i,1}^{\ast};\underline{\hat{P}})]\overset{p}{\rightarrow
}\mathrm{Cov}_{P_{i}}(IF_{i}(t,X_{i};\underline{P}),IF_{i}(t^{\prime}%
,X_{i};\underline{P})).
\]

Result for the fourth moment can be proved by a similar argument. This
concludes our proof.
\end{proof}

Now let us prove Theorem \ref{consistency_sigma}.

\begin{proof}{Proof of Theorem \ref{consistency_sigma}.}
First, by Theorem \ref{verification_general_assumptions}, Assumptions \ref{convergence_in_p_to_truth}-\ref{3rd_expansion_empirical} hold under Assumptions \ref{balanced_data} and \ref{finite_horizon_model}. Moreover, the conclusions in Lemma \ref{influence_product} hold. By definition, we have%
\[
\sigma^{2}(t,t^{\prime})=\theta n\mathrm{Cov}_{\ast}(Q(t,\underline{\hat{P}%
}_{\theta}^{\ast}),Q(t^{\prime},\underline{\hat{P}}_{\theta}^{\ast}))=\theta
n\mathbb{E}_{\ast}[(Q(t,\underline{\hat{P}}_{\theta}^{\ast})-\mathbb{E}_{\ast
}[Q(t,\underline{\hat{P}}_{\theta}^{\ast})])(Q(t^{\prime},\underline{\hat{P}%
}_{\theta}^{\ast})-\mathbb{E}_{\ast}[Q(t^{\prime},\underline{\hat{P}}_{\theta
}^{\ast})])].
\]
By Assumption \ref{1st_expansion_empirical}, we know that%
\begin{align*}
Q(t,\underline{\hat{P}}_{\theta}^{\ast})  &  =Q(t,\underline{\hat{P}}%
)+\sum_{i=1}^{m}\int IF_{i}(t,x;\underline{\hat{P}})d(\hat{P}_{i,s_{i}}^{\ast
}(x)-\hat{P}_{i}(x))+\varepsilon^{\ast}(t)\\
&  =Q(t,\underline{\hat{P}})+\sum_{i=1}^{m}\frac{1}{s_{i}}\sum_{k=1}^{s_{i}%
}IF_{i}(t,X_{i,k}^{\ast};\underline{\hat{P}})+\varepsilon^{\ast}(t).
\end{align*}
Besides, from Assumption \ref{1st_expansion_empirical}, we also know%
\[
\mathbb{E}_{\ast}[IF_{i}(t,X_{i,k}^{\ast};\underline{\hat{P}})]=0,\forall
i=1,\ldots,m,\forall k=1,\ldots,s_{i}.
\]
So we obtain%
\[
Q(t,\underline{\hat{P}}_{\theta}^{\ast})-\mathbb{E}_{\ast}[Q(t,\underline{\hat
{P}}_{\theta}^{\ast})]=\sum_{i=1}^{m}\frac{1}{s_{i}}\sum_{k=1}^{s_{i}}%
IF_{i}(t,X_{i,k}^{\ast};\underline{\hat{P}})+\varepsilon^{\ast}(t)-\mathbb{E}%
_{\ast}[\varepsilon^{\ast}(t)].
\]
Therefore, $\sigma^{2}(t,t^{\prime})$ can be written as
\begin{align}
&  \sigma^{2}(t,t^{\prime})\nonumber\\
&  =\theta n\mathbb{E}_{\ast}\left[  \left(  \sum_{i=1}^{m}\frac{1}{s_{i}}%
\sum_{k=1}^{s_{i}}IF_{i}(t,X_{i,k}^{\ast};\underline{\hat{P}})+\varepsilon
^{\ast}(t)-\mathbb{E}_{\ast}[\varepsilon^{\ast}(t)]\right)  \right.
\nonumber\\
&  \left.  \left.  \left(  \sum_{i=1}^{m}\frac{1}{s_{i}}\sum_{k=1}^{s_{i}%
}IF_{i}(t^{\prime},X_{i,k}^{\ast};\underline{\hat{P}})+\varepsilon^{\ast
}(t^{\prime})-\mathbb{E}_{\ast}[\varepsilon^{\ast}(t^{\prime})]\right)
\right)  \right]  . \label{computation_sigma}%
\end{align}
The leading term is given by%
\begin{align*}
&  \theta n\mathbb{E}_{\ast}\left[  \left(  \sum_{i=1}^{m}\frac{1}{s_{i}}%
\sum_{k=1}^{s_{i}}IF_{i}(t,X_{i,k}^{\ast};\underline{\hat{P}})\right)  \left(
\sum_{i=1}^{m}\frac{1}{s_{i}}\sum_{k=1}^{s_{i}}IF_{i}(t^{\prime},X_{i,k}%
^{\ast};\underline{\hat{P}})\right)  \right] \\
&  =\theta n\sum_{i=1}^{m}\sum_{k=1}^{s_{i}}\frac{1}{s_{i}^{2}}\mathbb{E}%
_{\ast}[IF_{i}(t,X_{i,k}^{\ast};\underline{\hat{P}})IF_{i}(t^{\prime}%
,X_{i,k}^{\ast};\underline{\hat{P}})]\\
&  =\theta n\sum_{i=1}^{m}\frac{1}{s_{i}}\mathbb{E}_{\ast}[IF_{i}%
(t,X_{i,1}^{\ast};\underline{\hat{P}})IF_{i}(t^{\prime},X_{i,1}^{\ast
};\underline{\hat{P}})].
\end{align*}
where the first inequality uses the mutual conditional independence of
$X_{i,k}^{\ast}$ and the mean-zero property of $IF_{i}(\cdot,X_{i,k}^{\ast
};\underline{\hat{P}})$. By Lemma \ref{influence_product} and noticing that
$\theta n/s_{i}=\theta n/\lfloor\theta n_{i}\rfloor\rightarrow1/\beta_{i}$, we
have%
\begin{align}
&  \theta n\mathbb{E}_{\ast}\left[  \left(  \sum_{i=1}^{m}\frac{1}{s_{i}}%
\sum_{k=1}^{s_{i}}IF_{i}(t,X_{i,k}^{\ast};\underline{\hat{P}})\right)  \left(
\sum_{i=1}^{m}\frac{1}{s_{i}}\sum_{k=1}^{s_{i}}IF_{i}(t^{\prime},X_{i,k}%
^{\ast};\underline{\hat{P}})\right)  \right] \nonumber\\
&  \overset{p}{\rightarrow}\sum_{i=1}^{m}\frac{1}{\beta_{i}}\mathrm{Cov}%
_{P_{i}}(IF_{i}(t,X_{i};\underline{P}),IF_{i}(t^{\prime},X_{i};\underline{P}%
))=\mathrm{Cov}(\mathbb{G}(t),\mathbb{G}(t^{\prime})). \label{leading_sigma}%
\end{align}
As for the remainder term in (\ref{computation_sigma}), by Assumption
\ref{1st_expansion_empirical} and the order in (\ref{leading_sigma}), we have
\begin{align*}
&  \theta n\mathbb{E}_{\ast}\left[  \left(  \sum_{i=1}^{m}\frac{1}{s_{i}}%
\sum_{k=1}^{s_{i}}IF_{i}(t,X_{i,k}^{\ast};\underline{\hat{P}})\right)
(\varepsilon^{\ast}(t^{\prime})-\mathbb{E}_{\ast}[\varepsilon^{\ast}%
(t^{\prime})])\right] \\
&  \leq\theta n\sqrt{\mathbb{E}_{\ast}\left[  \left(  \sum_{i=1}^{m}\frac
{1}{s_{i}}\sum_{k=1}^{s_{i}}IF_{i}(t,X_{i,k}^{\ast};\underline{\hat{P}%
})\right)  ^{2}\right]  }\sqrt{\mathbb{E}_{\ast}[(\varepsilon^{\ast}%
(t^{\prime})-\mathbb{E}_{\ast}[\varepsilon^{\ast}(t^{\prime})])^{2}]}\\
&  =\theta n\times O_{p}\left(  \frac{1}{\sqrt{\theta n}}\right)  \times
o_{p}\left(  \frac{1}{\sqrt{\theta n}}\right)  =o_{p}(1),
\end{align*}
and similarly%
\[
\theta n\mathbb{E}_{\ast}\left[  \left(  \sum_{i=1}^{m}\frac{1}{s_{i}}%
\sum_{k=1}^{s_{i}}IF_{i}(t^{\prime},X_{i,k}^{\ast};\underline{\hat{P}%
})\right)  (\varepsilon^{\ast}(t)-\mathbb{E}_{\ast}[\varepsilon^{\ast
}(t)])\right]  =o_{p}(1),
\]%
\[
\theta n\mathbb{E}_{\ast}[(\varepsilon^{\ast}(t)-\mathbb{E}_{\ast}%
[\varepsilon^{\ast}(t)])(\varepsilon^{\ast}(t^{\prime})-\mathbb{E}_{\ast
}[\varepsilon^{\ast}(t^{\prime})])]=o_{p}(1).
\]
Therefore, we conclude that%
\[
\sigma^{2}(t,t^{\prime})\overset{p}{\rightarrow}\mathrm{Cov}(\mathbb{G}%
(t),\mathbb{G}(t^{\prime})).
\]
\end{proof}

To prove Theorem \ref{MSE_sigma_hat}, we need the variance formula presented in Lemma \ref{general_covariance_undebias} and the following lemma that establishes the order of some conditional expectations.

\begin{lemma}
\label{computation_moments}Suppose Assumptions \ref{balanced_data},
\ref{1st_expansion_truth} and \ref{1st_expansion_empirical} hold. We define%
\[
\tau^{X}=Q(t,\underline{\hat{P}}_{\theta}^{\ast})-\mathbb{E}_{\ast
}[Q(t,\underline{\hat{P}}_{\theta}^{\ast})],\quad\tau^{Y}=Q(t^{\prime
},\underline{\hat{P}}_{\theta}^{\ast})-\mathbb{E}_{\ast}[Q(t^{\prime
},\underline{\hat{P}}_{\theta}^{\ast})],
\]
and%
\[
\varepsilon^{X}=I(Y^{\ast}\leq t)-Q(t,\underline{\hat{P}}_{\theta}^{\ast
}),\quad\varepsilon^{Y}=I(Y^{\ast}\leq t^{\prime})-Q(t^{\prime}%
,\underline{\hat{P}}_{\theta}^{\ast}),
\]
where $Y^{\ast}$ is the output of the simulation model when the input
distribution is $\underline{\hat{P}}_{\theta}^{\ast}$. Assume $t\leq
t^{\prime}$. Then we have%
\begin{align*}
\mathbb{E}_{\ast}[(\tau^{X})^{2}(\tau^{Y})^{2}] &  =\left(  \sum_{i=1}%
^{m}\frac{1}{s_{i}}\mathbb{E}_{P_{i}}[IF_{i}^{2}(t,X_{i};\underline{P}%
)]\right)  \left(  \sum_{i=1}^{m}\frac{1}{s_{i}}\mathbb{E}_{P_{i}}[IF_{i}%
^{2}(t^{\prime},X_{i};\underline{P})]\right)  \\
&  +2\left(  \sum_{i=1}^{m}\frac{1}{s_{i}}\mathrm{Cov}_{P_{i}}(IF_{i}%
(t,X_{i};\underline{P}),IF_{i}(t^{\prime},X_{i};\underline{P}))\right)
^{2}+o_{p}\left(  \frac{1}{s^{2}}\right)  =O_{p}\left(  \frac{1}{s^{2}%
}\right)  ,
\end{align*}%
\[
\mathbb{E}_{\ast}[(\tau^{X})^{2}(\varepsilon^{Y})^{2}]=\sum_{i=1}^{m}\frac
{1}{s_{i}}\mathbb{E}_{P_{i}}[IF_{i}^{2}(t,X_{i};\underline{P})]Q(t^{\prime
},\underline{P})(1-Q(t^{\prime},\underline{P}))+o_{p}\left(  \frac{1}%
{s}\right)  =O_{p}\left(  \frac{1}{s}\right)  ,
\]%
\[
\mathbb{E}_{\ast}[\tau^{X}\tau^{Y}\varepsilon^{X}\varepsilon^{Y}]=\sum
_{i=1}^{m}\frac{1}{s_{i}}\mathrm{Cov}_{P_{i}}(IF_{i}(t,X_{i};\underline{P}%
),IF_{i}(t^{\prime},X_{i};\underline{P}))Q(t,\underline{P})(1-Q(t^{\prime
},\underline{P}))+o_{p}\left(  \frac{1}{s}\right)  =O_{p}\left(  \frac{1}%
{s}\right)  ,
\]%
\[
\mathbb{E}_{\ast}[(\varepsilon^{X})^{2}(\tau^{Y})^{2}]=\sum_{i=1}^{m}\frac
{1}{s_{i}}\mathbb{E}_{P_{i}}[IF_{i}^{2}(t^{\prime},X_{i};\underline{P}%
)]Q(t,\underline{P})(1-Q(t,\underline{P}))+o_{p}\left(  \frac{1}{s}\right)
=O_{p}\left(  \frac{1}{s}\right)  ,
\]%
\begin{align*}
& \mathbb{E}_{\ast}[\tau^{X}\varepsilon^{X}(\varepsilon^{Y})^{2}]\\
& =(1-2Q(t^{\prime},\underline{P}))(1-Q(t^{\prime},\underline{P}))\sum
_{i=1}^{m}\frac{1}{s_{i}}\mathbb{E}_{P_{i}}(IF_{i}^{2}(t,X_{i};\underline{P}%
))\\
& +(4Q(t,\underline{P})Q(t^{\prime},\underline{P})-3Q(t,\underline{P}%
))\sum_{i=1}^{m}\frac{1}{s_{i}}\mathrm{Cov}_{P_{i}}(IF_{i}(t,X_{i}%
;\underline{P}),IF_{i}(t^{\prime},X_{i};\underline{P}))+o_{p}\left(  \frac
{1}{s}\right)  \\
& =O_{p}\left(  \frac{1}{s}\right)  ,
\end{align*}%
\begin{align*}
& \mathbb{E}_{\ast}[(\varepsilon^{X})^{2}\tau^{Y}\varepsilon^{Y}]\\
& =(1-4Q(t,\underline{P}))(1-Q(t^{\prime},\underline{P}))\sum_{i=1}^{m}%
\frac{1}{s_{i}}\mathrm{Cov}_{P_{i}}(IF_{i}(t,X_{i};\underline{P}%
),IF_{i}(t^{\prime},X_{i};\underline{P}))\\
& -Q(t,\underline{P})(1-2Q(t,\underline{P}))\sum_{i=1}^{m}\frac{1}{s_{i}%
}\mathbb{E}_{P_{i}}(IF_{i}^{2}(t^{\prime},X_{i};\underline{P}))+o_{p}\left(
\frac{1}{s}\right)  \\
& =O_{p}\left(  \frac{1}{s}\right)  ,
\end{align*}%
\begin{align*}
& \mathbb{E}_{\ast}[(\varepsilon^{X})^{2}(\varepsilon^{Y})^{2}]\\
& =Q(t,\underline{P})(1-Q(t^{\prime},\underline{P}))(1-Q(t^{\prime
},\underline{P})-2Q(t,\underline{P})+3Q(t,\underline{P})Q(t^{\prime
},\underline{P}))+o_{p}(1)\\
& =O_{p}(1),
\end{align*}%
\[
\mathbb{E}_{\ast}[\mathbb{E}_{\ast}[(\varepsilon^{X})^{2}|\underline{\hat{P}%
}_{\theta}^{\ast}]\mathbb{E}_{\ast}[(\varepsilon^{Y})^{2}|\underline{\hat{P}%
}_{\theta}^{\ast}]]=Q(t,\underline{P})(1-Q(t,\underline{P}))Q(t^{\prime
},\underline{P})(1-Q(t^{\prime},\underline{P}))+o_{p}(1)=O_{p}(1),
\]%
\[
\mathbb{E}_{\ast}[(\mathbb{E}_{\ast}[\varepsilon^{X}\varepsilon^{Y}%
|\underline{\hat{P}}_{\theta}^{\ast}])^{2}]=Q(t,\underline{P})^{2}%
(1-Q(t^{\prime},\underline{P}))^{2}+o_{p}(1)=O_{p}(1),
\]%
\[
\mathbb{E}_{\ast}[(\tau^{X})^{2}]\mathbb{E}_{\ast}[(\tau^{Y})^{2}]=\left(
\sum_{i=1}^{m}\frac{1}{s_{i}}\mathbb{E}_{P_{i}}[IF_{i}^{2}(t,X_{i}%
;\underline{P})]\right)  \left(  \sum_{i=1}^{m}\frac{1}{s_{i}}\mathbb{E}%
_{P_{i}}[IF_{i}^{2}(t^{\prime},X_{i};\underline{P})]\right)  +o_{p}\left(
\frac{1}{s^{2}}\right)  =O_{p}\left(  \frac{1}{s^{2}}\right)  ,
\]%
\[
\mathbb{E}_{\ast}[(\tau^{X})^{2}]\mathbb{E}_{\ast}[(\varepsilon^{Y})^{2}%
]=\sum_{i=1}^{m}\frac{1}{s_{i}}\mathbb{E}_{P_{i}}[IF_{i}^{2}(t,X_{i}%
;\underline{P})]Q(t^{\prime},\underline{P})(1-Q(t^{\prime},\underline{P}%
))+o_{p}\left(  \frac{1}{s}\right)  =O_{p}\left(  \frac{1}{s}\right)  ,
\]%
\[
\mathbb{E}_{\ast}[(\varepsilon^{X})^{2}]\mathbb{E}_{\ast}[(\tau^{Y})^{2}%
]=\sum_{i=1}^{m}\frac{1}{s_{i}}\mathbb{E}_{P_{i}}[IF_{i}^{2}(t^{\prime}%
,X_{i};\underline{P})]Q(t,\underline{P})(1-Q(t,\underline{P}))+o_{p}\left(
\frac{1}{s}\right)  =O_{p}\left(  \frac{1}{s}\right)  ,
\]%
\[
\mathbb{E}_{\ast}[(\varepsilon^{X})^{2}]\mathbb{E}_{\ast}[(\varepsilon
^{Y})^{2}]=Q(t,\underline{P})(1-Q(t,\underline{P}))Q(t^{\prime},\underline{P}%
)(1-Q(t^{\prime},\underline{P}))+o_{p}(1)=O_{p}(1),
\]%
\[
(\mathbb{E}_{\ast}[\tau^{X}\tau^{Y}])^{2}=\left(  \sum_{i=1}^{m}\frac{1}%
{s_{i}}\mathrm{Cov}_{P_{i}}(IF_{i}(t,X_{i};\underline{P}),IF_{i}(t^{\prime
},X_{i};\underline{P}))\right)  ^{2}+o_{p}\left(  \frac{1}{s^{2}}\right)
=O_{p}\left(  \frac{1}{s^{2}}\right)  ,
\]%
\[
\mathbb{E}_{\ast}[\tau^{X}\tau^{Y}]\mathbb{E}_{\ast}[\varepsilon
^{X}\varepsilon^{Y}]=\sum_{i=1}^{m}\frac{1}{s_{i}}\mathrm{Cov}_{P_{i}}%
(IF_{i}(t,X_{i};\underline{P}),IF_{i}(t^{\prime},X_{i};\underline{P}%
))Q(t,\underline{P})(1-Q(t^{\prime},\underline{P}))+o_{p}\left(  \frac{1}%
{s}\right)  =O_{p}\left(  \frac{1}{s}\right)  ,
\]%
\[
(\mathbb{E}_{\ast}[\varepsilon^{X}\varepsilon^{Y}])^{2}=Q(t,\underline{P}%
)^{2}(1-Q(t^{\prime},\underline{P}))^{2}+o_{p}(1)=O_{p}(1).
\]

\end{lemma}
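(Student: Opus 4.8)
Since the hypotheses of Lemma \ref{influence_product} are exactly those assumed here, all of its conclusions are available; in particular, for $W_{i,k}^{X}:=IF_{i}(t,X_{i,k}^{\ast};\underline{\hat P})$ (and analogously $W_{i,k}^{Y}$ at $t^{\prime}$) the $W_{i,k}^{X}$ are, conditionally on the data, mean-zero, i.i.d.\ in $k$ for each $i$, independent across $i$, and $\mathbb{E}_{\ast}[(W_{i,1}^{X})^{2}],\mathbb{E}_{\ast}[(W_{i,1}^{X})^{4}],\mathbb{E}_{\ast}[(W_{i,1}^{X})^{2}(W_{i,1}^{Y})^{2}]=O_{p}(1)$ with the second moments converging in probability to the corresponding $P_{i}$-moments. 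The plan is to split
\[
\tau^{X}=S^{X}+\tilde\varepsilon^{X},\qquad S^{X}:=\sum_{i=1}^{m}\frac{1}{s_{i}}\sum_{k=1}^{s_{i}}W_{i,k}^{X},\quad\tilde\varepsilon^{X}:=\varepsilon^{\ast}(t)-\mathbb{E}_{\ast}[\varepsilon^{\ast}(t)],
\]
(using Assumption \ref{1st_expansion_empirical}), and similarly for $\tau^{Y}$ at $t^{\prime}$. Recalling $s=\Theta(\theta n)$ (Assumption \ref{balanced_data}) and the remainder bounds $\mathbb{E}_{\ast}[(\tilde\varepsilon^{X})^{2}]=o_{p}(s^{-1})$, $\mathbb{E}_{\ast}[(\tilde\varepsilon^{X})^{4}]=o_{p}(s^{-2})$, routine moment computations for sums of independent mean-zero variables give the master estimates $\mathbb{E}_{\ast}[(S^{X})^{2}]=\sum_{i}s_{i}^{-1}\mathbb{E}_{\ast}[(W_{i,1}^{X})^{2}]=\Theta_{p}(s^{-1})$, $\mathbb{E}_{\ast}[(S^{X})^{4}]=O_{p}(s^{-2})$, $\mathbb{E}_{\ast}[(\tau^{X})^{2}]=\sum_{i}s_{i}^{-1}\mathbb{E}_{\ast}[(W_{i,1}^{X})^{2}]+o_{p}(s^{-1})$, $\mathbb{E}_{\ast}[\tau^{X}\tau^{Y}]=\sum_{i}s_{i}^{-1}\mathbb{E}_{\ast}[W_{i,1}^{X}W_{i,1}^{Y}]+o_{p}(s^{-1})$, and $\mathbb{E}_{\ast}[(\tau^{X})^{4}]=O_{p}(s^{-2})$, the $S$--$\tilde\varepsilon$ cross terms being killed by Cauchy--Schwarz; Lemma \ref{influence_product} then replaces the conditional moments of the $W$'s by their stated $P_{i}$-limits at cost $o_{p}(1)$ per summand.

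For the first identity, the plan is to expand $(\tau^{X})^{2}(\tau^{Y})^{2}$ around $(S^{X})^{2}(S^{Y})^{2}$ and show each of the eight remainder terms is $o_{p}(s^{-2})$. A naive Cauchy--Schwarz only bounds, e.g., $\mathbb{E}_{\ast}[(S^{X})^{2}S^{Y}\tilde\varepsilon^{Y}]$ by $o_{p}(s^{-3/2})$, so instead one routes through $\mathbb{E}_{\ast}[(S^{X})^{2}(S^{Y})^{2}]=O_{p}(s^{-2})$: $\mathbb{E}_{\ast}[(S^{X})^{2}S^{Y}\tilde\varepsilon^{Y}]\le\sqrt{\mathbb{E}_{\ast}[(S^{X})^{2}(S^{Y})^{2}]}\sqrt{\mathbb{E}_{\ast}[(S^{X})^{2}(\tilde\varepsilon^{Y})^{2}]}$ with $\mathbb{E}_{\ast}[(S^{X})^{2}(\tilde\varepsilon^{Y})^{2}]\le\sqrt{\mathbb{E}_{\ast}[(S^{X})^{4}]}\sqrt{\mathbb{E}_{\ast}[(\tilde\varepsilon^{Y})^{4}]}=o_{p}(s^{-2})$, and the other seven are analogous. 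It then remains to compute $\mathbb{E}_{\ast}[(S^{X})^{2}(S^{Y})^{2}]$ exactly to leading order: writing $\bar Z_{i}^{X}:=\frac{1}{s_{i}}\sum_{k}W_{i,k}^{X}$, expanding $\sum_{i,i^{\prime},j,j^{\prime}}\mathbb{E}_{\ast}[\bar Z_{i}^{X}\bar Z_{i^{\prime}}^{X}\bar Z_{j}^{Y}\bar Z_{j^{\prime}}^{Y}]$, and using mean-zero/independence-across-$i$, only the "all four sample-indices equal" pattern and the two "two-pair" patterns contribute at order $s^{-2}$, while within a single sample only the "all four positions coincide" term is of the lower order $s_{i}^{-3}$; the surviving contributions assemble into
\[
\mathbb{E}_{\ast}[(S^{X})^{2}(S^{Y})^{2}]=\Big(\sum_{i}\tfrac{1}{s_{i}}\mathbb{E}_{\ast}[(W_{i,1}^{X})^{2}]\Big)\Big(\sum_{i}\tfrac{1}{s_{i}}\mathbb{E}_{\ast}[(W_{i,1}^{Y})^{2}]\Big)+2\Big(\sum_{i}\tfrac{1}{s_{i}}\mathbb{E}_{\ast}[W_{i,1}^{X}W_{i,1}^{Y}]\Big)^{2}+O_{p}(s^{-3}),
\]
and Lemma \ref{influence_product} converts the conditional moments to the stated $\mathrm{Cov}_{P_{i}}$ quantities with error $o_{p}(s^{-2})$.

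For the identities involving $\varepsilon^{X},\varepsilon^{Y}$, the plan is to condition first on $\underline{\hat P}_{\theta}^{\ast}$. Since $t\le t^{\prime}$, with $A:=I(Y^{\ast}\le t)\le B:=I(Y^{\ast}\le t^{\prime})$ one has $A^{2}=A$, $B^{2}=B$, $AB=A$ and $\mathbb{E}_{\ast}[A\mid\underline{\hat P}_{\theta}^{\ast}]=Q(t,\underline{\hat P}_{\theta}^{\ast})=:p$, $\mathbb{E}_{\ast}[B\mid\cdot]=Q(t^{\prime},\underline{\hat P}_{\theta}^{\ast})=:q$, $\mathbb{E}_{\ast}[AB\mid\cdot]=p$; substituting $\varepsilon^{X}=A-p$, $\varepsilon^{Y}=B-q$ yields closed forms such as $\mathbb{E}_{\ast}[(\varepsilon^{X})^{2}\mid\cdot]=p(1-p)$, $\mathbb{E}_{\ast}[\varepsilon^{X}\varepsilon^{Y}\mid\cdot]=p(1-q)$, $\mathbb{E}_{\ast}[\varepsilon^{X}(\varepsilon^{Y})^{2}\mid\cdot]=(1-2q)p(1-q)$, $\mathbb{E}_{\ast}[(\varepsilon^{X})^{2}\varepsilon^{Y}\mid\cdot]=(1-2p)p(1-q)$, and the needed fourth-order ones. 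Every target then equals $\mathbb{E}_{\ast}$ of $\tau^{X},\tau^{Y}$ times a polynomial in $p,q$; writing $p=\tau^{X}+\mathbb{E}_{\ast}[Q(t,\underline{\hat P}_{\theta}^{\ast})]$, $q=\tau^{Y}+\mathbb{E}_{\ast}[Q(t^{\prime},\underline{\hat P}_{\theta}^{\ast})]$ turns each into $\mathbb{E}_{\ast}$ of a polynomial in $\tau^{X},\tau^{Y}$ whose coefficients are deterministic given the data and converge in probability to polynomials in $Q(t,\underline{P}),Q(t^{\prime},\underline{P})$. One keeps only the monomials $1,(\tau^{X})^{2},\tau^{X}\tau^{Y},(\tau^{Y})^{2}$: the degree-one monomial has zero $\mathbb{E}_{\ast}$, the degree-$\ge 3$ monomials are $o_{p}(s^{-1})$ (e.g.\ $\mathbb{E}_{\ast}[(\tau^{X})^{2}\tau^{Y}]=\sum_{i}s_{i}^{-2}\mathbb{E}_{\ast}[(W_{i,1}^{X})^{2}W_{i,1}^{Y}]+o_{p}(s^{-3/2})=o_{p}(s^{-1})$ by the same routing through $\mathbb{E}_{\ast}[(S^{X})^{2}(S^{Y})^{2}]$), and $(\tau^{X})^{2}(\tau^{Y})^{2}=O_{p}(s^{-2})=o_{p}(s^{-1})$, so the surviving $O_{p}(s^{-1})$ terms carry precisely the claimed coefficients once the converging coefficients are pulled out of $\mathbb{E}_{\ast}$ at $o_{p}(1)\cdot O_{p}(s^{-1})=o_{p}(s^{-1})$ cost. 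The remaining "product-of-separate-expectations" identities and the three purely Monte Carlo ones ($\mathbb{E}_{\ast}[(\varepsilon^{X})^{2}(\varepsilon^{Y})^{2}]$, $\mathbb{E}_{\ast}[\mathbb{E}_{\ast}[(\varepsilon^{X})^{2}\mid\cdot]\mathbb{E}_{\ast}[(\varepsilon^{Y})^{2}\mid\cdot]]$, $\mathbb{E}_{\ast}[(\mathbb{E}_{\ast}[\varepsilon^{X}\varepsilon^{Y}\mid\cdot])^{2}]$) follow by multiplying the one-factor asymptotics already obtained together with the elementary fact that for a uniformly bounded $g=g(\underline{\hat P}_{\theta}^{\ast})$ with $g\overset{p}{\rightarrow}g_{0}$ one has $\mathbb{E}_{\ast}[g]\overset{p}{\rightarrow}g_{0}$ (applied to the relevant polynomials in $p,q$, using $Q(t,\underline{\hat P}_{\theta}^{\ast})\overset{p}{\rightarrow}Q(t,\underline{P})$).

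The hard part will be the exact leading-order evaluation of $\mathbb{E}_{\ast}[(S^{X})^{2}(S^{Y})^{2}]$ — the combinatorial pairing over the four multi-indices, keeping straight which contributions live at order $s^{-2}$ versus the negligible $s^{-3}$ — together with the need throughout to run all the Cauchy--Schwarz estimates through $\mathbb{E}_{\ast}[(S^{X})^{2}(S^{Y})^{2}]=O_{p}(s^{-2})$ rather than through $\mathbb{E}_{\ast}[(S^{X})^{4}]^{1/2}$, since only the former produces remainders of the required order $o_{p}(s^{-2})$ (resp.\ $o_{p}(s^{-1})$) under the merely fourth-moment hypotheses of Assumptions \ref{1st_expansion_truth} and \ref{1st_expansion_empirical}.
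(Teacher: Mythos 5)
Your proposal is correct and follows essentially the same route as the paper's proof. Both decompose $\tau^{X}=S^{X}+\tilde\varepsilon^{X}$ via Assumption \ref{1st_expansion_empirical}, compute $\mathbb{E}_{\ast}[(S^{X})^{2}(S^{Y})^{2}]$ by the same pairing combinatorics (two product-of-marginals patterns plus the two cross patterns at order $s^{-2}$, with the all-four-coincide terms at $O_{p}(s^{-3})$), condition on $\underline{\hat P}_{\theta}^{\ast}$ to get the closed-form $\varepsilon$-moments as polynomials in $p=Q(t,\underline{\hat P}_{\theta}^{\ast})$ and $q=Q(t^{\prime},\underline{\hat P}_{\theta}^{\ast})$, and then expand around the centering constants keeping only the quadratic monomials in $\tau^{X},\tau^{Y}$. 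I verified the polynomial expansions for $\mathbb{E}_{\ast}[\tau^{X}\varepsilon^{X}(\varepsilon^{Y})^{2}]$ and $\mathbb{E}_{\ast}[(\varepsilon^{X})^{2}\tau^{Y}\varepsilon^{Y}]$; they reproduce the paper's coefficients. The one place you frame things slightly differently is the control of remainder terms: you route through $\mathbb{E}_{\ast}[(S^{X})^{2}(S^{Y})^{2}]=O_{p}(s^{-2})$ via a chained Cauchy--Schwarz, whereas the paper invokes H\"{o}lder directly together with the same fourth-moment bound (their display (\ref{order_E[IF^4]}) \emph{is} $\mathbb{E}_{\ast}[(S^{X})^{2}(S^{Y})^{2}]=O_{p}(s^{-2})$), so this is a presentational distinction rather than a different argument — both rely on exactly the fourth-moment hypotheses and avoid needing sixth moments.
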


\begin{proof}{Proof of Lemma \ref{computation_moments}.}
By Assumption \ref{1st_expansion_empirical}, we know that%
\begin{equation}
\tau^{X}=\sum_{i=1}^{m}\frac{1}{s_{i}}\sum_{k=1}^{s_{i}}IF_{i}(t,X_{i,k}%
^{\ast};\underline{\hat{P}})+\varepsilon^{\ast}(t)-\mathbb{E}_{\ast
}[\varepsilon^{\ast}(t)],\label{tau_X}%
\end{equation}%
\begin{equation}
\tau^{Y}=\sum_{i=1}^{m}\frac{1}{s_{i}}\sum_{k=1}^{s_{i}}IF_{i}(t^{\prime
},X_{i,k}^{\ast};\underline{\hat{P}})+\varepsilon^{\ast}(t^{\prime
})-\mathbb{E}_{\ast}[\varepsilon^{\ast}(t^{\prime})].\label{tau_Y}%
\end{equation}
We first compute a few conditional second moments. The first one is
$\mathbb{E}_{\ast}[\tau^{X}\tau^{Y}]$. The leading term of $\mathbb{E}_{\ast
}[\tau^{X}\tau^{Y}]$ is
\begin{align*}
&  \mathbb{E}_{\ast}\left[  \left(  \sum_{i=1}^{m}\frac{1}{s_{i}}\sum
_{k=1}^{s_{i}}IF_{i}(t,X_{i,k}^{\ast};\underline{\hat{P}})\right)  \left(
\sum_{i=1}^{m}\frac{1}{s_{i}}\sum_{k=1}^{s_{i}}IF_{i}(t^{\prime},X_{i,k}%
^{\ast};\underline{\hat{P}})\right)  \right]  \\
&  =\sum_{i=1}^{m}\sum_{k=1}^{s_{i}}\frac{1}{s_{i}^{2}}\mathbb{E}_{\ast
}[IF_{i}(t,X_{i,k}^{\ast};\underline{\hat{P}})IF_{i}(t^{\prime},X_{i,k}^{\ast
};\underline{\hat{P}})]\\
&  =\sum_{i=1}^{m}\frac{1}{s_{i}}\mathbb{E}_{\ast}[IF_{i}(t,X_{i,1}^{\ast
};\underline{\hat{P}})IF_{i}(t^{\prime},X_{i,1}^{\ast};\underline{\hat{P}})].
\end{align*}
where the first equality uses the mutual conditional independence of
$X_{i,k}^{\ast}$ and the mean-zero property of $IF_{i}(\cdot,X_{i,k}^{\ast
};\underline{\hat{P}})$. By Lemma \ref{influence_product}, this leading term
can be further written as%
\begin{align}
&  \mathbb{E}_{\ast}\left[  \left(  \sum_{i=1}^{m}\frac{1}{s_{i}}\sum
_{k=1}^{s_{i}}IF_{i}(t,X_{i,k}^{\ast};\underline{\hat{P}})\right)  \left(
\sum_{i=1}^{m}\frac{1}{s_{i}}\sum_{k=1}^{s_{i}}IF_{i}(t^{\prime},X_{i,k}%
^{\ast};\underline{\hat{P}})\right)  \right]  \nonumber\\
&  =\sum_{i=1}^{m}\frac{1}{s_{i}}\mathrm{Cov}_{P_{i}}(IF_{i}(t,X_{i}%
;\underline{P}),IF_{i}(t^{\prime},X_{i};\underline{P}))+o_{p}\left(  \frac
{1}{s}\right)  =O_{p}\left(  \frac{1}{s}\right)  ,\forall t,t^{\prime}%
\in\mathbb{R}.\label{order_E[IF^2]}%
\end{align}
Since $\mathbb{E}_{\ast}[(\varepsilon^{\ast}(t)-\mathbb{E}_{\ast}%
[\varepsilon^{\ast}(t)])^{2}]=o_{p}(1/s)$ according to Assumption
\ref{1st_expansion_empirical}, by H\"{o}lder's inequality, $\mathbb{E}_{\ast
}[\tau^{X}\tau^{Y}]$ can be expressed as
\[
\mathbb{E}_{\ast}[\tau^{X}\tau^{Y}]=\sum_{i=1}^{m}\frac{1}{s_{i}}%
\mathrm{Cov}_{P_{i}}(IF_{i}(t,X_{i};\underline{P}),IF_{i}(t^{\prime}%
,X_{i};\underline{P}))+o_{p}\left(  \frac{1}{s}\right)  =O_{p}\left(  \frac
{1}{s}\right)  .
\]
Similarly, we have%
\[
\mathbb{E}_{\ast}[(\tau^{X})^{2}]=\sum_{i=1}^{m}\frac{1}{s_{i}}\mathbb{E}%
_{P_{i}}[IF_{i}^{2}(t,X_{i};\underline{P})]+o_{p}\left(  \frac{1}{s}\right)
=O_{p}\left(  \frac{1}{s}\right)  ,
\]
and%
\[
\mathbb{E}_{\ast}[(\tau^{Y})^{2}]=\sum_{i=1}^{m}\frac{1}{s_{i}}\mathbb{E}%
_{P_{i}}[IF_{i}^{2}(t^{\prime},X_{i};\underline{P})]+o_{p}\left(  \frac{1}%
{s}\right)  =O_{p}\left(  \frac{1}{s}\right)  .
\]
Next, by direct computation, we have%
\[
\mathbb{E}_{\ast}[\varepsilon^{X}\varepsilon^{Y}|\underline{\hat{P}}_{\theta
}^{\ast}]=Q(t,\underline{\hat{P}}_{\theta}^{\ast})-Q(t,\underline{\hat{P}%
}_{\theta}^{\ast})Q(t^{\prime},\underline{\hat{P}}_{\theta}^{\ast
})=Q(t,\underline{\hat{P}}_{\theta}^{\ast})(1-Q(t^{\prime},\underline{\hat{P}%
}_{\theta}^{\ast})),
\]%
\[
\mathbb{E}_{\ast}[(\varepsilon^{X})^{2}|\underline{\hat{P}}_{\theta}^{\ast
}]=Q(t,\underline{\hat{P}}_{\theta}^{\ast})-Q^{2}(t,\underline{\hat{P}%
}_{\theta}^{\ast})=Q(t,\underline{\hat{P}}_{\theta}^{\ast}%
)(1-Q(t,\underline{\hat{P}}_{\theta}^{\ast})),
\]%
\begin{equation}
\mathbb{E}_{\ast}[(\varepsilon^{Y})^{2}|\underline{\hat{P}}_{\theta}^{\ast
}]=Q(t^{\prime},\underline{\hat{P}}_{\theta}^{\ast})-Q^{2}(t^{\prime
},\underline{\hat{P}}_{\theta}^{\ast})=Q(t^{\prime},\underline{\hat{P}%
}_{\theta}^{\ast})(1-Q(t^{\prime},\underline{\hat{P}}_{\theta}^{\ast
})),\label{conditional_E[epsilon_Y^2]}%
\end{equation}%
\begin{align}
\mathbb{E}_{\ast}[\varepsilon^{X}(\varepsilon^{Y})^{2}|\underline{\hat{P}%
}_{\theta}^{\ast}]  & =Q(t,\underline{\hat{P}}_{\theta}^{\ast}%
)-3Q(t,\underline{\hat{P}}_{\theta}^{\ast})Q(t^{\prime},\underline{\hat{P}%
}_{\theta}^{\ast})+2Q(t,\underline{\hat{P}}_{\theta}^{\ast})Q^{2}(t^{\prime
},\underline{\hat{P}}_{\theta}^{\ast})\nonumber\\
& =Q(t,\underline{\hat{P}}_{\theta}^{\ast})(1-2Q(t^{\prime},\underline{\hat
{P}}_{\theta}^{\ast}))(1-Q(t^{\prime},\underline{\hat{P}}_{\theta}^{\ast
})),\label{conditional_E[epsilon_Xepsilon_Y^2]}%
\end{align}
\begin{align*}
\mathbb{E}_{\ast}[(\varepsilon^{X})^{2}\varepsilon^{Y}|\underline{\hat{P}%
}_{\theta}^{\ast}]  & =Q(t,\underline{\hat{P}}_{\theta}^{\ast}%
)-Q(t,\underline{\hat{P}}_{\theta}^{\ast})Q(t^{\prime},\underline{\hat{P}%
}_{\theta}^{\ast})-2Q^{2}(t,\underline{\hat{P}}_{\theta}^{\ast})+2Q^{2}%
(t,\underline{\hat{P}}_{\theta}^{\ast})Q(t^{\prime},\underline{\hat{P}%
}_{\theta}^{\ast})\\
& =Q(t,\underline{\hat{P}}_{\theta}^{\ast})(1-2Q(t,\underline{\hat{P}}%
_{\theta}^{\ast}))(1-Q(t^{\prime},\underline{\hat{P}}_{\theta}^{\ast})),
\end{align*}
\begin{align*}
\mathbb{E}_{\ast}[(\varepsilon^{X})^{2}(\varepsilon^{Y})^{2}|\underline{\hat
{P}}_{\theta}^{\ast}]  & =Q(t,\underline{\hat{P}}_{\theta}^{\ast
})-2Q(t,\underline{\hat{P}}_{\theta}^{\ast})Q(t^{\prime},\underline{\hat{P}%
}_{\theta}^{\ast})-2Q^{2}(t,\underline{\hat{P}}_{\theta}^{\ast})\\
& +Q(t,\underline{\hat{P}}_{\theta}^{\ast})Q^{2}(t^{\prime},\underline{\hat
{P}}_{\theta}^{\ast})+5Q^{2}(t,\underline{\hat{P}}_{\theta}^{\ast}%
)Q(t^{\prime},\underline{\hat{P}}_{\theta}^{\ast})-3Q^{2}(t,\underline{\hat
{P}}_{\theta}^{\ast})Q^{2}(t^{\prime},\underline{\hat{P}}_{\theta}^{\ast})\\
& =Q(t,\underline{\hat{P}}_{\theta}^{\ast})(1-Q(t^{\prime},\underline{\hat{P}%
}_{\theta}^{\ast}))(1-Q(t^{\prime},\underline{\hat{P}}_{\theta}^{\ast
})-2Q(t,\underline{\hat{P}}_{\theta}^{\ast})+3Q(t,\underline{\hat{P}}_{\theta
}^{\ast})Q(t^{\prime},\underline{\hat{P}}_{\theta}^{\ast}))
\end{align*}
Based on $\mathbb{E}_{\ast}[\varepsilon^{X}\varepsilon^{Y}|\underline{\hat{P}%
}_{\theta}^{\ast}]$, we can compute $\mathbb{E}_{\ast}[\varepsilon
^{X}\varepsilon^{Y}]$. By Assumption \ref{1st_expansion_empirical}, we have%
\begin{align*}
&  \mathbb{E}_{\ast}[Q(t,\underline{\hat{P}}_{\theta}^{\ast})(1-Q(t^{\prime
},\underline{\hat{P}}_{\theta}^{\ast}))]\\
&  =\mathbb{E}_{\ast}\left[  \left(  Q(t,\underline{\hat{P}})+\sum_{i=1}%
^{m}\frac{1}{s_{i}}\sum_{k=1}^{s_{i}}IF_{i}(t,X_{i,k}^{\ast};\underline{\hat
{P}})+\varepsilon^{\ast}(t)\right)  \left(  1-Q(t^{\prime},\underline{\hat{P}%
})-\sum_{i=1}^{m}\frac{1}{s_{i}}\sum_{k=1}^{s_{i}}IF_{i}(t^{\prime}%
,X_{i,k}^{\ast};\underline{\hat{P}})-\varepsilon^{\ast}(t^{\prime})\right)
\right]  .
\end{align*}
The leading term is clearly $Q(t,\underline{\hat{P}})(1-Q(t^{\prime
},\underline{\hat{P}}))=Q(t,\underline{P})(1-Q(t^{\prime},\underline{P}%
))+o_{p}(1)$. Moreover, the mean-zero property of $IF_{i}$ leads to
\[
(1-Q(t^{\prime},\underline{\hat{P}}))\mathbb{E}_{\ast}\left[  \sum_{i=1}%
^{m}\frac{1}{s_{i}}\sum_{k=1}^{s_{i}}IF_{i}(t,X_{i,k}^{\ast};\underline{\hat
{P}})\right]  =Q(t,\underline{\hat{P}})\mathbb{E}_{\ast}\left[  \sum_{i=1}%
^{m}\frac{1}{s_{i}}\sum_{k=1}^{s_{i}}IF_{i}(t^{\prime},X_{i,k}^{\ast
};\underline{\hat{P}})\right]  =0.
\]
So the remainder term is of order $O_{p}(1/s)$ by Assumption
\ref{1st_expansion_empirical}, (\ref{order_E[IF^2]}) and H\"{o}lder's
inequality. Therefore, we have%
\begin{equation}
\mathbb{E}_{\ast}[\varepsilon^{X}\varepsilon^{Y}]=\mathbb{E}_{\ast
}[Q(t,\underline{\hat{P}}_{\theta}^{\ast})(1-Q(t^{\prime},\underline{\hat{P}%
}_{\theta}^{\ast}))]=Q(t,\underline{P})(1-Q(t^{\prime},\underline{P}%
))+o_{p}(1)=O_{p}(1).\label{E[epsilon_X*epsilon_Y]}%
\end{equation}
Similarly, we have%
\[
\mathbb{E}_{\ast}[(\varepsilon^{X})^{2}]=Q(t,\underline{P}%
)(1-Q(t,\underline{P}))+o_{p}(1)=O_{p}(1),
\]
and%
\[
\mathbb{E}_{\ast}[(\varepsilon^{Y})^{2}]=Q(t^{\prime},\underline{P}%
)(1-Q(t^{\prime},\underline{P}))+o_{p}(1)=O_{p}(1).
\]

Now let us compute the desired quantities. First, we compute $\mathbb{E}%
_{\ast}[(\tau^{X})^{2}(\tau^{Y})^{2}]$. By (\ref{tau_X}) and (\ref{tau_Y}), we
have
\begin{align*}
&  \mathbb{E}_{\ast}[(\tau^{X})^{2}(\tau^{Y})^{2}]\\
&  =\mathbb{E}_{\ast}\left[  \left(  \sum_{i=1}^{m}\frac{1}{s_{i}}\sum
_{k=1}^{s_{i}}IF_{i}(t,X_{i,k}^{\ast};\underline{\hat{P}})+\varepsilon^{\ast
}(t)-\mathbb{E}_{\ast}[\varepsilon^{\ast}(t)]\right)  ^{2}\left(  \sum
_{i=1}^{m}\frac{1}{s_{i}}\sum_{k=1}^{s_{i}}IF_{i}(t^{\prime},X_{i,k}^{\ast
};\underline{\hat{P}})+\varepsilon^{\ast}(t^{\prime})-\mathbb{E}_{\ast
}[\varepsilon^{\ast}(t^{\prime})]\right)  ^{2}\right]  .
\end{align*}
The leading term is given by%
\begin{align*}
&  \mathbb{E}_{\ast}\left[  \left(  \sum_{i=1}^{m}\frac{1}{s_{i}}\sum
_{k=1}^{s_{i}}IF_{i}(t,X_{i,k}^{\ast};\underline{\hat{P}})\right)  ^{2}\left(
\sum_{i=1}^{m}\frac{1}{s_{i}}\sum_{k=1}^{s_{i}}IF_{i}(t^{\prime},X_{i,k}%
^{\ast};\underline{\hat{P}})\right)  ^{2}\right]  \\
&  =\sum_{i=1}^{m}\sum_{k=1}^{s_{i}}\frac{1}{s_{i}^{4}}\mathbb{E}_{\ast
}[IF_{i}^{2}(t,X_{i,k}^{\ast};\underline{\hat{P}})IF_{i}^{2}(t^{\prime
},X_{i,k}^{\ast};\underline{\hat{P}})]+\sum_{(i_{1},k_{1})\neq(i_{2},k_{2}%
)}\frac{1}{s_{i_{1}}^{2}s_{i_{2}}^{2}}\mathbb{E}_{\ast}[IF_{i_{1}}%
^{2}(t,X_{i_{1},k_{1}}^{\ast};\underline{\hat{P}})]\mathbb{E}_{\ast}%
[IF_{i_{2}}^{2}(t^{\prime},X_{i_{2},k_{2}}^{\ast};\underline{\hat{P}})]\\
&  +\sum_{(i_{1},k_{1})\neq(i_{2},k_{2})}\frac{2}{s_{i_{1}}^{2}s_{i_{2}}^{2}%
}\mathbb{E}_{\ast}[IF_{i_{1}}(t,X_{i_{1},k_{1}}^{\ast};\underline{\hat{P}%
})IF_{i_{1}}(t^{\prime},X_{i_{1},k_{1}}^{\ast};\underline{\hat{P}}%
)]\mathbb{E}_{\ast}[IF_{i_{2}}(t,X_{i_{2},k_{2}}^{\ast};\underline{\hat{P}%
})IF_{i_{2}}(t^{\prime},X_{i_{2},k_{2}}^{\ast};\underline{\hat{P}})]\\
&  =\sum_{i_{1}=1}^{m}\sum_{k_{1}=1}^{s_{i_{1}}}\sum_{i_{2}=1}^{m}\sum
_{k_{2}=1}^{s_{i_{2}}}\frac{1}{s_{i_{1}}^{2}s_{i_{2}}^{2}}\mathbb{E}_{\ast
}[IF_{i_{1}}^{2}(t,X_{i_{1},k_{1}}^{\ast};\underline{\hat{P}})]\mathbb{E}%
_{\ast}[IF_{i_{2}}^{2}(t^{\prime},X_{i_{2},k_{2}}^{\ast};\underline{\hat{P}%
})]\\
&  +2\sum_{i_{1}=1}^{m}\sum_{k_{1}=1}^{s_{i_{1}}}\sum_{i_{2}=1}^{m}\sum
_{k_{2}=1}^{s_{i_{2}}}\frac{1}{s_{i_{1}}^{2}s_{i_{2}}^{2}}\mathbb{E}_{\ast
}[IF_{i_{1}}(t,X_{i_{1},k_{1}}^{\ast};\underline{\hat{P}})IF_{i_{1}}%
(t^{\prime},X_{i_{1},k_{1}}^{\ast};\underline{\hat{P}})]\mathbb{E}_{\ast
}[IF_{i_{2}}(t,X_{i_{2},k_{2}}^{\ast};\underline{\hat{P}})IF_{i_{2}}%
(t^{\prime},X_{i_{2},k_{2}}^{\ast};\underline{\hat{P}})]\\
&  +\sum_{i=1}^{m}\sum_{k=1}^{s_{i}}\frac{1}{s_{i}^{4}}(\mathbb{E}_{\ast
}[IF_{i}^{2}(t,X_{i,k}^{\ast};\underline{\hat{P}})IF_{i}^{2}(t^{\prime
},X_{i,k}^{\ast};\underline{\hat{P}})]-\mathbb{E}_{\ast}[IF_{i}^{2}%
(t,X_{i,k}^{\ast};\underline{\hat{P}})]\mathbb{E}_{\ast}[IF_{i}^{2}(t^{\prime
},X_{i,k}^{\ast};\underline{\hat{P}})]\\
&  -(\mathbb{E}_{\ast}[IF_{i}^{2}(t,X_{i,k}^{\ast};\underline{\hat{P}}%
)IF_{i}^{2}(t^{\prime},X_{i,k}^{\ast};\underline{\hat{P}})])^{2})\\
&  =\left(  \sum_{i=1}^{m}\frac{1}{s_{i}}\mathbb{E}_{\ast}[IF_{i}%
^{2}(t,X_{i,1}^{\ast};\underline{\hat{P}})]\right)  \left(  \sum_{i=1}%
^{m}\frac{1}{s_{i}}\mathbb{E}_{\ast}[IF_{i}^{2}(t^{\prime},X_{i,1}^{\ast
};\underline{\hat{P}})]\right)  \\
&  +2\left(  \sum_{i=1}^{m}\frac{1}{s_{i}}\mathbb{E}_{\ast}[IF_{i}%
(t,X_{i,1}^{\ast};\underline{\hat{P}})IF_{i}(t^{\prime},X_{i,1}^{\ast
};\underline{\hat{P}})]\right)  ^{2}+O_{p}\left(  \frac{1}{s^{3}}\right)  \\
&  =\left(  \sum_{i=1}^{m}\frac{1}{s_{i}}\mathbb{E}_{P_{i}}[IF_{i}^{2}%
(t,X_{i};\underline{P})]\right)  \left(  \sum_{i=1}^{m}\frac{1}{s_{i}%
}\mathbb{E}_{P_{i}}[IF_{i}^{2}(t^{\prime},X_{i};\underline{P})]\right)  \\
&  +2\left(  \sum_{i=1}^{m}\frac{1}{s_{i}}\mathrm{Cov}_{P_{i}}(IF_{i}%
(t,X_{i};\underline{P}),IF_{i}(t^{\prime},X_{i};\underline{P}))\right)
^{2}+o_{p}\left(  \frac{1}{s^{2}}\right)  +O_{p}\left(  \frac{1}{s^{3}%
}\right)  ,
\end{align*}
where the first equality follows from the conditional zero mean and
conditional mutual independence of $X_{i,k}^{\ast},k=1,\ldots,s_{i}%
,i=1,\ldots,m$ and the order of the remainder term in the last two equalities
follows from Lemma \ref{influence_product}. Besides, the above computation
reveals that the leading term itself is of order $O_{p}(1/s^{2})$, i.e.,%
\begin{equation}
\mathbb{E}_{\ast}\left[  \left(  \sum_{i=1}^{m}\frac{1}{s_{i}}\sum
_{k=1}^{s_{i}}IF_{i}(t,X_{i,k}^{\ast};\underline{\hat{P}})\right)  ^{2}\left(
\sum_{i=1}^{m}\frac{1}{s_{i}}\sum_{k=1}^{s_{i}}IF_{i}(t^{\prime},X_{i,k}%
^{\ast};\underline{\hat{P}})\right)  ^{2}\right]  =O_{p}\left(  \frac{1}%
{s^{2}}\right)  ,\forall t,t^{\prime}\in\mathbb{R}.\label{order_E[IF^4]}%
\end{equation}
By Assumption \ref{1st_expansion_empirical}, (\ref{order_E[IF^4]}) and
H\"{o}lder's inequality, we can see the other terms in $\mathbb{E}_{\ast}%
[(\tau^{X})^{2}(\tau^{Y})^{2}]$ are all of order $o_{p}(1/s^{2})$. Therefore,
we obtain%
\begin{align*}
\mathbb{E}_{\ast}[(\tau^{X})^{2}(\tau^{Y})^{2}] &  =\left(  \sum_{i=1}%
^{m}\frac{1}{s_{i}}\mathbb{E}_{P_{i}}[IF_{i}^{2}(t,X_{i};\underline{P}%
)]\right)  \left(  \sum_{i=1}^{m}\frac{1}{s_{i}}\mathbb{E}_{P_{i}}[IF_{i}%
^{2}(t^{\prime},X_{i};\underline{P})]\right)  \\
&  +2\left(  \sum_{i=1}^{m}\frac{1}{s_{i}}\mathrm{Cov}_{P_{i}}(IF_{i}%
(t,X_{i};\underline{P}),IF_{i}(t^{\prime},X_{i};\underline{P}))\right)
^{2}+o_{p}\left(  \frac{1}{s^{2}}\right)  =O_{p}\left(  \frac{1}{s^{2}%
}\right)  .
\end{align*}
Next, we compute $\mathbb{E}_{\ast}[(\tau^{X})^{2}(\varepsilon^{Y})^{2}]$. By
Assumption \ref{1st_expansion_empirical}, (\ref{tau_X}) and
(\ref{conditional_E[epsilon_Y^2]}), we have%
\begin{align*}
&  \mathbb{E}_{\ast}[(\tau^{X})^{2}(\varepsilon^{Y})^{2}]\\
&  =\mathbb{E}_{\ast}[(\tau^{X})^{2}\mathbb{E}_{\ast}[(\varepsilon^{Y}%
)^{2}|\underline{\hat{P}}_{\theta}^{\ast}]]\\
&  =\mathbb{E}_{\ast}\left[  \left(  \sum_{i=1}^{m}\frac{1}{s_{i}}\sum
_{k=1}^{s_{i}}IF_{i}(t,X_{i,k}^{\ast};\underline{\hat{P}})+\varepsilon^{\ast
}(t)-\mathbb{E}_{\ast}[\varepsilon^{\ast}(t)]\right)  ^{2}Q(t^{\prime
},\underline{\hat{P}}_{\theta}^{\ast})(1-Q(t^{\prime},\underline{\hat{P}%
}_{\theta}^{\ast}))\right]  \\
&  =\mathbb{E}_{\ast}\left[  \left(  \sum_{i=1}^{m}\frac{1}{s_{i}}\sum
_{k=1}^{s_{i}}IF_{i}(t,X_{i,k}^{\ast};\underline{\hat{P}})+\varepsilon^{\ast
}(t)-\mathbb{E}_{\ast}[\varepsilon^{\ast}(t)]\right)  ^{2}\left(  Q(t^{\prime
},\underline{\hat{P}})+\sum_{i=1}^{m}\frac{1}{s_{i}}\sum_{k=1}^{s_{i}}%
IF_{i}(t^{\prime},X_{i,k}^{\ast};\underline{\hat{P}})+\varepsilon^{\ast
}(t^{\prime})\right)  \right.  \\
&  \left.  \times\left(  1-Q(t^{\prime},\underline{\hat{P}})-\sum_{i=1}%
^{m}\frac{1}{s_{i}}\sum_{k=1}^{s_{i}}IF_{i}(t^{\prime},X_{i,k}^{\ast
};\underline{\hat{P}})-\varepsilon^{\ast}(t^{\prime})\right)  \right]  .
\end{align*}
The leading term is clearly%
\begin{align*}
&  \mathbb{E}_{\ast}\left[  \left(  \sum_{i=1}^{m}\frac{1}{s_{i}}\sum
_{k=1}^{s_{i}}IF_{i}(t,X_{i,k}^{\ast};\underline{\hat{P}})\right)
^{2}\right]  Q(t^{\prime},\underline{\hat{P}})(1-Q(t^{\prime},\underline{\hat
{P}}))\\
&  =\sum_{i=1}^{m}\frac{1}{s_{i}}\mathbb{E}_{\ast}[IF_{i}^{2}(t,X_{i,1}^{\ast
};\underline{\hat{P}})]Q(t^{\prime},\underline{\hat{P}})(1-Q(t^{\prime
},\underline{\hat{P}}))\\
&  =\sum_{i=1}^{m}\frac{1}{s_{i}}\mathbb{E}_{P_{i}}[IF_{i}^{2}(t,X_{i}%
;\underline{P})]Q(t^{\prime},\underline{P})(1-Q(t^{\prime},\underline{P}%
))+o_{p}\left(  \frac{1}{s}\right)  =O_{p}\left(  \frac{1}{s}\right)  .
\end{align*}
By Assumption \ref{1st_expansion_empirical}, (\ref{order_E[IF^4]}) and
H\"{o}lder's inequality, we can see the remaining terms in $\mathbb{E}_{\ast
}[(\tau^{X})^{2}(\varepsilon^{Y})^{2}]$ are all of order $o_{p}(1/s)$. Hence,
\[
\mathbb{E}_{\ast}[(\tau^{X})^{2}(\varepsilon^{Y})^{2}]=\sum_{i=1}^{m}\frac
{1}{s_{i}}\mathbb{E}_{P_{i}}[IF_{i}^{2}(t,X_{i};\underline{P})]Q(t^{\prime
},\underline{P})(1-Q(t^{\prime},\underline{P}))+o_{p}\left(  \frac{1}%
{s}\right)  =O_{p}\left(  \frac{1}{s}\right)  .
\]
Similarly, we obtain
\[
\mathbb{E}_{\ast}[\tau^{X}\tau^{Y}\varepsilon^{X}\varepsilon^{Y}]=\sum
_{i=1}^{m}\frac{1}{s_{i}}\mathrm{Cov}_{P_{i}}(IF_{i}(t,X_{i};\underline{P}%
),IF_{i}(t^{\prime},X_{i};\underline{P}))Q(t,\underline{P})(1-Q(t^{\prime
},\underline{P}))+o_{p}\left(  \frac{1}{s}\right)  =O_{p}\left(  \frac{1}%
{s}\right)  ,
\]
and
\[
\mathbb{E}_{\ast}[(\varepsilon^{X})^{2}(\tau^{Y})^{2}]=\sum_{i=1}^{m}\frac
{1}{s_{i}}\mathbb{E}_{P_{i}}[IF_{i}^{2}(t^{\prime},X_{i};\underline{P}%
)]Q(t,\underline{P})(1-Q(t,\underline{P}))+o_{p}\left(  \frac{1}{s}\right)
=O_{p}\left(  \frac{1}{s}\right)  .
\]
Then we compute $\mathbb{E}_{\ast}[\tau^{X}\varepsilon^{X}(\varepsilon
^{Y})^{2}]$. By (\ref{tau_X}), (\ref{conditional_E[epsilon_Xepsilon_Y^2]}) and
Assumption \ref{1st_expansion_empirical}, we have%
\begin{align*}
& \mathbb{E}_{\ast}[\tau^{X}\varepsilon^{X}(\varepsilon^{Y})^{2}]\\
& =\mathbb{E}_{\ast}[\tau^{X}\mathbb{E}_{\ast}[\varepsilon^{X}(\varepsilon
^{Y})^{2}|\underline{\hat{P}}_{\theta}^{\ast}]]\\
& =\mathbb{E}_{\ast}[\tau^{X}Q(t,\underline{\hat{P}}_{\theta}^{\ast
})(1-2Q(t^{\prime},\underline{\hat{P}}_{\theta}^{\ast}))(1-Q(t^{\prime
},\underline{\hat{P}}_{\theta}^{\ast}))]\\
& =\mathbb{E}_{\ast}\left[  \left(  \sum_{i=1}^{m}\frac{1}{s_{i}}\sum
_{k=1}^{s_{i}}IF_{i}(t,X_{i,k}^{\ast};\underline{\hat{P}})+\varepsilon^{\ast
}(t)-\mathbb{E}_{\ast}[\varepsilon^{\ast}(t)]\right)  \left(
Q(t,\underline{\hat{P}})+\sum_{i=1}^{m}\frac{1}{s_{i}}\sum_{k=1}^{s_{i}}%
IF_{i}(t,X_{i,k}^{\ast};\underline{\hat{P}})+\varepsilon^{\ast}(t)\right)
\right.  \\
& \times\left(  1-2Q(t^{\prime},\underline{\hat{P}})-2\sum_{i=1}^{m}\frac
{1}{s_{i}}\sum_{k=1}^{s_{i}}IF_{i}(t^{\prime},X_{i,k}^{\ast};\underline{\hat
{P}})-2\varepsilon^{\ast}(t^{\prime})\right)  \\
& \left.  \times\left(  1-Q(t^{\prime},\underline{\hat{P}})-\sum_{i=1}%
^{m}\frac{1}{s_{i}}\sum_{k=1}^{s_{i}}IF_{i}(t^{\prime},X_{i,k}^{\ast
};\underline{\hat{P}})-\varepsilon^{\ast}(t^{\prime})\right)  \right]  .
\end{align*}
By the mean-zero property of $IF_{i}$, the leading term in the above expansion
is 0:%
\[
Q(t,\underline{\hat{P}})(1-2Q(t^{\prime},\underline{\hat{P}}))(1-Q(t^{\prime
},\underline{\hat{P}}))\mathbb{E}_{\ast}\left[  \sum_{i=1}^{m}\frac{1}{s_{i}%
}\sum_{k=1}^{s_{i}}IF_{i}(t,X_{i,k}^{\ast};\underline{\hat{P}})\right]  =0.
\]
Among the remaining terms, the leading order term is%
\begin{align*}
& (1-2Q(t^{\prime},\underline{\hat{P}}))(1-Q(t^{\prime},\underline{\hat{P}%
}))\mathbb{E}_{\ast}\left[  \left(  \sum_{i=1}^{m}\frac{1}{s_{i}}\sum
_{k=1}^{s_{i}}IF_{i}(t,X_{i,k}^{\ast};\underline{\hat{P}})\right)
^{2}\right]  \\
& -Q(t,\underline{\hat{P}})(1-2Q(t^{\prime},\underline{\hat{P}}))\mathbb{E}%
_{\ast}\left[  \left(  \sum_{i=1}^{m}\frac{1}{s_{i}}\sum_{k=1}^{s_{i}}%
IF_{i}(t,X_{i,k}^{\ast};\underline{\hat{P}})\right)  \left(  \sum_{i=1}%
^{m}\frac{1}{s_{i}}\sum_{k=1}^{s_{i}}IF_{i}(t^{\prime},X_{i,k}^{\ast
};\underline{\hat{P}})\right)  \right]  \\
& -2Q(t,\underline{\hat{P}})(1-Q(t^{\prime},\underline{\hat{P}}))\mathbb{E}%
_{\ast}\left[  \left(  \sum_{i=1}^{m}\frac{1}{s_{i}}\sum_{k=1}^{s_{i}}%
IF_{i}(t,X_{i,k}^{\ast};\underline{\hat{P}})\right)  \left(  \sum_{i=1}%
^{m}\frac{1}{s_{i}}\sum_{k=1}^{s_{i}}IF_{i}(t^{\prime},X_{i,k}^{\ast
};\underline{\hat{P}})\right)  \right]  \\
& =(1-2Q(t^{\prime},\underline{P}))(1-Q(t^{\prime},\underline{P}))\sum
_{i=1}^{m}\frac{1}{s_{i}}\mathbb{E}_{P_{i}}(IF_{i}^{2}(t,X_{i};\underline{P}%
))\\
& +(4Q(t,\underline{P})Q(t^{\prime},\underline{P})-3Q(t,\underline{P}%
))\sum_{i=1}^{m}\frac{1}{s_{i}}\mathrm{Cov}_{P_{i}}(IF_{i}(t,X_{i}%
;\underline{P}),IF_{i}(t^{\prime},X_{i};\underline{P}))+o_{p}\left(  \frac
{1}{s}\right)  \\
& =O_{p}\left(  \frac{1}{s}\right)  ,
\end{align*}
where the equalities follow from (\ref{order_E[IF^2]}) and consistency of
$Q(\cdot,\underline{\hat{P}})$. Based on the orders in Assumption
\ref{1st_expansion_empirical}, (\ref{order_E[IF^2]}) and (\ref{order_E[IF^4]})
and by means of H\"{o}lder's inequality, we can see all other remaining terms are
of order $o_{p}\left(  1/s\right)  $. Therefore, we obtain
\begin{align*}
& \mathbb{E}_{\ast}[\tau^{X}\varepsilon^{X}(\varepsilon^{Y})^{2}]\\
& =(1-2Q(t^{\prime},\underline{P}))(1-Q(t^{\prime},\underline{P}))\sum
_{i=1}^{m}\frac{1}{s_{i}}\mathbb{E}_{P_{i}}(IF_{i}^{2}(t,X_{i};\underline{P}%
))\\
& +(4Q(t,\underline{P})Q(t^{\prime},\underline{P})-3Q(t,\underline{P}%
))\sum_{i=1}^{m}\frac{1}{s_{i}}\mathrm{Cov}_{P_{i}}(IF_{i}(t,X_{i}%
;\underline{P}),IF_{i}(t^{\prime},X_{i};\underline{P}))+o_{p}\left(  \frac
{1}{s}\right)  \\
& =O_{p}\left(  \frac{1}{s}\right)  .
\end{align*}
Similarly, $\mathbb{E}_{\ast}[(\varepsilon^{X})^{2}\tau^{Y}\varepsilon^{Y}]$
can be written as%
\begin{align*}
& \mathbb{E}_{\ast}[(\varepsilon^{X})^{2}\tau^{Y}\varepsilon^{Y}]\\
& =(1-4Q(t,\underline{P}))(1-Q(t^{\prime},\underline{P}))\sum_{i=1}^{m}%
\frac{1}{s_{i}}\mathrm{Cov}_{P_{i}}(IF_{i}(t,X_{i};\underline{P}%
),IF_{i}(t^{\prime},X_{i};\underline{P}))\\
& -Q(t,\underline{P})(1-2Q(t,\underline{P}))\sum_{i=1}^{m}\frac{1}{s_{i}%
}\mathbb{E}_{P_{i}}(IF_{i}^{2}(t^{\prime},X_{i};\underline{P}))+o_{p}\left(
\frac{1}{s}\right)  \\
& =O_{p}\left(  \frac{1}{s}\right)  .
\end{align*}
Next, the computation of $\mathbb{E}_{\ast}[(\varepsilon^{X})^{2}%
(\varepsilon^{Y})^{2}]$, $\mathbb{E}_{\ast}[\mathbb{E}_{\ast}[(\varepsilon
^{X})^{2}|\underline{\hat{P}}_{\theta}^{\ast}]\mathbb{E}_{\ast}[(\varepsilon
^{Y})^{2}|\underline{\hat{P}}_{\theta}^{\ast}]]$ and $\mathbb{E}_{\ast
}[(\mathbb{E}_{\ast}[\varepsilon^{X}\varepsilon^{Y}|\underline{\hat{P}%
}_{\theta}^{\ast}])^{2}]$ is similar to the computation of $\mathbb{E}_{\ast
}[\varepsilon^{X}\varepsilon^{Y}]$ in (\ref{E[epsilon_X*epsilon_Y]}).
Therefore, we can get%
\begin{align*}
& \mathbb{E}_{\ast}[(\varepsilon^{X})^{2}(\varepsilon^{Y})^{2}]\\
& =Q(t,\underline{P})(1-Q(t^{\prime},\underline{P}))(1-Q(t^{\prime
},\underline{P})-2Q(t,\underline{P})+3Q(t,\underline{P})Q(t^{\prime
},\underline{P}))+o_{p}(1)\\
& =O_{p}(1),
\end{align*}
\[
\mathbb{E}_{\ast}[\mathbb{E}_{\ast}[(\varepsilon^{X})^{2}|\underline{\hat{P}%
}_{\theta}^{\ast}]\mathbb{E}_{\ast}[(\varepsilon^{Y})^{2}|\underline{\hat{P}%
}_{\theta}^{\ast}]]=Q(t,\underline{P})(1-Q(t,\underline{P}))Q(t^{\prime
},\underline{P})(1-Q(t^{\prime},\underline{P}))+o_{p}(1)=O_{p}(1),
\]
and%
\[
\mathbb{E}_{\ast}[(\mathbb{E}_{\ast}[\varepsilon^{X}\varepsilon^{Y}%
|\underline{\hat{P}}_{\theta}^{\ast}])^{2}]=Q(t,\underline{P})^{2}%
(1-Q(t^{\prime},\underline{P}))^{2}+o_{p}(1)=O_{p}(1).
\]
Finally we compute the product of second moments. By the formulas of the
second moments at the beginning of this proof, we can get%
\[
\mathbb{E}_{\ast}[(\tau^{X})^{2}]\mathbb{E}_{\ast}[(\tau^{Y})^{2}]=\left(
\sum_{i=1}^{m}\frac{1}{s_{i}}\mathbb{E}_{P_{i}}[IF_{i}^{2}(t,X_{i}%
;\underline{P})]\right)  \left(  \sum_{i=1}^{m}\frac{1}{s_{i}}\mathbb{E}%
_{P_{i}}[IF_{i}^{2}(t^{\prime},X_{i};\underline{P})]\right)  +o_{p}\left(
\frac{1}{s^{2}}\right)  =O_{p}\left(  \frac{1}{s^{2}}\right)  ,
\]%
\[
\mathbb{E}_{\ast}[(\tau^{X})^{2}]\mathbb{E}_{\ast}[(\varepsilon^{Y})^{2}%
]=\sum_{i=1}^{m}\frac{1}{s_{i}}\mathbb{E}_{P_{i}}[IF_{i}^{2}(t,X_{i}%
;\underline{P})]Q(t^{\prime},\underline{P})(1-Q(t^{\prime},\underline{P}%
))+o_{p}\left(  \frac{1}{s}\right)  =O_{p}\left(  \frac{1}{s}\right)  ,
\]%
\[
\mathbb{E}_{\ast}[(\varepsilon^{X})^{2}]\mathbb{E}_{\ast}[(\tau^{Y})^{2}%
]=\sum_{i=1}^{m}\frac{1}{s_{i}}\mathbb{E}_{P_{i}}[IF_{i}^{2}(t^{\prime}%
,X_{i};\underline{P})]Q(t,\underline{P})(1-Q(t,\underline{P}))+o_{p}\left(
\frac{1}{s}\right)  =O_{p}\left(  \frac{1}{s}\right)  ,
\]%
\[
\mathbb{E}_{\ast}[(\varepsilon^{X})^{2}]\mathbb{E}_{\ast}[(\varepsilon
^{Y})^{2}]=Q(t,\underline{P})(1-Q(t,\underline{P}))Q(t^{\prime},\underline{P}%
)(1-Q(t^{\prime},\underline{P}))+o_{p}(1)=O_{p}(1),
\]%
\[
(\mathbb{E}_{\ast}[\tau^{X}\tau^{Y}])^{2}=\left(  \sum_{i=1}^{m}\frac{1}%
{s_{i}}\mathrm{Cov}_{P_{i}}(IF_{i}(t,X_{i};\underline{P}),IF_{i}(t^{\prime
},X_{i};\underline{P}))\right)  ^{2}+o_{p}\left(  \frac{1}{s^{2}}\right)
=O_{p}\left(  \frac{1}{s^{2}}\right)  ,
\]%
\[
\mathbb{E}_{\ast}[\tau^{X}\tau^{Y}]\mathbb{E}_{\ast}[\varepsilon
^{X}\varepsilon^{Y}]=\sum_{i=1}^{m}\frac{1}{s_{i}}\mathrm{Cov}_{P_{i}}%
(IF_{i}(t,X_{i};\underline{P}),IF_{i}(t^{\prime},X_{i};\underline{P}%
))Q(t,\underline{P})(1-Q(t^{\prime},\underline{P}))+o_{p}\left(  \frac{1}%
{s}\right)  =O_{p}\left(  \frac{1}{s}\right)  ,
\]%
\[
(\mathbb{E}_{\ast}[\varepsilon^{X}\varepsilon^{Y}])^{2}=Q(t,\underline{P}%
)^{2}(1-Q(t^{\prime},\underline{P}))^{2}+o_{p}(1)=O_{p}(1).
\]
\end{proof}

Now let us prove Theorem \ref{MSE_sigma_hat}.

\begin{proof}{Proof of Theorem \ref{MSE_sigma_hat}.}
First, by Theorem \ref{verification_general_assumptions}, Assumptions \ref{convergence_in_p_to_truth}-\ref{3rd_expansion_empirical} hold under Assumptions \ref{balanced_data} and \ref{finite_horizon_model}. Therefore, the conclusions in Lemma \ref{computation_moments} hold. Recall that $s=(1/m)\sum_{i=1}^{m}s_{i}=(1/m)\sum_{i=1}^{m}\lfloor\theta
n_{i}\rfloor$ is the average subsample size. We have $s/\theta n\rightarrow1$
under Assumption \ref{balanced_data}.

Notice that given the input data, the procedure that leads to $\hat{\sigma
}^{2}(t,t^{\prime})$ is the same as Algorithm
\ref{alg:general_covariance_undebias} where $Z_{b}=\underline{\hat{P}}%
_{\theta}^{\ast b}$, $X_{br}=I(Y_{r}^{\ast b}\leq t)$ and $Y_{br}%
=I(Y_{r}^{\ast b}\leq t^{\prime})$. Without loss of generality, assume $t\leq
t^{\prime}$. By Lemma \ref{general_covariance_undebias}, the conditional
expectation of $\hat{\sigma}^{2}(t,t^{\prime})$ is given by%
\begin{align*}
\mathbb{E}_{\ast}[\hat{\sigma}^{2}(t,t^{\prime})]  & =\theta n\mathrm{Cov}%
_{\ast}(Q(t,\underline{\hat{P}}_{\theta}^{\ast}),Q(t^{\prime},\underline{\hat
{P}}_{\theta}^{\ast}))+\frac{\theta n}{R_s}\mathbb{E}_{\ast}[Q(t,\underline{\hat
{P}}_{\theta}^{\ast})-Q(t,\underline{\hat{P}}_{\theta}^{\ast})Q(t^{\prime
},\underline{\hat{P}}_{\theta}^{\ast})]\\
& =\sigma^{2}(t,t^{\prime})+\frac{\theta n}{R_s}\mathbb{E}_{\ast}[\varepsilon
^{X}\varepsilon^{Y}],
\end{align*}
which implies, by Lemma \ref{computation_moments}, that%
\[
(\mathbb{E}_{\ast}[\hat{\sigma}^{2}(t,t^{\prime})]-\sigma^{2}(t,t^{\prime
}))^{2}=O_{p}\left(  \frac{(\theta n)^{2}}{R_s^{2}}\right)  .
\]
Moreover, plugging the results of Lemma \ref{computation_moments} into the
formula of the variance in Lemma \ref{general_covariance_undebias}, we obtain%
\begin{align}
\mathrm{Var}_{\ast}(\hat{\sigma}^{2}(t,t^{\prime})) &  =(\theta n)^{2}O\left(
\frac{1}{Bs^{2}}+\frac{1}{BR_{s}s}+\frac{1}{BR_{s}^{2}s}\right.  \nonumber\\
&  \left.  +\frac{1}{BR_{s}^{3}}+\frac{1}{BR_{s}^{2}}+\frac{1}{B^{2}s^{2}}+\frac
{1}{B^{2}R_{s}s}+\frac{1}{B^{2}R_{s}^{2}}\right)  (1+o_{p}%
(1)).\label{order_conditional_variance}%
\end{align}
Recall that $s/\theta n\rightarrow1$. So under (\ref{configuration_B_R}), it
is easy to see
\[
\mathbb{E}_{\ast}[(\hat{\sigma}^{2}(t,t^{\prime})-\sigma^{2}(t,t^{\prime
}))^{2}]=o_{p}(1).
\]
Now suppose%
\[
0<Q(t,\underline{P}),Q(t^{\prime},\underline{P})<1,\quad\sum_{i=1}%
^{m}\mathbb{E}_{P_{i}}[IF_{i}^{2}(t,X_{i};\underline{P})]>0,\quad\sum
_{i=1}^{m}\mathbb{E}_{P_{i}}[IF_{i}^{2}(t^{\prime},X_{i};\underline{P})]>0.
\]
Lemma \ref{computation_moments} tells us that%
\[
\mathbb{E}_{\ast}[\mathbb{E}_{\ast}[(\varepsilon^{X})^{2}|\underline{\hat{P}%
}_{\theta}^{\ast}]\mathbb{E}_{\ast}[(\varepsilon^{Y})^{2}|\underline{\hat{P}%
}_{\theta}^{\ast}]]=Q(t,\underline{P})(1-Q(t,\underline{P}))Q(t^{\prime
},\underline{P})(1-Q(t^{\prime},\underline{P}))+o_{p}(1)=\Theta_{p}(1),
\]%
\[
(\mathbb{E}_{\ast}[\varepsilon^{X}\varepsilon^{Y}])^{2}=Q(t,\underline{P}%
)^{2}(1-Q(t^{\prime},\underline{P}))^{2}+o_{p}(1)=\Theta_{p}(1),
\]
and%
\[
\mathbb{E}_{\ast}[(\tau^{X})^{2}]\mathbb{E}_{\ast}[(\tau^{Y})^{2}]=\left(
\sum_{i=1}^{m}\frac{1}{s_{i}}\mathbb{E}_{P_{i}}[IF_{i}^{2}(t,X_{i}%
;\underline{P})]\right)  \left(  \sum_{i=1}^{m}\frac{1}{s_{i}}\mathbb{E}%
_{P_{i}}[IF_{i}^{2}(t^{\prime},X_{i};\underline{P})]\right)  +o_{p}\left(
\frac{1}{s^{2}}\right)  =\Theta_{p}\left(  \frac{1}{s^{2}}\right)  .
\]
Therefore, by Lemma \ref{general_covariance_undebias},%
\begin{align*}
\mathrm{Var}_{\ast}(\hat{\sigma}^{2}(t,t^{\prime})) &  \geq(\theta
n)^{2}\left(  \frac{(R_{s}-1)}{BR_{s}^{3}}\mathbb{E}_{\ast}[\mathbb{E}_{\ast
}[(\varepsilon^{X})^{2}|\underline{\hat{P}}_{\theta}^{\ast}]\mathbb{E}_{\ast
}[(\varepsilon^{Y})^{2}|\underline{\hat{P}}_{\theta}^{\ast}]]+\frac{1}%
{B(B-1)}\mathbb{E}_{\ast}[(\tau^{X})^{2}]\mathbb{E}_{\ast}[(\tau^{Y}%
)^{2}]\right)  \\
&  =(\theta n)^{2}\left(  \Theta_{p}\left(  \frac{1}{BR_{s}^{2}}\right)
+\Theta_{p}\left(  \frac{1}{B^{2}s^{2}}\right)  \right)  ,
\end{align*}
and%
\[
(\mathbb{E}_{\ast}[\hat{\sigma}^{2}(t,t^{\prime})]-\sigma^{2}(t,t^{\prime
}))^{2}=\Theta_{p}\left(  \frac{(\theta n)^{2}}{R_{s}^{2}}\right)
\]
In order to make sure $\mathbb{E}_{\ast}[(\hat{\sigma}^{2}(t,t^{\prime
})-\sigma^{2}(t,t^{\prime}))^{2}]=o_{p}(1)$, we must have $B=\omega(1)$ and
$R_{s}=\omega(s)$.
\end{proof}

\begin{proof}{Proof of Theorem \ref{validity_alg_cov}.}
By Theorem \ref{consistency_sigma}, if $\theta n=\omega(1)$, then $\sigma
^{2}(t,t^{\prime})-\mathrm{Cov}(\mathbb{G}(t),\mathbb{G}(t^{\prime}%
))=o_{p}(1)$. By Theorem \ref{MSE_sigma_hat}, if $\theta n=\omega(1)$,
$B=\omega(1)$ and $R_{s}=\omega(s)$, then we have $\mathbb{E}_{\ast}%
[(\hat{\sigma}^{2}(t,t^{\prime})-\sigma^{2}(t,t^{\prime}))^{2}]=o_{p}(1)$,
which implies $\hat{\sigma}^{2}(t,t^{\prime})-\sigma^{2}(t,t^{\prime}%
)=o_{p}(1)$. Therefore, if $\theta n=\omega(1)$, $B=\omega(1)$ and
$R_{s}=\omega(s)$, we have $\hat{\sigma}^{2}(t,t^{\prime})-\mathrm{Cov}%
(\mathbb{G}(t),\mathbb{G}(t^{\prime}))=o_{p}(1)$.
\end{proof}

\begin{proof}{Proof of Corollary \ref{min_budget}.}
It is easy to see the necessary condition on $N$ s.t.
(\ref{configuration_B_R_theta}) holds is $N=\omega(1)$. Next, we show
$N=\omega(1)$ is also sufficient for (\ref{configuration_B_R_theta}). In fact,
by choosing $\theta=\sqrt{N}/n$, $B=N^{1/3}$ and $R_{s}=N^{2/3}$,
(\ref{configuration_B_R_theta}) holds. Therefore, $N=\omega(1)$ is the minimum budget s.t. (\ref{configuration_B_R_theta}) holds.
\end{proof}

\begin{proof}{Proof of Theorem \ref{optimal_B_R}.}
Without loss of generality, assume $t\leq t^{\prime}$. First, by Theorem \ref{verification_general_assumptions}, Assumptions \ref{convergence_in_p_to_truth}-\ref{3rd_expansion_empirical} hold under Assumptions \ref{balanced_data} and \ref{finite_horizon_model}. Therefore, the conclusions in Lemma \ref{computation_moments} hold. According to Theorem \ref{MSE_sigma_hat}, to minimize the order of $\mathbb{E}_{\ast}[(\hat{\sigma
}^{2}(t,t^{\prime})-\sigma^{2}(t,t^{\prime}))^{2}]$, $B$ and $R_{s}$ must
satisfy $B=\omega(1)$ and $R_{s}=\omega(s)$ otherwise $\mathbb{E}_{\ast}%
[(\hat{\sigma}^{2}(t,t^{\prime})-\sigma^{2}(t,t^{\prime}))^{2}]$ is even not
$o_{p}(1)$. By plugging the results of Lemma \ref{computation_moments} into
the formula in Lemma \ref{general_covariance_undebias}, and using the
constraints $B=\omega(1),R_{s}=\omega(s)$, we can see the leading term of
$\mathbb{E}_{\ast}[(\hat{\sigma}^{2}(t,t^{\prime})-\sigma^{2}(t,t^{\prime
}))^{2}]$ is given by%
\begin{align*}
& \mathbb{E}_{\ast}[(\hat{\sigma}^{2}(t,t^{\prime})-\sigma^{2}(t,t^{\prime
}))^{2}]\\
& =(\theta n)^{2}\left[  \frac{1}{R_{s}^{2}}Q(t,\underline{P})^{2}%
(1-Q(t^{\prime},\underline{P}))^{2}\right.  \\
& \left.  +\frac{1}{B}\left(  \sum_{i=1}^{m}\frac{1}{s_{i}}\mathbb{E}_{P_{i}%
}[IF_{i}^{2}(t,X_{i};\underline{P})]\right)  \left(  \sum_{i=1}^{m}\frac
{1}{s_{i}}\mathbb{E}_{P_{i}}[IF_{i}^{2}(t^{\prime},X_{i};\underline{P}%
)]\right)  \right]  (1+o_{p}(1))\\
& =\Theta\left(  \frac{s^{2}}{R_{s}^{2}}+\frac{1}{B}\right)  .
\end{align*}
Therefore, the minimizer of the order of $\mathbb{E}_{\ast}[(\hat{\sigma}%
^{2}(t,t^{\prime})-\sigma^{2}(t,t^{\prime}))^{2}]$ subject to $BR_{s}=N$ is
$R_{s}^{\ast}=\Theta(N^{1/3}s^{2/3})$ and consequently $B^{\ast}=N/R_{s}%
^{\ast}$. Since $N=\omega(\theta n)=\omega(s)$, the optimal $R_{s}^{\ast}$ and
$B^{\ast}$ also satisfies the conditions $B=\omega(1)$ and $R_{s}=\omega(s)$
which we assume at the beginning of the proof. With the optimal $R_{s}^{\ast}$ and
$B^{\ast}$, we can see $\mathbb{E}_{\ast}[(\hat{\sigma}^{2}(t,t^{\prime
})-\sigma^{2}(t,t^{\prime}))^{2}]=\Theta((\theta n)^{2/3}/N^{2/3}%
)(1+o_{p}(1))$.

If the non-degeneracy condition (\ref{non-degeneracy}) does not hold, by
plugging $B^{\ast}$ and $R_{s}^{\ast}$ into (\ref{order_conditional_variance}%
), we can get%
\[
\mathbb{E}_{\ast}[(\hat{\sigma}^{2}(t,t^{\prime})-\sigma^{2}(t,t^{\prime
}))^{2}]=O((\theta n)^{2/3}/N^{2/3})(1+o_{p}(1)).
\]

\end{proof}

\begin{proof}{Proof of Theorem \ref{error_true_bootstrap}.}
First, by Theorem \ref{verification_general_assumptions}, Assumptions \ref{convergence_in_p_to_truth}-\ref{3rd_expansion_empirical} hold under Assumptions \ref{balanced_data} and \ref{finite_horizon_model}. According to the computation in the proof of Theorem \ref{consistency_sigma},
we have%
\begin{align*}
\sigma^{2}(t,t^{\prime})  &  =\theta n\mathbb{E}_{\ast}\left[  \left(
\sum_{i=1}^{m}\frac{1}{s_{i}}\sum_{k=1}^{s_{i}}IF_{i}(t,X_{i,k}^{\ast
};\underline{\hat{P}})\right)  \left(  \sum_{i=1}^{m}\frac{1}{s_{i}}\sum
_{k=1}^{s_{i}}IF_{i}(t^{\prime},X_{i,k}^{\ast};\underline{\hat{P}})\right)
\right] \\
&  +\theta n\mathbb{E}_{\ast}\left[  \left(  \sum_{i=1}^{m}\frac{1}{s_{i}}%
\sum_{k=1}^{s_{i}}IF_{i}(t,X_{i,k}^{\ast};\underline{\hat{P}})\right)
(\varepsilon^{\ast}(t^{\prime})-\mathbb{E}_{\ast}[\varepsilon^{\ast}%
(t^{\prime})])\right] \\
&  +\theta n\mathbb{E}_{\ast}\left[  \left(  \sum_{i=1}^{m}\frac{1}{s_{i}}%
\sum_{k=1}^{s_{i}}IF_{i}(t^{\prime},X_{i,k}^{\ast};\underline{\hat{P}%
})\right)  (\varepsilon^{\ast}(t)-\mathbb{E}_{\ast}[\varepsilon^{\ast
}(t)])\right] \\
&  +\theta n\mathbb{E}_{\ast}[(\varepsilon^{\ast}(t)-\mathbb{E}_{\ast
}[\varepsilon^{\ast}(t)])(\varepsilon^{\ast}(t^{\prime})-\mathbb{E}_{\ast
}[\varepsilon^{\ast}(t^{\prime})])].
\end{align*}
By the mean-zero property of the influence function and conditional
independence among $X_{i,k}^{\ast}$, the first term can be further simplified
as%
\begin{align*}
&  \theta n\mathbb{E}_{\ast}\left[  \left(  \sum_{i=1}^{m}\frac{1}{s_{i}}%
\sum_{k=1}^{s_{i}}IF_{i}(t,X_{i,k}^{\ast};\underline{\hat{P}})\right)  \left(
\sum_{i=1}^{m}\frac{1}{s_{i}}\sum_{k=1}^{s_{i}}IF_{i}(t^{\prime},X_{i,k}%
^{\ast};\underline{\hat{P}})\right)  \right] \\
&  =\theta n\sum_{i=1}^{m}\sum_{k=1}^{s_{i}}\frac{1}{s_{i}^{2}}\mathbb{E}%
_{\ast}[IF_{i}(t,X_{i,k}^{\ast};\underline{\hat{P}})IF_{i}(t^{\prime}%
,X_{i,k}^{\ast};\underline{\hat{P}})]\\
&  =\theta n\sum_{i=1}^{m}\frac{1}{s_{i}}\mathbb{E}_{\ast}[IF_{i}%
(t,X_{i,1}^{\ast};\underline{\hat{P}})IF_{i}(t^{\prime},X_{i,1}^{\ast
};\underline{\hat{P}})]\\
&  =\theta n\sum_{i=1}^{m}\frac{1}{s_{i}}\sum_{j=1}^{n_{i}}\frac{1}{n_{i}%
}IF_{i}(t,X_{i,j};\underline{\hat{P}})IF_{i}(t^{\prime},X_{i,j}%
;\underline{\hat{P}}).
\end{align*}
The second term can be simplified as%
\begin{align*}
&  \theta n\mathbb{E}_{\ast}\left[  \left(  \sum_{i=1}^{m}\frac{1}{s_{i}}%
\sum_{k=1}^{s_{i}}IF_{i}(t,X_{i,k}^{\ast};\underline{\hat{P}})\right)
(\varepsilon^{\ast}(t^{\prime})-\mathbb{E}_{\ast}[\varepsilon^{\ast}%
(t^{\prime})])\right] \\
&  =\theta n\sum_{i=1}^{m}\frac{1}{s_{i}}\sum_{k=1}^{s_{i}}\mathbb{E}_{\ast
}[IF_{i}(t,X_{i,k}^{\ast};\underline{\hat{P}})(\varepsilon^{\ast}(t^{\prime
})-\mathbb{E}_{\ast}[\varepsilon^{\ast}(t^{\prime})])]\\
&  =\theta n\sum_{i=1}^{m}\frac{1}{s_{i}}\sum_{k=1}^{s_{i}}\mathbb{E}_{\ast
}[IF_{i}(t,X_{i,k}^{\ast};\underline{\hat{P}})(\mathbb{E}_{\ast}%
[\varepsilon^{\ast}(t^{\prime})|X_{i,k}^{\ast}]-\mathbb{E}_{\ast}%
[\varepsilon^{\ast}(t^{\prime})])]\\
&  =\theta n\sum_{i=1}^{m}\mathbb{E}_{\ast}[IF_{i}(t,X_{i,1}^{\ast
};\underline{\hat{P}})(\mathbb{E}_{\ast}[\varepsilon^{\ast}(t^{\prime
})|X_{i,1}^{\ast}]-\mathbb{E}_{\ast}[\varepsilon^{\ast}(t^{\prime})])],
\end{align*}
where the last equality follows from the symmetry of $\varepsilon^{\ast
}(t^{\prime})$ about $X_{i,1}^{\ast},\ldots,X_{i,s_{i}}$ (see its definition
in Assumption \ref{1st_expansion_empirical}). Similarly,
\begin{align*}
&  \theta n\mathbb{E}_{\ast}\left[  \left(  \sum_{i=1}^{m}\frac{1}{s_{i}}%
\sum_{k=1}^{s_{i}}IF_{i}(t^{\prime},X_{i,k}^{\ast};\underline{\hat{P}%
})\right)  (\varepsilon^{\ast}(t)-\mathbb{E}_{\ast}[\varepsilon^{\ast
}(t)])\right] \\
&  =\theta n\sum_{i=1}^{m}\mathbb{E}_{\ast}[IF_{i}(t^{\prime},X_{i,1}^{\ast
};\underline{\hat{P}})(\mathbb{E}_{\ast}[\varepsilon^{\ast}(t)|X_{i,1}^{\ast
}]-\mathbb{E}_{\ast}[\varepsilon^{\ast}(t)])].
\end{align*}
Therefore, $\sigma^{2}(t,t^{\prime})$ is rewritten as%
\begin{align}
\sigma^{2}(t,t^{\prime})  &  =\theta n\sum_{i=1}^{m}\frac{1}{s_{i}}\sum
_{j=1}^{n_{i}}\frac{1}{n_{i}}IF_{i}(t,X_{i,j};\underline{\hat{P}}%
)IF_{i}(t^{\prime},X_{i,j};\underline{\hat{P}})\nonumber\\
&  +\theta n\sum_{i=1}^{m}\mathbb{E}_{\ast}[IF_{i}(t,X_{i,1}^{\ast
};\underline{\hat{P}})(\mathbb{E}_{\ast}[\varepsilon^{\ast}(t^{\prime
})|X_{i,1}^{\ast}]-\mathbb{E}_{\ast}[\varepsilon^{\ast}(t^{\prime
})])]\nonumber\\
&  +\theta n\sum_{i=1}^{m}\mathbb{E}_{\ast}[IF_{i}(t^{\prime},X_{i,1}^{\ast
};\underline{\hat{P}})(\mathbb{E}_{\ast}[\varepsilon^{\ast}(t)|X_{i,1}^{\ast
}]-\mathbb{E}_{\ast}[\varepsilon^{\ast}(t)])]\nonumber\\
&  +\theta n\mathbb{E}_{\ast}[(\varepsilon^{\ast}(t)-\mathbb{E}_{\ast
}[\varepsilon^{\ast}(t)])(\varepsilon^{\ast}(t^{\prime})-\mathbb{E}_{\ast
}[\varepsilon^{\ast}(t^{\prime})])]. \label{computation_sigma2}%
\end{align}
Next, we study the order of each term in (\ref{computation_sigma2}). We will
show that%
\begin{align}
&  \theta n\sum_{i=1}^{m}\frac{1}{s_{i}}\sum_{j=1}^{n_{i}}\frac{1}{n_{i}%
}IF_{i}(t,X_{i,j};\underline{\hat{P}})IF_{i}(t^{\prime},X_{i,j}%
;\underline{\hat{P}})\nonumber\\
&  =\mathrm{Cov}(\mathbb{G}(t),\mathbb{G}(t^{\prime}))+\sum_{i=1}^{m}\left(
\frac{n\mathrm{frac}(\theta n_{i})}{s_{i}n_{i}}+\left(  \frac{n}{n_{i}}%
-\frac{1}{\beta_{i}}\right)  \right)  \mathrm{Cov}(IF_{i}(t,X_{i}%
;\underline{P}),IF_{i}(t^{\prime},X_{i};\underline{P}))\nonumber\\
&  +\sum_{i=1}^{m}\frac{1}{\beta_{i}}\left(  \frac{1}{n_{i}}\sum_{j=1}^{n_{i}%
}IF_{i}(t,X_{i,j};\underline{P})IF_{i}(t^{\prime},X_{i,j};\underline{P}%
)-\mathrm{Cov}(IF_{i}(t,X_{i};\underline{P}),IF_{i}(t^{\prime},X_{i}%
;\underline{P}))\right) \nonumber\\
&  +\sum_{i=1}^{m}\frac{1}{n_{i}}\sum_{j=1}^{n_{i}}\sum_{i^{\prime}=1}%
^{m}\frac{1}{\beta_{i^{\prime}}}\mathbb{E[}IF_{i^{\prime}}(t,X_{i^{\prime}%
};\underline{P})IF_{i^{\prime}i}(t^{\prime},X_{i^{\prime}},X_{i,j}%
;\underline{P})|X_{i,j}]\nonumber\\
&  +\sum_{i=1}^{m}\frac{1}{n_{i}}\sum_{j=1}^{n_{i}}\sum_{i^{\prime}=1}%
^{m}\frac{1}{\beta_{i^{\prime}}}\mathbb{E[}IF_{i^{\prime}}(t^{\prime
},X_{i^{\prime}};\underline{P})IF_{i^{\prime}i}(t,X_{i^{\prime}}%
,X_{i,j};\underline{P})|X_{i,j}]+o_{p}\left(  \frac{1}{\sqrt{n}}\right)  ,
\label{computation_sigma_term10}%
\end{align}%
\begin{align}
&  \theta n\sum_{i=1}^{m}\mathbb{E}_{\ast}[IF_{i}(t,X_{i,1}^{\ast
};\underline{\hat{P}})(\mathbb{E}_{\ast}[\varepsilon^{\ast}(t^{\prime
})|X_{i,1}^{\ast}]-\mathbb{E}_{\ast}[\varepsilon^{\ast}(t^{\prime
})])]\nonumber\\
&  =\theta n\sum_{i=1}^{m}\frac{1}{2s_{i}^{2}}\mathrm{Cov}(IF_{i}%
(t,X_{i};\underline{P}),IF_{ii}(t^{\prime},X_{i},X_{i};\underline{P}%
))\nonumber\\
&  +\theta n\sum_{i=1}^{m}\sum_{i^{\prime}=1}^{m}\frac{1}{2s_{i}s_{i^{\prime}%
}}\mathrm{Cov}(IF_{i}(t,X_{i,1};\underline{P}),IF_{ii^{\prime}i^{\prime}%
}(t^{\prime},X_{i,1},X_{i^{\prime},2},X_{i^{\prime},2};\underline{P}%
))+o_{p}\left(  \frac{1}{s}\right)  , \label{computation_sigma_term20}%
\end{align}%
\begin{align}
&  \theta n\sum_{i=1}^{m}\mathbb{E}_{\ast}[IF_{i}(t^{\prime},X_{i,1}^{\ast
};\underline{\hat{P}})(\mathbb{E}_{\ast}[\varepsilon^{\ast}(t)|X_{i,1}^{\ast
}]-\mathbb{E}_{\ast}[\varepsilon^{\ast}(t)])]\nonumber\\
&  =\theta n\sum_{i=1}^{m}\frac{1}{2s_{i}^{2}}\mathrm{Cov}(IF_{i}(t^{\prime
},X_{i};\underline{P}),IF_{ii}(t,X_{i},X_{i};\underline{P}))\nonumber\\
&  +\theta n\sum_{i=1}^{m}\sum_{i^{\prime}=1}^{m}\frac{1}{2s_{i}s_{i^{\prime}%
}}\mathrm{Cov}(IF_{i}(t^{\prime},X_{i,1};\underline{P}),IF_{ii^{\prime
}i^{\prime}}(t,X_{i,1},X_{i^{\prime},2},X_{i^{\prime},2};\underline{P}%
))+o_{p}\left(  \frac{1}{s}\right)  , \label{computation_sigma_term30}%
\end{align}%
\begin{align}
&  \theta n\mathbb{E}_{\ast}[(\varepsilon^{\ast}(t)-\mathbb{E}_{\ast
}[\varepsilon^{\ast}(t)])(\varepsilon^{\ast}(t^{\prime})-\mathbb{E}_{\ast
}[\varepsilon^{\ast}(t^{\prime})])]\nonumber\\
&  =\theta n\sum_{i_{1},i_{2}=1}^{m}\frac{1}{2s_{i_{1}}s_{i_{2}}}%
\mathrm{Cov}(IF_{i_{1}i_{2}}(t,X_{i_{1},1},X_{i_{2},2};\underline{P}%
),IF_{i_{1}i_{2}}(t^{\prime},X_{i_{1},1},X_{i_{2},2};\underline{P}%
))+o_{p}\left(  \frac{1}{s}\right)  . \label{computation_sigma_term40}%
\end{align}

We first show (\ref{computation_sigma_term10}). According to Assumption
\ref{3rd_expansion_empirical}, we have%
\begin{align}
&  \sum_{j=1}^{n_{i}}\frac{1}{n_{i}}IF_{i}(t,X_{i,j};\underline{\hat{P}%
})IF_{i}(t^{\prime},X_{i,j};\underline{\hat{P}})\nonumber\\
&  =\frac{1}{n_{i}}\sum_{j=1}^{n_{i}}IF_{i}(t,X_{i,j};\underline{P}%
)IF_{i}(t^{\prime},X_{i,j};\underline{P})\nonumber\\
&  +\frac{1}{n_{i}}\sum_{j=1}^{n_{i}}IF_{i}(t,X_{i,j};\underline{P})\left(
\sum_{i^{\prime}=1}^{m}\frac{1}{n_{i^{\prime}}}\sum_{j^{\prime}=1}%
^{n_{i^{\prime}}}IF_{ii^{\prime}}(t^{\prime},X_{i,j},X_{i^{\prime},j^{\prime}%
};\underline{P})-\frac{1}{n_{i}}\sum_{j^{\prime}=1}^{n_{i}}IF_{i}(t^{\prime
},X_{i,j^{\prime}};\underline{P})\right) \nonumber\\
&  +\frac{1}{n_{i}}\sum_{j=1}^{n_{i}}IF_{i}(t^{\prime},X_{i,j};\underline{P}%
)\left(  \sum_{i^{\prime}=1}^{m}\frac{1}{n_{i^{\prime}}}\sum_{j^{\prime}%
=1}^{n_{i^{\prime}}}IF_{ii^{\prime}}(t,X_{i,j},X_{i^{\prime},j^{\prime}%
};\underline{P})-\frac{1}{n_{i}}\sum_{j^{\prime}=1}^{n_{i}}IF_{i}%
(t,X_{i,j^{\prime}};\underline{P})\right)  +o_{p}\left(  \frac{1}{\sqrt{n}%
}\right)  . \label{computation_sigma_term11}%
\end{align}
To further simplify the second term, we notice that
\[
\frac{1}{n_{i}n_{i^{\prime}}}\sum_{j=1}^{n_{i}}\sum_{j^{\prime}=1}%
^{n_{i^{\prime}}}IF_{i}(t,X_{i,j};\underline{P})IF_{ii^{\prime}}(t^{\prime
},X_{i,j},X_{i^{\prime},j^{\prime}};\underline{P})
\]
is a (one/two-sample depending on whether $i=i^{\prime}$) V-statistics with
$\mathbb{E[}IF_{i}(t,x;\underline{P})IF_{ii^{\prime}}(t^{\prime}%
,x,X_{i^{\prime},j^{\prime}};\underline{P})]=0$. Using H\'{a}jek projection on
$X_{i^{\prime},j^{\prime}}$, we can obtain that%
\[
\frac{1}{n_{i}n_{i^{\prime}}}\sum_{j=1}^{n_{i}}\sum_{j^{\prime}=1}%
^{n_{i^{\prime}}}IF_{i}(t,X_{i,j};\underline{P})IF_{ii^{\prime}}(t^{\prime
},X_{i,j},X_{i^{\prime},j^{\prime}};\underline{P})=\frac{1}{n_{i^{\prime}}%
}\sum_{j^{\prime}=1}^{n_{i^{\prime}}}\mathbb{E[}IF_{i}(t,X_{i};\underline{P}%
)IF_{ii^{\prime}}(t^{\prime},X_{i},X_{i^{\prime},j^{\prime}};\underline{P}%
)|X_{i^{\prime},j^{\prime}}]+O_{p}\left(  \frac{1}{n}\right)  ,
\]
where $X_{i}\sim P_{i}$ denotes a generic random variable independent of the
data $X_{i,j}$. Besides,%
\[
\frac{1}{n_{i}^{2}}\sum_{j=1}^{n_{i}}\sum_{j^{\prime}=1}^{n_{i}}%
IF_{i}(t,X_{i,j};\underline{P})IF_{i}(t^{\prime},X_{i,j^{\prime}%
};\underline{P})
\]
is a degenerate V-statistics, which implies that it is of order $O_{p}(1/n)$.
Therefore, the second term in (\ref{computation_sigma_term11}) can be
simplified as%
\begin{align*}
&  \frac{1}{n_{i}}\sum_{j=1}^{n_{i}}IF_{i}(t,X_{i,j};\underline{P})\left(
\sum_{i^{\prime}=1}^{m}\frac{1}{n_{i^{\prime}}}\sum_{j^{\prime}=1}%
^{n_{i^{\prime}}}IF_{ii^{\prime}}(t^{\prime},X_{i,j},X_{i^{\prime},j^{\prime}%
};\underline{P})-\frac{1}{n_{i}}\sum_{j^{\prime}=1}^{n_{i}}IF_{i}(t^{\prime
},X_{i,j^{\prime}};\underline{P})\right) \\
&  =\sum_{i^{\prime}=1}^{m}\frac{1}{n_{i^{\prime}}}\sum_{j^{\prime}%
=1}^{n_{i^{\prime}}}\mathbb{E[}IF_{i}(t,X_{i};\underline{P})IF_{ii^{\prime}%
}(t^{\prime},X_{i},X_{i^{\prime},j^{\prime}};\underline{P})|X_{i^{\prime
},j^{\prime}}]+O_{p}\left(  \frac{1}{n}\right)  .
\end{align*}
Similarly, the third term in (\ref{computation_sigma_term11}) can be
simplified as%
\begin{align*}
&  \frac{1}{n_{i}}\sum_{j=1}^{n_{i}}IF_{i}(t^{\prime},X_{i,j};\underline{P}%
)\left(  \sum_{i^{\prime}=1}^{m}\frac{1}{n_{i^{\prime}}}\sum_{j^{\prime}%
=1}^{n_{i^{\prime}}}IF_{ii^{\prime}}(t,X_{i,j},X_{i^{\prime},j^{\prime}%
};\underline{P})-\frac{1}{n_{i}}\sum_{j^{\prime}=1}^{n_{i}}IF_{i}%
(t,X_{i,j^{\prime}};\underline{P})\right) \\
&  =\sum_{i^{\prime}=1}^{m}\frac{1}{n_{i^{\prime}}}\sum_{j^{\prime}%
=1}^{n_{i^{\prime}}}\mathbb{E[}IF_{i}(t^{\prime},X_{i};\underline{P}%
)IF_{ii^{\prime}}(t,X_{i},X_{i^{\prime},j^{\prime}};\underline{P}%
)|X_{i^{\prime},j^{\prime}}]+O_{p}\left(  \frac{1}{n}\right)  .
\end{align*}
Hence, (\ref{computation_sigma_term11}) can be rewritten as%
\begin{align*}
&  \sum_{j=1}^{n_{i}}\frac{1}{n_{i}}IF_{i}(t,X_{i,j};\underline{\hat{P}%
})IF_{i}(t^{\prime},X_{i,j};\underline{\hat{P}})\\
&  =\frac{1}{n_{i}}\sum_{j=1}^{n_{i}}IF_{i}(t,X_{i,j};\underline{P}%
)IF_{i}(t^{\prime},X_{i,j};\underline{P})+\sum_{i^{\prime}=1}^{m}\frac
{1}{n_{i^{\prime}}}\sum_{j^{\prime}=1}^{n_{i^{\prime}}}\mathbb{E[}%
IF_{i}(t,X_{i};\underline{P})IF_{ii^{\prime}}(t^{\prime},X_{i},X_{i^{\prime
},j^{\prime}};\underline{P})|X_{i^{\prime},j^{\prime}}]\\
&  +\sum_{i^{\prime}=1}^{m}\frac{1}{n_{i^{\prime}}}\sum_{j^{\prime}%
=1}^{n_{i^{\prime}}}\mathbb{E[}IF_{i}(t^{\prime},X_{i};\underline{P}%
)IF_{ii^{\prime}}(t,X_{i},X_{i^{\prime},j^{\prime}};\underline{P}%
)|X_{i^{\prime},j^{\prime}}]+o_{p}\left(  \frac{1}{\sqrt{n}}\right)  .
\end{align*}
Plugging it back into the LHS of (\ref{computation_sigma_term10}), we have%
\begin{align*}
&  \theta n\sum_{i=1}^{m}\frac{1}{s_{i}}\sum_{j=1}^{n_{i}}\frac{1}{n_{i}%
}IF_{i}(t,X_{i,j};\underline{\hat{P}})IF_{i}(t^{\prime},X_{i,j}%
;\underline{\hat{P}})\\
&  =\sum_{i=1}^{m}\frac{\theta n}{s_{i}}\frac{1}{n_{i}}\sum_{j=1}^{n_{i}%
}IF_{i}(t,X_{i,j};\underline{P})IF_{i}(t^{\prime},X_{i,j};\underline{P}%
)+\sum_{i=1}^{m}\frac{\theta n}{s_{i}}\sum_{i^{\prime}=1}^{m}\frac
{1}{n_{i^{\prime}}}\sum_{j^{\prime}=1}^{n_{i^{\prime}}}\mathbb{E}_{X_{i}\sim
P_{i}}\mathbb{[}IF_{i}(t,X_{i};\underline{P})IF_{ii^{\prime}}(t^{\prime}%
,X_{i},X_{i^{\prime},j^{\prime}};\underline{P})|X_{i^{\prime},j^{\prime}}]\\
&  +\sum_{i=1}^{m}\frac{\theta n}{s_{i}}\sum_{i^{\prime}=1}^{m}\frac
{1}{n_{i^{\prime}}}\sum_{j^{\prime}=1}^{n_{i^{\prime}}}\mathbb{E[}%
IF_{i}(t^{\prime},X_{i};\underline{P})IF_{ii^{\prime}}(t,X_{i},X_{i^{\prime
},j^{\prime}};\underline{P})|X_{i^{\prime},j^{\prime}}]+o_{p}\left(  \frac
{1}{\sqrt{n}}\right) \\
&  =\sum_{i=1}^{m}\frac{\theta n}{s_{i}}\mathrm{Cov}(IF_{i}(t,X_{i}%
;\underline{P}),IF_{i}(t^{\prime},X_{i};\underline{P}))\\
&  +\sum_{i=1}^{m}\frac{\theta n}{s_{i}}\left(  \frac{1}{n_{i}}\sum
_{j=1}^{n_{i}}IF_{i}(t,X_{i,j};\underline{P})IF_{i}(t^{\prime},X_{i,j}%
;\underline{P})-\mathrm{Cov}(IF_{i}(t,X_{i};\underline{P}),IF_{i}(t^{\prime
},X_{i};\underline{P}))\right) \\
&  +\sum_{i=1}^{m}\frac{\theta n}{s_{i}}\sum_{i^{\prime}=1}^{m}\frac
{1}{n_{i^{\prime}}}\sum_{j^{\prime}=1}^{n_{i^{\prime}}}\mathbb{E}_{X_{i}\sim
P_{i}}\mathbb{[}IF_{i}(t,X_{i};\underline{P})IF_{ii^{\prime}}(t^{\prime}%
,X_{i},X_{i^{\prime},j^{\prime}};\underline{P})|X_{i^{\prime},j^{\prime}}]\\
&  +\sum_{i=1}^{m}\frac{\theta n}{s_{i}}\sum_{i^{\prime}=1}^{m}\frac
{1}{n_{i^{\prime}}}\sum_{j^{\prime}=1}^{n_{i^{\prime}}}\mathbb{E[}%
IF_{i}(t^{\prime},X_{i};\underline{P})IF_{ii^{\prime}}(t,X_{i},X_{i^{\prime
},j^{\prime}};\underline{P})|X_{i^{\prime},j^{\prime}}]+o_{p}\left(  \frac
{1}{\sqrt{n}}\right) \\
&  =\mathrm{Cov}(\mathbb{G}(t),\mathbb{G}(t^{\prime}))+\sum_{i=1}^{m}\left(
\frac{n\mathrm{frac}(\theta n_{i})}{s_{i}n_{i}}+\left(  \frac{n}{n_{i}}%
-\frac{1}{\beta_{i}}\right)  \right)  \mathrm{Cov}(IF_{i}(t,X_{i}%
;\underline{P}),IF_{i}(t^{\prime},X_{i};\underline{P}))\\
&  +\sum_{i=1}^{m}\frac{1}{\beta_{i}}\left(  \frac{1}{n_{i}}\sum_{j=1}^{n_{i}%
}IF_{i}(t,X_{i,j};\underline{P})IF_{i}(t^{\prime},X_{i,j};\underline{P}%
)-\mathrm{Cov}(IF_{i}(t,X_{i};\underline{P}),IF_{i}(t^{\prime},X_{i}%
;\underline{P}))\right) \\
&  +\sum_{i=1}^{m}\frac{1}{\beta_{i}}\sum_{i^{\prime}=1}^{m}\frac
{1}{n_{i^{\prime}}}\sum_{j^{\prime}=1}^{n_{i^{\prime}}}\mathbb{E[}%
IF_{i}(t,X_{i};\underline{P})IF_{ii^{\prime}}(t^{\prime},X_{i},X_{i^{\prime
},j^{\prime}};\underline{P})|X_{i^{\prime},j^{\prime}}]\\
&  +\sum_{i=1}^{m}\frac{1}{\beta_{i}}\sum_{i^{\prime}=1}^{m}\frac
{1}{n_{i^{\prime}}}\sum_{j^{\prime}=1}^{n_{i^{\prime}}}\mathbb{E[}%
IF_{i}(t^{\prime},X_{i};\underline{P})IF_{ii^{\prime}}(t,X_{i},X_{i^{\prime
},j^{\prime}};\underline{P})|X_{i^{\prime},j^{\prime}}]+o_{p}\left(  \frac
{1}{\sqrt{n}}\right)  ,
\end{align*}
where the last equality follows from the following facts%
\[
\frac{\theta n}{s_{i}}=\frac{\theta n}{[\theta n_{i}]}=\frac{1}{\beta_{i}%
}+\frac{n\mathrm{frac}(\theta n_{i})}{s_{i}n_{i}}+\left(  \frac{n}{n_{i}%
}-\frac{1}{\beta_{i}}\right)  ,
\]%
\[
\mathrm{Cov}(\mathbb{G}(t),\mathbb{G}(t^{\prime}))=\sum_{i=1}^{m}\frac
{1}{\beta_{i}}\mathrm{Cov}(IF_{i}(t,X_{i};\underline{P}),IF_{i}(t^{\prime
},X_{i};\underline{P}))
\]%
\[
\frac{\theta n}{s_{i}}\rightarrow\frac{1}{\beta_{i}},
\]
\[
\frac{1}{n_{i}}\sum_{j=1}^{n_{i}}IF_{i}(t,X_{i,j};\underline{P})IF_{i}%
(t^{\prime},X_{i,j};\underline{P})-\mathrm{Cov}(IF_{i}(t,X_{i};\underline{P}%
),IF_{i}(t^{\prime},X_{i};\underline{P}))=O_{p}\left(  \frac{1}{\sqrt{n}%
}\right)  ,
\]%
\[
\sum_{i^{\prime}=1}^{m}\frac{1}{n_{i^{\prime}}}\sum_{j^{\prime}=1}%
^{n_{i^{\prime}}}\mathbb{E[}IF_{i}(t,X_{i};\underline{P})IF_{ii^{\prime}%
}(t^{\prime},X_{i},X_{i^{\prime},j^{\prime}};\underline{P})|X_{i^{\prime
},j^{\prime}}]=O_{p}\left(  \frac{1}{\sqrt{n}}\right)  ,
\]%
\[
\sum_{i^{\prime}=1}^{m}\frac{1}{n_{i^{\prime}}}\sum_{j^{\prime}=1}%
^{n_{i^{\prime}}}\mathbb{E[}IF_{i}(t^{\prime},X_{i};\underline{P}%
)IF_{ii^{\prime}}(t,X_{i},X_{i^{\prime},j^{\prime}};\underline{P}%
)|X_{i^{\prime},j^{\prime}}]=O_{p}\left(  \frac{1}{\sqrt{n}}\right)  .
\]
Then in the last two terms, we replace the index name $i,i^{\prime},j^{\prime
}$ by $i^{\prime},i,j$ and rearrange the summation, which leads to
(\ref{computation_sigma_term10}).

Now we prove (\ref{computation_sigma_term20}). We first simplify the term
$\mathbb{E}_{\ast}[\varepsilon^{\ast}(t^{\prime})|X_{i,1}^{\ast}%
]-\mathbb{E}_{\ast}[\varepsilon^{\ast}(t^{\prime})]$. By comparing the
expansions in Assumptions \ref{1st_expansion_empirical} and
\ref{3rd_expansion_empirical}, we can see%
\begin{align*}
&  \varepsilon^{\ast}(t^{\prime})\\
&  =\frac{1}{2}\sum_{i_{1},i_{2}=1}^{m}\int IF_{i_{1}i_{2}}(t^{\prime}%
,x_{1},x_{2};\underline{\hat{P}})d(\hat{P}_{i_{1},s_{i_{1}}}^{\ast}-\hat
{P}_{i_{1}})(x_{1})d(\hat{P}_{i_{2},s_{i_{2}}}^{\ast}-\hat{P}_{i_{2}}%
)(x_{2})\\
&  +\frac{1}{6}\sum_{i_{1},i_{2},i_{3}=1}^{m}\int IF_{i_{1}i_{2}i_{3}%
}(t^{\prime},x_{1},x_{2},x_{3};\underline{\hat{P}})d(\hat{P}_{i_{1},s_{i_{1}}%
}^{\ast}-\hat{P}_{i_{1}})(x_{1})d(\hat{P}_{i_{2},s_{i_{2}}}^{\ast}-\hat
{P}_{i_{2}})(x_{2})d(\hat{P}_{i_{3},s_{i_{3}}}^{\ast}-\hat{P}_{i_{3}}%
)(x_{3})+\varepsilon_{3}^{\ast}(t^{\prime})\\
&  =\frac{1}{2}\sum_{i_{1},i_{2}=1}^{m}\frac{1}{s_{i_{1}}s_{i_{2}}}\sum
_{j_{1}=1}^{s_{i_{1}}}\sum_{j_{2}=1}^{s_{i_{2}}}IF_{i_{1}i_{2}}(t^{\prime
},X_{i_{1},j_{1}}^{\ast},X_{i_{2},j_{2}}^{\ast};\underline{\hat{P}})\\
&  +\frac{1}{6}\sum_{i_{1},i_{2},i_{3}=1}^{m}\frac{1}{s_{i_{1}}s_{i_{2}%
}s_{i_{3}}}\sum_{j_{1}=1}^{s_{i_{1}}}\sum_{j_{2}=1}^{s_{i_{2}}}\sum_{j_{3}%
=1}^{s_{i_{3}}}IF_{i_{1}i_{2}i_{3}}(t^{\prime},X_{i_{1},j_{1}}^{\ast}%
,X_{i_{2},j_{2}}^{\ast},X_{i_{3},j_{3}}^{\ast};\underline{\hat{P}%
})+\varepsilon_{3}^{\ast}(t^{\prime}).
\end{align*}
Therefore, we can see%
\begin{align*}
&  \mathbb{E}_{\ast}[\varepsilon^{\ast}(t^{\prime})|X_{i,1}^{\ast}%
]-\mathbb{E}_{\ast}[\varepsilon^{\ast}(t^{\prime})]\\
&  =\frac{1}{2s_{i}^{2}}(IF_{ii}(t^{\prime},X_{i,1}^{\ast},X_{i,1}^{\ast
};\underline{\hat{P}})-\mathbb{E}_{\ast}[IF_{ii}(t^{\prime},X_{i,1}^{\ast
},X_{i,1}^{\ast};\underline{\hat{P}})])\\
&  +\frac{1}{6s_{i}^{3}}(IF_{iii}(t^{\prime},X_{i,1}^{\ast},X_{i,1}^{\ast
},X_{i,1}^{\ast};\underline{\hat{P}})-\mathbb{E}_{\ast}[IF_{iii}(t^{\prime
},X_{i,1}^{\ast},X_{i,1}^{\ast},X_{i,1}^{\ast};\underline{\hat{P}})])\\
&  +\sum_{i^{\prime}\neq i}\frac{1}{2s_{i}s_{i^{\prime}}}\mathbb{E}_{\ast
}[IF_{ii^{\prime}i^{\prime}}(t^{\prime},X_{i,1}^{\ast},X_{i^{\prime},2}^{\ast
},X_{i^{\prime},2}^{\ast};\underline{\hat{P}})|X_{i,1}^{\ast}]+\frac{s_{i}%
-1}{2s_{i}^{3}}\mathbb{E}_{\ast}[IF_{iii}(t^{\prime},X_{i,1}^{\ast}%
,X_{i,2}^{\ast},X_{i,2}^{\ast};\underline{\hat{P}})|X_{i,1}^{\ast}]\\
&  +\mathbb{E}_{\ast}[\varepsilon_{3}^{\ast}(t^{\prime})|X_{i,1}^{\ast
}]-\mathbb{E}_{\ast}[\varepsilon_{3}^{\ast}(t^{\prime})],\\
&  =\frac{1}{2s_{i}^{2}}(IF_{ii}(t^{\prime},X_{i,1}^{\ast},X_{i,1}^{\ast
};\underline{\hat{P}})-\mathbb{E}_{\ast}[IF_{ii}(t^{\prime},X_{i,1}^{\ast
},X_{i,1}^{\ast};\underline{\hat{P}})])+\sum_{i^{\prime}=1}^{m}\frac{1}%
{2s_{i}s_{i^{\prime}}}\mathbb{E}_{\ast}[IF_{ii^{\prime}i^{\prime}}(t^{\prime
},X_{i,1}^{\ast},X_{i^{\prime},2}^{\ast},X_{i^{\prime},2}^{\ast}%
;\underline{\hat{P}})|X_{i,1}^{\ast}]\\
&  +\frac{1}{6s_{i}^{3}}(IF_{iii}(t^{\prime},X_{i,1}^{\ast},X_{i,1}^{\ast
},X_{i,1}^{\ast};\underline{\hat{P}})-\mathbb{E}_{\ast}[IF_{iii}(t^{\prime
},X_{i,1}^{\ast},X_{i,1}^{\ast},X_{i,1}^{\ast};\underline{\hat{P}})])-\frac
{1}{2s_{i}^{3}}\mathbb{E}_{\ast}[IF_{iii}(t^{\prime},X_{i,1}^{\ast}%
,X_{i,2}^{\ast},X_{i,2}^{\ast};\underline{\hat{P}})|X_{i,1}^{\ast}]\\
&  +\mathbb{E}_{\ast}[\varepsilon_{3}^{\ast}(t^{\prime})|X_{i,1}^{\ast
}]-\mathbb{E}_{\ast}[\varepsilon_{3}^{\ast}(t^{\prime})].
\end{align*}
Plugging it into the LHS of (\ref{computation_sigma_term20}), we obtain%
\begin{align}
&  \theta n\sum_{i=1}^{m}\mathbb{E}_{\ast}[IF_{i}(t,X_{i,1}^{\ast
};\underline{\hat{P}})(\mathbb{E}_{\ast}[\varepsilon^{\ast}(t^{\prime
})|X_{i,1}^{\ast}]-\mathbb{E}_{\ast}[\varepsilon^{\ast}(t^{\prime
})])]\nonumber\\
&  =\theta n\sum_{i=1}^{m}\frac{1}{2s_{i}^{2}}\mathbb{E}_{\ast}[IF_{i}%
(t,X_{i,1}^{\ast};\underline{\hat{P}})(IF_{ii}(t^{\prime},X_{i,1}^{\ast
},X_{i,1}^{\ast};\underline{\hat{P}})-\mathbb{E}_{\ast}[IF_{ii}(t^{\prime
},X_{i,1}^{\ast},X_{i,1}^{\ast};\underline{\hat{P}})])]\nonumber\\
&  +\theta n\sum_{i=1}^{m}\sum_{i^{\prime}=1}^{m}\frac{1}{2s_{i}s_{i^{\prime}%
}}\mathbb{E}_{\ast}[IF_{i}(t,X_{i,1}^{\ast};\underline{\hat{P}})\mathbb{E}%
_{\ast}[IF_{ii^{\prime}i^{\prime}}(t^{\prime},X_{i,1}^{\ast},X_{i^{\prime}%
,2}^{\ast},X_{i^{\prime},2}^{\ast};\underline{\hat{P}})|X_{i,1}^{\ast
}]]\nonumber\\
&  +\theta n\sum_{i=1}^{m}\frac{1}{6s_{i}^{3}}\mathbb{E}_{\ast}[IF_{i}%
(t,X_{i,1}^{\ast};\underline{\hat{P}})(IF_{iii}(t^{\prime},X_{i,1}^{\ast
},X_{i,1}^{\ast},X_{i,1}^{\ast};\underline{\hat{P}})-\mathbb{E}_{\ast
}[IF_{iii}(t^{\prime},X_{i,1}^{\ast},X_{i,1}^{\ast},X_{i,1}^{\ast
};\underline{\hat{P}})])]\nonumber\\
&  -\theta n\sum_{i=1}^{m}\frac{1}{2s_{i}^{3}}\mathbb{E}_{\ast}[IF_{i}%
(t,X_{i,1}^{\ast};\underline{\hat{P}})\mathbb{E}_{\ast}[IF_{iii}(t^{\prime
},X_{i,1}^{\ast},X_{i,2}^{\ast},X_{i,2}^{\ast};\underline{\hat{P}}%
)|X_{i,1}^{\ast}]]\nonumber\\
&  +\theta n\sum_{i=1}^{m}\mathbb{E}_{\ast}[IF_{i}(t,X_{i,1}^{\ast
};\underline{\hat{P}})(\mathbb{E}_{\ast}[\varepsilon_{3}^{\ast}(t^{\prime
})|X_{i,1}^{\ast}]-\mathbb{E}_{\ast}[\varepsilon_{3}^{\ast}(t^{\prime
})])]\nonumber\\
&  =\theta n\sum_{i=1}^{m}\frac{1}{2s_{i}^{2}}\mathrm{Cov}_{\ast}%
(IF_{i}(t,X_{i,1}^{\ast};\underline{\hat{P}}),IF_{ii}(t^{\prime},X_{i,1}%
^{\ast},X_{i,1}^{\ast};\underline{\hat{P}}))\nonumber\\
&  +\theta n\sum_{i=1}^{m}\sum_{i^{\prime}=1}^{m}\frac{1}{2s_{i}s_{i^{\prime}%
}}\mathrm{Cov}_{\ast}(IF_{i}(t,X_{i,1}^{\ast};\underline{\hat{P}%
}),IF_{ii^{\prime}i^{\prime}}(t^{\prime},X_{i,1}^{\ast},X_{i^{\prime},2}%
^{\ast},X_{i^{\prime},2}^{\ast};\underline{\hat{P}}))\nonumber\\
&  +\theta n\sum_{i=1}^{m}\frac{1}{6s_{i}^{3}}\mathrm{Cov}_{\ast}%
(IF_{i}(t,X_{i,1}^{\ast};\underline{\hat{P}}),IF_{iii}(t^{\prime}%
,X_{i,1}^{\ast},X_{i,1}^{\ast},X_{i,1}^{\ast};\underline{\hat{P}}))\nonumber\\
&  -\theta n\sum_{i=1}^{m}\frac{1}{2s_{i}^{3}}\mathrm{Cov}_{\ast}%
(IF_{i}(t,X_{i,1}^{\ast};\underline{\hat{P}}),IF_{iii}(t^{\prime}%
,X_{i,1}^{\ast},X_{i,2}^{\ast},X_{i,2}^{\ast};\underline{\hat{P}}))\nonumber\\
&  +\theta n\sum_{i=1}^{m}\mathrm{Cov}_{\ast}(IF_{i}(t,X_{i,1}^{\ast
};\underline{\hat{P}}),\varepsilon_{3}^{\ast}(t^{\prime})).
\label{computation_sigma_term21}%
\end{align}
By a similar proof for Lemma \ref{influence_product} as well as the law of
large numbers for U/V-statistics (see \cite{korolyuk2013theory} Section 3.1),
we can get%
\[
\mathrm{Cov}_{\ast}(IF_{i}(t,X_{i,1}^{\ast};\underline{\hat{P}}),IF_{ii}%
(t^{\prime},X_{i,1}^{\ast},X_{i,1}^{\ast};\underline{\hat{P}}%
))\overset{p}{\rightarrow}\mathrm{Cov}(IF_{i}(t,X_{i};\underline{P}%
),IF_{ii}(t^{\prime},X_{i},X_{i};\underline{P})),
\]%
\[
\mathrm{Cov}_{\ast}(IF_{i}(t,X_{i,1}^{\ast};\underline{\hat{P}}%
),IF_{ii^{\prime}i^{\prime}}(t^{\prime},X_{i,1}^{\ast},X_{i^{\prime},2}^{\ast
},X_{i^{\prime},2}^{\ast};\underline{\hat{P}}))\overset{p}{\rightarrow
}\mathrm{Cov}(IF_{i}(t,X_{i,1};\underline{P}),IF_{ii^{\prime}i^{\prime}%
}(t^{\prime},X_{i,1},X_{i^{\prime},2},X_{i^{\prime},2};\underline{P})),
\]%
\[
\mathrm{Cov}_{\ast}(IF_{i}(t,X_{i,1}^{\ast};\underline{\hat{P}}),IF_{iii}%
(t^{\prime},X_{i,1}^{\ast},X_{i,1}^{\ast},X_{i,1}^{\ast};\underline{\hat{P}%
}))\overset{p}{\rightarrow}\mathrm{Cov}(IF_{i}(t,X_{i};\underline{P}%
),IF_{iii}(t^{\prime},X_{i},X_{i},X_{i};\underline{P})),
\]%
\[
\mathrm{Cov}_{\ast}(IF_{i}(t,X_{i,1}^{\ast};\underline{\hat{P}}),IF_{iii}%
(t^{\prime},X_{i,1}^{\ast},X_{i,2}^{\ast},X_{i,2}^{\ast};\underline{\hat{P}%
}))\overset{p}{\rightarrow}\mathrm{Cov}(IF_{i}(t,X_{i,1};\underline{P}%
),IF_{iii}(t^{\prime},X_{i,1},X_{i,2},X_{i,2};\underline{P})).
\]
Next, to bound $\mathrm{Cov}_{\ast}(IF_{i}(t,X_{i,1}^{\ast};\underline{\hat
{P}}),\varepsilon_{3}^{\ast}(t^{\prime}))$, we notice that $\varepsilon
_{3}^{\ast}(t^{\prime})$ is symmetric about $X_{i,1}^{\ast},\ldots,X_{i,s_{i}%
}^{\ast}$. Thus,
\begin{align*}
&  |\mathrm{Cov}_{\ast}(IF_{i}(t,X_{i,1}^{\ast};\underline{\hat{P}%
}),\varepsilon_{3}^{\ast}(t^{\prime}))|=\left\vert \frac{1}{s_{i}}%
\mathrm{Cov}_{\ast}\left(  \sum_{j=1}^{s_{i}}IF_{i}(t,X_{i,j}^{\ast
};\underline{\hat{P}}),\varepsilon_{3}^{\ast}(t^{\prime})\right)  \right\vert
\\
&  \leq\frac{1}{s_{i}}\sqrt{\mathrm{Var}_{\ast}\left(  \sum_{j=1}^{s_{i}%
}IF_{i}(t,X_{i,j}^{\ast};\underline{\hat{P}})\right)  \mathrm{Var}_{\ast
}(\varepsilon_{3}^{\ast}(t^{\prime}))}=\frac{1}{s_{i}}\sqrt{s_{i}%
\mathrm{Var}_{\ast}(IF_{i}(t,X_{i,1}^{\ast};\underline{\hat{P}}))\mathrm{Var}%
_{\ast}(\varepsilon_{3}^{\ast}(t^{\prime}))}=o_{p}\left(  \frac{1}{s^{2}%
}\right)  ,
\end{align*}
where the last equality uses $\mathbb{E}_{\ast}[\varepsilon_{3}^{\ast
}(t^{\prime})^{2}]=o_{p}\left(  s^{-3}\right)  $. Plugging this bound and the
above convergence results into (\ref{computation_sigma_term21}), we can get%
\begin{align*}
&  \theta n\sum_{i=1}^{m}\mathbb{E}_{\ast}[IF_{i}(t,X_{i,1}^{\ast
};\underline{\hat{P}})(\mathbb{E}_{\ast}[\varepsilon^{\ast}(t^{\prime
})|X_{i,1}^{\ast}]-\mathbb{E}_{\ast}[\varepsilon^{\ast}(t^{\prime})])]\\
&  =\theta n\sum_{i=1}^{m}\frac{1}{2s_{i}^{2}}\mathrm{Cov}(IF_{i}%
(t,X_{i};\underline{P}),IF_{ii}(t^{\prime},X_{i},X_{i};\underline{P}))\\
&  +\theta n\sum_{i=1}^{m}\sum_{i^{\prime}=1}^{m}\frac{1}{2s_{i}s_{i^{\prime}%
}}\mathrm{Cov}(IF_{i}(t,X_{i,1};\underline{P}),IF_{ii^{\prime}i^{\prime}%
}(t^{\prime},X_{i,1},X_{i^{\prime},2},X_{i^{\prime},2};\underline{P}%
))+o_{p}\left(  \frac{1}{s}\right)  ,
\end{align*}
which proves (\ref{computation_sigma_term20}).

(\ref{computation_sigma_term30}) can be proved by the same argument for
(\ref{computation_sigma_term20}).

Finally, we prove (\ref{computation_sigma_term40}). Recall that $\varepsilon
^{\ast}(t)$ can be expressed as
\begin{align*}
&  \varepsilon^{\ast}(t^{\prime})=\frac{1}{2}\sum_{i_{1},i_{2}=1}^{m}\frac
{1}{s_{i_{1}}s_{i_{2}}}\sum_{j_{1}=1}^{s_{i_{1}}}\sum_{j_{2}=1}^{s_{i_{2}}%
}IF_{i_{1}i_{2}}(t^{\prime},X_{i_{1},j_{1}}^{\ast},X_{i_{2},j_{2}}^{\ast
};\underline{\hat{P}})\\
&  +\frac{1}{6}\sum_{i_{1},i_{2},i_{3}=1}^{m}\frac{1}{s_{i_{1}}s_{i_{2}%
}s_{i_{3}}}\sum_{j_{1}=1}^{s_{i_{1}}}\sum_{j_{2}=1}^{s_{i_{2}}}\sum_{j_{3}%
=1}^{s_{i_{3}}}IF_{i_{1}i_{2}i_{3}}(t^{\prime},X_{i_{1},j_{1}}^{\ast}%
,X_{i_{2},j_{2}}^{\ast},X_{i_{3},j_{3}}^{\ast};\underline{\hat{P}%
})+\varepsilon_{3}^{\ast}(t^{\prime}).
\end{align*}
By the variance formula for the V-statistic, we can get the following order%
\[
\mathrm{Var}_{\ast}\left(  \frac{1}{s_{i_{1}}s_{i_{2}}}\sum_{j_{1}%
=1}^{s_{i_{1}}}\sum_{j_{2}=1}^{s_{i_{2}}}IF_{i_{1}i_{2}}(t^{\prime}%
,X_{i_{1},j_{1}}^{\ast},X_{i_{2},j_{2}}^{\ast};\underline{\hat{P}})\right)
=O_{p}\left(  \frac{1}{s^{2}}\right)  ,
\]%
\[
\mathrm{Var}_{\ast}\left(  \frac{1}{s_{i_{1}}s_{i_{2}}s_{i_{3}}}\sum_{j_{1}%
=1}^{s_{i_{1}}}\sum_{j_{2}=1}^{s_{i_{2}}}\sum_{j_{3}=1}^{s_{i_{3}}}%
IF_{i_{1}i_{2}i_{3}}(t^{\prime},X_{i_{1},j_{1}}^{\ast},X_{i_{2},j_{2}}^{\ast
},X_{i_{3},j_{3}}^{\ast};\underline{\hat{P}})\right)  =O_{p}\left(  \frac
{1}{s^{3}}\right)  .
\]
Assumption \ref{3rd_expansion_empirical} ensures that $\varepsilon_{3}^{\ast
}(t^{\prime})=o_{p}\left(  1/s^{3/2}\right)  $. Therefore, we obtain%
\begin{align}
&  \theta n\mathbb{E}_{\ast}[(\varepsilon^{\ast}(t)-\mathbb{E}_{\ast
}[\varepsilon^{\ast}(t)])(\varepsilon^{\ast}(t^{\prime})-\mathbb{E}_{\ast
}[\varepsilon^{\ast}(t^{\prime})])]\nonumber\\
&  =\frac{\theta n}{4}\mathrm{Cov}_{\ast}\left(  \sum_{i_{1},i_{2}=1}^{m}%
\frac{1}{s_{i_{1}}s_{i_{2}}}\sum_{j_{1}=1}^{s_{i_{1}}}\sum_{j_{2}=1}%
^{s_{i_{2}}}IF_{i_{1}i_{2}}(t,X_{i_{1},j_{1}}^{\ast},X_{i_{2},j_{2}}^{\ast
};\underline{\hat{P}}),\sum_{i_{1},i_{2}=1}^{m}\frac{1}{s_{i_{1}}s_{i_{2}}%
}\sum_{j_{1}=1}^{s_{i_{1}}}\sum_{j_{2}=1}^{s_{i_{2}}}IF_{i_{1}i_{2}}%
(t^{\prime},X_{i_{1},j_{1}}^{\ast},X_{i_{2},j_{2}}^{\ast};\underline{\hat{P}%
})\right) \nonumber\\
&  +o_{p}\left(  \frac{1}{s}\right) \nonumber\\
&  =\frac{\theta n}{4}\sum_{i_{1},i_{2},i_{3},i_{4}=1}^{m}\frac{1}{s_{i_{1}%
}s_{i_{2}}s_{i_{3}}s_{i_{4}}}\mathrm{Cov}_{\ast}\left(  \sum_{j_{1}%
=1}^{s_{i_{1}}}\sum_{j_{2}=1}^{s_{i_{2}}}IF_{i_{1}i_{2}}(t,X_{i_{1},j_{1}%
}^{\ast},X_{i_{2},j_{2}}^{\ast};\underline{\hat{P}}),\sum_{j_{3}=1}^{s_{i_{3}%
}}\sum_{j_{4}=1}^{s_{i_{4}}}IF_{i_{3}i_{4}}(t^{\prime},X_{i_{3},j_{3}}^{\ast
},X_{i_{4},j_{4}}^{\ast};\underline{\hat{P}})\right) \nonumber\\
&  +o_{p}\left(  \frac{1}{s}\right) \nonumber\\
&  =\frac{\theta n}{4}\sum_{1\leq i_{1}\neq i_{2}\leq m}\frac{2}{s_{i_{1}}%
^{2}s_{i_{2}}^{2}}\mathrm{Cov}_{\ast}\left(  \sum_{j_{1}=1}^{s_{i_{1}}}%
\sum_{j_{2}=1}^{s_{i_{2}}}IF_{i_{1}i_{2}}(t,X_{i_{1},j_{1}}^{\ast}%
,X_{i_{2},j_{2}}^{\ast};\underline{\hat{P}}),\sum_{j_{3}=1}^{s_{i_{1}}}%
\sum_{j_{4}=1}^{s_{i_{2}}}IF_{i_{1}i_{2}}(t^{\prime},X_{i_{1},j_{3}}^{\ast
},X_{i_{2},j_{4}}^{\ast};\underline{\hat{P}})\right) \nonumber\\
&  +\frac{\theta n}{4}\sum_{i=1}^{m}\frac{1}{s_{i}^{4}}\mathrm{Cov}_{\ast
}\left(  \sum_{j_{1}=1}^{s_{i}}\sum_{j_{2}=1}^{s_{i}}IF_{ii}(t,X_{i,j_{1}%
}^{\ast},X_{i,j_{2}}^{\ast};\underline{\hat{P}}),\sum_{j_{3}=1}^{s_{i}}%
\sum_{j_{4}=1}^{s_{i}}IF_{ii}(t^{\prime},X_{i,j_{3}}^{\ast},X_{i,j_{4}}^{\ast
};\underline{\hat{P}})\right)  +o_{p}\left(  \frac{1}{s}\right) \nonumber\\
&  =\frac{\theta n}{4}\sum_{1\leq i_{1}\neq i_{2}\leq m}\frac{2}{s_{i_{1}}%
^{2}s_{i_{2}}^{2}}\sum_{j_{1}=1}^{s_{i_{1}}}\sum_{j_{2}=1}^{s_{i_{2}}%
}\mathrm{Cov}_{\ast}(IF_{i_{1}i_{2}}(t,X_{i_{1},j_{1}}^{\ast},X_{i_{2},j_{2}%
}^{\ast};\underline{\hat{P}}),IF_{i_{1}i_{2}}(t^{\prime},X_{i_{1},j_{1}}%
^{\ast},X_{i_{2},j_{2}}^{\ast};\underline{\hat{P}}))\nonumber\\
&  +\frac{\theta n}{4}\sum_{i=1}^{m}\frac{2}{s_{i}^{4}}\sum_{1\leq j_{1}\neq
j_{2}\leq s_{i}}\mathrm{Cov}_{\ast}(IF_{ii}(t,X_{i,j_{1}}^{\ast},X_{i,j_{2}%
}^{\ast};\underline{\hat{P}}),IF_{ii}(t^{\prime},X_{i,j_{1}}^{\ast}%
,X_{i,j_{2}}^{\ast};\underline{\hat{P}}))\nonumber\\
&  +\frac{\theta n}{4}\sum_{i=1}^{m}\frac{1}{s_{i}^{4}}\sum_{j=1}^{s_{i}%
}\mathrm{Cov}_{\ast}(IF_{ii}(t,X_{i,j}^{\ast},X_{i,j}^{\ast};\underline{\hat
{P}}),IF_{ii}(t^{\prime},X_{i,j}^{\ast},X_{i,j}^{\ast};\underline{\hat{P}%
}))+o_{p}\left(  \frac{1}{s}\right) \nonumber\\
&  =\frac{\theta n}{4}\sum_{1\leq i_{1}\neq i_{2}\leq m}\frac{2}{s_{i_{1}%
}s_{i_{2}}}\mathrm{Cov}_{\ast}(IF_{i_{1}i_{2}}(t,X_{i_{1},1}^{\ast}%
,X_{i_{2},1}^{\ast};\underline{\hat{P}}),IF_{i_{1}i_{2}}(t^{\prime}%
,X_{i_{1},1}^{\ast},X_{i_{2},1}^{\ast};\underline{\hat{P}}))\nonumber\\
&  +\frac{\theta n}{4}\sum_{i=1}^{m}\frac{2(s_{i}-1)}{s_{i}^{3}}%
\mathrm{Cov}_{\ast}(IF_{ii}(t,X_{i,1}^{\ast},X_{i,2}^{\ast};\underline{\hat
{P}}),IF_{ii}(t^{\prime},X_{i,1}^{\ast},X_{i,2}^{\ast};\underline{\hat{P}%
}))\nonumber\\
&  +\frac{\theta n}{4}\sum_{i=1}^{m}\frac{1}{s_{i}^{3}}\mathrm{Cov}_{\ast
}(IF_{ii}(t,X_{i,1}^{\ast},X_{i,1}^{\ast};\underline{\hat{P}}),IF_{ii}%
(t^{\prime},X_{i,1}^{\ast},X_{i,1}^{\ast};\underline{\hat{P}}))+o_{p}\left(
\frac{1}{s}\right)  . \label{computation_sigma_term41}%
\end{align}
By a similar proof for Lemma \ref{influence_product} as well as the law of
large numbers for U/V-statistics (see \cite{korolyuk2013theory} Section 3.1),
we can get%
\[
\mathrm{Cov}_{\ast}(IF_{i_{1}i_{2}}(t,X_{i_{1},1}^{\ast},X_{i_{2},1}^{\ast
};\underline{\hat{P}}),IF_{i_{1}i_{2}}(t^{\prime},X_{i_{1},1}^{\ast}%
,X_{i_{2},1}^{\ast};\underline{\hat{P}}))\overset{p}{\rightarrow}%
\mathrm{Cov}(IF_{i_{1}i_{2}}(t,X_{i_{1},1},X_{i_{2},1};\underline{P}%
),IF_{i_{1}i_{2}}(t^{\prime},X_{i_{1},1},X_{i_{2},1};\underline{P})),
\]%
\[
\mathrm{Cov}_{\ast}(IF_{ii}(t,X_{i,1}^{\ast},X_{i,2}^{\ast};\underline{\hat
{P}}),IF_{ii}(t^{\prime},X_{i,1}^{\ast},X_{i,2}^{\ast};\underline{\hat{P}%
}))\overset{p}{\rightarrow}\mathrm{Cov}(IF_{ii}(t,X_{i,1},X_{i,2}%
;\underline{P}),IF_{ii}(t^{\prime},X_{i,1},X_{i,2};\underline{P})),
\]%
\[
\mathrm{Cov}_{\ast}(IF_{ii}(t,X_{i,1}^{\ast},X_{i,1}^{\ast};\underline{\hat
{P}}),IF_{ii}(t^{\prime},X_{i,1}^{\ast},X_{i,1}^{\ast};\underline{\hat{P}%
}))\overset{p}{\rightarrow}\mathrm{Cov}(IF_{ii}(t,X_{i,1},X_{i,1}%
;\underline{P}),IF_{ii}(t^{\prime},X_{i,1},X_{i,1};\underline{P})).
\]
Therefore, (\ref{computation_sigma_term41}) can be simplified as%
\begin{align*}
&  \theta n\mathbb{E}_{\ast}[(\varepsilon^{\ast}(t)-\mathbb{E}_{\ast
}[\varepsilon^{\ast}(t)])(\varepsilon^{\ast}(t^{\prime})-\mathbb{E}_{\ast
}[\varepsilon^{\ast}(t^{\prime})])]\\
&  =\frac{\theta n}{4}\sum_{1\leq i_{1}\neq i_{2}\leq m}\frac{2}{s_{i_{1}%
}s_{i_{2}}}\mathrm{Cov}(IF_{i_{1}i_{2}}(t,X_{i_{1},1},X_{i_{2},1}%
;\underline{P}),IF_{i_{1}i_{2}}(t^{\prime},X_{i_{1},1},X_{i_{2},1}%
;\underline{P}))\\
&  +\frac{\theta n}{4}\sum_{i=1}^{m}\frac{2}{s_{i}^{2}}\mathrm{Cov}%
(IF_{ii}(t,X_{i,1},X_{i,2};\underline{P}),IF_{ii}(t^{\prime},X_{i,1}%
,X_{i,2};\underline{P}))+o_{p}\left(  \frac{1}{s}\right) \\
&  =\theta n\sum_{i_{1},i_{2}=1}^{m}\frac{1}{2s_{i_{1}}s_{i_{2}}}%
\mathrm{Cov}(IF_{i_{1}i_{2}}(t,X_{i_{1},1},X_{i_{2},2};\underline{P}%
),IF_{i_{1}i_{2}}(t^{\prime},X_{i_{1},1},X_{i_{2},2};\underline{P}%
))+o_{p}\left(  \frac{1}{s}\right)  ,
\end{align*}
which proves (\ref{computation_sigma_term40}).

Now we have proved (\ref{computation_sigma_term10}%
)-(\ref{computation_sigma_term40}). Plugging them into
(\ref{computation_sigma2}), we can see%
\begin{align*}
&  \sigma^{2}(t,t^{\prime})-\mathrm{Cov}(\mathbb{G}(t),\mathbb{G}(t^{\prime
}))\\
&  =\sum_{i=1}^{m}\left(  \frac{n\mathrm{frac}(\theta n_{i})}{s_{i}n_{i}%
}+\left(  \frac{n}{n_{i}}-\frac{1}{\beta_{i}}\right)  \right)  \mathrm{Cov}%
(IF_{i}(t,X_{i};\underline{P}),IF_{i}(t^{\prime},X_{i};\underline{P}))\\
&  +\sum_{i=1}^{m}\frac{1}{\beta_{i}}\left(  \frac{1}{n_{i}}\sum_{j=1}^{n_{i}%
}IF_{i}(t,X_{i,j};\underline{P})IF_{i}(t^{\prime},X_{i,j};\underline{P}%
)-\mathrm{Cov}(IF_{i}(t,X_{i};\underline{P}),IF_{i}(t^{\prime},X_{i}%
;\underline{P}))\right) \\
&  +\sum_{i=1}^{m}\frac{1}{n_{i}}\sum_{j=1}^{n_{i}}\sum_{i^{\prime}=1}%
^{m}\frac{1}{\beta_{i^{\prime}}}\mathbb{E[}IF_{i^{\prime}}(t,X_{i^{\prime}%
};\underline{P})IF_{i^{\prime}i}(t^{\prime},X_{i^{\prime}},X_{i,j}%
;\underline{P})|X_{i,j}]\\
&  +\sum_{i=1}^{m}\frac{1}{n_{i}}\sum_{j=1}^{n_{i}}\sum_{i^{\prime}=1}%
^{m}\frac{1}{\beta_{i^{\prime}}}\mathbb{E[}IF_{i^{\prime}}(t^{\prime
},X_{i^{\prime}};\underline{P})IF_{i^{\prime}i}(t,X_{i^{\prime}}%
,X_{i,j};\underline{P})|X_{i,j}]\\
&  +\theta n\sum_{i=1}^{m}\frac{1}{2s_{i}^{2}}\mathrm{Cov}(IF_{i}%
(t,X_{i};\underline{P}),IF_{ii}(t^{\prime},X_{i},X_{i};\underline{P}))\\
&  +\theta n\sum_{i=1}^{m}\sum_{i^{\prime}=1}^{m}\frac{1}{2s_{i}s_{i^{\prime}%
}}\mathrm{Cov}(IF_{i}(t,X_{i,1};\underline{P}),IF_{ii^{\prime}i^{\prime}%
}(t^{\prime},X_{i,1},X_{i^{\prime},2},X_{i^{\prime},2};\underline{P}))\\
&  +\theta n\sum_{i=1}^{m}\frac{1}{2s_{i}^{2}}\mathrm{Cov}(IF_{i}(t^{\prime
},X_{i};\underline{P}),IF_{ii}(t,X_{i},X_{i};\underline{P}))\\
&  +\theta n\sum_{i=1}^{m}\sum_{i^{\prime}=1}^{m}\frac{1}{2s_{i}s_{i^{\prime}%
}}\mathrm{Cov}(IF_{i}(t^{\prime},X_{i,1};\underline{P}),IF_{ii^{\prime
}i^{\prime}}(t,X_{i,1},X_{i^{\prime},2},X_{i^{\prime},2};\underline{P}))\\
&  +\theta n\sum_{i_{1},i_{2}=1}^{m}\frac{1}{2s_{i_{1}}s_{i_{2}}}%
\mathrm{Cov}(IF_{i_{1}i_{2}}(t,X_{i_{1},1},X_{i_{2},2};\underline{P}%
),IF_{i_{1}i_{2}}(t^{\prime},X_{i_{1},1},X_{i_{2},2};\underline{P}))\\
&  +o_{p}\left(  \frac{1}{s}+\frac{1}{\sqrt{n}}\right)  .
\end{align*}
Let the leading order terms in $\sigma^{2}(t,t^{\prime})-\mathrm{Cov}%
(\mathbb{G}(t),\mathbb{G}(t^{\prime}))$ be $\mathcal{E}_{1}$. It's easy to
check that it has desired expectation and variance.
\end{proof}

\begin{proof}{Proof of Theorem \ref{overall_opt_config}.}
Based on the assumption, it suffices to minimize the order of the conditional MSE $\mathbb{E}_*[(\hat{\sigma}^{2}(t,t^{\prime})-\sigma^{2}(t,t^{\prime}))^2]$ and the order of $\sigma^{2}(t,t^{\prime})-\mathrm{Cov}(\mathbb{G}(t),\mathbb{G}(t^{\prime}))$. Since $R_s$ and $B$ only affect the order of the conditional MSE, by Theorem \ref{optimal_B_R}, the optimal $R_s$ and $B$ are given by $R_{s}^{\ast}=\Theta(N^{1/3}s^{2/3}),B^{\ast}=N/R_{s}^{\ast}$, leading to the following order of the conditional MSE
\[
\mathbb{E}_{\ast}[(\hat{\sigma}^{2}(t,t^{\prime})-\sigma^{2}(t,t^{\prime
}))^{2}]=\Theta_p((\theta n)^{2/3}/N^{2/3}).
\]
Moreover, by Theorem \ref{error_true_bootstrap}, when the mean and variance of $\mathcal{E}_{1}$ satisfy
\[
\mathbb{E}[\mathcal{E}_{1}]=\Theta\left(  \frac{1}{s}+\sum_{i=1}^{m}\left\vert\frac{n}{n_{i}}-\frac{1}{\beta_{i}}\right\vert \right)  ,\quad\mathrm{Var}(\mathcal{E}_{1})=\Theta\left(  \frac{1}{n}\right)  ,
\]
the order of $\sigma^{2}(t,t^{\prime})-\mathrm{Cov}(\mathbb{G}(t),\mathbb{G}(t^{\prime}))$ is
\[
\sigma^{2}(t,t^{\prime})-\mathrm{Cov}(\mathbb{G}(t),\mathbb{G}(t^{\prime}))=\Theta_{p}\left(  \frac{1}{s}+\sum_{i=1}^{m}\left\vert \frac{n}{n_{i}}-\frac{1}{\beta_{i}}\right\vert +\frac{1}{\sqrt{n}}\right).
\]
Therefore, the total estimation error $\hat{\sigma}^{2}(t,t^{\prime})-\mathrm{Cov}(\mathbb{G}(t),\mathbb{G}(t^{\prime}))$ has the following order
\[
\hat{\sigma}^{2}(t,t^{\prime})-\mathrm{Cov}(\mathbb{G}(t),\mathbb{G}%
(t^{\prime}))=\Theta_{p}\left(  \frac{1}{s}+\sum_{i=1}^{m}\left\vert \frac
{n}{n_{i}}-\frac{1}{\beta_{i}}\right\vert +\frac{1}{\sqrt{n}}+\frac{s^{1/3}%
}{N^{1/3}}\right)  .
\]
From the above order, we can see the optimal $s$ that minimizes the order of $\hat
{\sigma}^{2}(t,t^{\prime})-\mathrm{Cov}(\mathbb{G}(t),\mathbb{G}(t^{\prime}))$
is $s^{\ast}=\Theta(N^{1/4})$. The corresponding optimal subsample ratio is
$\theta^{\ast}=\Theta(s^{\ast}/n)=\Theta(N^{1/4}/n)$ and the minimal order of
$\hat{\sigma}^{2}(t,t^{\prime})-\mathrm{Cov}(\mathbb{G}(t),\mathbb{G}%
(t^{\prime}))$ is%
\[
\hat{\sigma}^{2}(t,t^{\prime})-\mathrm{Cov}(\mathbb{G}(t),\mathbb{G}%
(t^{\prime}))=\Theta_{p}\left(  \frac{1}{N^{1/4}}+\sum_{i=1}^{m}\left\vert
\frac{n}{n_{i}}-\frac{1}{\beta_{i}}\right\vert +\frac{1}{\sqrt{n}}\right).
\]

\end{proof}

\subsection{Proofs of Results in Appendix \ref{sec:general_assumptions}}\label{sec:proofs_general_assumptions}

\begin{proof}{Proof of Theorem \ref{verification_general_assumptions}.}

Proof of Assumption \ref{convergence_in_p_to_truth}. Under the finite-horizon model, the output distribution function $Q(t,\underline{\hat{P}})$ is just a multi-sample V-statistic (see Section 1.3 in \cite{korolyuk2013theory}), which is measurable. The consistency follows from the Law of Large Numbers for V-statistic (e.g., Theorem 3.3.1 in \cite{korolyuk2013theory} for single-sample V-statistic; similar arguments hold for multi-sample V-statistic).

Proof of Assumption \ref{1st_expansion_truth}. This assumption follows from Theorem 6 in \cite{lam2022subsampling} where their function $h$ corresponds to the indicator function $I(h(\mathbf{X}_{1},\ldots,\mathbf{X}_{m})\leq t)$ in our case. Moreover, their condition Assumption 8 is not needed here since we do not require $\mathrm{Var}_{P_{i}}(IF_{i}(t,X_{i};\underline{P}))>0$ as in their Assumption 3.

Proof of Assumption \ref{1st_expansion_empirical}. All results except 
\[
\mathbb{E}_{\ast}[\varepsilon^{\ast}(t)]=O_{p}((\theta n)^{-1}),\forall \in\mathbb{R},%
\]
and
\[
\mathbb{E}_{\ast}[(\varepsilon^{\ast}(t)-\mathbb{E}_{\ast}[\varepsilon^{\ast}(t)])^{2}]=o_{p}((\theta n)^{-1}),\forall t\in\mathbb{R},
\]%
follow from Theorem 6 in \cite{lam2022subsampling}. To prove the first remaining result, we use the third order expansion in Assumption \ref{3rd_expansion_empirical} (proved later) and get the following representation of $\varepsilon^{\ast}(t)$:
\begin{align*}
&\varepsilon^{\ast}(t)\\
& =\frac{1}{2}\sum_{i_{1},i_{2}=1}^{m}\int IF_{i_{1}i_{2}}(t,x_{1}%
,x_{2};\underline{\hat{P}})d(\hat{P}_{i_{1},s_{i_{1}}}^{\ast}-\hat{P}_{i_{1}%
})(x_{1})d(\hat{P}_{i_{2},s_{i_{2}}}^{\ast}-\hat{P}_{i_{2}})(x_{2})\\
&  +\frac{1}{6}\sum_{i_{1},i_{2},i_{3}=1}^{m}\int IF_{i_{1}i_{2}i_{3}}%
(t,x_{1},x_{2},x_{3};\underline{\hat{P}})d(\hat{P}_{i_{1},s_{i_{1}}}^{\ast
}-\hat{P}_{i_{1}})(x_{1})d(\hat{P}_{i_{2},s_{i_{2}}}^{\ast}-\hat{P}_{i_{2}%
})(x_{2})d(\hat{P}_{i_{3},s_{i_{3}}}^{\ast}-\hat{P}_{i_{3}})(x_{3}%
)\\
&+\varepsilon_{3}^{\ast}(t).
\end{align*}
Since all higher order influence functions have marginal zero mean, we have
\[
\mathbb{E}_*[\varepsilon^{\ast}(t)] = \mathbb{E}_*[\varepsilon_3^{\ast}(t)],
\]
which implies $\mathbb{E}_{\ast}[\varepsilon^{\ast}(t)]=O_{p}((\theta n)^{-1})$ (actually we can get a more precise order) by $\mathbb{E}_{\ast}[\varepsilon_{3}^{\ast}(t)^{2}]=o_{p}\left(  (\theta n)^{-3}\right)$ and H\"{o}lder's inequality.

The second remaining result can be proved by
\[
\mathbb{E}_{\ast}[(\varepsilon^{\ast}(t)-\mathbb{E}_{\ast}[\varepsilon^{\ast}(t)])^{2}]\leq \mathbb{E}_{\ast}[(\varepsilon^{\ast}(t))^{2}]\leq \sqrt{\mathbb{E}_{\ast}[\varepsilon^{\ast}(t)^{4}]}=o_{p}((\theta n)^{-1}).
\]

Proof of Assumption \ref{3rd_IF_at_truth}. This assumption follows from Theorem 9 in \cite{lam2022subsampling}.

Proof of Assumption \ref{3rd_expansion_empirical}. This assumption follows from Theorem 9 in \cite{lam2022subsampling}.
\end{proof}

\begin{proof}{Proofs of Theorems \ref{consistency_sigma2}-\ref{overall_opt_config2}.}
By tracking the proofs of the original results, we can find that the assumptions presented in Theorems \ref{consistency_sigma2}-\ref{overall_opt_config2} are the model properties that we actually need. This concludes our proof.
\end{proof}

\subsection{Proofs of Results in Appendix \ref{sec: var_of_cov}}\label{sec:proofs_var_formula}
\begin{proof}{Proof of Lemma \ref{general_covariance_undebias}.}
The key building block to derive the mean and variance of $\hat{\sigma}_{Cov}^{2}$ is the following decomposition of $X_{br}$ and $Y_{br}$ generalized from \cite{sun2011efficient}. We write
\[
X_{br}=\mathbb{E}[X]+\tau_{b}^{X}+\varepsilon_{br}^{X},\quad Y_{br}=\mathbb{E}[Y]+\tau_{b}^{Y}+\varepsilon_{br}^{Y},
\]
where the ``group effects'' are defined as
\[
\tau_{b}^{X}\coloneqq \mathbb{E}[X|Z_{b}]-\mathbb{E}[X],\quad\tau_{b}^{Y}\coloneqq\mathbb{E}[Y|Z_{b}]-\mathbb{E}[Y],
\]
and the ``residual errors'' are defined as
\[
\varepsilon_{br}^{X}\coloneqq X_{br}-\mathbb{E}[X|Z_{b}],\quad\varepsilon_{br}^{Y} \coloneqq Y_{br}-\mathbb{E}[Y|Z_{b}].
\]
Based on Assumption \ref{assumption_cov}, we make four key observations: (1) $(\tau_{b}^{X},\tau_{b}^{Y}),b=1,\ldots,B$ are mutually independent and have zero mean; (2) $(\varepsilon_{br}^{X},\varepsilon_{br}^{Y}),r=1,\ldots,R$ are conditionally mutually independent and have conditional zero mean given $Z_{b}$ (and thus they have zero mean); (3) The random vectors $(\tau_{b}^{X},\tau_{b}^{Y},\varepsilon_{b1}^{X},\varepsilon_{b1}^{Y},\ldots,\varepsilon_{bR}^{X},\varepsilon_{bR}^{Y}),b=1,\ldots,B$ are mutually independent; (4) Within the same $b$, the residual error and group effect are uncorrelated since%
\[
\mathbb{E}[\tau_{b}^{X}\varepsilon_{br}^{X}]=\mathbb{E}[\mathbb{E}[\tau_{b}^{X}\varepsilon_{br}^{X}|Z_{b}]]=\mathbb{E}[\tau_{b}^{X}\mathbb{E}[\varepsilon_{br}^{X}|Z_{b}]]=0=\mathbb{E}[\tau_{b}^{X}]\mathbb{E}[\varepsilon_{br}^{X}],
\]
and similarly%
\[
\mathbb{E}[\tau_{b}^{X}\varepsilon_{br}^{Y}]=\mathbb{E}[\tau_{b}^{Y}\varepsilon_{br}^{X}]=\mathbb{E}[\tau_{b}^{Y}\varepsilon_{br}^{Y}]=0.
\]
These four properties are essential for the following computations.

We define the average error within each $b$:%
\[
\bar{\varepsilon}_{b}^{X}=\frac{1}{R}\sum_{r=1}^{R}\varepsilon_{br}^{X}%
,\quad\bar{\varepsilon}_{b}^{Y}=\frac{1}{R}\sum_{r=1}^{R}\varepsilon_{br}%
^{Y},
\]
and the overall average effect and error%
\[
\bar{\tau}^{X}=\frac{1}{B}\sum_{b=1}^{B}\tau_{b}^{X},\quad\bar{\tau}^{Y}%
=\frac{1}{B}\sum_{b=1}^{B}\tau_{b}^{Y},\quad\bar{\bar{\varepsilon}}^{X}%
=\frac{1}{B}\sum_{b=1}^{B}\bar{\varepsilon}_{b}^{X},\quad\bar{\bar
{\varepsilon}}^{Y}=\frac{1}{B}\sum_{b=1}^{B}\bar{\varepsilon}_{b}^{Y}.
\]
Thus, $\bar{X}_{b},\bar{Y}_{b},\bar{\bar{X}},\bar{\bar{Y}}$ can be written as%
\begin{equation}
\bar{X}_{b}=\mathbb{E}[X]+\tau_{b}^{X}+\bar{\varepsilon}_{b}^{X},\quad\bar
{Y}_{b}=\mathbb{E}[Y]+\tau_{b}^{Y}+\bar{\varepsilon}_{b}^{Y}, \label{bar_b}%
\end{equation}%
\begin{equation}
\bar{\bar{X}}=\mathbb{E}[X]+\frac{1}{B}\sum_{b=1}^{B}\tau_{b}^{X}+\frac{1}%
{B}\sum_{b=1}^{B}\bar{\varepsilon}_{b}^{X}=\mathbb{E}[X]+\bar{\tau}^{X}%
+\bar{\bar{\varepsilon}}^{X}, \label{barbar_X}%
\end{equation}
and%
\begin{equation}
\bar{\bar{Y}}=\mathbb{E}[Y]+\frac{1}{B}\sum_{b=1}^{B}\tau_{b}^{Y}+\frac{1}%
{B}\sum_{b=1}^{B}\bar{\varepsilon}_{b}^{Y}=\mathbb{E}[Y]+\bar{\tau}^{Y}%
+\bar{\bar{\varepsilon}}^{Y}. \label{barbar_Y}%
\end{equation}

We first compute the mean of $\hat{\sigma}_{Cov}^{2}$. By (\ref{bar_b}),
(\ref{barbar_X}), (\ref{barbar_Y}) and the observations above, we have%
\begin{align*}
&  \mathbb{E}[\hat{\sigma}_{Cov}^{2}]\\
&  =\frac{1}{B-1}\sum_{b=1}^{B}\mathbb{E}\left[  \left(  \tau_{b}^{X}%
+\bar{\varepsilon}_{b}^{X}-\frac{1}{B}\sum_{b=1}^{B}\tau_{b}^{X}-\frac{1}%
{B}\sum_{b=1}^{B}\bar{\varepsilon}_{b}^{X}\right)  \left(  \tau_{b}^{Y}%
+\bar{\varepsilon}_{b}^{Y}-\frac{1}{B}\sum_{b=1}^{B}\tau_{b}^{Y}-\frac{1}%
{B}\sum_{b=1}^{B}\bar{\varepsilon}_{b}^{Y}\right)  \right]  \\
&  =\frac{1}{B-1}\sum_{b=1}^{B}\mathbb{E}\left[  \left(  \left(  1-\frac{1}%
{B}\right)  \tau_{b}^{X}-\frac{1}{B}\sum_{b^{\prime}\neq b}\tau_{b^{\prime}%
}^{X}+\left(  1-\frac{1}{B}\right)  \bar{\varepsilon}_{b}^{X}-\frac{1}{B}%
\sum_{b^{\prime}\neq b}\bar{\varepsilon}_{b^{\prime}}^{X}\right)  \right.  \\
&  \left.  \times\left(  \left(  1-\frac{1}{B}\right)  \tau_{b}^{Y}-\frac
{1}{B}\sum_{b^{\prime}\neq b}\tau_{b^{\prime}}^{Y}+\left(  1-\frac{1}%
{B}\right)  \bar{\varepsilon}_{b}^{Y}-\frac{1}{B}\sum_{b^{\prime}\neq b}%
\bar{\varepsilon}_{b^{\prime}}^{Y}\right)  \right]  \\
&  =\frac{1}{B-1}\sum_{b=1}^{B}\left(  \left(  1-\frac{1}{B}\right)
^{2}+\frac{B-1}{B^{2}}\right)  (\mathbb{E}[\tau_{1}^{X}\tau_{1}^{Y}%
]+\mathbb{E}[\bar{\varepsilon}_{1}^{X}\bar{\varepsilon}_{1}^{Y}])\\
&  =\frac{1}{B}\sum_{b=1}^{B}\left(  \mathbb{E}[\tau^{X}\tau^{Y}]+\frac{1}%
{R}\mathbb{E}[\varepsilon^{X}\varepsilon^{Y}]\right)  \\
&  =\mathbb{E}[\tau^{X}\tau^{Y}]+\frac{1}{R}\mathbb{E}[\varepsilon
^{X}\varepsilon^{Y}].
\end{align*}

Now we compute the variance of $\hat{\sigma}_{Cov}^{2}$. We first simplify the
expression of $\mathrm{Var}(\hat{\sigma}_{Cov}^{2})$
\begin{align}
&  \mathrm{Var}(\hat{\sigma}_{Cov}^{2})\nonumber\\
&  =\mathrm{Var}\left(  \frac{1}{B-1}\sum_{b=1}^{B}(\bar{X}_{b}-\bar{\bar{X}%
})(\bar{Y}_{b}-\bar{\bar{Y}})\right)  \nonumber\\
&  =\frac{1}{(B-1)^{2}}\sum_{b=1}^{B}\mathrm{Var}((\bar{X}_{b}-\bar{\bar{X}%
})(\bar{Y}_{b}-\bar{\bar{Y}}))\nonumber\\
&  +\frac{1}{(B-1)^{2}}\sum_{1\leq b_{1}\neq b_{2}\leq B}\mathrm{Cov}((\bar
{X}_{b_{1}}-\bar{\bar{X}})(\bar{Y}_{b_{1}}-\bar{\bar{Y}}),(\bar{X}_{b_{2}%
}-\bar{\bar{X}})(\bar{Y}_{b_{2}}-\bar{\bar{Y}}))\nonumber\\
&  =\frac{B}{(B-1)^{2}}\mathrm{Var}((\bar{X}_{1}-\bar{\bar{X}})(\bar{Y}%
_{1}-\bar{\bar{Y}}))+\frac{B}{B-1}\mathrm{Cov}((\bar{X}_{1}-\bar{\bar{X}%
})(\bar{Y}_{1}-\bar{\bar{Y}}),(\bar{X}_{2}-\bar{\bar{X}})(\bar{Y}_{2}%
-\bar{\bar{Y}}))\nonumber\\
&  =\frac{B}{(B-1)^{2}}\mathbb{E}[(\bar{X}_{1}-\bar{\bar{X}})^{2}(\bar{Y}%
_{1}-\bar{\bar{Y}})^{2}]+\frac{B}{B-1}\mathbb{E}[(\bar{X}_{1}-\bar{\bar{X}%
})(\bar{Y}_{1}-\bar{\bar{Y}})(\bar{X}_{2}-\bar{\bar{X}})(\bar{Y}_{2}-\bar
{\bar{Y}})]\nonumber\\
&  -\frac{B^{2}}{(B-1)^{2}}(\mathbb{E}[(\bar{X}_{1}-\bar{\bar{X}})(\bar{Y}%
_{1}-\bar{\bar{Y}})])^{2}.\label{computation1_var}%
\end{align}
The first expectation $\mathbb{E}[(\bar{X}_{1}-\bar{\bar{X}})^{2}(\bar{Y}%
_{1}-\bar{\bar{Y}})^{2}]$ can be simplified as
\begin{align*}
&  \mathbb{E}[(\bar{X}_{1}-\bar{\bar{X}})^{2}(\bar{Y}_{1}-\bar{\bar{Y}}%
)^{2}]\\
&  =\mathbb{E}[(\tau_{1}^{X}-\bar{\tau}^{X}+\bar{\varepsilon}_{1}^{X}%
-\bar{\bar{\varepsilon}}^{X})^{2}(\tau_{1}^{Y}-\bar{\tau}^{Y}+\bar
{\varepsilon}_{1}^{Y}-\bar{\bar{\varepsilon}}^{Y})^{2}]\\
&  =\mathbb{E}\left[  \left(  \frac{B-1}{B}(\tau_{1}^{X}+\bar{\varepsilon}%
_{1}^{X})-\frac{1}{B}\sum_{b=2}^{B}(\tau_{b}^{X}+\bar{\varepsilon}_{b}%
^{X})\right)  ^{2}\left(  \frac{B-1}{B}(\tau_{1}^{Y}+\bar{\varepsilon}_{1}%
^{Y})-\frac{1}{B}\sum_{b=2}^{B}(\tau_{b}^{Y}+\bar{\varepsilon}_{b}%
^{Y})\right)  ^{2}\right]  \\
&  =\mathbb{E}\left[  \left(  \frac{(B-1)^{2}}{B^{2}}(\tau_{1}^{X}%
+\bar{\varepsilon}_{1}^{X})^{2}-\frac{2(B-1)}{B^{2}}(\tau_{1}^{X}%
+\bar{\varepsilon}_{1}^{X})\sum_{b=2}^{B}(\tau_{b}^{X}+\bar{\varepsilon}%
_{b}^{X})+\frac{1}{B^{2}}\left(  \sum_{b=2}^{B}(\tau_{b}^{X}+\bar{\varepsilon
}_{b}^{X})\right)  ^{2}\right)  \right.  \\
&  \left.  \times\left(  \frac{(B-1)^{2}}{B^{2}}(\tau_{1}^{Y}+\bar
{\varepsilon}_{1}^{Y})^{2}-\frac{2(B-1)}{B^{2}}(\tau_{1}^{Y}+\bar{\varepsilon
}_{1}^{Y})\sum_{b=2}^{B}(\tau_{b}^{Y}+\bar{\varepsilon}_{b}^{Y})+\frac
{1}{B^{2}}\left(  \sum_{b=2}^{B}(\tau_{b}^{Y}+\bar{\varepsilon}_{b}%
^{Y})\right)  ^{2}\right)  \right]  \\
&  =\frac{(B-1)^{4}}{B^{4}}\mathbb{E}[(\tau_{1}^{X}+\bar{\varepsilon}_{1}%
^{X})^{2}(\tau_{1}^{Y}+\bar{\varepsilon}_{1}^{Y})^{2}]+\frac{(B-1)^{2}}{B^{2}%
}\mathbb{E}[(\tau_{1}^{X}+\bar{\varepsilon}_{1}^{X})^{2}]\frac{1}{B^{2}}%
\sum_{b=2}^{B}\mathbb{E}[(\tau_{b}^{Y}+\bar{\varepsilon}_{b}^{Y})^{2}]\\
&  +\frac{4(B-1)^{2}}{B^{4}}\mathbb{E}[(\tau_{1}^{X}+\bar{\varepsilon}_{1}%
^{X})(\tau_{1}^{Y}+\bar{\varepsilon}_{1}^{Y})]\sum_{b=2}^{B}\mathbb{E}%
[(\tau_{b}^{X}+\bar{\varepsilon}_{b}^{X})(\tau_{b}^{Y}+\bar{\varepsilon}%
_{b}^{Y})]+\frac{(B-1)^{2}}{B^{2}}\mathbb{E}[(\tau_{1}^{Y}+\bar{\varepsilon
}_{1}^{Y})^{2}]\frac{1}{B^{2}}\sum_{b=2}^{B}\mathbb{E}[(\tau_{b}^{X}%
+\bar{\varepsilon}_{b}^{X})^{2}]\\
&  +\frac{1}{B^{4}}\left(  \sum_{b=2}^{B}\mathbb{E}[(\tau_{b}^{X}%
+\bar{\varepsilon}_{b}^{X})^{2}(\tau_{b}^{Y}+\bar{\varepsilon}_{b}^{Y}%
)^{2}]+\sum_{2\leq b_{1}\neq b_{2}\leq B}\mathbb{E}[(\tau_{b_{1}}^{X}%
+\bar{\varepsilon}_{b_{1}}^{X})^{2}]\mathbb{E}[(\tau_{b_{2}}^{Y}%
+\bar{\varepsilon}_{b_{2}}^{Y})^{2}]\right.  \\
&  \left.  +2\sum_{2\leq b_{1}\neq b_{2}\leq B}\mathbb{E}[(\tau_{b_{1}}%
^{X}+\bar{\varepsilon}_{b_{1}}^{X})(\tau_{b_{1}}^{Y}+\bar{\varepsilon}_{b_{1}%
}^{Y})]\mathbb{E}[(\tau_{b_{2}}^{X}+\bar{\varepsilon}_{b_{2}}^{X})(\tau
_{b_{2}}^{Y}+\bar{\varepsilon}_{b_{2}}^{Y})]\right)  \\
&  =\frac{(B-1)^{4}+(B-1)}{B^{4}}\mathbb{E}[(\tau_{1}^{X}+\bar{\varepsilon
}_{1}^{X})^{2}(\tau_{1}^{Y}+\bar{\varepsilon}_{1}^{Y})^{2}]+\frac
{(B-1)(2B-3)}{B^{3}}\mathbb{E}[(\tau_{1}^{X}+\bar{\varepsilon}_{1}^{X}%
)^{2}]\mathbb{E}[(\tau_{1}^{Y}+\bar{\varepsilon}_{1}^{Y})^{2}]\\
&  +\frac{2(B-1)(2B-3)}{B^{3}}(\mathbb{E}[(\tau_{1}^{X}+\bar{\varepsilon}%
_{1}^{X})(\tau_{1}^{Y}+\bar{\varepsilon}_{1}^{Y})]\mathbb{)}^{2}.
\end{align*}
The second expectation $\mathbb{E}[(\bar{X}_{1}-\bar{\bar{X}})(\bar{Y}%
_{1}-\bar{\bar{Y}})(\bar{X}_{2}-\bar{\bar{X}})(\bar{Y}_{2}-\bar{\bar{Y}})]$
can be simplified as%
\begin{align*}
&  \mathbb{E}[(\bar{X}_{1}-\bar{\bar{X}})(\bar{Y}_{1}-\bar{\bar{Y}})(\bar
{X}_{2}-\bar{\bar{X}})(\bar{Y}_{2}-\bar{\bar{Y}})]\\
&  =\mathbb{E}[(\tau_{1}^{X}-\bar{\tau}^{X}+\bar{\varepsilon}_{1}^{X}%
-\bar{\bar{\varepsilon}}^{X})(\tau_{1}^{Y}-\bar{\tau}^{Y}+\bar{\varepsilon
}_{1}^{Y}-\bar{\bar{\varepsilon}}^{Y})(\tau_{2}^{X}-\bar{\tau}^{X}%
+\bar{\varepsilon}_{2}^{X}-\bar{\bar{\varepsilon}}^{X})(\tau_{2}^{Y}-\bar
{\tau}^{Y}+\bar{\varepsilon}_{2}^{Y}-\bar{\bar{\varepsilon}}^{Y})]\\
&  =\mathbb{E}\left[  \left(  \frac{B-1}{B}(\tau_{1}^{X}+\bar{\varepsilon}%
_{1}^{X})-\frac{1}{B}\sum_{b=2}^{B}(\tau_{b}^{X}+\bar{\varepsilon}_{b}%
^{X})\right)  \left(  \frac{B-1}{B}(\tau_{1}^{Y}+\bar{\varepsilon}_{1}%
^{Y})-\frac{1}{B}\sum_{b=2}^{B}(\tau_{b}^{Y}+\bar{\varepsilon}_{b}%
^{Y})\right)  \right.  \\
&  \left.  \times\left(  \frac{B-1}{B}(\tau_{2}^{X}+\bar{\varepsilon}_{2}%
^{X})-\frac{1}{B}\sum_{b\neq2}(\tau_{b}^{X}+\bar{\varepsilon}_{b}^{X})\right)
\left(  \frac{B-1}{B}(\tau_{2}^{Y}+\bar{\varepsilon}_{2}^{Y})-\frac{1}{B}%
\sum_{b\neq2}(\tau_{b}^{Y}+\bar{\varepsilon}_{b}^{Y})\right)  \right]  \\
&  =\frac{2B-3}{B^{3}}\mathbb{E}[(\tau_{1}^{X}+\bar{\varepsilon}_{1}^{X}%
)^{2}(\tau_{1}^{Y}+\bar{\varepsilon}_{1}^{Y})^{2}]+\frac{B^{3}-2B^{2}%
-B+3}{B^{3}}(\mathbb{E}[(\tau_{1}^{X}+\bar{\varepsilon}_{1}^{X})(\tau_{1}%
^{Y}+\bar{\varepsilon}_{1}^{Y})])^{2}\\
&  -\frac{B-3}{B^{3}}\mathbb{E}[(\tau_{1}^{X}+\bar{\varepsilon}_{1}^{X}%
)^{2}]\mathbb{E}[(\tau_{1}^{Y}+\bar{\varepsilon}_{1}^{Y})^{2}]-\frac
{B-3}{B^{3}}(\mathbb{E}[(\tau_{1}^{X}+\bar{\varepsilon}_{1}^{X})(\tau_{1}%
^{Y}+\bar{\varepsilon}_{1}^{Y})])^{2}\\
&  =\frac{2B-3}{B^{3}}\mathbb{E}[(\tau_{1}^{X}+\bar{\varepsilon}_{1}^{X}%
)^{2}(\tau_{1}^{Y}+\bar{\varepsilon}_{1}^{Y})^{2}]+\frac{B^{3}-2B^{2}%
-2B+6}{B^{3}}(\mathbb{E}[(\tau_{1}^{X}+\bar{\varepsilon}_{1}^{X})(\tau_{1}%
^{Y}+\bar{\varepsilon}_{1}^{Y})])^{2}\\
&  -\frac{B-3}{B^{3}}\mathbb{E}[(\tau_{1}^{X}+\bar{\varepsilon}_{1}^{X}%
)^{2}]\mathbb{E}[(\tau_{1}^{Y}+\bar{\varepsilon}_{1}^{Y})^{2}].
\end{align*}
The third expectation $\mathbb{E}[(\bar{X}_{1}-\bar{\bar{X}})(\bar{Y}_{1}%
-\bar{\bar{Y}})]$ can be simplified as%
\begin{align*}
&  \mathbb{E}[(\bar{X}_{1}-\bar{\bar{X}})(\bar{Y}_{1}-\bar{\bar{Y}})]\\
&  =\mathbb{E}[(\tau_{1}^{X}-\bar{\tau}^{X}+\bar{\varepsilon}_{1}^{X}%
-\bar{\bar{\varepsilon}}^{X})(\tau_{1}^{Y}-\bar{\tau}^{Y}+\bar{\varepsilon
}_{1}^{Y}-\bar{\bar{\varepsilon}}^{Y})]\\
&  =\mathbb{E}\left[  \left(  \frac{B-1}{B}(\tau_{1}^{X}+\bar{\varepsilon}%
_{1}^{X})-\frac{1}{B}\sum_{b=2}^{B}(\tau_{b}^{X}+\bar{\varepsilon}_{b}%
^{X})\right)  \left(  \frac{B-1}{B}(\tau_{1}^{Y}+\bar{\varepsilon}_{1}%
^{Y})-\frac{1}{B}\sum_{b=2}^{B}(\tau_{b}^{Y}+\bar{\varepsilon}_{b}%
^{Y})\right)  \right]  \\
&  =\frac{(B-1)^{2}}{B^{2}}\mathbb{E}[(\tau_{1}^{X}+\bar{\varepsilon}_{1}%
^{X})(\tau_{1}^{Y}+\bar{\varepsilon}_{1}^{Y})]+\frac{B-1}{B^{2}}%
\mathbb{E}[(\tau_{1}^{X}+\bar{\varepsilon}_{1}^{X})(\tau_{1}^{Y}%
+\bar{\varepsilon}_{1}^{Y})]\\
&  =\frac{B-1}{B}\mathbb{E}[(\tau_{1}^{X}+\bar{\varepsilon}_{1}^{X})(\tau
_{1}^{Y}+\bar{\varepsilon}_{1}^{Y})].
\end{align*}
Plugging the three formulas back into (\ref{computation1_var}), we obtain%
\begin{align}
&  \mathrm{Var}(\hat{\sigma}_{Cov}^{2})\nonumber\\
&  =\frac{B}{(B-1)^{2}}\mathbb{E}[(\bar{X}_{1}-\bar{\bar{X}})^{2}(\bar{Y}%
_{1}-\bar{\bar{Y}})^{2}]+\frac{B}{B-1}\mathbb{E}[(\bar{X}_{1}-\bar{\bar{X}%
})(\bar{Y}_{1}-\bar{\bar{Y}})(\bar{X}_{2}-\bar{\bar{X}})(\bar{Y}_{2}-\bar
{\bar{Y}})]\nonumber\\
&  -\frac{B^{2}}{(B-1)^{2}}(\mathbb{E}[(\bar{X}_{1}-\bar{\bar{X}})(\bar{Y}%
_{1}-\bar{\bar{Y}})])^{2}\nonumber\\
&  =\frac{1}{B}\mathbb{E}[(\tau_{1}^{X}+\bar{\varepsilon}_{1}^{X})^{2}%
(\tau_{1}^{Y}+\bar{\varepsilon}_{1}^{Y})^{2}]+\frac{1}{B(B-1)}\mathbb{E}%
[(\tau_{1}^{X}+\bar{\varepsilon}_{1}^{X})^{2}]\mathbb{E}[(\tau_{1}^{Y}%
+\bar{\varepsilon}_{1}^{Y})^{2}]\nonumber\\
&  -\frac{B-2}{B(B-1)}(\mathbb{E}[(\tau_{1}^{X}+\bar{\varepsilon}_{1}%
^{X})(\tau_{1}^{Y}+\bar{\varepsilon}_{1}^{Y})])^{2}.\label{computation2_var}%
\end{align}
Now we compute the expectations in (\ref{computation2_var}). First,
$\mathbb{E}[(\tau_{1}^{X}+\bar{\varepsilon}_{1}^{X})^{2}(\tau_{1}^{Y}%
+\bar{\varepsilon}_{1}^{Y})^{2}]$ is given by
\begin{align*}
&  \mathbb{E}[(\tau_{1}^{X}+\bar{\varepsilon}_{1}^{X})^{2}(\tau_{1}^{Y}%
+\bar{\varepsilon}_{1}^{Y})^{2}]\\
&  =\mathbb{E}[((\tau_{1}^{X})^{2}+2\tau_{1}^{X}\bar{\varepsilon}_{1}%
^{X}+(\bar{\varepsilon}_{1}^{X})^{2})((\tau_{1}^{Y})^{2}+2\tau_{1}^{Y}%
\bar{\varepsilon}_{1}^{Y}+(\bar{\varepsilon}_{1}^{Y})^{2})]\\
&  =\mathbb{E}[(\tau_{1}^{X})^{2}(\tau_{1}^{Y})^{2}]+2\mathbb{E}[(\tau_{1}%
^{X})^{2}\tau_{1}^{Y}\bar{\varepsilon}_{1}^{Y}]+\mathbb{E}[(\tau_{1}^{X}%
)^{2}(\bar{\varepsilon}_{1}^{Y})^{2}]+2\mathbb{E}[\tau_{1}^{X}\bar
{\varepsilon}_{1}^{X}(\tau_{1}^{Y})^{2}]\\
&  +4\mathbb{E}[\tau_{1}^{X}\bar{\varepsilon}_{1}^{X}\tau_{1}^{Y}%
\bar{\varepsilon}_{1}^{Y}]+2\mathbb{E}[\tau_{1}^{X}\bar{\varepsilon}_{1}%
^{X}(\bar{\varepsilon}_{1}^{Y})^{2}]+\mathbb{E}[(\bar{\varepsilon}_{1}%
^{X})^{2}(\tau_{1}^{Y})^{2}]+2\mathbb{E}[(\bar{\varepsilon}_{1}^{X})^{2}%
\tau_{1}^{Y}\bar{\varepsilon}_{1}^{Y}]+\mathbb{E}[(\bar{\varepsilon}_{1}%
^{X})^{2}(\bar{\varepsilon}_{1}^{Y})^{2}]\\
&  =\mathbb{E}[(\tau^{X})^{2}(\tau^{Y})^{2}]+0+\frac{1}{R}\mathbb{E}[(\tau
^{X})^{2}(\varepsilon^{Y})^{2}]+0+\frac{4}{R}\mathbb{E}[\tau^{X}\tau
^{Y}\varepsilon^{X}\varepsilon^{Y}]+\frac{2}{R^{2}}\mathbb{E}[\tau
^{X}\varepsilon^{X}(\varepsilon^{Y})^{2}]\\
&  +\frac{1}{R}\mathbb{E}[(\varepsilon^{X})^{2}(\tau^{Y})^{2}]+\frac{2}{R^{2}%
}\mathbb{E}[(\varepsilon^{X})^{2}\tau^{Y}\varepsilon^{Y}]\\
&  +\frac{1}{R^{4}}(R\mathbb{E}[(\varepsilon^{X})^{2}(\varepsilon^{Y}%
)^{2}]+R(R-1)\mathbb{E}[\mathbb{E}[(\varepsilon^{X})^{2}|Z]\mathbb{E}%
[(\varepsilon^{Y})^{2}|Z]]+2R(R-1)\mathbb{E}[(\mathbb{E}[\varepsilon
^{X}\varepsilon^{Y}|Z])^{2}])\\
&  =\mathbb{E}[(\tau^{X})^{2}(\tau^{Y})^{2}]+\frac{1}{R}\mathbb{E}[(\tau
^{X})^{2}(\varepsilon^{Y})^{2}]+\frac{4}{R}\mathbb{E}[\tau^{X}\tau
^{Y}\varepsilon^{X}\varepsilon^{Y}]+\frac{2}{R^{2}}\mathbb{E}[\tau
^{X}\varepsilon^{X}(\varepsilon^{Y})^{2}]\\
&  +\frac{1}{R}\mathbb{E}[(\varepsilon^{X})^{2}(\tau^{Y})^{2}]+\frac{2}{R^{2}%
}\mathbb{E}[(\varepsilon^{X})^{2}\tau^{Y}\varepsilon^{Y}]+\frac{1}{R^{3}%
}\mathbb{E}[(\varepsilon^{X})^{2}(\varepsilon^{Y})^{2}]\\
&  +\frac{R-1}{R^{3}}\mathbb{E}[\mathbb{E}[(\varepsilon^{X})^{2}%
|Z]\mathbb{E}[(\varepsilon^{Y})^{2}|Z]]+\frac{2(R-1)}{R^{3}}\mathbb{E}%
[(\mathbb{E}[\varepsilon^{X}\varepsilon^{Y}|Z])^{2}].
\end{align*}
Next,
\[
\mathbb{E}[(\tau_{1}^{X}+\bar{\varepsilon}_{1}^{X})(\tau_{1}^{Y}%
+\bar{\varepsilon}_{1}^{Y})]=\mathbb{E}[\tau_{1}^{X}\tau_{1}^{Y}%
]+\mathbb{E}[\tau_{1}^{X}\bar{\varepsilon}_{1}^{Y}]+\mathbb{E}[\bar
{\varepsilon}_{1}^{X}\tau_{1}^{Y}]+\mathbb{E}[\bar{\varepsilon}_{1}^{X}%
\bar{\varepsilon}_{1}^{Y}]=\mathbb{E}[\tau^{X}\tau^{Y}]+\frac{1}{R}%
\mathbb{E}[\varepsilon^{X}\varepsilon^{Y}],
\]
and similarly%
\[
\mathbb{E}[(\tau_{1}^{X}+\bar{\varepsilon}_{1}^{X})^{2}]=\mathbb{E}[(\tau
^{X})^{2}]+\frac{1}{R}\mathbb{E}[(\varepsilon^{X})^{2}],
\]%
\[
\mathbb{E}[(\tau_{1}^{Y}+\bar{\varepsilon}_{1}^{Y})^{2}]=\mathbb{E}[(\tau
^{Y})^{2}]+\frac{1}{R}\mathbb{E}[(\varepsilon^{Y})^{2}].
\]
Plugging these expectations into (\ref{computation2_var}), we obtain%
\begin{align*}
&  \mathrm{Var}(\hat{\sigma}_{Cov}^{2})\\
&  =\frac{1}{B}\mathbb{E}[(\tau_{1}^{X}+\bar{\varepsilon}_{1}^{X})^{2}%
(\tau_{1}^{Y}+\bar{\varepsilon}_{1}^{Y})^{2}]+\frac{1}{B(B-1)}\mathbb{E}%
[(\tau_{1}^{X}+\bar{\varepsilon}_{1}^{X})^{2}]\mathbb{E}[(\tau_{1}^{Y}%
+\bar{\varepsilon}_{1}^{Y})^{2}]\\
&  -\frac{B-2}{B(B-1)}(\mathbb{E}[(\tau_{1}^{X}+\bar{\varepsilon}_{1}%
^{X})(\tau_{1}^{Y}+\bar{\varepsilon}_{1}^{Y})])^{2}\\
&  =\frac{1}{B}\left(  \mathbb{E}[(\tau^{X})^{2}(\tau^{Y})^{2}]+\frac{1}%
{R}\mathbb{E}[(\tau^{X})^{2}(\varepsilon^{Y})^{2}]+\frac{4}{R}\mathbb{E}%
[\tau^{X}\tau^{Y}\varepsilon^{X}\varepsilon^{Y}]+\frac{2}{R^{2}}%
\mathbb{E}[\tau^{X}\varepsilon^{X}(\varepsilon^{Y})^{2}]+\frac{1}{R}%
\mathbb{E}[(\varepsilon^{X})^{2}(\tau^{Y})^{2}]\right.  \\
&  \left.  +\frac{2}{R^{2}}\mathbb{E}[(\varepsilon^{X})^{2}\tau^{Y}%
\varepsilon^{Y}]+\frac{1}{R^{3}}\mathbb{E}[(\varepsilon^{X})^{2}%
(\varepsilon^{Y})^{2}]+\frac{R-1}{R^{3}}\mathbb{E}[\mathbb{E}[(\varepsilon
^{X})^{2}|Z]\mathbb{E}[(\varepsilon^{Y})^{2}|Z]]+\frac{2(R-1)}{R^{3}%
}\mathbb{E}[(\mathbb{E}[\varepsilon^{X}\varepsilon^{Y}|Z])^{2}]\right)  \\
&  +\frac{1}{B(B-1)}\left(  \mathbb{E}[(\tau^{X})^{2}]+\frac{1}{R}%
\mathbb{E}[(\varepsilon^{X})^{2}]\right)  \left(  \mathbb{E}[(\tau^{Y}%
)^{2}]+\frac{1}{R}\mathbb{E}[(\varepsilon^{Y})^{2}]\right)  \\
&  -\frac{B-2}{B(B-1)}\left(  \mathbb{E}[\tau^{X}\tau^{Y}]+\frac{1}%
{R}\mathbb{E}[\varepsilon^{X}\varepsilon^{Y}]\right)  ^{2}\\
= &  \frac{1}{B}\mathbb{E}[(\tau^{X})^{2}(\tau^{Y})^{2}]+\frac{1}%
{BR}\mathbb{E}[(\tau^{X})^{2}(\varepsilon^{Y})^{2}]+\frac{4}{BR}%
\mathbb{E}[\tau^{X}\tau^{Y}\varepsilon^{X}\varepsilon^{Y}]+\frac{2}{BR^{2}%
}\mathbb{E}[\tau^{X}\varepsilon^{X}(\varepsilon^{Y})^{2}]\\
&  +\frac{1}{BR}\mathbb{E}[(\varepsilon^{X})^{2}(\tau^{Y})^{2}]+\frac
{2}{BR^{2}}\mathbb{E}[(\varepsilon^{X})^{2}\tau^{Y}\varepsilon^{Y}]+\frac
{1}{BR^{3}}\mathbb{E}[(\varepsilon^{X})^{2}(\varepsilon^{Y})^{2}]+\frac
{R-1}{BR^{3}}\mathbb{E}[\mathbb{E}[(\varepsilon^{X})^{2}|Z]\mathbb{E}%
[(\varepsilon^{Y})^{2}|Z]]\\
&  +\frac{2(R-1)}{BR^{3}}\mathbb{E}[(\mathbb{E}[\varepsilon^{X}\varepsilon
^{Y}|Z])^{2}]+\frac{1}{B(B-1)}\mathbb{E}[(\tau^{X})^{2}]\mathbb{E}[(\tau
^{Y})^{2}]+\frac{1}{B(B-1)R}\mathbb{E}[(\tau^{X})^{2}]\mathbb{E}%
[(\varepsilon^{Y})^{2}]\\
&  +\frac{1}{B(B-1)R}\mathbb{E}[(\varepsilon^{X})^{2}]\mathbb{E}[(\tau
^{Y})^{2}]+\frac{1}{B(B-1)R^{2}}\mathbb{E}[(\varepsilon^{X})^{2}%
]\mathbb{E}[(\varepsilon^{Y})^{2}]\\
&  -\frac{B-2}{B(B-1)}(\mathbb{E}[\tau^{X}\tau^{Y}])^{2}-\frac{2(B-2)}%
{B(B-1)R}\mathbb{E}[\tau^{X}\tau^{Y}]\mathbb{E}[\varepsilon^{X}\varepsilon
^{Y}]-\frac{B-2}{B(B-1)R^{2}}(\mathbb{E}[\varepsilon^{X}\varepsilon^{Y}])^{2}%
\end{align*}
Notice that%
\begin{align*}
&  -\frac{B-2}{B(B-1)}(\mathbb{E}[\tau^{X}\tau^{Y}])^{2}-\frac{2(B-2)}%
{B(B-1)R}\mathbb{E}[\tau^{X}\tau^{Y}]\mathbb{E}[\varepsilon^{X}\varepsilon
^{Y}]-\frac{B-2}{B(B-1)R^{2}}(\mathbb{E}[\varepsilon^{X}\varepsilon^{Y}%
])^{2}\\
&  =-\frac{(B-2)}{B(B-1)}\left(  \mathbb{E}\left[  \tau^{X}\tau^{Y}+\frac
{1}{R}\varepsilon^{X}\varepsilon^{Y}\right]  \right)  ^{2}\\
&  =-\frac{1}{B}\left(  \mathbb{E}\left[  \tau^{X}\tau^{Y}+\frac{1}%
{R}\varepsilon^{X}\varepsilon^{Y}\right]  \right)  ^{2}+\frac{1}%
{B(B-1)}\left(  \mathbb{E}\left[  \tau^{X}\tau^{Y}+\frac{1}{R}\varepsilon
^{X}\varepsilon^{Y}\right]  \right)  ^{2},
\end{align*}%
\begin{align*}
&  \frac{1}{B}\mathbb{E}[(\tau^{X})^{2}(\tau^{Y})^{2}]+\frac{2}{BR}%
\mathbb{E}[\tau^{X}\tau^{Y}\varepsilon^{X}\varepsilon^{Y}]+\frac{1}{BR^{2}%
}\mathbb{E}[(\mathbb{E}[\varepsilon^{X}\varepsilon^{Y}|Z])^{2}]\\
&  =\frac{1}{B}\mathbb{E}\left[  \left(  \tau^{X}\tau^{Y}+\frac{1}%
{R}\mathbb{E}[\varepsilon^{X}\varepsilon^{Y}|Z]\right)  ^{2}\right]  ,
\end{align*}%
\begin{align*}
&  \frac{1}{BR}\mathbb{E}[(\tau^{X})^{2}(\varepsilon^{Y})^{2}]+\frac{2}%
{BR}\mathbb{E}[\tau^{X}\tau^{Y}\varepsilon^{X}\varepsilon^{Y}]+\frac{2}%
{BR^{2}}\mathbb{E}[\tau^{X}\varepsilon^{X}(\varepsilon^{Y})^{2}]\\
&  +\frac{1}{BR}\mathbb{E}[(\varepsilon^{X})^{2}(\tau^{Y})^{2}]+\frac
{2}{BR^{2}}\mathbb{E}[(\varepsilon^{X})^{2}\tau^{Y}\varepsilon^{Y}]+\frac
{1}{BR^{3}}\mathbb{E}[(\varepsilon^{X})^{2}(\varepsilon^{Y})^{2}]\\
&  =\frac{1}{BR}\mathbb{E}\left[  \left(  \tau^{X}\varepsilon^{Y}%
+\varepsilon^{X}\tau^{Y}+\frac{1}{R}\varepsilon^{X}\varepsilon^{Y}\right)
^{2}\right]  .
\end{align*}
Therefore, $\mathrm{Var}(\hat{\sigma}_{Cov}^{2})$ can be written as%
\begin{align*}
&  \mathrm{Var}(\hat{\sigma}_{Cov}^{2})\\
&  =\frac{1}{B}\left(  \mathbb{E}\left[  \left(  \tau^{X}\tau^{Y}+\frac{1}%
{R}\mathbb{E}[\varepsilon^{X}\varepsilon^{Y}|Z]\right)  ^{2}\right]  -\left(
\mathbb{E}\left[  \tau^{X}\tau^{Y}+\frac{1}{R}\varepsilon^{X}\varepsilon
^{Y}\right]  \right)  ^{2}\right)  \\
&  +\frac{1}{BR}\mathbb{E}\left[  \left(  \tau^{X}\varepsilon^{Y}%
+\varepsilon^{X}\tau^{Y}+\frac{1}{R}\varepsilon^{X}\varepsilon^{Y}\right)
^{2}\right]  +\frac{R-1}{BR^{3}}\mathbb{E}[\mathbb{E}[(\varepsilon^{X}%
)^{2}|Z]\mathbb{E}[(\varepsilon^{Y})^{2}|Z]]\\
&  +\frac{R-2}{BR^{3}}\mathbb{E}[(\mathbb{E}[\varepsilon^{X}\varepsilon
^{Y}|Z])^{2}]+\frac{1}{B(B-1)}\mathbb{E}[(\tau^{X})^{2}]\mathbb{E}[(\tau
^{Y})^{2}]+\frac{1}{B(B-1)R}\mathbb{E}[(\tau^{X})^{2}]\mathbb{E}%
[(\varepsilon^{Y})^{2}]\\
&  +\frac{1}{B(B-1)R}\mathbb{E}[(\varepsilon^{X})^{2}]\mathbb{E}[(\tau
^{Y})^{2}]+\frac{1}{B(B-1)R^{2}}\mathbb{E}[(\varepsilon^{X})^{2}%
]\mathbb{E}[(\varepsilon^{Y})^{2}]+\frac{1}{B(B-1)}\left(  \mathbb{E}\left[
\tau^{X}\tau^{Y}+\frac{1}{R}\varepsilon^{X}\varepsilon^{Y}\right]  \right)
^{2}.
\end{align*}

\end{proof}

\end{appendix}

\end{document}